\documentclass[11pt]{article}
\pdfoutput=1

\usepackage{epsfig}
\usepackage{latexsym}
\usepackage{enumerate}
\usepackage{url}
\usepackage{float}
\usepackage{texnansi}
\usepackage{xcolor}
\definecolor{LinkBlue}{rgb}{0,0,1}
\usepackage[pdfborder={0 0 0}, colorlinks=true, linkcolor=LinkBlue, urlcolor=LinkBlue, citecolor=LinkBlue]{hyperref}
\usepackage{tikz}
\usepackage[margin=10pt,font=small,labelfont=bf]{caption}
\usepackage[subrefformat=parens,labelformat=parens]{subcaption}
\usepackage{afterpage}
\usepackage{enumitem}
\usepackage[boxed]{algorithm}
\usepackage{algpseudocode}
\usepackage[normalem]{ulem}
\usepackage{lmodern}
\usepackage{booktabs}
\usepackage{sectsty}
\usepackage{ifthen}
\usepackage{amsmath}
\usepackage{amsthm}
\usepackage{amssymb}
\usepackage{amsfonts}
\usepackage{thmtools}
\usepackage{thm-restate}
\usepackage{mathtools}
\usepackage{xspace}
\usepackage{titling}
\usepackage{natbib}
\usepackage{xfrac}
\usepackage{multirow}
\usepackage{bigdelim}
\usepackage{bm}
\usepackage{comment}
\usepackage[capitalise]{cleveref}
\usepackage[textsize=tiny]{todonotes}
\newcommand{\field}[1]{\ensuremath{\mathbb{#1}}}
\newcommand{\R}{\ensuremath{\field{R}}} 
\newcommand{\I}[1]{\ensuremath{\mathbb{I}_{\left\{#1\right\}}}} 
\newcommand{\PR}{\ensuremath{\mathsf{P}}} 
\newcommand{\E}{\ensuremath{\mathsf{E}}} 
\newcommand{\defeq}{\ensuremath{\triangleq}}
\newcommand{\subjectto}{\text{\rm subject to}} 

\newcommand{\Lscr}{\ensuremath{\mathcal L}}

\newcommand{\Xscr}{\ensuremath{\mathcal X}}
\newcommand{\Yscr}{\ensuremath{\mathcal Y}}

\DeclarePairedDelimiter\norm{\lVert}{\rVert}

\DeclareMathOperator{\supp}{supp}

\DeclareMathOperator*{\argmin}{\mathrm{argmin}}

\newcommand{\minimize}{\ensuremath{\mathop{\mathrm{minimize}}\limits}}
\newcommand{\maximize}{\ensuremath{\mathop{\mathrm{maximize}}\limits}}
\declaretheoremstyle[headfont=\sffamily\bfseries,bodyfont=\itshape]{thm-sf}
\declaretheoremstyle[headfont=\sffamily\bfseries,bodyfont=\normalfont]{thm-sf-remark}

\declaretheorem[style=thm-sf,numberwithin=section]{theorem}

\declaretheorem[style=thm-sf-remark,sibling=theorem]{remark}
\declaretheorem[style=thm-sf,sibling=theorem]{assumption}
\crefname{assumption}{assumption}{assumptions}
\declaretheorem[style=thm-sf,sibling=theorem]{definition}
\declaretheorem[style=thm-sf-remark,sibling=theorem]{example}

\declaretheorem[style=thm-sf,sibling=theorem]{corollary}
\declaretheorem[style=thm-sf,sibling=theorem]{lemma}

\declaretheorem[style=thm-sf,sibling=theorem]{proposition}

\renewcommand{\thmcontinues}[1]{\hyperref[#1]{continued}}

\usetikzlibrary{arrows,patterns,plotmarks,pgfplots.groupplots}
\tikzstyle{every picture} += [>=stealth]
\tikzset{axis/.style={semithick, line join=miter}}
\allsectionsfont{\sffamily}
\makeatletter
\def\@seccntformat#1{\csname the#1\endcsname.\quad}
\makeatother

\floatname{algorithm}{\normalfont\sffamily\bfseries Algorithm}
\floatstyle{ruled}
\newcommand{\emailhref}[1]{\href{mailto:#1}{\tt #1}} 
\provideboolean{fastcompile}
\newcommand{\hidefastcompile}[1]{\ifthenelse{\boolean{fastcompile}}{}{#1}}
\usepackage{pgfplots}
\usepackage{pgfplotstable}
\usetikzlibrary{calc}
\usepackage{mathtools}
\definecolor{orange}{rgb}{0.85,0.33,0.13} 
\definecolor{green}{rgb}{0.13,0.85,0.33}
\definecolor{purple}{rgb}{0.33,0.13,0.85}
\definecolor{lime}{rgb}{0.65,0.85,0.13}
\definecolor{blue}{rgb}{0.13,0.65,0.85}
\pgfplotscreateplotcyclelist{tricolor}{%
  orange,every mark/.append style={fill=orange!80!black},mark=*\\%
  green,every mark/.append style={fill=green!80!black},mark=square*\\%
  purple,every mark/.append style={fill=purple!80!black},mark=otimes*\\%
  black,mark=star\\%
  orange,every mark/.append style={fill=orange!80!black},mark=diamond*\\%
  green,densely dashed,every mark/.append style={solid,fill=green!80!black},mark=*\\%
  purple,densely dashed,every mark/.append style={solid,fill=purple!80!black},mark=square*\\%
  black,densely dashed,every mark/.append style={solid,fill=gray},mark=otimes*\\%
  orange,densely dashed,mark=star,every mark/.append style=solid\\%
  green,densely dashed,every mark/.append style={solid,fill=green!80!black},mark=diamond*\\%
}
\pgfplotsset{colormap={tricolormap}{color=(orange) color=(green) color=(purple)},
  colormap={quadcolormap}{color=(orange) color=(lime) color=(blue) color=(purple)}}
\pgfplotstableset{%
  font=\small,
  every head row/.style={before row=\toprule[1pt], after row=\midrule},
  every last row/.style={after row=\bottomrule[1pt]}}
\pgfplotsset{compat=1.15}

\provideboolean{submissionversion}
\setboolean{submissionversion}{true}

\ifthenelse{\boolean{submissionversion}}{
  \renewcommand{\todo}[2][1]{}

  \newcommand{\CCM}[1]{#1}
}{

  \newcommand{\CCM}[1]{{\color{orange}{[CCM: #1]}}}
}

\usepackage{setspace}
\usepackage[margin=1in]{geometry}

\title{\vspace{-1.0em}\bfseries\sffamily
  Risk-Based Auto-Deleveraging\thanks{NH and CCM are supported by the Briger Family Digital
    Finance Lab at Columbia University. CCM is a research advisor for Paradigm and Uniswap Labs. SC and MN are supported by a  Columbia University CDFT Research Grant. SC is further supported by an NSERC Postdoctoral Fellowship (PDF‑599675-2025). MN is further supported by NSF Grants DMS-2407074 and DMS-2106056.}
}
\author{
  Steven Campbell\thanks{These authors contributed equally to this work.}  \\
  Columbia University \\
  \emailhref{sc5314@columbia.edu}
  \and
  Natascha Hey\protect\footnotemark[2] \\
  Columbia University \\
  \emailhref{nyh2112@columbia.edu}
  \and
  Ciamac C.~Moallemi \\
  Columbia University \\
  \emailhref{ciamac@gsb.columbia.edu}
  \and
  Marcel Nutz \\
  Columbia University \\
  \emailhref{mnutz@columbia.edu}
}
\date{\today\vspace{-3em}}

\begin{document}
\maketitle

\ifthenelse{\boolean{submissionversion}}{}{

}

\vspace{3em}
\begin{abstract}
Auto-deleveraging (ADL) mechanisms are a critical yet understudied component of risk management
on cryptocurrency futures exchanges. When available margin and other loss-absorbing
resources are insufficient to cover losses following large price moves, exchanges reduce
positions and socialize losses among solvent participants via rule-based ADL protocols.

We formulate ADL as an optimization problem that minimizes the exchange's risk of loss arising
from future equity shortfalls. In a single-asset, isolated-margin setting, the \emph{minimax
  leverage} policy --- minimizing the maximum leverage among participants --- is optimal for all
monotone risk measures. This policy has a transparent structure: positions are reduced first for
the most highly levered accounts, and leverage is progressively equalized via a water-filling (or
``leverage-draining'') rule. The policy is distribution-free, wash-trade resistant, Sybil
resistant, and path-independent.  It provides a canonical and implementable benchmark for ADL
design and clarifies the economic logic underlying queue-based mechanisms used in practice.

We further study the multi-asset, cross-margin setting, where the ADL problem becomes genuinely
multi-dimensional: the exchange must allocate reductions across accounts with portfolios exposed
to correlated price moves.  Under the expected loss objective, asset-level shadow prices separate
the problem across accounts, yielding a scalable numerical method.  Naive gross leverage misleads
here, ignoring within-portfolio hedging. When prices are driven by a single risk factor, the
optimal policy is again water-filling, but in a factor-adjusted leverage, so better-hedged
portfolios are deleveraged less.

We apply the framework to the October~10, 2025 Hyperliquid ADL event. Relative to the exchange's
realized allocation, our risk-minimizing allocations achieve lower expected shortfall.
\end{abstract}


\newpage

\begingroup
\makeatletter
\setcounter{tocdepth}{2}
\renewcommand{\baselinestretch}{0.98}\selectfont
\renewcommand*\l@section[2]{%
  \ifnum \c@tocdepth >\z@
    \addpenalty\@secpenalty
    \addvspace{.9em \@plus .9\p@}%
    \setlength\@tempdima{1.5em}%
    \begingroup
      \parindent \z@
      \rightskip \@pnumwidth
      \parfillskip -\@pnumwidth
      \leavevmode \bfseries
      \advance\leftskip\@tempdima
      \hskip -\leftskip
      #1\nobreak\hfil\nobreak\hb@xt@\@pnumwidth{\hss #2}\par
    \endgroup
  \fi}
\tableofcontents
\makeatother
\endgroup
\newpage

\section{Introduction}\label{sec:intro}

Perpetual futures exchanges in cryptocurrency markets allow for extreme leverage and continuous
trading, making them vulnerable to large price shocks.  When rapid price moves trigger liquidation
cascades, insurance funds can be exhausted, and exchanges are forced to reduce the exposure of
otherwise solvent traders through auto-deleveraging (ADL).  Such events are not hypothetical:
during the market-wide shock of October~10--11,~2025, multiple major venues activated ADL
mechanisms to preserve solvency, as reported by
\citet{chitra2025adl}.  Months later, on January 30, 2026,
over \$2.56 billion in leveraged positions were liquidated within a single trading day \citep{lang2026_crypto_volatility_yahoo}.

In these regimes, when a participant has insufficient account equity, their positions must be
unwound. Ideally, this is done through liquidation mechanisms that sell the positions to other
market participants at mutually agreeable prices, thereby unwinding distressed positions and containing losses within
defaulting accounts.
However, in
times of extreme duress, there may be no willing buyers in the market, or the prices that are bid
may leave the exchange insolvent. In these cases, ADL mechanisms unwind the positions of a
distressed participant by forcibly closing the positions of solvent participants on the other side
of the market. These forced position closures occur at prices that leave the exchange solvent. In
this way, ADL is a mechanism both to reduce positions and to socialize potential losses across market
participants.

Similar position reduction and loss-allocation problems arise in traditional financial
infrastructure. Central clearinghouses rely on default waterfalls to mutualize losses, but extreme
stress can overwhelm these mechanisms. During the March~2022 nickel crisis at the London Metal
Exchange (LME), rapidly rising prices threatened the exhaustion of default resources, see \citep{heilbron2025lme}. Rather than
invoking an explicit end-of-waterfall mechanism, the exchange suspended trading and retroactively
canceled executed trades --- an opaque intervention followed by legal challenges and lasting
damage to market confidence.

On perpetual futures exchanges, where forcible position reduction is implemented through ADL,
margining may be applied on an isolated basis, where collateral is posted separately for positions
in different assets, or under cross-margining, where collateral is pooled across positions in
different assets and liquidation decisions become coupled. In practice, most exchanges, including
BitMEX \citep{bitmexADL}, Hyperliquid \citep{hyperliquidADL}, and Binance \citep{binanceADL} use a
``queue-based'' methodology, first proposed by BitMEX, that ranks accounts according to a priority
metric which is the product of percentage profit-and-loss (P\&L) and leverage of the account.
Accounts then have their positions unwound sequentially
in decreasing order of priority until the requisite quantity has been
unwound. \citet{chitra2025adl} estimates that 95\% of
perpetual trading volume occurs on exchanges that use this queue-based ADL policy.  Despite
their central role in market stability, such ADL mechanisms are largely heuristic and have
rarely been studied as explicit quantitative design objects.

In this paper, we formalize auto-deleveraging as an explicit risk-minimization problem faced by an
exchange once the equity of an account has been exhausted.  In our formulation, the exchange
defines a loss function, which is the total equity shortfall aggregated across all accounts.  The
exchange then seeks to deleverage in a manner that minimizes the risk of future loss, where risk
is quantified according to some risk measure.  Risk-based objectives of this form are standard in
financial risk management, where coherent risk measures provide a principled framework for
evaluating tail exposure under uncertainty \citep{artzner1999coherent}. Among these, conditional
value-at-risk (CVaR) \citep{rockafellar2000cvar} and, more generally, spectral risk measures
\citep{acerbi2002spectral}, offer tractable and economically interpretable criteria that capture
extreme losses while preserving convexity.

Our contributions are as follows:
\begin{itemize}
\item We begin in the single-asset, isolated-margining setting. Here, when the exchange is
  risk-neutral and minimizes expected loss (i.e., expected total future account equity shortfall),
  we show that the unique optimal deleveraging policy seeks to minimize the maximum leverage
  across all participants in the system. This \emph{minimax leverage} policy is equivalent to a
  \emph{leverage water-filling} (more accurately ``leverage-draining'') or threshold rule that equalizes post-intervention leverage across
  affected accounts.  This characterization is analytical and identifies leverage as the natural
  variable governing optimal position reduction in this setting.

\item We next consider the case where the exchange is risk-averse and minimizes the conditional
  value-at-risk (CVaR) of the terminal aggregate account equity shortfall. Here, we show that the
  minimax leverage policy remains optimal, but is no longer unique. Instead, the set of optimal
  policies is characterized by a cutoff in leverage: if there exist accounts with leverage above
  the cutoff, positions of those accounts can be reduced in an arbitrary way until all are at the cutoff, at which point the minimax leverage policy must be applied. This
  leverage cutoff depends on the confidence level $\beta$ of the CVaR objective. We further establish that the minimax leverage policy remains optimal for the more general
  class of monotone risk measures.

\item We show that the minimax leverage policy has a number of
  desirable properties: it is distribution-free, wash-trade resistant, Sybil resistant, and path-independent. In these ways, the minimax leverage policy is a natural and principled
  deleveraging policy.
  
\item We then turn to the multi-asset, cross-margining setting, where the simple
  one-dimensional leverage ordering breaks down because default risk depends on the joint
  distribution of returns and on cross-asset hedges within each portfolio. We show that, under an
  expected-loss objective, the exchange's problem is still tractable. Introducing an asset-by-asset
  vector of shadow prices separates the optimization across accounts and yields an efficient
  numerical approach for the general case.
  Moreover, we identify a sharp analytic benchmark: when prices are driven
  by a single dominant market factor, the optimal deleveraging rule again has a water-filling
  structure, but in a modified factor-adjusted leverage rather than gross leverage, so that hedged portfolios
  are penalized less. 

\item We give an efficient numerical method for the general cross-margin problem. Exploiting the separable structure of the expected-loss objective, we propose a sample-average ADMM scheme \citep{boyd2011distributed} that decomposes across accounts at each iteration, allowing fully parallel per-account updates. On a stylized multi-asset example, the method recovers the factor water-filling structure predicted by the theory. We then adapt the scheme to exchange scale, with refinements for the heterogeneous-asset, sparse-portfolio regime.

\item We instantiate the method on the October~10, 2025 Hyperliquid ADL event, comparing the exchange's realized allocation against our risk-minimizing allocations. The model-based allocations achieve lower expected shortfall, and the difference is structural: the realized policy tends to fully close out affected accounts with little regard to leverage, whereas the risk-based allocations reduce each account only partially and in order of leverage.
\end{itemize}

At a practical level, the minimax leverage policy has important similarities to and differences from
the BitMEX queue-based ADL policy that is currently used in practice. The policies are
similar in that, in both cases, accounts are ranked according to a priority metric: in the minimax
leverage policy, this priority is the account leverage, while in the queue-based policy, it is the
product of percentage P\&L and leverage. Further, in the minimax leverage policy, unwinds occur
across accounts in a water-filling fashion according to the priority metric, while in the
queue-based policies, unwinds occur in sequence\footnote{I.e., the second-ranked account will be delevered only after the first-ranked is completely depleted, and so on.} according to the priority metric. The similarities
highlight that the minimax leverage policy is practically implementable, while the differences
highlight suboptimality in the status quo. Indeed, beyond the fact that the BitMEX policy cannot
directly be motivated by principled considerations, it satisfies none of the wash-trade
resistance, Sybil resistance, and path independence properties enjoyed by the minimax leverage
policy.

The choice of exchange shortfall risk as the ADL objective also deserves some discussion. In the
single-asset, isolated-margin setting, one possible criticism is that this criterion may overly
focus on the exchange's exposure to the side being delevered. If the market has moved down
sharply, for example, and as a result short positions are delevered, one may argue that these
short accounts already have significant profits and therefore pose little additional risk to the
exchange. However, by definition, deleveraging is constrained to the shorts in this
situation. Further, minimizing risk is one principled way of determining allocations, as opposed
to the ad hoc methods currently employed in practice. Moreover, the
criticism is weaker in the multi-asset, cross-margin setting, where accounts hold portfolios with
offsetting exposures across assets and there is no simple notion of a uniformly ``winning'' side
of the market. There, unrealized gains in one component of the portfolio need not imply low future
shortfall risk once the joint distribution of price moves and the account's residual exposures are
taken into account. More broadly, while alternative criteria are conceivable, it is not clear what
objective besides risk the exchange should optimize once ADL has been triggered.  For this reason,
we view exchange shortfall risk as a natural and tractable starting point for the design of ADL
rules. To the best of our knowledge, multi-asset ADL has not been discussed in the prior
literature.

\paragraph{Literature review.}
The design of default management and loss allocation mechanisms for central clearing
counterparties (CCPs) such as futures exchanges has received sustained attention in the
literature, but almost exclusively in a modular fashion, with different strands focusing on
distinct layers of the default waterfall. While margining, default resource sizing, and
auction-based close-out procedures have been studied in detail, the final stage of the waterfall
--- loss allocation once prefunded resources are exhausted --- has not been analyzed
systematically from an optimality or mechanism-design perspective.

The broader CCP literature examines several of these upstream layers. \citet{duffie2011centralclearing} provide a foundational analysis of when central clearing reduces counterparty risk, emphasizing the role of multilateral netting and the trade-offs involved in concentrating exposures at a CCP. \citet{menkveld2015crowded} identifies crowded positions among clearing members as an overlooked systemic risk and introduces tail-risk measures that account for correlated losses across members; \citet{huangmenkveldyu2021ccpexposure} complement this with an empirical decomposition of CCP exposure dynamics in stressed markets. From a capital-structure perspective, \citet{huang2019ccpcapital} studies how CCP capitalization interacts with for-profit incentives in the design of default resources. These strands address counterparty risk reduction, exposure measurement, and capital adequacy, but stop short of analyzing the rule by which residual losses are allocated once prefunded resources are exhausted.

Indeed, the final stage of the default waterfall has been discussed primarily in qualitative and
policy-oriented work. The end-of-waterfall procedure advocated by the International Swaps and
Derivatives Association is variation margin gains haircutting (VMGH), which reallocates residual
losses to clearing members whose positions accrued gains during the liquidation period
\citep{isda2013waterfall}. Surveys by \citet{domanski2015central} and \citet{armakola2015default}
document current practices and argue that this ``winner-pays'' mechanism is robust and typically
sufficient. \citet{Cont2015} provides a detailed discussion of end-of-waterfall tools, including
VMGH, contract tear-ups, and assessments, emphasizing incentive effects, legal feasibility, and
market confidence. \citet{Duffie2014} similarly analyzes CCP resolution from a systemic risk
perspective, framing the trade-off between continuity of clearing and contagion. However, neither
work proposes a formal model or optimization problem for the allocation of residual losses.

A recent exception is the work of \citet{chitra2025adl},
who provides a formal analysis of ADL mechanisms in perpetual futures
markets. Importantly, Chitra studies ADL under a different definition than considered here: Chitra
models ADL as an ex-post loss-socialization rule that seeks to recover an exchange equity deficit
by applying equity or profit
haircuts
to accounts. This is different from the formulation considered here, where ADL is defined as the allocation of forced position reductions to manage post-event risk.\footnote{In the recent (v3) version of \citep{chitra2025adl}, this difference is now highlighted as follows (Section~2.4 \emph{ibid.}, see also ``corrections'' on p.\,9): ``Production ADL is executed in \emph{contract space}: the engine
selects positions by a ranking score and forces contract-level reductions. The theoretical
analysis in this paper \cite[i.e.,][]{chitra2025adl} is
written in \emph{wealth space} (equity haircuts and haircutable endowment).''} Within his framework, \citet{chitra2025adl} casts haircut-based ADL as a mechanism-design problem and proves impossibility results showing that no ADL rule can
simultaneously satisfy a full set of natural desiderata, including revenue,
fairness, and solvency preservation. However,
\citet{chitra2025adl} does not
directly describe how positions should be reduced,
which is the focus of this paper. On the other hand, in our framework, if the positions are transferred at a fair market price, there is no change in equity and hence no loss socialization. In this way, the two papers consider different variations of ADL and are not directly comparable.  The variation we consider here is consistent with the way that ADL is currently implemented by perpetual futures exchanges in practice.

Complementing this formal literature, a recent discussion by \citet{jia2026hyperliquidthread}
provides an on-chain forensic analysis of Hyperliquid's October~10,~2025 liquidation event. They
analyze the overall economics of the full liquidation pipeline, from initial liquidations to
backstop takeovers and subsequent ADL unwinds.  They observe that delevered short positions were ex
post profitable since they were bought in at relative market lows. This challenges the notion of
ADL as a loss socialization mechanism. However, individual ADL outcomes are heterogeneous because
there were several waves of ADL and queue position mattered. 
While not a formal model of
ADL, \citep{jia2026hyperliquidthread} is useful institutional evidence. In particular, it reinforces the importance of modeling
the contract space allocation rule itself and of distinguishing forced position reduction from
ex post wealth transfers, which can be hard to determine during the time of a crisis.

The water-filling rule is also related to the axiomatic
literature on claims, bankruptcy, and taxation problems, surveyed by
\citet{Thomson2003,Thomson2015}. In the base claims model, agents have
claims \(c_i\) on an insufficient endowment \(E\), and a rule selects awards
\(a_i\) satisfying \(0 \leq a_i \leq c_i\) and \(\sum_i a_i = E\). Canonical
rules include the proportional rule, the constrained equal awards rule, the
constrained equal losses rule, and the Talmud rule, together with weighted
and priority versions. This literature studies such rules through axioms
including order preservation, composition, consistency, duality, and no
advantageous splitting or merging. As detailed in Appendix~\ref{app:claims_correspondence}, there is an exact
connection between this framework and the water-filling rule in our
single-asset isolated-margin model: under a change of variables from
buybacks to residual positions, the water-filling allocation corresponds to
the weighted constrained equal awards rule with weights proportional to
account equity. The interpretation, however, differs from the classical claims
model. In the claims literature, weights are typically primitive priority or
entitlement parameters, whereas in our setting they arise endogenously from
risk minimization considerations.
Thus our contribution in the isolated-margin case is not the abstract
water-filling formula itself, but the derivation of the particular weighted
claims rule from an exchange shortfall-risk minimization problem.

The multi-asset cross-margin setting is not covered by the claims literature.
Although the latter contains extensions with uncertainty,
multi-dimensional claims, network structure, and non-homogeneous endowments
\citep{Thomson2015}, our cross-margin ADL problem has different primitives:
signed vector positions, a scalar collateral pool for each account, vector
reduction constraints, and stochastic losses
driven by the joint distribution of asset prices. In particular, hedging and
factor exposure are central to the cross-margin problem, whereas the base
claims model is a scalar rationing problem. Thus, the single-asset
water-filling rule can be identified with a weighted claims rule, but the
general cross-margin ADL problem is not a special case of the claims framework.


\section{Single-Asset Isolated Margining}\label{sec:isolated.margining}
We consider a single traded asset under isolated margining and focus on a stress scenario following
a large price move. Without loss of generality, as a result of this move, a collection of long accounts becomes
insolvent and is liquidated, generating an aggregate exposure of size $Q>0$ units that must be absorbed
by the rest of the system. ADL reallocates this exposure to short accounts
by forcing them to reduce their positions.

We model this reallocation at a fixed execution price $p_\tau>0$ at time $\tau$.\footnote{In practice, this price is often chosen so that
  the exchange remains solvent. It is not necessarily a fair market price, and as a consequence,
  ADL may also transfer profits or losses to the short accounts. The present work focuses on the
  risk exposure of the exchange, not on how losses are socialized
  \citep[cf.][]{chitra2025adl}.} The exchange considers
all short accounts, indexed by $i=1,\dots,n$, and forces each of them to buy back a quantity
$x_i\ge 0$, thereby reducing their short positions. Each short account $i$ is characterized by its
position size $q_i>0$, entry price $p_i^{(e)}$, and posted margin $m_i\ge 0$. Clearly, feasibility
of $x\defeq (x_1,\dots,x_n)$ requires
\begin{align*}
\sum_{i=1}^n x_i = Q \qquad \text{and} \qquad
0 \le x_i \le q_i \quad \text{for} \quad i=1,\dots,n, 
\end{align*}
and we define the feasible set as
\begin{equation*}
\mathcal X \defeq \Big\{x\in\R^n:\ \sum_{i=1}^n x_i = Q,\ \ 0\le x_i\le q_i\ \text{for all }i\Big\}.
\end{equation*}

\begin{assumption}[Feasibility]\label{ass:feasible}
$Q < \sum_{i=1}^n q_i$.
\end{assumption}
Note that we can extend to the case $Q = \sum_{i=1}^n q_i$, but in this case the problem is trivial as there is only one feasible allocation.

\subsection{Equity, Losses, and Leverage}
For any price level $p$, the post-ADL equity of account $i$ is defined as
\begin{equation*}
e_i(x_i,p)
\;\defeq\;
q_i\big(p_i^{(e)} - p\big)
\;-\;
x_i\big(p_\tau - p\big)
\;+\;
m_i.
\end{equation*}
This expression equals the equity the account would have had at price $p$ absent ADL, plus the
realized P\&L from buying back $x_i$ units at $p_\tau$ rather than at $p$. Evaluated at $p=p_\tau$, the equity is
\begin{equation}\label{eq:def_Ei}
E_i \defeq e_i(x_i,p_\tau)=q_i\big(p_i^{(e)}-p_\tau\big)+m_i.
\end{equation}
In particular, at the time of intervention, equity is unaffected by the allocation $x_i$, whereas ADL reshapes the sensitivity of future equity to subsequent price movements. We will assume throughout that each candidate account for deleveraging is solvent.
\begin{assumption}[Solvency at the ADL time]\label{ass:solvent}
$E_i>0$ for all $i$.
\end{assumption}

The exchange incurs losses whenever account equity becomes negative. Aggregating across all short accounts,
the total exchange loss due to short accounts that would be incurred at a future time $T > \tau$ when the price is given by $p_T=p$ is
\begin{equation}\label{eq:exchange_loss}
\Lscr(x,p)
\;\defeq\;
\sum_{i=1}^n \big(-e_i(x_i,p)\big)_+
\end{equation}
where $(u)_+\defeq \max\{u,0\}$. We write the corresponding per-account shortfall as
\begin{equation}\label{eq:per.account.shortfall}
\sigma_i(x_i,p)\defeq \big(-e_i(x_i,p)\big)_+.
\end{equation}

A convenient state variable at the ADL time $\tau$ is the post-ADL leverage of each account, defined as the ratio of notional exposure to equity,
\begin{equation}\label{eq:def_leverage}
\ell_i(x_i)
\;\defeq\;
\frac{p_\tau(q_i-x_i)}{e_i(x_i,p_\tau)}
=
\frac{p_\tau(q_i-x_i)}{q_i(p_i^{(e)}-p_\tau)+m_i}=\frac{p_\tau(q_i-x_i)}{E_i}.
\end{equation}
Under Assumption~\ref{ass:solvent}, $\ell_i$ is affine and strictly decreasing in $x_i$ on $[0,q_i]$, with
\begin{equation*}
\ell_i(0)=\frac{p_\tau q_i}{E_i},\qquad \ell_i(q_i)=0.
\end{equation*}
While equity determines solvency,
leverage captures the sensitivity of future losses to price movements and will be the natural
variable in which optimal deleveraging policies are expressed.

\subsection{Risk-Based ADL}

\begin{figure}[tbh]
  \centering
  \begin{tikzpicture}[>=stealth, thick]
      \draw[->] (0,0) -- (12cm,0) node[right] {Time $t$};

      \coordinate (tau-tick) at (2cm,0);
      \draw ($(tau-tick) + (0,2mm)$) -- ($(tau-tick) - (0,2mm)$);
      \node[above=1mm] (tau) at (tau-tick) {$\tau$};
      \node[below=3mm, align=center, font=\small] (tau-desc) at (tau-tick) {ADL trigger price $p_\tau$};

      \coordinate (T-tick) at (9cm,0);
      \draw ($(T-tick) + (0,2mm)$) -- ($(T-tick) - (0,2mm)$);
      \node[above=1mm] (T) at (T-tick) {$T$};
      \node[below=3mm, align=center, font=\small] (T-desc) at (T-tick) {Close-out price $p_T$};

      \node[above=5mm, align=center, orange, font=\small] (tau-label) at (tau.north) {Exchange allocates\\buybacks $x_i$};
      \draw[->, orange] (tau-label.south) -- (tau.north);

      \node[above=5mm, align=center, font=\small] (T-label) at (T.north) {Exchange realizes\\loss $\Lscr(x, p_T)$};
      \draw[->] (T-label.south) -- (T.north);

    \end{tikzpicture}
\caption{Timeline of the ADL allocation problem. At the trigger time~$\tau$, the exchange observes the price $p_\tau$ and chooses the buyback allocation $x=(x_1,\dots,x_n)$. At the terminal time~$T>\tau$, the close-out price $p_T$ is realized and the exchange incurs the equity shortfall $\Lscr(x,p_T)$, whose risk the allocation $x$ is chosen to minimize.}
\label{fig:adl_timeline}
\end{figure}

We propose a risk-based design principle for ADL, namely to choose the buyback allocation $x$ which minimizes a
risk measure of the exchange's equity shortfall at the time horizon~$T>\tau$.  Given a model for the (random) price~$p_T$ at the terminal time~$T$, the exchange evaluates the risk of the terminal loss $\Lscr(x,p_T)$ through a risk measure $\rho(\cdot)$ and chooses an allocation $x\in\mathcal{X}$ that solves
\begin{equation}\label{eq:risk_based_adl}
\begin{array}{ll}
\minimize_{x\in\R^n} & \rho\big(\Lscr(x,p_T)\big) \\
\subjectto           & x\in\mathcal X.
\end{array}
\end{equation}
The two-stage structure of this problem is summarized in Figure~\ref{fig:adl_timeline}: at the ADL trigger time~$\tau$ the exchange observes the price $p_\tau$ and commits to the buyback allocation $x$, while the resulting equity shortfall is only realized at the terminal time~$T$ once $p_T$ has materialized.

We pause to discuss several modeling choices that frame the analysis.
\begin{itemize}
\item The formulation in~\eqref{eq:risk_based_adl} is static: the allocation $x$ is chosen once at the trigger time $\tau$ and held fixed until $T$, with no trading by other market participants in between, no intermediate intervention by the exchange, and no re-liquidation of accounts whose equity may turn negative before $T$. In practice, ADL is one step in a dynamic risk-management process and may need to be repeated as market conditions evolve. We adopt the static formulation because it isolates the central allocation question --- given a target quantity $Q$ that must be unwound at $\tau$, how should it be distributed across accounts? --- and admits tractable convex analysis. We view the resulting policies as a principled benchmark against which dynamic refinements can be compared.

\item Alternative objectives are conceivable: once ADL has been triggered, the exchange could in principle aim to maximize social welfare, minimize wealth transfers between accounts \citep[in the spirit of][]{chitra2025adl}, or maximize its own revenue. We focus on risk minimization because ADL is invoked precisely in tail scenarios, where the exchange's solvency is the dominant concern, and risk measures such as expected loss and CVaR offer a principled and economically interpretable framework for quantifying that tail exposure. Risk-based objectives also have the practical advantage of yielding tractable convex problems with structurally transparent optima, as we develop in the remainder of the paper.

\item 
  The loss $\mathcal{L}$ accounts only
  for the equity shortfall associated with short positions, and ignores (solvent) long
  accounts. This restriction is
  without loss of generality for the allocation problem~\eqref{eq:risk_based_adl}. That is easiest to see in the risk-neutral case ($\rho=\E$), because any loss contribution from long positions is independent of the allocation variable $x$ and therefore enters the objective only as an additive constant. The proposed allocation remains optimal even for the more general risk-based objectives considered below; see Proposition~\ref{prop:wf.optimal.monotone}.
\end{itemize}

\subsection{Minimizing Expected Loss}

In this subsection, we specialize the risk-based ADL problem~\eqref{eq:risk_based_adl} to the case
where $\rho=\E$, i.e., the exchange minimizes its expected loss:
\begin{equation}\label{eq:opt_rn}
\begin{array}{ll}
\minimize_{x\in\R^n} & \displaystyle V(x)\defeq \E\!\left[\Lscr(x,p_T)\right] \\
\subjectto & x\in\mathcal X.
\end{array}
\end{equation}
We will see that this principled approach leads to a particular rule $x^\star\in\mathcal{X}$ that is straightforward to implement and has numerous desirable properties.

Under Assumption~\ref{ass:p_T_regular}, the objective $V(x)$ is finite and convex in $x$ (see
Lemma~\ref{lem:shortfall_prop} in Appendix~\ref{app:isolated.margining}).  To ensure that
problem~\eqref{eq:opt_rn} is well posed, we assume that the terminal price~$p_T$ is
integrable. For simplicity of presentation, we further assume that the distribution of $p_T$
admits a density with sufficiently large support:

\begin{assumption}[Regularity of $p_T$]\label{ass:p_T_regular}
  $p_T$ is an integrable random variable that  admits a density $f_T$
  with $f_T(p)>0$ for $ p\ge p_\tau$.
\end{assumption}

The following theorem summarizes several key insights. First, minimizing expected losses for the exchange is equivalent to minimizing the maximal post-ADL leverage across accounts; cf.~(b). Solving this minimax leverage problem is further equivalent to using a threshold-rule on leverage: for a certain leverage threshold $t$, all accounts with initial leverage exceeding~$t$ are deleveraged down to $t$, whereas accounts with initial leverage below~$t$ remain untouched; cf.~(c). The threshold $t$ is set so that the total allocation across accounts equals the target buyback amount~$Q$. In fact, this policy not only minimizes the expected loss, but even minimizes the loss in any realization of $p_T$; cf.~(d). Any of these formulations leads to the same solution $x^\star\in\mathcal{X}$, which we call the \emph{water-filling rule} or \emph{minimax leverage policy}. 

\begin{theorem}[Minimax
leverage policy]\label{thm:risk_neutral}
Let Assumptions~\ref{ass:feasible}--\ref{ass:p_T_regular} hold. For a feasible allocation $x\in\mathcal{X}$, the following are equivalent:
\begin{enumerate}
    \item[(a)] $x$ minimizes the expected loss, i.e., solves~\eqref{eq:opt_rn}.
    \item[(b)] $x$ minimizes the maximal post-ADL leverage, i.e., solves
        \begin{equation}\label{eq:minmax}
        \minimize_{x\in\mathcal X} \displaystyle \maximize_{1\le i\le n}\ \ell_i(x_i).
        \end{equation}
    \item[(c)] $x$ is a threshold rule on leverage, i.e., there exists a threshold $t\in[0,\max_i\ell_i(0)]$ such that
        \begin{equation*}
        \begin{aligned}
        &x_i>0 &&\Longrightarrow\quad \ell_i(x_i)=t,\\
        &x_i=0 &&\Longrightarrow\quad \ell_i(0)\le t.
        \end{aligned}
        \end{equation*}
    \item[(d)] $x$ is a simultaneous pointwise minimizer of realized loss, i.e.,
    \[
    \mathcal{L}(x,p) \le \mathcal{L}(y,p) \qquad \text{for all}\quad y\in\mathcal X,\ p\in\mathbb R.
    \]
    \end{enumerate}
The unique optimal allocation for (a)--(d) is the \emph{water-filling rule} $x^\star=(x^\star_1,\dots,x^\star_n)\in\mathcal{X}$ defined by
\[
x^\star_i\;\defeq\;\Big(q_i-\frac{E_i}{p_\tau}\,t^\star\Big)_+,\qquad i=1,\dots,n,
\]
where the leverage threshold $t^\star>0$ is the unique root of the equation
\begin{equation}\label{eq:root_main_thm}
    \sum_{i=1}^n \Big(q_i-\frac{E_i}{p_\tau}\,t\Big)_+=Q. 
\end{equation}
\end{theorem}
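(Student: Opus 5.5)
The plan is to rewrite everything in leverage coordinates. The problem then becomes a weighted resource-allocation problem with a fixed weighted sum, and each of (a)--(d) can be compared with the water-filling point $x^\star$.

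\textbf{Reduction and the threshold.} Since $e_i(x_i,p)=E_i-(q_i-x_i)(p-p_\tau)$ and $q_i-x_i=E_i\ell_i/p_\tau$, the per-account shortfall~\eqref{eq:per.account.shortfall} is
\[
\sigma_i(x_i,p)=E_i\,\varphi_p(\ell_i(x_i)),\qquad \varphi_p(\ell)\defeq\big(\ell\,r(p)-1\big)_+,\quad r(p)\defeq\tfrac{p-p_\tau}{p_\tau}.
\]
For each $p$, the function $\varphi_p$ is convex and nondecreasing in $\ell\ge 0$, and it vanishes identically when $p\le p_\tau$. The constraint $\sum_i x_i=Q$ becomes $\sum_i E_i\ell_i=C\defeq p_\tau\big(\sum_i q_i-Q\big)>0$, together with the box constraints $0\le\ell_i\le\ell_i(0)$.

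Next consider $h(t)\defeq\sum_i\big(q_i-\tfrac{E_i}{p_\tau}t\big)_+$. It is continuous and nonincreasing, and it is strictly decreasing wherever it is positive. It satisfies $h(0)=\sum_i q_i>Q$ by Assumption~\ref{ass:feasible}, and $h(\max_i\ell_i(0))=0<Q$. Hence~\eqref{eq:root_main_thm} has a unique root $t^\star\in(0,\max_i\ell_i(0))$. The resulting $x^\star$ lies in $\mathcal X$ and is exactly a threshold rule with $t=t^\star$. Conversely, any threshold rule $x\in\mathcal X$ with threshold $t$ has $x_i=(q_i-E_it/p_\tau)_+$, so $h(t)=Q$, which forces $t=t^\star$ and $x=x^\star$. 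This gives (c)$\iff x=x^\star$.

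\textbf{Pointwise optimality (d) for $x^\star$.} Let $A=\{i:x^\star_i>0\}$, where $\ell_i(x^\star_i)=t^\star$, and $B$ its complement, where $\ell_i(x_i^\star)=\ell_i(0)\le t^\star$. Fix $p$ and a subgradient $s\ge 0$ of $\varphi_p$ at $t^\star$.
\begin{itemize}
\item For $i\in A$, convexity gives $\varphi_p(\ell_i(y_i))-\varphi_p(t^\star)\ge s(\ell_i(y_i)-t^\star)$.
\item For $i\in B$, both $\ell_i(y_i)$ and $\ell_i(0)$ are at most $t^\star$. Convexity bounds the chord slope of $\varphi_p$ on $[\ell_i(y_i),\ell_i(0)]$ by $s$, so $\varphi_p(\ell_i(y_i))-\varphi_p(\ell_i(0))\ge s(\ell_i(y_i)-\ell_i(0))$.
\end{itemize}
Multiplying by $E_i$ and summing gives
\[
\Lscr(y,p)-\Lscr(x^\star,p)\ge s\sum_iE_i\big(\ell_i(y_i)-\ell_i(x^\star_i)\big)=s(C-C)=0 .
\]
So $x^\star$ satisfies (d), and integrating over $p$ shows it satisfies (a). Finiteness of $V$ comes from integrability of $p_T$, as in Lemma~\ref{lem:shortfall_prop}.

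\textbf{Uniqueness and the minimax property (b).} First I will show that $\max_i\ell_i(y_i)\ge t^\star$ for every $y\in\mathcal X$, with equality only if $y=x^\star$. Suppose $\max_i\ell_i(y_i)\le t^\star$. Then $\ell_i(y_i)\le\ell_i(x^\star_i)$ for every $i$: on $A$ because $\ell_i(x_i^\star)=t^\star$, and on $B$ by the box constraint. Since $\sum_iE_i\ell_i$ is the same for $y$ and $x^\star$ and all $E_i>0$, every inequality is an equality, so $y=x^\star$. This proves (b)$\iff x=x^\star$.

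For (a) and (d), take $y\ne x^\star$, so $L\defeq\max_i\ell_i(y_i)>t^\star$. For every $p$ with $1/L<r(p)<1/t^\star$ we have $\Lscr(x^\star,p)=0<\Lscr(y,p)$. This set of $p$ is a nondegenerate interval above $p_\tau$, so it has positive probability by Assumption~\ref{ass:p_T_regular}. Combined with the pointwise inequality from the previous step, this yields $V(y)>V(x^\star)$ and shows that $y$ fails (d). Hence (a), (b), (c), (d) each single out $x^\star$, and they are equivalent.

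I expect the main obstacle to be the pointwise comparison in (d), specifically the accounts in $B$ that are capped below the threshold. A naive Jensen argument on $A$ alone does not control them. The chord-slope bound by the single subgradient $s$ at $t^\star$ is the device that handles both groups uniformly, so I would set up that inequality carefully first.
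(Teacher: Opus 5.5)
Your proof is correct, but for the part involving (a) it takes a genuinely different route from the paper.

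\textbf{The paper's route.} The paper closes the cycle (a)$\Rightarrow$(c)$\Rightarrow$(b)$\Rightarrow$(d)$\Rightarrow$(a). Its key step (a)$\Rightarrow$(c), Proposition~\ref{prop:equalization}, is a KKT argument:
\begin{enumerate}
\item it checks Slater's condition;
\item it differentiates $V$ under the expectation to obtain $\partial V/\partial x_i=-v(\ell_i(x_i))$, where $v(\ell)$ is the expectation of $p_T-p_\tau$ on the event that $p_T$ exceeds the bankruptcy price $p_\tau(1+\ell^{-1})$; the density is used to exclude an atom at that price;
\item it treats the endpoint $x_i=q_i$ by a limiting-subgradient argument;
\item it uses strict monotonicity of $v$ to force leverage equalization.
\end{enumerate}

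\textbf{Your route.} You show directly that each of (a)--(d) singles out $x^\star$. Your pointwise-dominance step is essentially the paper's Lemma~\ref{lem:pointwise_water_filling}; your chord-slope bound replaces the paper's choice of $h_i\in\partial\phi(\bar u_i)$ with $h_i\le g^\star$. Your minimax-uniqueness step is essentially Proposition~\ref{prop:minimax_solution}. For uniqueness in (a), however, you replace all of the calculus by the separating interval $(p_\tau(1+1/L),\,p_\tau(1+1/t^\star))$. On this interval $x^\star$ has zero loss while any $y\neq x^\star$ has positive loss.

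\textbf{What each route buys.} Your argument is shorter and more elementary. It needs no differentiability, constraint qualification, or boundary analysis. It uses only that $p_T$ is integrable and charges every nondegenerate interval above $p_\tau$, so no density or atomlessness is required. Your characterizations of (b), (c), (d) and their uniqueness need no distributional assumption at all.

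The paper's route produces the marginal-value function $v(\ell)$. This underlies its economic interpretation that the marginal reduction in expected loss is increasing in leverage. The same KKT template is reused for CVaR (Lemma~\ref{lem:cvar_kkt_reduced}) to describe the full optimizer set, where uniqueness genuinely fails. Your separating-interval argument cannot work there: CVaR ignores prices below $p_\beta$, and the interval may lie entirely in that region.
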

As detailed in Lemma~\ref{lem:G_properties_unique_level} in \Cref{sec:minimax}, the function $G(t)
\defeq \sum_{i=1}^n (q_i-\frac{E_i}{p_\tau}\,t)_+$ appearing in~\eqref{eq:root_main_thm} is continuous and strictly decreasing on $[0,\max_i \ell_i(0)]$. Thus, it is straightforward to compute the root~$t^\star$. All proofs for this section are reported in Appendix~\ref{app:isolated.margining}.

Theorem~\ref{thm:risk_neutral} underscores that the optimal ADL allocation for minimizing expected losses induces a \emph{leverage equalization principle}. It reallocates exposure by leveling down the highest post-ADL leverages first, until they reach the common leverage threshold. As illustrated in Figure~\ref{fig:water_filling_two_thresholds}, this corresponds to a water-filling (or leverage-draining) rule: Enumerate the accounts in decreasing initial leverage, so that $\ell_{1}(0) \ge \ell_{2}(0) \ge\dots\ge\ell_{n}(0)$. Then as the leverage cap decreases from $\ell_{1}(0)$ downward, initially only the most levered account is forced to buy back.
Once its leverage reaches $\ell_{2}(0)$, the top two accounts are deleveraged jointly so as to keep their leverages equal, and so on. The procedure stops when the target buyback amount~$Q$ is reached.

\begin{figure}[tbh]
\centering

\begin{minipage}[t]{0.48\textwidth}
\centering
\begin{tikzpicture}
\begin{axis}[
    width=\textwidth,
    height=0.88\linewidth,
    ybar stacked,
    bar width=18pt,
    ymin=0, ymax=8.8,
    xmin=0.6, xmax=3.8,
    xtick={1.0,1.8,2.6,3.4},
    xticklabels={1,2,3,4},
    xlabel={Account $i$},
    ylabel={Leverage},
    axis lines=left,
    tick align=outside,
    ymajorgrids,
    grid style={dotted, gray!35},
    tick label style={font=\small},
    yticklabel style={opacity=0}, 
    label style={font=\small},
    clip=false,
    legend cell align={left},
    legend style={
        at={(0.5,1.05)},
        anchor=south,
        legend columns=2,
        draw=none,
        fill=none,
        font=\small,
    },
    legend image code/.code={%
        \draw[draw=black,#1] (0cm,-0.09cm) rectangle (0.34cm,0.09cm);
    },
]

\addlegendimage{fill=blue!25}
\addlegendentry{Leverage removed by ADL}
\addlegendimage{fill=blue!80}
\addlegendentry{Leverage post-ADL}


\addplot+[draw=black, fill=blue!80] coordinates
    {(1.0,6.0) (1.8,5.0) (2.6,4.0) (3.4,2.0)};
\addplot+[draw=black, fill=blue!25] coordinates
    {(1.0,2.0) (1.8,0.0) (2.6,0.0) (3.4,0.0)};

\draw[black, dashed, thick] (axis cs:0.7,6.0) -- (axis cs:3.7,6.0);
\node[font=\small, anchor=south west] at (axis cs:3.52,6.0) {$t^\star$};

\node[font=\small, anchor=south] at (axis cs:1.0,8.08) {$\ell_1(0)$};
\node[font=\small, anchor=south] at (axis cs:1.8,5.08) {$\ell_2(0)$};
\node[font=\small, anchor=south] at (axis cs:2.6,4.08) {$\ell_3(0)$};
\node[font=\small, anchor=south] at (axis cs:3.4,2.08) {$\ell_4(0)$};

\end{axis}
\end{tikzpicture}

\end{minipage}
\hfill
\begin{minipage}[t]{0.48\textwidth}
\centering
\begin{tikzpicture}
\begin{axis}[
    width=\textwidth,
    height=0.88\linewidth,
    ybar stacked,
    bar width=18pt,
    ymin=0, ymax=8.8,
    xmin=0.6, xmax=3.8,
    xtick={1.0,1.8,2.6,3.4},
    xticklabels={1,2,3,4},
    xlabel={Account $i$},
    axis lines=left,
    tick align=outside,
    ymajorgrids,
    grid style={dotted, gray!35},
    tick label style={font=\small},
    yticklabel style={opacity=0}, 
    label style={font=\small},
    clip=false,
]


\addplot+[draw=black, fill=blue!80, forget plot] coordinates
    {(1.0,3.5) (1.8,3.5) (2.6,3.5) (3.4,2.0)};
\addplot+[draw=black, fill=blue!25, forget plot] coordinates
    {(1.0,4.5) (1.8,1.5) (2.6,0.5) (3.4,0.0)};

\draw[black, dashed, thick] (axis cs:0.7,3.5) -- (axis cs:3.7,3.5);
\node[font=\small, anchor=south west] at (axis cs:3.52,3.5) {$t^\star$};

\node[font=\small, anchor=south] at (axis cs:1.0,8.08) {$\ell_1(0)$};
\node[font=\small, anchor=south] at (axis cs:1.8,5.08) {$\ell_2(0)$};
\node[font=\small, anchor=south] at (axis cs:2.6,4.08) {$\ell_3(0)$};
\node[font=\small, anchor=south] at (axis cs:3.4,2.08) {$\ell_4(0)$};

\end{axis}
\end{tikzpicture}

\end{minipage}

\caption{The minimax leverage policy equalizes leverage by water-filling (``leverage-draining'') from the initial leverage $\ell_i(0)$ down to the threshold $t^\star$. A larger total buyback quantity $Q$ leads to a lower threshold $t^\star$ and more accounts being affected (right).}
\label{fig:water_filling_two_thresholds}
\end{figure}

The intuition connecting minimization of expected losses with minimization of maximal leverage is as follows. A buyback at the execution price $p_\tau$ does not change the equity $E_i$ at time~$\tau$, but it does reduce the residual short position $q_i-x_i$ and hence increases the account's ``distance to default.'' Indeed, as derived in \eqref{eq:ze_price}--\eqref{eq:ze_price_leverage}, for $x_i<q_i$ the account incurs a shortfall only when the terminal price exceeds its zero-equity (bankruptcy) level
\[
p_\tau+\frac{E_i}{q_i-x_i} = p_\tau\big(1+\ell_i(x_i)^{-1}\big),
\]
so higher post-ADL leverage $\ell_i(x_i)$ corresponds to
a larger region of the right tail of $p_T$ over which the exchange bears losses from that account. Consequently, the marginal reduction in expected loss from increasing $x_i$ is governed by how much the buyback shifts the tail threshold. 
This marginal benefit is strictly larger (in absolute value) for accounts with higher leverage. It follows that whenever two accounts can be adjusted at the margin, any allocation that buys back from a relatively low-leverage account while leaving a strictly higher-leverage account unaddressed can be improved by reallocating an infinitesimal unit of buyback toward the more levered account. 
As a result, an expected-loss minimizer must first remove exposure from the most levered accounts, which is exactly the minimax formulation~\eqref{eq:minmax}.

\subsection{Properties of the Minimax Leverage Policy}\label{sec:minimax-props}
This section studies the properties of the water-filling allocation obtained in Theorem~\ref{thm:risk_neutral}. An obvious but important feature is that the policy is \emph{distribution-free}, i.e., does not require a model for the terminal price~$p_T$ --- the allocation depends only on the state
$\{(q_i,E_i)\}_{i=1}^n$ at the ADL time~$\tau$ (as well as $Q,p_\tau$),
and is therefore independent of both the horizon~$T$ and the distribution of~$p_T$.

A related property is that the \emph{leverage ordering of accounts is invariant to the execution
  price}. That is, the ranking of accounts by initial leverage, which determines the
water-filling priority, does not depend on the price~$p_\tau$ at which ADL is executed.

\begin{proposition}[Price-independence of leverage ordering]\label{prop:leverage_ordering}
For any two accounts $i$ and $j$, define the leverage of account~$i$ at price~$p$ (before ADL) as
\[
\ell_i(p) \defeq \frac{p\, q_i}{E_i(p)}, \qquad E_i(p) \defeq q_i(p_i^{(e)} - p) + m_i.
\]
Then, for any prices $p,p'>0$ at which both accounts are solvent, $\ell_i(p) \geq \ell_j(p)$ if and only if $\ell_i(p') \geq \ell_j(p')$.
\end{proposition}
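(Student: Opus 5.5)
The plan is to pass from leverage to its reciprocal and show that this reciprocal is an affine function of $1/p$. For $p>0$ at which account $i$ is solvent, we have $E_i(p)>0$ and $\ell_i(p)>0$ (since $q_i>0$), and
\[
\frac{1}{\ell_i(p)} \;=\; \frac{q_i(p_i^{(e)}-p)+m_i}{p\,q_i} \;=\; \frac{a_i}{p} - 1,
\qquad a_i \defeq p_i^{(e)} + \frac{m_i}{q_i}.
\]
The constant $a_i$ is the zero-equity (bankruptcy) price of the unlevered-down account, and it does not depend on $p$.

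Because both leverages are positive, $\ell_i(p)\ge \ell_j(p)$ holds if and only if $1/\ell_i(p)\le 1/\ell_j(p)$. By the display this is equivalent to $a_i/p \le a_j/p$, and since $p>0$ it is in turn equivalent to $a_i\le a_j$. The last condition makes no reference to $p$. Running the same chain at $p'$ therefore gives $\ell_i(p')\ge\ell_j(p')$ if and only if $a_i\le a_j$, which proves the equivalence. As a byproduct we get a clean interpretation: accounts are ranked by their bankruptcy price $a_i$, with higher leverage corresponding to a lower bankruptcy price.

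I do not expect a genuine obstacle. The only care needed is to use the solvency hypothesis $E_i(p),E_j(p)>0$ (and likewise at $p'$) to guarantee strictly positive leverages, so that taking reciprocals reverses the inequality without any sign issues or division by zero. Strict positivity of $p$ and $q_i$ supplies the rest.
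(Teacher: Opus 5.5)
Your argument is correct and is essentially the paper's proof. The paper cross-multiplies to reduce $\ell_i(p)\ge\ell_j(p)$ to the sign of $q_iE_j(p)-q_jE_i(p)=q_iq_j(p_j^{(e)}-p_i^{(e)})+q_im_j-q_jm_i$, which is exactly $q_iq_j(a_j-a_i)$ in your notation, so both routes compare the bankruptcy prices $a_i$ and $a_j$. The only minor difference is that your reciprocal step needs $q_i,q_j>0$ (the paper's standing assumption for short accounts), whereas the cross-multiplication would also handle a zero position unchanged.
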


As a consequence, the water-filling priority order used in Theorem~\ref{thm:risk_neutral} is invariant to the choice of execution price~$p_\tau$.

Additionally, unlike ADL rules that sort accounts using realized or unrealized P\&L measures, the water-filling rule is \emph{wash-trade resistant}. The reason is that the relevant state variable for minimax leverage is the pair $(q_i,E_i)$, where $E_i$ already includes the mark-to-market P\&L at the execution price~$p_\tau$. Consequently, any trade prior to $\tau$ that leaves the position size $q_i$ unchanged merely reallocates equity between unrealized P\&L at the entry price and margin. It therefore leaves $E_i$ unchanged, and hence does not affect the initial leverage $\ell_i(0)=p_\tau q_i/E_i$ that determines the water-filling priority.

The next two sections derive two further important properties of the minimax leverage policy. First, Sybil resistance, meaning that traders cannot evade deleveraging by operating through multiple, smaller accounts. Second, path-independence, meaning that two subsequent ADL events with quantities $Q_1$ and $Q_2$ lead to the same outcome as a single ADL event with quantity $Q_1+Q_2$. We also show that path-independence, together with leverage-priority, characterizes the minimax leverage policy among all ADL policies.

\begin{remark}\label{rem:queue-is-inconsistent}
The BitMEX-style queue-based ADL policy (see Section~\ref{sec:intro}) fails to have some of these desirable properties:
    \begin{itemize}
    \item Because the BitMEX ranking is based on unrealized P\&L, the priority ordering of accounts can change depending on the execution price, unlike the leverage ordering used by the minimax leverage policy (\Cref{prop:leverage_ordering}).

    \item Because the BitMEX policy deleverages the highest-ranked account before targeting any lower-ranked account, it is not Sybil resistant. For instance, the most levered position in the system could be split across two accounts: one with even higher leverage and one with sufficiently low leverage to appear third (or lower) in the queue. An ADL event of a given size (e.g., equal to the size of the original position) would then deleverage the first two accounts in the queue but leave the third untouched, whereas without splitting the same event could have fully deleveraged the original account.

        \item Prioritizing the (initially) top-ranked account also induces path dependence. Under two successive ADL events, the first event may deleverage the initially top-ranked account enough to lower its rank, so that the second event targets a different account first; if instead the same aggregate quantity were executed in a single event, the entire quantity would have been applied to the initially top-ranked account.

        \item Finally, because queue priority depends explicitly on unrealized P\&L, an account could lower its queue priority via a wash trade.

    \end{itemize}
\end{remark}

\subsubsection{Sybil Resistance}
When accounts are anonymous, a participant can split their aggregate position and collateral across multiple accounts. If deleveraging allocations depend on per-account quantities, this creates a ``Sybil'' manipulation channel that can potentially reduce the participant's total forced buyback without changing the underlying economic exposure. Clearly, it is preferable to use ADL mechanisms that are \emph{Sybil resistant}: splitting an account into several accounts with the same combined position and equity should not decrease the participant's total buyback. Theorem~\ref{thm:sybil_waterfilling} below states that the minimax leverage policy is Sybil resistant.

To formalize this result, we define account splits and Sybil resistance mathematically. Fix a set of
\emph{non-attacker} short accounts indexed by $\mathcal N$, with position and equity parameters $\{(q_j,E_j)\}_{j\in\mathcal N}$. We view this as the relevant account state at time $\tau$ since, up to scaling by the price, $\ell_i(0)=p_\tau q_i/E_i$ depends only on these two quantities. An \emph{attacker} controls an aggregate pair $(q^A,E^A)$ with $q^A\ge 0$ and $E^A>0$. For a split into $K$ accounts, denote the attacker index set by
$\mathcal A_K=\{1,\dots,K\}$ and let the full economy index set be $\mathcal I_K\defeq\mathcal N\cup\mathcal A_K$. Then $\mathcal A_1$ represents the attacker in the \emph{unsplit} (single-account) configuration, holding
$(q^A,E^A)$. The next definition formalizes that the attacker can split her holdings $q^A$ and equity $E^A$ arbitrarily over her $K$ Sybil accounts.

\begin{definition}[Sybil split]\label{def:sybil-split}
A \emph{Sybil split} of $(q^A,E^A)$ is a collection $\{(q_k,E_k)\}_{k=1}^K\in (\R_+\times\R_{++})^K$ of attacker-controlled accounts such that $\sum_{k=1}^K q_k=q^A$ and $\sum_{k=1}^K E_k=E^A$.
\end{definition}

Next, we formally define Sybil resistance. For a given ADL mechanism, let $x^{K}\in\R_+^{\mathcal I_K}$ denote the resulting allocation
for the scenario with Sybil split, i.e., $x_i^{K}$ is the buyback assigned to account~$i$ and $\sum_{i\in\mathcal I_K} x_i^{K}=Q$. Thus, the attacker’s \emph{total} buyback is
\[
X_A^{K}\defeq \sum_{k\in\mathcal A_K} x_k^{K}.
\]

\begin{definition}[Sybil resistance]\label{def:sybil-resistance}
An ADL mechanism is \emph{Sybil resistant} if splitting an account cannot reduce the attacker’s total buyback. That is, $X_A^{K} \ge X_A^{1}$ for every attacker aggregate state $(q^A,E^A)\in\R_+\times\R_{++}$ and every Sybil split
$\{(q_k,E_k)\}_{k=1}^K$ of $(q^A,E^A)$.
\end{definition}

We can now state the formal result.

\begin{theorem}\label{thm:sybil_waterfilling}
The minimax leverage policy is Sybil resistant.
\end{theorem}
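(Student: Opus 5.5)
Write the water-filling allocation as a function of the threshold, $x_i(t)=(q_i-E_i t/p_\tau)_+$, as in Theorem~\ref{thm:risk_neutral}. The factor $1/p_\tau$ is common to all accounts, so we absorb it into $t$ and work with $x_i(t)=(q_i-E_i s)_+$, where $s=t/p_\tau$.

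For a fixed threshold $s$, compare the attacker's total buyback in the split and unsplit configurations. By convexity of $u\mapsto u_+$ (equivalently, subadditivity of $(\cdot)_+$),
\[
\sum_{k=1}^K (q_k-E_k s)_+\;\ge\;\Big(\sum_{k=1}^K (q_k-E_k s)\Big)_+=(q^A-E^A s)_+ .
\]
So at any common threshold, the Sybil accounts together buy back at least as much as the single attacker account would.

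Next, compare the two thresholds $s^1$ and $s^K$. Let $G_1(s)$ and $G_K(s)$ denote the total buyback functions of the two economies. Each is a sum of the non-attacker terms, which are identical in both economies, and the attacker terms. By the inequality above, $G_K(s)\ge G_1(s)$ for every $s$. By Lemma~\ref{lem:G_properties_unique_level}, both functions are continuous and nonincreasing, and strictly decreasing where positive. The equations $G_1(s^1)=Q=G_K(s^K)$ therefore give
\[
G_K(s^1)\ge G_1(s^1)=Q=G_K(s^K),
\]
and hence $s^K\ge s^1$: splitting raises, or leaves unchanged, the water level.

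A higher threshold means the non-attacker accounts buy back weakly less, since each $(q_j-E_j s)_+$ is nonincreasing in $s$. Because total buyback is fixed at $Q$,
\[
X_A^K = Q-\sum_{j\in\mathcal N}(q_j-E_j s^K)_+\;\ge\; Q-\sum_{j\in\mathcal N}(q_j-E_j s^1)_+ = X_A^1 .
\]
This is the claimed Sybil resistance. The direction of the argument is deliberate: we avoid comparing attacker terms at different thresholds and instead route everything through the non-attacker accounts.

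The main obstacle is that feasibility (Assumption~\ref{ass:feasible}) and the uniqueness of the root may be delicate in degenerate cases. These include $q_k=0$ accounts, which have zero leverage, and $Q$ reaching the total position. Feasibility is preserved because $\sum_k q_k=q^A$. If uniqueness of the root fails, monotonicity alone still suffices: whenever $G_K(s)\ge G_1(s)$, the largest roots satisfy the required ordering. Alternatively, the same conclusion can be obtained directly from the threshold characterization (c) in Theorem~\ref{thm:risk_neutral}.
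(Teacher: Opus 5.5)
Your proof is correct and follows the paper's argument essentially step for step: subadditivity of $(\cdot)_+$ gives the pointwise dominance $G_K\ge G_1$ of the total demand functions, strict decrease then yields $s^K\ge s^1$, and market clearing routes the comparison through the non-attacker demand, which is nonincreasing in the threshold. The rescaling $s=t/p_\tau$ and your remarks on degenerate cases are harmless additions that do not change the argument.
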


While the proof in Appendix~\ref{app:pf.sybil.waterfilling} proceeds via the subadditivity of the function defining the leverage threshold~\eqref{eq:root_main_thm}, the key intuition for Theorem~\ref{thm:sybil_waterfilling} is the very essence of the minimax leverage policy: it focuses on the maximal leverage, which cannot be reduced by splitting. To see this, let
\[
\ell^A \defeq \frac{p_\tau q^A}{E^A}
\qquad\text{and}\qquad
\ell_k \defeq \frac{p_\tau q_k}{E_k}
\]
be the pre-ADL leverages of the unsplit attacker and sub-account $k$, respectively.
The following shows that $\ell^A$ is a convex combination of $\{\ell_k\}_{k=1}^K$ with weights
$w_k\defeq E_k/E^A$:
\[
\ell^A
= \frac{p_\tau q^A}{E^A}
= \frac{p_\tau\sum_{k=1}^K q_k}{\sum_{k=1}^K E_k}
= \sum_{k=1}^K \frac{E_k}{\sum_{j=1}^K E_j}\,\frac{p_\tau q_k}{E_k}
= \sum_{k=1}^K w_k\,\ell_k.
\]
In particular, $\max_{1\le k\le K}\ell_k \ge \ell^A$, meaning that the \emph{maximal leverage among the Sybil accounts is never lower than the original leverage.}

We observe that while splitting accounts is not beneficial under the minimax leverage policy, the total amount deleveraged is not necessarily invariant --- it may increase.

\begin{remark}
Splitting accounts can lead to a strictly larger amount bought back from the attacker. For instance, this would occur if an account with leverage just below the threshold $t^\star$ is split so that one sub-account becomes the highest leveraged account in the system. Equivalently, merging several accounts can lead to a strict decrease in the amount deleveraged. As a result, the water-filling rule indirectly incentivizes account aggregation.
\end{remark}

\subsubsection{Path-Independence, Leverage Priority, and Axiomatic Characterization}\label{sec:path_indep}

In this section, we first observe that the minimax leverage policy has two properties that we call path-independence and leverage-priority. Then, we show that these two properties can serve as an axiomatic characterization of the minimax leverage policy among all possible ADL mechanisms. For brevity, we limit ourselves to informal statements in this section and report the mathematical details in Section~\ref{sec:Formalization_TC_and_proof_of_axiamatioc_char}. In particular, we refer to Assumptions~\ref{ass:time-consist-F-isolated} and~\ref{ass:leverage-priority-F-isolated} for precise versions of the following two definitions.

\begin{definition}[Path-independence]\label{def:path_indep_informal}
    An ADL mechanism is path-independent if two successive applications with buy-back quantities~$Q_1$ and~$Q_2$ lead to the same outcome as a single application with buy-back quantity $Q_1+Q_2$.
\end{definition}

\begin{definition}[Leverage-priority]\label{def:leverage_priority_informal}
    An ADL mechanism satisfies leverage-priority if, for sufficiently small buy-back quantity $Q>0$, it only affects the account(s) with maximal initial leverage.
\end{definition}

Using the water-filling representation of the minimax leverage policy, it is not hard to see that it satisfies both properties, which is one implication of Theorem~\ref{thm:tc-waterfilling-isolated} below. Intuitively, path-independence holds because water-filling to the target level $t^\star$ and then continuing to the target level~$t^{\star\star}$ yields the same as directly water-filling to~$t^{\star\star}$ (and one verifies that deleveraging~$Q_1$ and then~$Q_2$ yields the same eventual target $t^{\star\star}$ as directly deleveraging $Q_1+Q_2$). Of course, leverage-priority is immediate from the water-filling representation.

The main result of the theorem is the reverse implication: if an ADL mechanism satisfies path-independence and leverage-priority, it must be the minimax leverage policy. To state such a result rigorously, Section~\ref{sec:Formalization_TC_and_proof_of_axiamatioc_char} formalizes a general ADL mechanism mathematically as follows. Consider the state space $\mathcal S \defeq (\R_+\times\R_{++})^n$ of all possible account states $s=(q_i,E_i)_{i=1}^n$. Each state encodes a short position size $q_i\ge 0$
and an equity level $E_i>0$ for each account~$i$. Then an ADL mechanism is a family
\[
\{F_Q:\mathcal S\to\mathcal S\}_{Q\ge 0}
\]
of maps on $\mathcal S$, where $F_Q(s)$ is the post-ADL state after a total buyback quantity $Q$ has been allocated across accounts. Naturally, we only consider mechanisms that reduce (but never increase) existing positions (see Assumption~\ref{ass:market-clearing-F-isolated} for details).
We then have the following axiomatic characterization of the minimax leverage policy.

\begin{theorem}[Characterization by path-independence and leverage-priority]\label{thm:tc-waterfilling-isolated} The minimax leverage policy satisfies path-independence and leverage-priority. Conversely, if any ADL mechanism satisfies these two properties, then it must be the minimax leverage policy.
\end{theorem}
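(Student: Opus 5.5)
We first verify that the minimax leverage policy has both properties, and then show that any mechanism with both properties coincides with it.

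\emph{Forward direction.} We work with the water-filling representation of Theorem~\ref{thm:risk_neutral}. Write $W_Q$ for the map sending a state $s=(q_i,E_i)$ to $(q_i-x_i^\star,E_i)$, where $x^\star$ is the water-filling allocation for quantity $Q$. Equity is unchanged because buybacks occur at $p_\tau$. The post-ADL positions are $q_i\wedge (E_i t^\star/p_\tau)$, with $t^\star=t^\star(Q;s)$ solving $G_s(t)=Q$.

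\begin{itemize}
\item \emph{Leverage-priority.} For $Q$ small, $t^\star$ exceeds the second-largest value of $\ell_i(0)$. Hence only the accounts with maximal leverage have $x_i>0$.
\item \emph{Path-independence.} Let $s'=W_{Q_1}(s)$ and $t_1=t^\star(Q_1;s)$. Then $q_i'=q_i\wedge(E_it_1/p_\tau)$, so for every $t\le t_1$ we have $(q_i'-E_it/p_\tau)_+=(q_i-E_it/p_\tau)_+-(q_i-E_it_1/p_\tau)_+$. It follows that $G_{s'}(t)=G_s(t)-Q_1$ on $[0,t_1]$. The root $t_2$ of $G_{s'}=Q_2$ therefore equals $t^\star(Q_1+Q_2;s)$. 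Since $q_i\wedge a\wedge b=q_i\wedge b$ for $b\le a$, we conclude $W_{Q_2}\circ W_{Q_1}=W_{Q_1+Q_2}$.
\end{itemize}
Uniqueness of the roots follows from Lemma~\ref{lem:G_properties_unique_level}.

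\emph{Converse.} Let $F$ be any mechanism satisfying both axioms and the market-clearing Assumption~\ref{ass:market-clearing-F-isolated}, i.e., positions only decrease, equity is unchanged, and total buyback is $Q$. Fix $s$ and $Q$, and set $t^\star=t^\star(Q;s)$. The plan is to show that along the path $Q'\mapsto F_{Q'}(s)$, the maximal leverage $M(Q')\defeq\max_i\ell_i$ is driven down and all reductions fall on the current argmax set. Concretely, by path-independence $F_{Q'+h}(s)=F_h(F_{Q'}(s))$. By leverage-priority, for $h$ small the increment $F_h$ only touches accounts attaining the current maximum leverage.

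The main obstacle is that leverage-priority only constrains the \emph{support} of the small-$h$ allocation, not how it is split among tied accounts. A split in which tied accounts are reduced unequally would break the tie. Leverage-priority applied again at the next step (via path-independence) then forces all further buyback onto the account that is now strictly higher. This corrects the imbalance, so tied accounts must remain equal. To make this rigorous without any continuity assumption on $F$, I would use a supremum argument. Let $Q_0$ be the supremum of those $Q'\le Q$ for which $F_{Q'}(s)=W_{Q'}(s)$. I then need to show three things:
\begin{enumerate}
\item The identity also holds at $Q_0$. Apply path-independence to write $F_{Q_0}=F_{Q_0-Q'}\circ F_{Q'}$. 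The small increment touches only argmax accounts. Monotonicity of positions together with the total-quantity constraint pins the state to $W_{Q_0}(s)$, because any deviation would leave some account strictly above the common level.
\item The identity extends beyond $Q_0$ if $Q_0<Q$. In state $W_{Q_0}(s)$, a small increment affects only the tied top accounts. If it reduced them unequally, some tied account would remain at level $t(Q_0)$ while another fell below it. Splitting $h=h_1+h_2$ via path-independence and applying leverage-priority to each piece, the first piece $h_1$ must then hit only the still-maximal account, whereas the same decomposition of $h$ is also the composition in the other order. Repeated halving shows that every dyadic fraction of $h$ is allocated proportionally to $E_i$ among the tied accounts, which is exactly the water-filling allocation.
\item Combining the two steps, $Q_0=Q$.
\end{enumerate}
The delicate part is step 2, namely extracting equal-leverage splitting from pure support information. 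I expect to rely on the fact that the argmax set can only grow as leverage falls. I also expect to need the precise formal version of Assumption~\ref{ass:leverage-priority-F-isolated}, in particular the meaning of ``sufficiently small'' uniformly near a state.
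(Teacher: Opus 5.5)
Your forward direction is correct and is essentially the paper's argument: the identity $G_{s'}(t)=G_s(t)-Q_1$ on $[0,t_1]$, followed by $q_i\wedge a\wedge b=q_i\wedge b$. The converse, however, has a genuine gap in Step~2.

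Leverage-priority at $W_{Q_0}(s)$ allows a small increment to be split arbitrarily among the tied accounts, and your halving scheme does not exclude an unequal split. After an unequal first piece $h_1$, leverage-priority at the new state $F_{h_1}(W_{Q_0}(s))$ only restricts increments below a threshold that depends on that state. Nothing forces this threshold to exceed $h_2$. Meanwhile, the identity $F_{h_1}\circ F_{h_2}=F_{h_2}\circ F_{h_1}$ follows automatically from path-independence (both sides equal $F_{h_1+h_2}$) and adds no information. So at no finite dyadic scale does support information determine the split, and the assertion that every dyadic fraction of $h$ is allocated proportionally to $E_i$ is unsupported. Also, an unequal split need not leave any tied account at $t(Q_0)$; both may fall by different amounts. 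Step~1 has a related defect. It applies leverage-priority to $F_{Q_0-Q'}$ at the varying states $F_{Q'}(s)$, which requires exactly the uniformity in the state that Assumption~\ref{ass:leverage-priority-F-isolated} does not give.

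The missing idea is that continuity in $Q$ is not an extra assumption but a consequence of the axioms. Path-independence and market clearing give the cocycle identity $x_i(s,Q_2)=x_i(s,Q_1)+x_i(F_{Q_1}(s),Q_2-Q_1)$, with the last term in $[0,Q_2-Q_1]$. Hence each $Q\mapsto x_i(s,Q)$ is nondecreasing and $1$-Lipschitz; this is what your remark about monotonicity and the total-quantity constraint should become. With it, Step~1 is immediate because the agreement set is closed. Step~2 should be replaced by a first-crossing argument that invokes leverage-priority at one state at a time. Write $\lambda_i(\cdot)$ for state leverage and $t(Q)\defeq\max_i\lambda_i(F_Q(s))$, which is continuous and nonincreasing.

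(1) If $\lambda_i(s)<t(Q_1)$, then $x_i(s,Q_1)=0$. Set $\bar Q\defeq\inf\{Q\in[0,Q_1]: x_i(s,Q)>0\}$; continuity gives $x_i(s,\bar Q)=0$. Then $\lambda_i(F_{\bar Q}(s))=\lambda_i(s)<t(Q_1)\le t(\bar Q)$, so $i$ is strictly below the maximum at $F_{\bar Q}(s)$. Leverage-priority at that single state, combined with the cocycle identity, gives $x_i(s,\bar Q+h)=0$ for all small $h$, contradicting the definition of $\bar Q$.

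(2) If $\lambda_i(s)\ge t(Q_1)$, then $\lambda_i(F_{Q_1}(s))=t(Q_1)$. Otherwise the intermediate value theorem yields $Q_*<Q_1$ with $\lambda_i(F_{Q_*}(s))$ strictly between $\lambda_i(F_{Q_1}(s))$ and $t(Q_1)$. Applying (1) from the state $F_{Q_*}(s)$ freezes account $i$ on $[Q_*,Q_1]$, a contradiction.

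Hence $\lambda_i(F_{Q_1}(s))=\min\{\lambda_i(s),t(Q_1)\}$, and market clearing together with the strict decrease of $G_s$ forces $t(Q_1)=t^\star$. This mechanism, in which an account strictly below the running maximum stays frozen, needs continuity, and it is what converts support information into the equal-leverage split. Once you have it, the supremum-over-agreement-set scaffolding is unnecessary.
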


The intuition behind the axiomatic characterization is as follows. Split the amount $Q$ into many (infinitesimally) small bits. By path-independence, the mechanism for $Q$ yields the same result as consecutively buying back the small bits. By leverage-priority, the first bit is allocated to the account with the highest initial leverage, and consecutive bits will be allocated to the same account until its leverage is equalized with the second-highest initial leverage. At that point, the next small bit could be allocated to either of the two accounts (or split). However, the choice does not matter when quantities are infinitesimal. Intuitively, even if one bit was allocated to account 1 (rather than split between 1 and 2 as in the water-filling rule), the next bit would then go to account 2, realigning the leverage levels and approximately resulting in the water-filling outcome. In the limit of infinitesimally small bits, the ambiguity disappears and we recover exactly the water-filling of Theorem~\ref{thm:risk_neutral}.

For the mathematical proof, the key is that path-independence amounts to the semigroup property  $F_{Q_2}\circ F_{Q_1}=F_{Q_1+Q_2}$ where the deleveraging quantity $Q$ plays the role of the time parameter. Moreover, the monotonicity of any ADL mechanism implies that this semigroup is Lipschitz-continuous (Lemma~\ref{lem:tc-implies-reg-isolated}), and once this regularity is recognized, the above intuition can be converted into a proof.

\begin{remark}
Taken individually, path-independence and leverage-priority do not imply the minimax leverage policy. For instance, consider the ADL mechanism which allocates the entire quantity $Q$ to the account(s) with the highest leverage at the ADL time (see also Remark~\ref{rem:queue-is-inconsistent}). This mechanism clearly satisfies leverage-priority, but is not path-independent: After being deleveraged with $Q_1$, the targeted account $i$ may no longer have the highest leverage, and then in a subsequent application of the mechanism, the quantity $Q_2$ would be allocated to a different account. Whereas when applied in one shot with $Q_1+Q_2$, the entire quantity is allocated to account~$i$. On the other hand, consider the pro-rata policy, which allocates amounts $x_i$ proportional to positions $q_i$, or equivalently, decreases the leverage of all accounts by the same percentage. This policy is path-independent, but clearly does not prioritize highest leverage.
\end{remark}


\subsection{Conditional Value-at-Risk and Monotone Risk Measures}\label{subsec:cvar}

We now replace the risk-neutral objective $\E\big[\Lscr(x,p_T)\big]$ by Conditional Value-at-Risk
$\mathrm{CVaR}_\beta$ at confidence level $\beta\in(0,1)$ as introduced by \citet{rockafellar2000cvar}, while keeping the same
buyback constraints. 

\begin{definition}For an integrable random variable $Z$ and confidence level $\beta\in(0,1)$, define the Value-at-Risk
\[
\mathrm{VaR}_\beta(Z)\;\defeq\;\inf\{z\in\R:\ \PR(Z\le z)\ge \beta\}.
\]
The Conditional Value-at-Risk at level $\beta$ is the tail mean\footnote{We use the convention $\mathrm{CVaR}_0(Z)=\E[Z]$.} beyond $\mathrm{VaR}_\beta(Z)$,
\[
  \mathrm{CVaR}_\beta(Z)\;\defeq\;\frac{1}{1-\beta}\int_\beta^1 \mathrm{VaR}_u(Z)\,du.
\]
If $Z$ has no atom at $\mathrm{VaR}_\beta(Z)$, this can also be written as
\[
\mathrm{CVaR}_\beta(Z) = \E\!\left[ Z\ \big|\ Z \ge \mathrm{VaR}_\beta(Z) \right].
\]
\end{definition}

Unlike the expectation, $\mathrm{CVaR}_\beta$ emphasizes tail losses and therefore changes
the marginal incentives driving auto-deleveraging. In the following, we formulate the CVaR-based ADL problem and characterize its separable structure and optimality conditions.  We will see that the
water-filling rule remains optimal, but (in contrast to the expected loss) the optimizer need not be
unique. 

The CVaR-based ADL problem reads
\begin{equation}\label{eq:adl_primal_cvar}
\begin{array}{ll}
\minimize_{x\in\R^n} & \displaystyle \mathrm{CVaR}_\beta\big(\Lscr(x,p_T)\big) \\
\subjectto & x\in\mathcal X.
\end{array}
\end{equation}
It is straightforward to check that \eqref{eq:adl_primal_cvar} is convex (see Appendix~\ref{app:cvar_degeneracy}). A key structural observation in our setting is that the individual account losses are monotone functions of a \emph{single} scalar risk factor, the terminal price $p_T$. Specifically,  $\sigma_i(x_i,p)$ is nondecreasing in $p$ for each fixed $x_i$ (see Lemma~\ref{lem:shortfall_prop} in Appendix~\ref{app:isolated.margining}).
Hence for any feasible $x$, the random variables $\{\sigma_i(x_i,p_T)\}_{i=1}^n$ are comonotone: they move in the same
direction as the price $p_T$.  Since $\mathrm{CVaR}_\beta$ is comonotone additive \citep[cf.~Propositions 2.3 \& 3.4,][]{tasche2002expected}, it follows that
\[
\mathrm{CVaR}_\beta\big(\Lscr(x,p_T)\big)=\mathrm{CVaR}_\beta\!\left(\sum_{i=1}^n \sigma_i(x_i,p_T)\right)
=
\sum_{i=1}^n \mathrm{CVaR}_\beta\big(\sigma_i(x_i,p_T)\big).
\]
This account-by-account decomposition is a crucial simplification and yields a clean characterization of optimality. Our main result is as follows.

\begin{theorem}\label{thm:waterfilling_nonunique}
Let $x^{\mathrm{WF}}$ denote the (unique) water-filling allocation of Theorem~\ref{thm:risk_neutral}. Then $x^{\mathrm{WF}}$ is an optimizer of the CVaR-ADL problem \eqref{eq:adl_primal_cvar} for every $\beta\in(0,1)$. However, the solution may be non-unique; in general, \eqref{eq:adl_primal_cvar} may admit a continuum of optimizers.
\end{theorem}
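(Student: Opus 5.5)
The shortest route to optimality goes through part~(d) of Theorem~\ref{thm:risk_neutral}. There, $x^{\mathrm{WF}}$ is a simultaneous pointwise minimizer of realized loss: $\Lscr(x^{\mathrm{WF}},p)\le \Lscr(y,p)$ for every $y\in\mathcal X$ and every $p\in\R$. Substituting $p=p_T$ gives $\Lscr(x^{\mathrm{WF}},p_T)\le \Lscr(y,p_T)$ almost surely. Since $\mathrm{CVaR}_\beta$ is monotone, this yields $\mathrm{CVaR}_\beta(\Lscr(x^{\mathrm{WF}},p_T))\le \mathrm{CVaR}_\beta(\Lscr(y,p_T))$ for all feasible $y$. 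Monotonicity follows directly from the quantile representation: $Z\le Z'$ a.s.\ implies $\mathrm{VaR}_u(Z)\le\mathrm{VaR}_u(Z')$ for every $u$, and we then integrate over $u\in[\beta,1]$. Integrability of both losses follows from Lemma~\ref{lem:shortfall_prop}, so $x^{\mathrm{WF}}$ solves \eqref{eq:adl_primal_cvar} for every $\beta\in(0,1)$. The same argument works verbatim for any monotone risk measure, which anticipates Proposition~\ref{prop:wf.optimal.monotone}.

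Non-uniqueness requires an explicit instance, and I would build it from the comonotone decomposition $\mathrm{CVaR}_\beta(\Lscr)=\sum_i \mathrm{CVaR}_\beta(\sigma_i(x_i,p_T))$. The key point is that for $x_i<q_i$, account $i$ has shortfall $\sigma_i=(q_i-x_i)(p_T-b_i)_+$. Its bankruptcy price is $b_i=p_\tau(1+\ell_i(x_i)^{-1})$, so $b_i-p_\tau=E_i/(q_i-x_i)$. Hence $\sigma_i=\big((q_i-x_i)(p_T-p_\tau)-E_i\big)_+$. Let $v=\mathrm{VaR}_\beta(p_T)$ and, as a simplifying step, assume all bankruptcy prices stay below $v$. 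On the event $\{p_T\ge v\}$, which carries all the mass CVaR looks at, the positive part is then inactive, and $\sigma_i(x_i,p_T)$ is a nondecreasing function of $p_T$ there. It follows that $\mathrm{CVaR}_\beta(\sigma_i)=(q_i-x_i)(c-p_\tau)-E_i$, where $c=\mathrm{CVaR}_\beta(p_T)$. This uses that VaR commutes with nondecreasing transformations and that $p_T$ has a density.

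Summing over accounts, the objective becomes affine in $x$ with slope $-(c-p_\tau)$ in each coordinate. It therefore equals the constant $(\sum_i q_i-Q)(c-p_\tau)-\sum_iE_i$ on the whole region of $\mathcal X$ where every account's bankruptcy price is at most $v$. That region is exactly the set of $x$ with all $\ell_i(x_i)\le (v/p_\tau-1)^{-1}$, i.e.\ leverages below a cutoff. So I would take a two-account example with small initial leverages, fix $\beta$ with $v$ large (using Assumption~\ref{ass:p_T_regular}), and choose $Q$ small. Then every allocation in the segment $\{x_1+x_2=Q,\ 0\le x_i\le q_i\}$ is optimal, which is a continuum containing $x^{\mathrm{WF}}$.

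The main obstacle is stating the identity $\mathrm{CVaR}_\beta(\sigma_i)=(q_i-x_i)(c-p_\tau)-E_i$ carefully in the regime where $b_i$ lies strictly below $v$. Concretely, I need $\mathrm{VaR}_u(\sigma_i)=(q_i-x_i)(\mathrm{VaR}_u(p_T)-p_\tau)-E_i$ for $u\ge\beta$. This holds because $g(p)=((q_i-x_i)(p-p_\tau)-E_i)_+$ is nondecreasing and continuous, so $\mathrm{VaR}_u(g(p_T))=g(\mathrm{VaR}_u(p_T))$, and $g$ is affine above $b_i\le v$. I also need to check that the chosen example is compatible with Assumptions~\ref{ass:feasible}--\ref{ass:solvent}; picking $q_i,E_i$ numerically makes this routine.
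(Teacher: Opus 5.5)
Your optimality argument is correct, and it takes a different route from the paper's proof of this theorem. You get CVaR-optimality of $x^{\mathrm{WF}}$ directly from the pointwise dominance in Theorem~\ref{thm:risk_neutral}(d) together with monotonicity of $\mathrm{CVaR}_\beta$. That dominance is proved in Lemma~\ref{lem:pointwise_water_filling} without reference to CVaR, so there is no circularity. The paper uses exactly this argument, but only for Proposition~\ref{prop:wf.optimal.monotone}. For this theorem it instead lifts the problem via Rockafellar--Uryasev, uses the comonotone split, and derives necessary-and-sufficient KKT conditions (Lemma~\ref{lem:cvar_kkt_reduced}, Corollary~\ref{cor:cvar_kkt_reduced_lvg}). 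These say that $v_\beta(\ell_i(x_i))$ equals a common $\theta^\star$ on interior accounts, with the usual inequalities at the bounds, and the paper then checks that $x^{\mathrm{WF}}$ satisfies them. Your route is shorter and covers all monotone risk measures at once. The paper's route buys a description of the whole optimal set, which underlies the leverage-cutoff discussion of non-unique optimizers.

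For non-uniqueness you use the same mechanism as Proposition~\ref{prop:cvar_nonunique} and Example~\ref{ex:cvar_continuum}. When $b_i\le \mathrm{VaR}_\beta(p_T)$, each $\mathrm{CVaR}_\beta(\sigma_i)$ is affine in $x_i$ with common slope $p_\tau-\mathrm{CVaR}_\beta(p_T)$. Your affine formula is correct and agrees with the paper's. However, your translation into leverage is backwards. Since $b_i-p_\tau=p_\tau/\ell_i(x_i)$, the condition $b_i\le\mathrm{VaR}_\beta(p_T)$ is equivalent to $\ell_i(x_i)\ge\ell_\beta$. So the flat region is the high-leverage (stressed) regime, not ``leverages below a cutoff.''

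This matters for your example. Suppose you fix $\beta$ and choose accounts with $\ell_i(0)\le\ell_\beta$, as ``small initial leverages'' suggests. Then every feasible $x$ has all $b_i\ge\mathrm{VaR}_\beta(p_T)$, and the objective equals $(1-\beta)^{-1}\E[\Lscr(x,p_T)]$ on all of $\mathcal X$. By Theorem~\ref{thm:risk_neutral} the optimizer is then unique, so the example fails.

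The fix is to require $Q<\min_i q_i$ and $\ell_i(Q)\ge\ell_\beta$ for each $i$, equivalently $\mathrm{VaR}_\beta(p_T)\ge p_\tau+E_i/(q_i-Q)$. You can always arrange this by taking $\beta$ close to $1$, because $f_T>0$ on $[p_\tau,\infty)$ forces $\mathrm{VaR}_\beta(p_T)\to\infty$. Alternatively, as in the paper, fix $\beta$ and make the accounts sufficiently levered. With that correction, your continuum argument goes through.
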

Let
$ p_\beta \defeq \mathrm{VaR}_\beta(p_T)$ denote the $\beta$-quantile (stress level) of the terminal price.
Since each shortfall $\sigma_i(x_i,p_T)$ is nondecreasing in the single risk factor $p_T$,
$\mathrm{CVaR}_\beta$ is determined by losses in the tail event $\{p_T\ge p_\beta\}$. This motivates the definition of the \emph{leverage cutoff} $\ell_\beta$ via
\[
p_\tau\bigl(1+\ell_\beta^{-1}\bigr)=p_\beta,
\qquad\text{equivalently}\qquad
\ell_\beta=\frac{p_\tau}{p_\beta-p_\tau}, \qquad\text{if}\qquad p_\beta > p_\tau,
\]
whereas $\ell_\beta=+\infty$ if $p_\beta\leq p_\tau$. Thus $\ell_\beta$ is precisely the leverage level whose bankruptcy price (cf.~\eqref{eq:ze_price_leverage}) coincides
with $p_\beta$.

The cutoff separates two regimes. If an account’s post-ADL leverage satisfies $\ell_i(x_i)<\ell_\beta$,
then its bankruptcy threshold lies \emph{within} the CVaR tail,
and additional buyback changes tail losses in an account-specific way. By contrast, if an account remains
highly levered in the sense that $\ell_i(x_i)\ge \ell_\beta$, 
then the
tail event relevant to $\mathrm{CVaR}_\beta$ is governed by $p_\beta$ rather than by the account-specific
threshold. In this stressed regime the marginal $\mathrm{CVaR}_\beta$ benefit of further buyback becomes
identical across all such accounts. Consequently, the water-filling rule remains optimal, but the objective
can develop flat directions: redistributing buyback volume among stressed accounts (while preserving
feasibility) leaves $\mathrm{CVaR}_\beta$ unchanged.

A continuum of optimal allocations can arise when the deleveraging budget $Q$ is insufficient to bring
\emph{every} account below the cutoff $\ell_\beta$, so that multiple accounts remain in the
stressed set $\{\ell_i(x_i)\ge \ell_\beta\}$. In that case, $\mathrm{CVaR}_\beta$ is indifferent to how buybacks are
redistributed among stressed accounts, producing non-uniqueness; see Appendix~\ref{app:cvar_degeneracy}
for a formal construction.

\bigskip


The optimality of water-filling can also be extended beyond the expected loss and $\mathrm{CVaR}$ cases, to all \emph{monotone} risk measures.

\begin{definition}[Monotone risk measure]
A risk measure $\rho$ is monotone if
\[
Z_1\le Z_2\ \text{a.s.}
\quad\Longrightarrow\quad
\rho(Z_1)\le \rho(Z_2).
\]
\end{definition}

The CVaR criterion corresponds to focusing on a \emph{single} tail level, while this class includes all \emph{spectral risk measures}, which assign weights to different
quantile levels of the loss distribution. Specifically, we say that $\rho(\cdot)$ is a spectral risk measure \citep[cf.][]{acerbi2002spectral,shapiro2013kusuoka} if it has the form\footnote{We interpret $\rho(\cdot)$ as extended real valued on integrable random variables and tacitly assume $\E[Z]\geq0$ so that $\rho(Z)\geq0$. A sufficient condition ensuring that $\rho(Z)<\infty$ for all integrable $Z$ is $\int_{[0,1)} (1-\beta)^{-1}d\mu(\beta)<\infty$.}
\begin{equation*}
\rho(Z)\;\defeq\;\int_{[0,1)} \mathrm{CVaR}_\beta(Z)\,\mu(d\beta)
\end{equation*}
for all integrable random variables $Z$, where $\mu$ is a probability measure on $[0,1)$.

Mirroring \eqref{eq:risk_based_adl}, we can write the more general optimization over monotone risk measures as
\begin{equation}\label{eq:adl_primal_monotone}
\begin{array}{ll}
\minimize_{x\in\R^n} & \displaystyle \rho\big(\Lscr(x,p_T)\big)\\
\subjectto & x\in\mathcal X.
\end{array}
\end{equation}
The following is a straightforward consequence of the pointwise optimality in Theorem~\ref{thm:risk_neutral}(d).

\begin{proposition}\label{prop:wf.optimal.monotone}
     Under Assumptions~\ref{ass:feasible}--\ref{ass:p_T_regular} the water-filling allocation is optimal for \eqref{eq:adl_primal_monotone} whenever $\rho(\cdot)$ is monotone.
\end{proposition}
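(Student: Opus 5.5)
The proof will be a direct reduction to the pointwise optimality in Theorem~\ref{thm:risk_neutral}(d); no distributional argument is needed. Let $x^\star = x^{\mathrm{WF}}$ denote the water-filling allocation. Under Assumptions~\ref{ass:feasible}--\ref{ass:p_T_regular}, Theorem~\ref{thm:risk_neutral} applies, so $x^\star\in\mathcal X$ and
\[
\Lscr(x^\star,p)\le \Lscr(y,p)\qquad\text{for all } y\in\mathcal X,\ p\in\R .
\]

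Next I will fix an arbitrary feasible $y\in\mathcal X$ and substitute the random terminal price $p=p_T(\omega)$ into this inequality. Since the inequality holds for every real $p$, it holds for every outcome $\omega$. This gives the almost-sure (indeed sure) inequality $\Lscr(x^\star,p_T)\le \Lscr(y,p_T)$ between random variables.

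Monotonicity of $\rho$ then yields $\rho(\Lscr(x^\star,p_T))\le\rho(\Lscr(y,p_T))$. As $y\in\mathcal X$ was arbitrary and $x^\star$ is itself feasible, $x^\star$ attains the infimum in \eqref{eq:adl_primal_monotone}, which proves the claim.

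The only point requiring care, and the closest thing to an obstacle, is well-definedness: $\rho$ must be evaluated on the random variables $\Lscr(y,p_T)$. Each $\sigma_i(y_i,p_T)$ is bounded by $|q_i(p_i^{(e)}-p_T)-y_i(p_\tau-p_T)+m_i|$, which is integrable because $p_T$ is integrable (cf.\ Lemma~\ref{lem:shortfall_prop}). Hence $\Lscr(y,p_T)$ is integrable and lies in the domain on which $\rho$ is defined; for spectral measures, it is interpreted as extended real valued. If $\rho(\Lscr(y,p_T))=+\infty$ for some $y$, the inequality holds trivially, so no further finiteness is needed. I would also remark that uniqueness is not claimed, consistent with Theorem~\ref{thm:waterfilling_nonunique}.
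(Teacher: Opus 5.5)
Your proposal is correct and matches the paper's argument. The paper states this proposition as a straightforward consequence of the pointwise optimality in Theorem~\ref{thm:risk_neutral}(d), which is proved for every $p\in\mathbb R$ in Lemma~\ref{lem:pointwise_water_filling}, combined with the monotonicity of $\rho$ — exactly as you argue (your integrability check on $\Lscr(y,p_T)$ is a harmless extra remark on well-definedness).
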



Note that Proposition~\ref{prop:wf.optimal.monotone} states optimality, not uniqueness. For
general monotone risk measures the optimal
allocations need not be unique, as shown in Theorem~\ref{thm:waterfilling_nonunique} for $\mathrm{CVaR}_\beta$.

\subsection{Numerical Example}
\label{sec:numerical_example}

We illustrate the structure of CVaR-based auto-deleveraging in a one-asset setting with $n=4$ short accounts and total short volume $\sum_{i=1}^n q_i=33$. The reference price is fixed at $p_\tau\approx\$67{,}000$ (BTC spot), and deleveraging outcomes are studied as a function of the available ADL budget $Q$.

The terminal price $p_T$ is modeled as a geometric Brownian motion over a horizon $\Delta t=10/365$, with zero drift and annual volatility 60\%. For the CVaR criterion we take $\beta=0.98$. Closed-form expressions for the resulting CVaR objective under GBM are given in Appendix~\ref{app:closeformGBM}.

Each account $i$ is characterized by a short position $q_i$, entry price $p_i^{(e)}$, and margin $m_i$, with initial leverage
\[
\ell_i(0)=\frac{p_\tau q_i}{E_{i}}.
\]
Under the GBM model, the $\beta$-quantile $p_\beta$ of the terminal price admits a closed form, yielding the stress cutoff leverage \begin{equation*}
\ell_\beta=\frac{p_\tau}{p_\beta-p_\tau}\approx4.55.
\end{equation*} 
Accounts with $\ell_i(0)>\ell_\beta$ are initially stressed.

Table~\ref{tab:comparison_cvar4} reports the static characteristics of four levered accounts. All four satisfy $\ell_i(0)>\ell_\beta$ and therefore lie in the stressed region at the trigger price. From these static quantities alone, there is no clear ordering of accounts in terms of which should be deleveraged first under a tail-risk criterion.

\begin{table}[htbp]
\centering
\caption{Parameters of the four most levered accounts, sorted in increasing initial leverage.}
\label{tab:comparison_cvar4}
\begin{tabular}{ccccc}
\toprule
Account
& $q_i$
& $p_i^{(e)}$ in \$1000
& $m_i$ in \$1000
& $\ell_i(0)$ \\
\midrule
1 & 8  & 71 & 82 & 4.7 \\
2 & 10 & 72 & 78.8 & 5.2 \\
3 & 8  & 70 & 59.8 & 6.4 \\
4 & 7  & 69.5 & 48.6 & 7.1 \\
\bottomrule
\end{tabular}
\end{table}

\medskip

Figure~\ref{fig:cvar_four_accounts} reports the CVaR-optimal deleveraging outcome as a function of the budget $Q$, computed using the water-filling construction that equalizes post-ADL leverage across active accounts whenever feasible. Black curves show individual post-ADL leverage trajectories $\ell_i(x_i)$, while the solid blue curve reports the mean leverage among at-risk accounts. The horizontal line indicates the stress threshold $\ell_\beta$.

For small budgets $Q$, deleveraging is concentrated on the most levered accounts, which are progressively brought down toward $\ell_\beta$. In this fully stressed regime, the CVaR objective decreases linearly in $Q$, reflecting identical marginal risk reduction across all stressed accounts.

Let $Q_\beta$ denote the smallest budget for which all accounts can be brought below the stress threshold $\ell_\beta$. At $Q=Q_\beta$, the system exits the stressed regime. For $Q>Q_\beta$, marginal risk reduction diminishes and the objective decreases at a strictly smaller rate, as shown by the change in slope in the inset of Figure~\ref{fig:cvar_four_accounts}.

This transition illustrates the degeneracy and non-uniqueness of CVaR-optimal allocations in the
stressed regime. The water-filling rule selects a canonical representative from the set of
minimizers, yielding smooth leverage paths and a transparent dependence on the available
deleveraging budget.

\begin{figure}[htbp]
\centering
\includegraphics[width=0.65\textwidth]{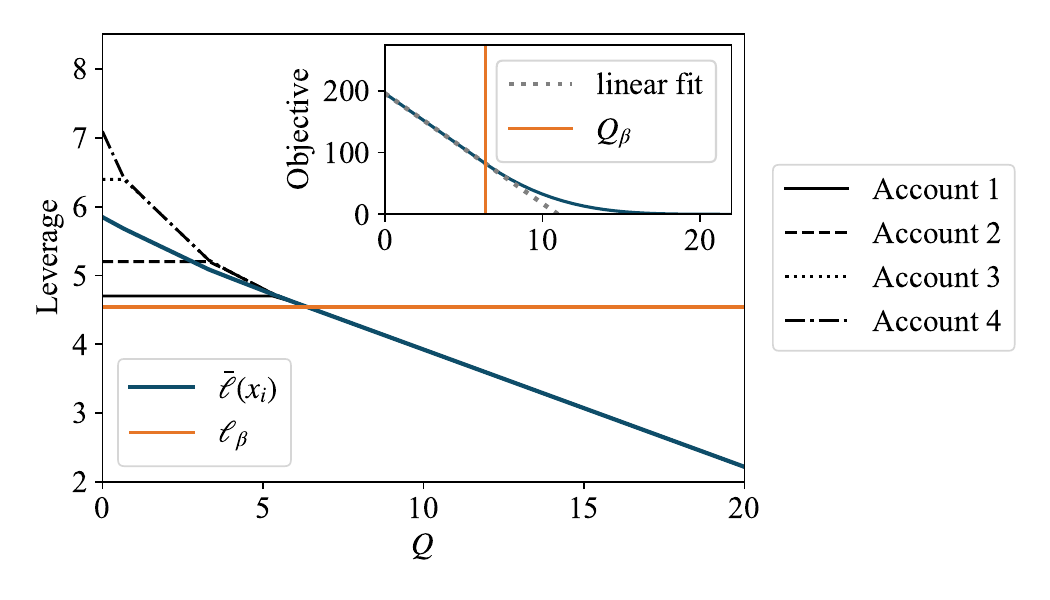}
\caption{
CVaR-optimal deleveraging under the water-filling rule.
Black line-style curves correspond to individual post-ADL leverage levels $\ell_i(x_i)$, the solid blue curve shows the mean leverage among at-risk accounts, and the orange horizontal line indicates the stress threshold $\ell_\beta$.
The inset displays the CVaR objective value as a function of the total deleveraging budget $Q$, where the gray dotted line is a linear fit and the orange vertical line illustrates the deleveraging volume $Q_\beta$ necessary to reach $\ell_\beta$.}
\label{fig:cvar_four_accounts}
\end{figure}


\section{Multi-Asset Cross-Margining}\label{subsec:rn_multiasset}

This section extends the preceding single-asset analysis to a multi-asset, cross-margin setting.
Under cross-margining, a single pool of collateral supports all open positions, so that gains in one position can offset losses in another and reduce the total margin required. This netting improves capital efficiency, but it also couples positions through shared collateral and correlated price moves. As a result, the ADL allocation decision is inherently multi-dimensional and must account for the joint distribution of returns across account positions. We formulate the resulting risk-based allocation problem faced by the exchange, focusing on expected loss as a risk measure. In \cref{sec:sep_decomp_for_exp_loss} we then show that the ADL problem becomes separable \emph{across accounts} once their coupling through the target reduction amounts~$Q$ is replaced by suitable shadow prices, opening the door to an efficient algorithm. Of course, this does not remove the possibly complicated coupling of the assets' returns. If assets are primarily driven by a common factor, this factor provides a natural dimension reduction. \Cref{sec:single_factor} formulates an idealized model where prices are driven by a single factor and  shows that ADL then admits close parallels to the isolated-margining case of \cref{sec:isolated.margining}. Indeed, a generalized water-filling rule is seen to be optimal for reducing the expected loss. However, water-filling occurs in a quantity that we call factor leverage, rather than the naive (gross) leverage which does not take into account hedging effects across portfolio components.

\subsection{Problem Formulation}

In what follows, we consider $d$ assets, with prices described by a vector in~$\R^d$.\footnote{Negative prices are allowed to accommodate derivative contracts.} Each account $i\in\{1,\dots,n\}$ holds a signed position vector $q_i=(q_i^1,\dots,q_i^d)\in\R^d$, where $q_i^k>0$ denotes a short position and $q_i^k<0$ a long position in asset $k$. The entry-price vector is $p_i^{(e)}\in\R^d$, and the posted margin is $m_i\ge 0$. Auto-deleveraging is triggered at time $\tau$ with reference prices $p_\tau\in\R^d$ and the exchange must reduce aggregate exposure by a predetermined amount $Q\in\R^d$.\footnote{We treat the trigger price $p_\tau$ and the aggregate reduction vector $Q$ as exogenously given inputs to the ADL mechanism. In practice, their determination is itself a separate design problem for the exchange, involving choices of marking rules, liquidity considerations, and stress metrics. The present analysis focuses on the allocation problem conditional on these quantities.}
That is, each component $Q^k$ of the vector $Q$ specifies the aggregate position reduction required in asset $k$, to be allocated across the $n$ accounts.

The ADL decision consists of vectors $x_1,\dots,x_n\in\R^d$, where $x_i^k$ is the (signed) position reduction applied to asset $k$ in account $i$, so that the post-ADL position vector is $q_i-x_i$.
We assume monotone deleveraging, in the sense that positions may be reduced toward zero but never increased or have their sign changed. Moreover, reductions are restricted to the direction of the target allocation $Q$: if $Q^k=0$ then $x_i^k=0$ for all $i$, and if $Q^k\neq 0$ then only accounts with
$\operatorname{sgn}(q_i^k)=\operatorname{sgn}(Q^k)$ may be reduced in asset $k$. Equivalently, define the directional bounds
\begin{equation}\label{eq:directional.bds}
l_i^k \defeq
\begin{cases}
0, & Q^k\geq 0,\\
\min(0,q_i^k), & Q^k<0,
\end{cases}
\qquad
u_i^k \defeq
\begin{cases}
\max(0,q_i^k), & Q^k>0,\\
0, & Q^k\leq 0,
\end{cases}
\end{equation}
so that, together with the aggregate reduction constraint $\sum_{i=1}^n x_i = Q$, the feasible set is
\[
\mathcal X
\defeq
\Big\{x=(x_1,\dots,x_n)\in(\R^{d})^n:\
\sum_{i=1}^n x_i = Q,\ \ l_i^k \le x_i^k \le u_i^k\ \ \forall\, i,k\Big\}.
\]

\begin{assumption}[Feasibility]\label{ass:cross_margin_feasible}
$Q$ satisfies
\[
Q^k \in \left[\sum_{i=1}^n \min(0,q_i^k),\ \sum_{i=1}^n \max(0,q_i^k)\right],
\qquad k=1,\dots,d,
\]
so that the feasible set $\mathcal X$ is nonempty.
\end{assumption}

For each account $i$, equity remains a scalar quantity but depends on the full price vector.
If account $i$ is reduced by $x_i$ and the price vector at time $T>\tau$ is $p_T\in\R^d$, define
\[
e_i(x_i,p_T)
\defeq
q_i^\top(p_i^{(e)}-p_T)
-
x_i^\top(p_\tau-p_T)
+
m_i .
\]
Writing $E_i\defeq e_i(0,p_\tau)=q_i^\top(p_i^{(e)}-p_\tau)+m_i$ to be the equity at the ADL time, we equivalently have
\begin{equation}\label{eq:equity.simplified}
e_i(x_i,p_T)=E_i+(q_i-x_i)^\top(p_\tau-p_T).
\end{equation}
The second term captures the remaining exposure to the price move from $p_\tau$ to $p_T$ after reducing positions by $x_i$. Throughout, we restrict attention to accounts that are solvent at time~$\tau$.

\begin{assumption}[Solvency]\label{ass:cross_margin_solvency}
  $E_i>0$ for all $i=1,\dots,n$.
\end{assumption}
As before, equity at time $\tau$ is unchanged by the reduction $x_i$, since $e_i(x_i,p_\tau)=E_i$ for all $x_i$.
By analogy with \eqref{eq:exchange_loss} and \eqref{eq:per.account.shortfall}, the corresponding shortfall is
\begin{equation}\label{eq:per.account.shortfall.cross}
\sigma_i(x_i,p_T)=(-e_i(x_i,p_T))_+,
\end{equation}
and the total exchange loss is
\[
\Lscr(x,p_T)\defeq\sum_{i=1}^n\sigma_i(x_i,p_T).
\]

In the single-asset case, post-ADL equity is monotone in the terminal price $p_T$. Upward moves uniformly harm shorts and
downward moves uniformly harm longs, which induces a natural ordering of accounts by ``distance to default'' and makes
leverage a meaningful proxy for default risk. In the multi-asset case, equity depends on the entire price vector and portfolios may be
partially hedged across assets. There is generally no canonical stress direction and no a priori risk ordering across accounts, since
insolvency depends on the joint distribution of $(p_T^1,\dots,p_T^d)$.

In multi-asset settings, exchanges often report leverage-type statistics that extend the one-dimensional notion
(motivated, for instance, by the discussion of `portfolio margin' in \citep{binanceADL}). One natural extension is the \emph{gross leverage} of account $i$ after reduction $x_i$,
defined by
\[
\ell_i(x_i)\defeq \frac{\sum_{k=1}^d |p_\tau^k\,(q_i^k-x_i^k)|}{E_i},
\]
where the numerator is the gross exposure of the residual portfolio valued at the reference prices~$p_\tau$.
Gross leverage is easy to compute but does not necessarily reflect default risk. Since it ignores the dependence structure of returns, it can be large even when exposures offset across assets, as in a delta-neutral portfolio. We use ``gross'' to emphasize that such netting is not reflected.

With this extended setup in mind, we model the exchange's ADL decision analogously to
Section~\ref{sec:isolated.margining}. Given a random close-out price $p_T$, the exchange selects reductions
$x\in\mathcal X$ to minimize a prescribed risk functional $\rho(\cdot)$ of the induced loss:
\begin{equation}\label{eq:adl_general_risk_program_intro}
\begin{array}{ll}
\minimize_{x\in(\R^d)^n} & \displaystyle \rho\big(\Lscr(x,p_T)\big) \\
\subjectto & x\in\mathcal X.
\end{array}
\end{equation}

General risk measures such as  $\rho(\cdot)=\mathrm{CVaR}_\beta(\cdot)$ couple accounts through tail scenarios of the aggregate loss. In the multi-asset setting, individual account shortfalls need not be comonotone, and
feasible ADL allocations are generally not totally ordered by a single exposure variable (such as leverage). Consequently,
different risk functionals can rank allocations differently and lead to different optimal ADL policies. In fact,
unlike the single-asset case in which a water-filling allocation is always optimal, the sets of optimizers associated
with distinct confidence levels $\beta$ can be \emph{disjoint}; see Appendix~\ref{app:distinct.opt.cvar} for an example.

For the subsequent analysis, we focus on the expected loss $\rho(\cdot)=\E[\cdot]$ and specialize~\eqref{eq:adl_general_risk_program_intro}  to
\begin{equation}\label{eq:adl_multiasset_exp_loss}
\begin{array}{ll}
\minimize_{x\in(\R^d)^n} & \displaystyle \E\big[\Lscr(x,p_T)\big] \\
\subjectto & x\in\mathcal X.
\end{array}
\end{equation}
The following ensures that~\eqref{eq:adl_multiasset_exp_loss} is well defined.
\begin{assumption}\label{ass:integrable_pT_cross}
The random vector $p_T$ is integrable.
\end{assumption}

While the expected loss criterion is more tractable thanks to additive separability across accounts, the coupling of assets remains a central complication that generally prevents reduction to single-variable problems. We emphasize that both sides of the ADL problem can be multi-dimensional in our formulation: the vector $Q$ means that several assets undergo ADL simultaneously, and each affected account may hold several of those assets, as well as other assets not subject to ADL but still contributing to potential losses.

\subsection{Separable Decomposition for the Expected Loss}\label{sec:sep_decomp_for_exp_loss}

As noted, the expected loss objective is additively separable across accounts:
\begin{equation}\label{eq:separable.across.accounts}
\E\big[\Lscr(x,p_T)\big]=\sum_{i=1}^n \E\big[\sigma_i(x_i,p_T)\big].
\end{equation}
Accordingly, the only coupling across accounts in \eqref{eq:adl_multiasset_exp_loss} arises through the vector
clearing constraint $\sum_{i=1}^n x_i=Q$. This suggests a decomposition based on convex duality, in which the clearing
constraint is priced by a multiplier $\lambda\in\R^d$ and each account solves an independent subproblem.

Define the per-account feasible set
\begin{equation}\label{eq:per.acct.feasible}
\mathcal Y_i \defeq \Big\{x_i\in\R^{d}: l_i^k \le x_i^k \le u_i^k\ \ \forall\, k\Big\},
\qquad i=1,\dots,n.
\end{equation}
Let $\mathcal Y \defeq \prod_{i=1}^n \mathcal Y_i$ be the joint set of constraints, and define the
partial Lagrangian
\[
\widehat{\mathcal L}(x,\lambda)
\defeq
\sum_{i=1}^n \E\!\left[\sigma_i(x_i,p_T)\right]
+
\lambda^\top\Bigl(\sum_{i=1}^n x_i-Q\Bigr)
=
-\lambda^\top Q+\sum_{i=1}^n\Bigl(\E\!\left[\sigma_i(x_i,p_T)\right]+\lambda^\top x_i\Bigr).
\]
For $\lambda\in\R^d$, define the primal per account value function by
\begin{equation}\label{eq:account-primal}
  \phi_i(\lambda)\defeq
  \min_{x_i\in\mathcal Y_i}\Big\{\E\!\left[\sigma_i(x_i,p_T)\right]+\lambda^\top x_i\Big\},
  \qquad i=1,\dots,n,
\end{equation}
and
the best-response correspondence
\begin{equation}\label{eq:best.resp.cross.margin}
X_i(\lambda)\defeq \argmin_{x_i\in\mathcal Y_i}\
\Big\{\E\!\left[\sigma_i(x_i,p_T)\right]+\lambda^\top x_i\Big\},
\qquad i=1,\dots,n.
\end{equation}
The associated dual function is
\[
g(\lambda)
\defeq
\min_{x\in\mathcal Y}\ \widehat{\mathcal L}(x,\lambda)
=
-\lambda^\top Q+\sum_{i=1}^n \phi_i(\lambda).
\]

The next proposition gives a convenient characterization of optimality in terms of best responses
and the dual function,
which will be exploited in Section~\ref{sec:numerical_implementation} below to develop an
efficient numerical approach.
\begin{proposition}[Dual decomposition]\label{prop:market_clearing_kkt}
Let Assumptions~\ref{ass:cross_margin_feasible}--\ref{ass:integrable_pT_cross}
hold. Then:
\begin{enumerate}
 \item[(i)]  The program \eqref{eq:adl_multiasset_exp_loss} is convex and admits
  an optimal allocation.

\item[(ii)] An allocation $x^\star=(x_1^\star,\dots,x_n^\star)\in(\R^d)^n$ is optimal if
and only if there exists
$\lambda^\star\in\R^d$
such that
\[
  x_i^\star \in X_i(\lambda^\star)\ \ \text{for all }i,
  \qquad\text{and}\qquad
  \sum_{i=1}^n x_i^\star = Q.
\]

\item[(iii)] Any such $\lambda^\star$ maximizes the dual function $g$, which, for any $\lambda$, has supergradient
  set
  \begin{equation}\label{eq:subdiff-g-main}
          \partial g(\lambda)=
    \Bigl\{
    \sum_{i=1}^n x_i-Q:\ x_i\in X_i(\lambda)\ \forall i
    \Bigr\}.
  \end{equation}

\end{enumerate}
\end{proposition}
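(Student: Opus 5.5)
The plan is to treat \eqref{eq:adl_multiasset_exp_loss} as a finite-dimensional convex program with polyhedral constraints and a finite convex objective. Then (ii) and (iii) follow from standard Lagrangian duality for linear equality constraints, which requires no Slater-type condition.

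For (i), first establish the properties of the objective. For fixed $p$, the map $x_i\mapsto e_i(x_i,p)=E_i+(q_i-x_i)^\top(p_\tau-p)$ is affine, so $\sigma_i(x_i,p)=(-e_i(x_i,p))_+$ is convex in $x_i$. Moreover,
\[
0\le\sigma_i(x_i,p)\le |E_i|+|q_i-x_i|\,|p_\tau-p|,
\]
so Assumption~\ref{ass:integrable_pT_cross} makes $\E[\sigma_i(x_i,p_T)]$ finite. Convexity is preserved under expectation. The map is also Lipschitz in $x_i$ with constant $\E|p_T-p_\tau|$, since $u\mapsto u_+$ is 1-Lipschitz; hence it is continuous. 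The feasible set $\mathcal X$ is a polytope: it is bounded because of the box constraints $l_i^k\le x_i^k\le u_i^k$, closed, and nonempty by Assumption~\ref{ass:cross_margin_feasible}. Nonemptiness deserves a line of proof. For each $k$, the interval condition allows $Q^k$ to be written as $\sum_i x_i^k$ with each $x_i^k$ in $[l_i^k,u_i^k]$. For example, when $Q^k>0$ take $x_i^k=\theta\max(0,q_i^k)$ for a suitable $\theta\in[0,1]$, and argue symmetrically for $Q^k<0$. Weierstrass's theorem then gives existence of a minimizer.

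For (ii), write the problem as minimizing the convex function $f(x)=\sum_i \E[\sigma_i(x_i,p_T)]$ over $\mathcal Y$ subject to the linear constraint $\sum_i x_i=Q$. The constraint set $\mathcal Y$ is polyhedral and $f$ is finite and convex on all of $(\R^d)^n$. Under these conditions, strong duality holds and a dual optimal multiplier exists, for instance by Rockafellar's Convex Analysis, Theorem~28.2, or by the polyhedral refinement of Slater's condition. Concretely, the KKT/saddle-point characterization says that $x^\star$ is optimal if and only if there exists $\lambda^\star$ such that:
\begin{itemize}
\item $x^\star$ minimizes $\widehat{\mathcal L}(\cdot,\lambda^\star)$ over $\mathcal Y$, and
\item $x^\star$ is primal feasible.
\end{itemize}
Because $\widehat{\mathcal L}$ separates across accounts and $\mathcal Y=\prod_i\mathcal Y_i$, the first condition is equivalent to $x_i^\star\in X_i(\lambda^\star)$ for every $i$.

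The converse is elementary. For any feasible $y$,
\[
f(y)=\widehat{\mathcal L}(y,\lambda^\star)\ge \widehat{\mathcal L}(x^\star,\lambda^\star)=f(x^\star).
\]
The main obstacle is the forward direction, namely that a multiplier exists; this is exactly where the polyhedrality of the constraints must be invoked, since relative-interior Slater conditions may fail when some bounds are degenerate (e.g.\ $l_i^k=u_i^k=0$). An alternative route is to apply the subdifferential sum rule to $f+\iota_{\mathcal X}$, using that $f$ is finite everywhere, together with the polyhedral normal cone formula for $\mathcal X=\mathcal Y\cap\{\sum x_i=Q\}$. The normal cone of this intersection splits as $N_{\mathcal Y}$ plus the range of the adjoint map, without any qualification condition, because both sets are polyhedral.

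For (iii), weak duality gives $g(\lambda)\le f(x)$ for every feasible $x$. At $\lambda^\star$ the saddle point yields $g(\lambda^\star)=f(x^\star)$, so $\lambda^\star$ maximizes $g$. For the supergradient formula, each $\phi_i$ is a pointwise minimum of affine functions of $\lambda$ over the compact set $\mathcal Y_i$, with a continuous integrand. Danskin's theorem (the concave version) gives
\[
\partial\phi_i(\lambda)=\operatorname{conv}X_i(\lambda)=X_i(\lambda),
\]
where the convex hull can be dropped because $X_i(\lambda)$ is already convex, being the argmin set of a convex function over a convex set. The supergradient sum rule for finite concave functions then yields \eqref{eq:subdiff-g-main}; the $-Q$ term comes from the linear part of $g$. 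The one point to check is that the Minkowski sum of the sets $X_i(\lambda)$ is exactly the set displayed in \eqref{eq:subdiff-g-main}, which holds by definition.
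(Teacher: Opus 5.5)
Your proposal is correct and follows essentially the paper's route: relax only the clearing constraint with $\lambda\in\R^d$, use separability over $\mathcal Y=\prod_i\mathcal Y_i$ to turn the saddle-point characterization into $x_i^\star\in X_i(\lambda^\star)$, and get the supergradient set from Danskin's theorem, dropping the convex hull because each $X_i(\lambda)$ is convex. The differences are minor. You obtain the multiplier from polyhedral duality with no constraint qualification, whereas the paper pins the coordinates where $Q^k$ sits at an endpoint of $[\sum_i l_i^k,\sum_i u_i^k]$ and exhibits a relative-interior point. You also apply Danskin to each $\phi_i$ and then the concave sum rule, rather than to $g$ directly. Your aside that degenerate bounds $l_i^k=u_i^k$ break a relative-interior Slater condition is slightly off: such coordinates are simply fixed in $\mathrm{ri}\,\mathcal Y$, and the real obstruction is $Q^k$ at an endpoint. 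Your argument never relies on that aside, though.
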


Proposition~\ref{prop:market_clearing_kkt} admits a natural economic interpretation.
The multiplier $\lambda^\star\in\R^d$ acts as an asset-by-asset vector of \emph{shadow prices} for deleveraging capacity.
Given $\lambda$, each account $i$ solves the independent subproblem \eqref{eq:account-primal}
which trades off its expected shortfall contribution against the linear charge $\lambda^\top x_i$ for consuming reductions
in each asset. The clearing condition $\sum_{i=1}^n x_i^\star=Q$ selects $\lambda^\star$ so that the resulting
best responses match the required aggregate reduction. In this sense, the optimal ADL allocation can be viewed as a competitive equilibrium in which the exchange posts $\lambda^\star$
and accounts respond optimally subject to feasibility.

\subsection{Numerical Solution: ADMM with Sample-Average Approximation}\label{sec:numerical_implementation}

\Cref{prop:market_clearing_kkt} explains why the problem is structurally amenable to
decomposition: the expected-loss objective is separable across accounts, while the market-clearing
constraint introduces the only global coupling. At exchange scale, however, the exact expectation
layer is computationally intractable, so we replace it by a sample-average approximation (SAA) and
solve the resulting finite-scenario problem by the alternating direction method of multipliers
(ADMM), a standard operator-splitting method for structured convex optimization
\citep{boyd2011distributed}. The same SAA reformulation also provides a practical numerical route
for the low-dimensional examples considered below.

\paragraph{SAA reformulation.}
Draw $S$ independent price-increment scenarios
$\Delta p^{(1)},\dots,\Delta p^{(S)}$ from the calibrated price model, where $\Delta p^{(s)}\defeq p_\tau-p_T^{(s)}$ for scenario $s$. For account $i$, define
\begin{equation}\label{eq:saa-fi}
f_i(x_i)
\;\defeq\;
\frac{1}{S}\sum_{s=1}^S
\bigl(-E_i - (q_i-x_i)^\top\Delta p^{(s)}\bigr)_+
+ I_{\Yscr_i}(x_i),
\end{equation}
where $I_{\Yscr_i}$ is the indicator of the per-account box. Each $f_i$ is closed, proper, and
convex, being the sum of finitely many convex piecewise-linear terms restricted to a box. The SAA
counterpart of \eqref{eq:adl_multiasset_exp_loss} is therefore
\[
\min_{x\in(\R^d)^n}
\left\{
\sum_{i=1}^n f_i(x_i)
\ \colon\
\sum_{i=1}^n x_i = Q
\right\}.
\]

\paragraph{Consensus splitting.}
Introduce an auxiliary copy $z=(z_1,\dots,z_n)\in(\R^d)^n$ and impose $x_i=z_i$ for all $i$
together with the clearing condition $\sum_i z_i=Q$. The SAA problem becomes
\begin{equation}\label{eq:admm-split}
\begin{array}{ll}
\minimize_{x,z} &
  \displaystyle\sum_{i=1}^n f_i(x_i)
  + I_\mathcal{C}(z)\\[6pt]
\subjectto & x-z = 0,
\end{array}
\end{equation}
where $\mathcal{C}\defeq\{z\in(\R^d)^n:\sum_{i=1}^n z_i=Q\}$. Although the first block
decomposes across accounts, the optimization variables are grouped into the standard two ADMM blocks $x=(x_1,\dots,x_n)$ and $z=(z_1,\dots,z_n)$. The per-account box constraints are enforced in the $x$-update through the indicators $I_{\Yscr_i}(x_i)$, and the $z$-update is the Euclidean projection onto the market-clearing set $\mathcal C=\{z\in(\mathbb R^d)^n:\sum_i z_i=Q\}$.

\paragraph{ADMM updates.}
In scaled dual-variable form, the ADMM iteration for \eqref{eq:admm-split} is
\begin{align}
x_i^{k+1}
&\in
\argmin_{x_i\in\Yscr_i}
\left\{
f_i(x_i)
+ \frac{\rho}{2}\|x_i-c_i^k\|_2^2
\right\},
\quad
c_i^k \defeq z_i^k-u_i^k,
\quad i=1,\dots,n,
\label{eq:admm-x}\\[4pt]
z^{k+1}
&=
\Pi_{\mathcal C}(x^{k+1}+u^k),
\label{eq:admm-z}\\[4pt]
u^{k+1}
&=u^k+x^{k+1}-z^{k+1},
\label{eq:admm-u}
\end{align}
where $\rho>0$ is the augmented-Lagrangian penalty parameter and
$u=(u_1,\dots,u_n)$ is the scaled dual variable. Because
$f(x)=\sum_i f_i(x_i)$ and the quadratic penalty is blockwise additive, the
$x$-update~\eqref{eq:admm-x} separates exactly across accounts. The only global coupling enters
through the projection step~\eqref{eq:admm-z}, which enforces market clearing by projecting
$x^{k+1}+u^k$ onto the affine set $\mathcal C$. The full procedure is stated in
Algorithm~\ref{alg:admm} (Appendix~\ref{app:algorithm-admm}).

\paragraph{Convergence.}
For fixed penalty parameter $\rho$, the present formulation is an exact
two-block convex consensus ADMM, so the standard convergence theory of
\citet[Section~3.2]{boyd2011distributed} applies. In particular, the
primal and dual residuals converge to zero and the objective values
converge to the optimal value of the SAA problem. A formal statement and
proof are deferred to Appendix~\ref{app:algorithm-admm}.

\begin{remark}
The convergence result is asymptotic and does not by itself determine
the iteration budgets relevant in practice. For the exchange-scale
problem, the quantitatively important question is therefore how quickly
the iterates recover the economically relevant part of the loss
reduction. We return to that question in the exchange-scale implementation and in the empirical convergence diagnostics reported in \cref{app:rr-convergence}.
\end{remark}

Each ADMM iteration solves $n$ independent per-account proximal subproblems of the
form~\eqref{eq:admm-x}, followed by one projection onto $\mathcal{C}$ and a dual update, both of
which have cost $O(nd)$. The overall per-iteration cost depends on $n$, $d$, and the scenario
count $S$, while the accountwise proximal updates remain fully parallelizable. The exchange-scale
implementation details that make these updates tractable in the large-instance regime are
described in \Cref{sec:admm-impl}.

\subsection{Single Factor Model and Clipped Water-Filling}\label{sec:single_factor}

We now formulate an idealized model where asset returns are fully driven by a single factor. We will show that the solution of the ADL problem then admits a generalized water-filling structure, thus drawing a parallel with the isolated-margin  case of \cref{sec:isolated.margining}. Importantly, water-filling needs to be applied to a quantity that we call factor leverage, not to gross leverage. Factor leverage correctly reflects the ``distance to default'' in this setting, whereas gross leverage ignores hedging effects. Structurally, the key simplification in the single factor model is that each account's loss depends on its post-ADL portfolio only through a \emph{scalar} factor exposure.

\begin{assumption}[Single factor]\label{ass:single_factor}
There exist $v\in\R^d$ and a scalar random variable $\epsilon$ such that
\[
p_T = p_\tau + \epsilon v,
\]
where $\epsilon$ admits a strictly positive density on $\R$ and $\E[|\epsilon|]<\infty$.
\end{assumption}

For the remainder of \cref{sec:single_factor}, we assume that Assumptions~\ref{ass:cross_margin_feasible}--\ref{ass:single_factor} hold. Next, we introduce factor leverage, taking over the role that leverage played in \cref{sec:isolated.margining}.

\begin{definition}[Factor leverage]\label{def:factor_leverage}
For each account $i$ and allocation $x_i$, define the \emph{factor leverage} by
\[
\ell_i^{(v)}(x_i)\defeq \frac{v^\top (q_i-x_i)}{E_i}.
\]
\end{definition}

Thanks to Assumptions~\ref{ass:cross_margin_solvency} and~\ref{ass:single_factor}, the equity~\eqref{eq:equity.simplified} can be written as
\begin{equation}\label{eqn:equity_factor_lvg}
e_i(x_i,p_T)=E_i-\epsilon\,v^\top(q_i-x_i)=E_i\bigl(1-\epsilon\,\ell_i^{(v)}(x_i)\bigr),
\end{equation}
so the expected shortfall contribution of account $i$ depends on $x_i$ only through the factor leverage $\ell_i^{(v)}(x_i)$. For each account~$i$, recall from~\eqref{eq:per.acct.feasible} the set $\mathcal Y_i$ of feasible allocations and introduce the feasible factor-leverage interval
\[
[\underline\ell_i,\overline\ell_i]\defeq \big\{\ell_i^{(v)}(x_i):\ x_i\in\mathcal Y_i\big\},
\qquad i=1,\dots,n.
\]
This set is indeed a closed, bounded interval since $\ell_i^{(v)}(\cdot)$ is an affine transformation and $\mathcal Y_i$ is convex and compact.

We observe that the aggregate constraint $\sum_{i=1}^n x_i=Q$ implies a fixed equity-weighted factor exposure, denoted $L^{(v)}$ below. Indeed, for any feasible allocation $x\in\mathcal{X}$,
\begin{equation}\label{eq:sum_factor_fixed}
\sum_{i=1}^n  E_i \ell_i^{(v)}(x_i)=\sum_{i=1}^n v^\top(q_i-x_i)
=
v^\top\Big(\sum_{i=1}^n q_i - Q\Big) \defeq L^{(v)}.
\end{equation}
We can then think of different allocations as redistributing this fixed total across accounts.
Under certain implementability conditions, we will see that a clipped water-filling rule for the factor leverage
$\ell_i^{(v)}(x_i)$ is optimal: In principle, we would like to minimize the maximal factor leverage $\ell_i^{(v)}(x_i)$ in~\eqref{eq:sum_factor_fixed}. However, this water-filling is clipped to $[\underline\ell_i,\overline\ell_i]$ because an account's factor leverage can only be reduced as long as the account contains the asset(s) being deleveraged --- in contrast to the isolated-margining case, the factor leverage may remain high due to other assets in the portfolio that are outside the reach of ADL. See \cref{fig:factor_water_filling_clipping} for an illustration.

Intuitively, the general idea is still to first find a target factor leverage $t^\star$ that clears the market and then allocate quantities so as to drain leverages above the target; however, the clipping means that some accounts will need to have different targets, denoted by $\ell^\star_i$ below. The following theorem formalizes this structure.

\begin{figure}[tbh]
\centering

\begin{minipage}[t]{0.48\textwidth}
\centering
\begin{tikzpicture}
\begin{axis}[
    width=\textwidth,
    height=0.88\linewidth,
    ybar stacked,
    bar width=18pt,
    ymin=0, ymax=8.8,
    xmin=0.6, xmax=3.8,
    xtick={1.0,1.8,2.6,3.4},
    xticklabels={1,2,3,4},
    xlabel={Account $i$},
    ylabel={Factor leverage},
    axis lines=left,
    tick align=outside,
    ymajorgrids,
    grid style={dotted, gray!35},
    tick label style={font=\small},
    yticklabel style={opacity=0},
    label style={font=\small},
    clip=false,
    legend cell align={left},
    legend style={
        at={(0.5,1.05)},
        anchor=south,
        legend columns=3,
        draw=none,
        fill=none,
        font=\small,
    },
    legend image code/.code={%
        \draw[draw=black,#1] (0cm,-0.09cm) rectangle (0.34cm,0.09cm);
    },
]

\addlegendimage{fill=blue!25}
\addlegendentry{Removed by ADL}
\addlegendimage{fill=blue!80}
\addlegendentry{Post-ADL}
\addlegendimage{black, thick}


\addplot+[draw=black, fill=blue!80] coordinates
    {(1.0,6.0) (1.0,5.0) (1.8,6.0) (2.6,4.0) (3.4,2.0)}; 
\addplot+[draw=black, fill=blue!25] coordinates
    {(1.0,2.0) (1.8,0.0) (2.6,0.0) (3.4,0.0)};

\draw[black, dashed, thick] (axis cs:0.7,6.0) -- (axis cs:3.7,6.0);
\node[font=\small, anchor=south west] at (axis cs:3.52,6.0) {$t^\star$};

\draw[black, thick] (axis cs:0.88,5.0) -- (axis cs:1.12,5.0);
\node[font=\small, anchor=west] at (axis cs:1.14,5.0) {$\underline{\ell}_1$};

\node[font=\small, anchor=south] at (axis cs:1.0,8.08) {$\ell^{(v)}_1(0)$};
\node[font=\small, anchor=south] at (axis cs:1.8,6.08) {$\ell^{(v)}_2(0)$};
\node[font=\small, anchor=south] at (axis cs:2.6,4.08) {$\ell^{(v)}_3(0)$};
\node[font=\small, anchor=south] at (axis cs:3.4,2.08) {$\ell^{(v)}_4(0)$};

\end{axis}
\end{tikzpicture}

\end{minipage}
\hfill
\begin{minipage}[t]{0.48\textwidth}
\centering
\begin{tikzpicture}
\begin{axis}[
    width=\textwidth,
    height=0.88\linewidth,
    ybar stacked,
    bar width=18pt,
    ymin=0, ymax=8.8,
    xmin=0.6, xmax=3.8,
    xtick={1.0,1.8,2.6,3.4},
    xticklabels={1,2,3,4},
    xlabel={Account $i$},
    axis lines=left,
    tick align=outside,
    ymajorgrids,
    grid style={dotted, gray!35},
    tick label style={font=\small},
    yticklabel style={opacity=0},
    label style={font=\small},
    clip=false,
]

\addplot+[draw=black, fill=blue!80, forget plot] coordinates
    {(1.0,5.0) (1.8,3.5) (2.6,3.5) (3.4,2.0)};
\addplot+[draw=black, fill=blue!25, forget plot] coordinates
    {(1.0,3.0) (1.8,2.5) (2.6,0.5) (3.4,0.0)};

\draw[black, dashed, thick] (axis cs:0.7,3.5) -- (axis cs:3.7,3.5);
\node[font=\small, anchor=south west] at (axis cs:3.52,3.5) {$t^\star$};

\draw[black, thick] (axis cs:0.88,5.0) -- (axis cs:1.12,5.0);
\node[font=\small, anchor=west] at (axis cs:1.14,5.0) {$\underline{\ell}_1$};

\draw[->, black, thick] (axis cs:1.0,4.2) -- (axis cs:1.0,4.95);

\node[font=\small, anchor=south] at (axis cs:1.0,8.08) {$\ell^{(v)}_1(0)$};
\node[font=\small, anchor=south] at (axis cs:1.8,6.08) {$\ell^{(v)}_2(0)$};
\node[font=\small, anchor=south] at (axis cs:2.6,4.08) {$\ell^{(v)}_3(0)$};
\node[font=\small, anchor=south] at (axis cs:3.4,2.08) {$\ell^{(v)}_4(0)$};

\end{axis}
\end{tikzpicture}

\end{minipage}

\caption{Water-filling with clipping. The target threshold $t^\star$ equalizes post-ADL factor leverage for interior accounts,
but account-specific bounds can bind. In the right panel, account $1$ cannot be reduced below $\underline{\ell}_1$, so its
post-ADL factor leverage is clipped at $\underline{\ell}_1$ rather than being reduced to~$t^\star$.}
\label{fig:factor_water_filling_clipping}
\end{figure}

\begin{theorem}[Verification theorem for factor water-filling]\label{thm:factor_filling_verification}
For any $\eta\in\R$, define the \emph{target factor leverage} for account $i$ by
\begin{equation}\label{eq:factor_fill_target_main}
  \ell_i^\star(\eta)\in\argmin_{z\in[\underline\ell_i,\overline\ell_i]}
   \big\{\psi(z)-\eta z\big\},
\quad i=1,\dots,n, \qquad \text{where}\quad \psi(z)\defeq \E\big[(\epsilon z-1)_+\big],\quad z\in\R.
\end{equation}
\begin{enumerate}
\item[(i)] The minimizer $\ell_i^\star(\eta)$ is unique for any $\eta\in\R$, and $\eta\mapsto \ell_i^\star(\eta)$ is continuous and nondecreasing.
\item[(ii)] For any $\eta\in\R$, $\ell_i^\star(\eta)$ satisfies the clipped water-filling rule\footnote{The expressions in~\eqref{eq:water_fill_piecewise_main} are well-defined since the convex function $\psi:\mathbb{R}\to\mathbb{R}_+$ is continuously differentiable with a strictly increasing derivative $\psi'$; cf.\ Appendix~\ref{app:prop.psi}.}
\begin{equation}\label{eq:water_fill_piecewise_main}
\ell_i^\star(\eta)=
\begin{cases}
\underline \ell_i, & \eta\le \psi'(\underline \ell_i),\\
(\psi')^{-1}(\eta), & \psi'(\underline \ell_i)<\eta<\psi'(\overline \ell_i),\\
\overline \ell_i, & \eta\ge \psi'(\overline \ell_i).
\end{cases}
\end{equation}
\item[(iii)] There exists at least one $\eta^\star\in\R$ such that
\begin{equation}\label{eq:eta_budget_main}
\sum_{i=1}^n E_i\,\ell_i^\star(\eta^\star)=v^\top\Big(\sum_{i=1}^n q_i - Q\Big).
\end{equation}
\end{enumerate}
Given $\eta^\star$ satisfying \eqref{eq:eta_budget_main}, if there exists an allocation $x^\star\in\mathcal X$ such that
\[
\ell_i^{(v)}(x_i^\star)=\ell_i^\star(\eta^\star)\quad\text{for all }i,
\]
then $x^\star$ is an optimal allocation for \eqref{eq:adl_multiasset_exp_loss}.
\end{theorem}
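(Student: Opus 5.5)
Under the single-factor structure, \eqref{eqn:equity_factor_lvg} gives $\sigma_i(x_i,p_T)=E_i(\epsilon\,\ell_i^{(v)}(x_i)-1)_+$. The expected loss is therefore
\[
\E[\Lscr(x,p_T)]=\sum_{i=1}^n E_i\,\psi\bigl(\ell_i^{(v)}(x_i)\bigr),
\]
and the plan is to relax \eqref{eq:adl_multiasset_exp_loss} to a problem in the scalar variables $z_i=\ell_i^{(v)}(x_i)$. The relaxed problem minimizes $\sum_i E_i\psi(z_i)$ over $z_i\in[\underline\ell_i,\overline\ell_i]$ subject to $\sum_i E_i z_i=L^{(v)}$. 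Every feasible $x\in\mathcal X$ produces a feasible $z$ by \eqref{eq:sum_factor_fixed}, with the same objective value. Hence the relaxed optimum is a lower bound for the original problem, and any $x^\star$ whose factor leverages attain a relaxed minimizer is optimal.

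\textbf{Properties of $\psi$.} We first record the needed properties of $\psi$ (Appendix~\ref{app:prop.psi}). Convexity is clear. Differentiability follows by dominated convergence using $\E|\epsilon|<\infty$:
\[
\psi'(z)=\E[\epsilon\,\mathbf 1\{\epsilon z>1\}].
\]
The derivative is continuous because $\epsilon$ has a density. It is strictly increasing: for $z<z'$ the difference $\psi'(z')-\psi'(z)$ is an expectation of $|\epsilon|$ over a nonnull region. For example, for $0<z<z'$ this is the set $\{1/z'<\epsilon\le 1/z\}$; the other sign cases are analogous. The positive density then makes the difference strictly positive.

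\textbf{Part (i).} Strict monotonicity of $\psi'$ gives strict convexity of $\psi$. Hence $z\mapsto\psi(z)-\eta z$ has a unique minimizer on the compact interval $[\underline\ell_i,\overline\ell_i]$. Continuity of $\eta\mapsto\ell_i^\star(\eta)$ follows from Berge's maximum theorem, or directly from the formula in (ii). Monotonicity follows from the usual exchange argument: add the optimality inequalities at $\eta$ and $\eta'$.

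\textbf{Part (ii).} This is the first-order condition on an interval. The point $z$ is optimal iff $\psi'(z)-\eta$ is $\ge 0$ at $z=\underline\ell_i$, $=0$ at interior $z$, and $\le 0$ at $z=\overline\ell_i$. Since $\psi'$ is continuous and strictly increasing, these conditions yield the three cases of \eqref{eq:water_fill_piecewise_main}.

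\textbf{Part (iii).} Define $H(\eta)=\sum_i E_i\ell_i^\star(\eta)$. It is continuous and nondecreasing. It equals $\sum_iE_i\underline\ell_i$ for $\eta\le\min_i\psi'(\underline\ell_i)$ and $\sum_iE_i\overline\ell_i$ for $\eta\ge\max_i\psi'(\overline\ell_i)$. Feasibility (Assumption~\ref{ass:cross_margin_feasible}) and \eqref{eq:sum_factor_fixed} place $L^{(v)}$ between these two values. The intermediate value theorem then gives $\eta^\star$ with $H(\eta^\star)=L^{(v)}$.

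\textbf{Verification.} The Lagrangian argument goes as follows. For any feasible $z$ of the relaxed problem, optimality of each $\ell_i^\star(\eta^\star)$ in \eqref{eq:factor_fill_target_main} gives
\[
\sum_i E_i\psi(z_i)\ \ge\ \sum_i E_i\bigl[\psi(\ell_i^\star)-\eta^\star\ell_i^\star+\eta^\star z_i\bigr].
\]
Both $\sum_iE_iz_i$ and $\sum_iE_i\ell_i^\star(\eta^\star)$ equal $L^{(v)}$, so the $\eta^\star$ terms cancel. Thus $z^\star=(\ell_i^\star(\eta^\star))_i$ minimizes the relaxation. If $x^\star\in\mathcal X$ realizes $z^\star$, its expected loss equals the relaxed optimum, which is a lower bound over $\mathcal X$; hence $x^\star$ is optimal.

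\textbf{Main obstacle.} I expect the main obstacle to be the strict monotonicity of $\psi'$ for $z$ of both signs, including $z=0$, and the justification of differentiation under the expectation. These need a careful case split on the sign of $z$.
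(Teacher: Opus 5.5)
Your proof is correct, and parts (i)--(iii) follow the paper's argument closely. The paper reads off continuity and monotonicity of $\eta\mapsto\ell_i^\star(\eta)$ from the clipped formula and the strict monotonicity of $(\psi')^{-1}$, where you use Berge's theorem and an exchange inequality; either works.

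The real difference is in the final verification step. The paper sets the vector shadow price $\lambda^\star=\eta^\star v$ and observes that $\E[\sigma_i(x_i,p_T)]+(\lambda^\star)^\top x_i=\eta^\star v^\top q_i+E_i\bigl(\psi(\ell_i^{(v)}(x_i))-\eta^\star\ell_i^{(v)}(x_i)\bigr)$. From this it concludes that each $x_i^\star$ is a best response in $X_i(\lambda^\star)$, and then appeals to the dual-decomposition characterization in \cref{prop:market_clearing_kkt}. You instead project onto factor-leverage space and run weak duality there by hand, with the scalar multiplier $\eta^\star$ pricing the single budget identity \eqref{eq:sum_factor_fixed}.

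Summing your per-account inequalities is the same computation that underlies the sufficiency direction of \cref{prop:market_clearing_kkt} with $\lambda^\star=\eta^\star v$, so the mechanism is identical. Your packaging buys two things:
\begin{itemize}
\item It is self-contained. You do not need that proposition, whose proof passes through strong duality even though only its elementary sufficiency direction is used here.
\item It explains transparently why the result is only a verification theorem. The targets $\ell_i^\star(\eta^\star)$ always solve the scalar relaxation, which lower-bounds the original problem, and the bound is attained exactly when the targets can be realized under the vector clearing constraint.
\end{itemize}
The paper's route, in turn, exhibits an optimal multiplier parallel to $v$. That is precisely the structure exploited in the converse results of Appendix~\ref{app:suff.conditions.water.filling}.

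One small precision concerns differentiating $\psi$ under the expectation. The difference quotients are dominated by $|\epsilon|$ because $(\cdot)_+$ is $1$-Lipschitz, but their a.e.\ convergence requires $\PR(\epsilon z=1)=0$. So the density assumption is already needed for differentiability of $\psi$, not only for continuity of $\psi'$.
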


By monotonicity of the map $\eta\mapsto \ell_i^\star(\eta)$, a smaller right-hand side in the budget equation
\eqref{eq:eta_budget_main} corresponds to a smaller value of $\eta^\star$ and therefore to smaller targets
$\ell_i^\star(\eta^\star)$ for each account $i$. This yields the (clipped) water-filling interpretation: as the aggregate
factor-exposure target is tightened, accounts are reduced along the common ``water level'' until they hit their individual
constraints.

Mathematically, Theorem~\ref{thm:factor_filling_verification} is a \emph{verification} result. It states that if the
clipped one-dimensional minimizers \eqref{eq:water_fill_piecewise_main} can be realized by some feasible allocation
$x\in\mathcal X$, then that allocation is globally optimal. In particular, the targets $\ell_i^\star(\eta^\star)$ are always
compatible with the per-account feasibility constraints $\mathcal Y_i$, by construction. The only remaining requirement is the
clearing condition $\sum_{i=1}^n x_i=Q$. In general multi-asset ADL, such a realization need not exist because $\sum_{i=1}^n x_i=Q$ is a vector constraint, whereas
\eqref{eq:eta_budget_main} specifies only a single scalar target for the accounts. A degenerate example in which no feasible allocation achieves
the water-filling targets is given in Appendix~\ref{app:water.filling.fail}.

However, it turns out that in many relevant situations, factor water-filling is implementable (and therefore optimal). One particular case is when only a single asset is subject to ADL, that is, only one component of $Q$ is nonzero. Note that this situation is still quite different from the isolated-margin case in \cref{sec:isolated.margining}, because the accounts being deleveraged now contain further assets that influence potential future losses.

\begin{theorem}[Factor water-filling for single-asset ADL]\label{thm:factor_filling_one_asset}
Suppose that only asset $k_0$ is subject to ADL, i.e.\ $Q^k=0$ for all $k\neq k_0$, and $v^{k_0}\neq 0$. Then clipped water-filling on factor leverage is implementable and optimal for~\eqref{eq:adl_multiasset_exp_loss}.

Specifically, let $\eta^\star$ be any solution of the
budget equation \eqref{eq:eta_budget_main}, and define the water-filling targets $\ell_i^\star(\eta^\star)$ by
\eqref{eq:factor_fill_target_main}. Then there exists an allocation $x^\star\in\mathcal X$ such that
\[
\ell_i^{(v)}(x_i^\star)=\ell_i^\star(\eta^\star)\quad\text{for all }i,
\]
and $x^\star$ is an optimal allocation for~\eqref{eq:adl_multiasset_exp_loss}.
\end{theorem}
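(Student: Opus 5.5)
The plan is to reduce the statement to the verification result, Theorem~\ref{thm:factor_filling_verification}. That theorem says that once we exhibit $x^\star\in\mathcal X$ with $\ell_i^{(v)}(x_i^\star)=\ell_i^\star(\eta^\star)$ for all $i$, optimality follows. So the only task is implementability: constructing such an $x^\star$ when $Q$ has a single nonzero component $Q^{k_0}$.

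\textbf{Step 1: the per-account leverage map.} Because $Q^k=0$ for $k\neq k_0$, the directional bounds \eqref{eq:directional.bds} give $l_i^k=u_i^k=0$ for $k\ne k_0$. Hence $\mathcal Y_i=\{x_i:\ x_i^k=0\ (k\neq k_0),\ x_i^{k_0}\in[l_i^{k_0},u_i^{k_0}]\}$, a segment parametrized by the scalar $y_i\defeq x_i^{k_0}$. On it,
\[
\ell_i^{(v)}(x_i)=\frac{v^\top q_i - v^{k_0}y_i}{E_i},
\]
which is affine and strictly monotone in $y_i$ because $v^{k_0}\neq0$. It is therefore a bijection from $[l_i^{k_0},u_i^{k_0}]$ onto $[\underline\ell_i,\overline\ell_i]$.

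\textbf{Step 2: define the candidate allocation.} Since $\ell_i^\star(\eta^\star)\in[\underline\ell_i,\overline\ell_i]$ by construction, there is a unique $y_i^\star\in[l_i^{k_0},u_i^{k_0}]$ with
\[
y_i^\star=\frac{v^\top q_i-E_i\ell_i^\star(\eta^\star)}{v^{k_0}}.
\]
Set $x_i^\star\defeq y_i^\star e_{k_0}$. Then $x_i^\star\in\mathcal Y_i$ and $\ell_i^{(v)}(x_i^\star)=\ell_i^\star(\eta^\star)$.

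\textbf{Step 3: market clearing.} In the coordinates $k\neq k_0$, $\sum_i x_i^{\star,k}=0=Q^k$ holds trivially. For $k_0$, sum the defining identities $E_i\ell_i^\star(\eta^\star)=v^\top q_i-v^{k_0}y_i^\star$ and use the budget equation \eqref{eq:eta_budget_main}:
\[
v^\top\Big(\sum_i q_i\Big)-v^{k_0}\sum_i y_i^\star
= v^\top\Big(\sum_i q_i-Q\Big)
= v^\top\Big(\sum_i q_i\Big)-v^{k_0}Q^{k_0}.
\]
Dividing by $v^{k_0}\ne0$ gives $\sum_i y_i^\star=Q^{k_0}$, so $x^\star\in\mathcal X$. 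Theorem~\ref{thm:factor_filling_verification} then yields optimality.

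The existence of $\eta^\star$ is guaranteed by part~(iii) of the verification theorem, so it needs no separate argument. The only delicate point is Step~3. The scalar budget equation carries exactly the information needed because the single vector constraint collapses to one scalar equation in the $k_0$ coordinate, and $v^{k_0}\neq0$ makes the map between that coordinate and factor exposure invertible. This is precisely what fails in the general multi-asset case.
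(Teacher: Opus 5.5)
Your proposal is correct and follows essentially the same route as the paper. The paper likewise establishes the affine bijection $x_i^{k_0}\mapsto\ell_i^{(v)}(x_i)$ from $[l_i^{k_0},u_i^{k_0}]$ onto $[\underline\ell_i,\overline\ell_i]$ (Lemma~\ref{lem:one_asset_bijection}) and defines $x^\star$ through the inverse map, which coincides with your formula for $y_i^\star$. It then obtains $\sum_i x_i^{\star,k_0}=Q^{k_0}$ by comparing the two expressions for the equity-weighted factor exposure and dividing by $v^{k_0}\neq0$, and concludes via Theorem~\ref{thm:factor_filling_verification} (Corollary~\ref{cor:one_asset_factor_filling_optimal}).
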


Intuitively, \cref{thm:factor_filling_one_asset} holds because, when only a single asset is deleveraged, each account's factor
leverage $\ell_i^{(v)}(x_i)$ is an affine function of the scalar decision $x_i^{k_0}$. As a result, the one-dimensional
water-filling targets are automatically implementable while satisfying the (scalar) clearing constraint.

The single-asset case described by \cref{thm:factor_filling_one_asset} is a special case of a much more general sufficient condition for multi-asset ADL detailed in Appendix~\ref{app:suff.conditions.water.filling}.  It is based on a technical \emph{connectivity} property across the actively delevered assets (i.e., the nonzero components of $Q$), capturing the idea that some accounts have partial reductions in multiple assets. Mathematically, it is a weak form of an interior point condition, avoiding degenerate boundary cases like the example of Appendix~\ref{app:water.filling.fail}. Intuitively, such connectivity is most plausible in large exchanges when many large accounts maintain genuinely cross-asset portfolios, creating connections across the assets being delevered.

\begin{theorem}[Factor water-filling for multi-asset ADL]\label{thm:factor_filling_multi_asset}
Suppose that $v^k\neq 0$ unless $Q^k=0$, and that $Q$ and the accounts being delevered satisfy the connected partial deleveraging condition\footnote{See Definition~\ref{def:con.part.del} and Proposition~\ref{prop:gen.water.filling.connected}. Loosely speaking, connected partial deleveraging means that each asset subject to ADL is \emph{partially} reduced for at least one account and that the assets are linked through overlapping reductions. In other words, you can move from any delevered asset to any other by ``hopping'' through accounts that are partially reduced in more than one of these assets.} detailed in Appendix~\ref{app:suff.conditions.water.filling}. 
Then clipped water-filling on factor leverage is again implementable and optimal for~\eqref{eq:adl_multiasset_exp_loss}, as in \cref{thm:factor_filling_one_asset}.
\end{theorem}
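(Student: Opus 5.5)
The plan is to reduce the statement to the verification result, Theorem~\ref{thm:factor_filling_verification}. Fix any solution $\eta^\star$ of the budget equation~\eqref{eq:eta_budget_main}, which exists by part~(iii) of that theorem, and write $t_i\defeq\ell_i^\star(\eta^\star)\in[\underline\ell_i,\overline\ell_i]$. It then suffices to exhibit one $x^\star\in\mathcal X$ with $\ell_i^{(v)}(x_i^\star)=t_i$ for every $i$; optimality follows immediately from the verification theorem. So the whole task is an implementability statement: given per-account targets that are individually attainable in $\mathcal Y_i$ and that satisfy the scalar aggregate identity $\sum_i E_i t_i=L^{(v)}$, find reductions meeting the vector clearing constraint $\sum_i x_i=Q$.

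Under connected partial deleveraging, I would split the construction into two steps.

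\textbf{Step 1.} For each $i$, the set $A_i\defeq\{x_i\in\mathcal Y_i:\ v^\top(q_i-x_i)=E_i t_i\}$ is a nonempty polytope, being the slice of a box by a hyperplane. Only coordinates $k$ with $Q^k\neq0$ are free; the others are forced to zero by~\eqref{eq:directional.bds}. Pick any $x^{(0)}\in\prod_i A_i$. Then, by~\eqref{eq:sum_factor_fixed},
\[
v^\top\Big(\sum_i x^{(0)}_i-Q\Big)=0,
\]
so the residual $r\defeq Q-\sum_i x_i^{(0)}$ lies in the hyperplane $v^\perp$, restricted to the active assets.

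\textbf{Step 2.} Correct the residual by moves that keep every account's factor exposure fixed. For an account $i$ partially reduced in active assets $k$ and $k'$, the move
\[
x_i\mapsto x_i+s\Big(\frac{e_k}{v^k}-\frac{e_{k'}}{v^{k'}}\Big)
\]
leaves $v^\top x_i$ unchanged and stays inside $\mathcal Y_i$ for small $|s|$. Here $v^k\neq0$ is used. The connectivity graph on active assets, with an edge whenever some account is interior in both, is connected by assumption. The differences $e_k/v^k-e_{k'}/v^{k'}$ along edges of a spanning tree therefore span $v^\perp$ restricted to the active coordinates. Hence any residual $r\in v^\perp$ is a linear combination of admissible moves.

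The difficulty is size: the available moves are only infinitesimal. I would therefore run the argument on the convex set $\prod_i A_i$ and show that the linear image $x\mapsto\sum_i x_i$ maps it onto a set containing $Q$. To do this, I would use the connectivity condition as it is actually formulated in Definition~\ref{def:con.part.del}. That condition is stated at the candidate optimal allocation, so I expect it to directly supply a point of $\prod_i A_i$ that already clears the market. In that case the proof is essentially a rank or relative-interior argument in Proposition~\ref{prop:gen.water.filling.connected}.

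If instead the condition only provides local connectivity, I would consider the convex program
\[
\min\Big\|\sum_i x_i-Q\Big\|^2\quad\text{over}\quad\prod_i A_i .
\]
Optimality together with the connectivity moves forces the gradient, which is $v^\perp$-valued, to vanish, and hence gives zero residual.

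\textbf{Main obstacle.} The hardest step is this last one: converting local, edge-wise adjustability into global exact clearing. The degenerate example of Appendix~\ref{app:water.filling.fail} shows that the argument really needs the connectivity hypothesis at a minimizer. I would first try the least-squares minimizer argument, verifying that partial interiority holds there.
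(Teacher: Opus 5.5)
There is a genuine gap. Your construction never uses optimality, yet the hypothesis is only available at an optimizer. Definition~\ref{def:con.part.del} is a condition on an optimal allocation $x^\star$ of~\eqref{eq:adl_multiasset_exp_loss}, namely on $K_{\mathrm{int}}(x^\star)$ and $G(x^\star)$.

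Your Steps~1--2 and the least-squares fallback need interiority and connectivity at a minimizer $\hat x$ of $\|\sum_i x_i-Q\|^2$ over $\prod_i A_i$. Granted that, your directional-derivative argument would indeed make $r^k/v^k$ constant on $K$, and then $v^\top r=0$ gives $r=0$. But nothing transfers the hypothesis from $x^\star$ to $\hat x$. Nor can you work at $x^\star$ with your edge moves: they preserve every $v^\top x_i$, so they cannot bring $x^\star$ into $\prod_i A_i$ unless it is already there.

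Your first option names the right candidate, $x^\star$ itself. However, the assertion that $x^\star\in\prod_i A_i$, i.e.\ $\ell_i^{(v)}(x_i^\star)=\ell_i^\star(\eta^\star)$ for all $i$, is the entire content of the theorem. It is not a rank or relative-interior fact: it concerns the values of the factor leverages at the optimum, and it requires first-order optimality conditions.

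The missing idea is the shadow-price alignment the paper uses, which proceeds as follows.
\begin{enumerate}
\item Take the multiplier $\lambda^\star$ from Proposition~\ref{prop:market_clearing_kkt}. Under the single-factor model, $\E[\sigma_i(x_i,p_T)]=E_i\psi(\ell_i^{(v)}(x_i))$, with partial derivative $-v^k\psi'(\ell_i^{(v)}(x_i))$ in $x_i^k$.
\item Stationarity at any $k\in F_i$, where the box multipliers vanish, gives $\lambda^{k,\star}/v^k=\psi'(\ell_i^{(v)}(x_i^\star))$. So coordinates interior for a common account share the ratio $\lambda^{k,\star}/v^k$.
\item Connectivity of $G(x^\star)$ together with $K_{\mathrm{int}}(x^\star)=K$ propagates this to $\lambda^{k,\star}=\eta^\star v^k$ for all $k\in K$ (Proposition~\ref{prop:lambda_parallel_v_connected}).
\item Since $x_i^k=0$ for $k\notin K$, on $\mathcal Y_i$ one gets $\E[\sigma_i(x_i,p_T)]+(\lambda^\star)^\top x_i=\eta^\star v^\top q_i+E_i\bigl(\psi(\ell_i^{(v)}(x_i))-\eta^\star\ell_i^{(v)}(x_i)\bigr)$.
\item Hence $x_i^\star\in X_i(\lambda^\star)$ forces $\ell_i^{(v)}(x_i^\star)=\ell_i^\star(\eta^\star)$, by the uniqueness in Theorem~\ref{thm:factor_filling_verification}(i) (Proposition~\ref{prop:gen.water.filling.connected}).
\end{enumerate}
So $x^\star$ itself implements the targets, and this $\eta^\star$ satisfies~\eqref{eq:eta_budget_main} automatically by~\eqref{eq:sum_factor_fixed}.

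Because you fix an arbitrary budget solution first, you also need one more observation. The map $\eta\mapsto\sum_i E_i\ell_i^\star(\eta)$ is a positively weighted sum of nondecreasing functions, so all solutions of~\eqref{eq:eta_budget_main} produce the same targets.
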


\Cref{thm:factor_filling_multi_asset} shows that the single-asset water-filling intuition survives in the multi-asset cross-margin setting if assets are driven by a single factor. Under that structure, the residual factor leverage determines default risk, rather than gross leverage. Relative to the single-asset case, the only additional obstacle is whether the scalar water-filling targets can actually be implemented under the vector clearing constraint, and \cref{thm:factor_filling_multi_asset} provides a sufficient condition for this.

\begin{remark}
A similar message carries over to the $\mathrm{CVaR}_\beta$ formulation. Although $\mathrm{CVaR}_\beta$ generally couples accounts through tail scenarios and therefore does not admit the same separability as expected loss, the single-factor structure restores a one-dimensional ordering by factor leverage and the optimal ADL rule is again similar to the single-asset case --- once $\beta$ is fixed and assuming we further restrict to accounts with positive factor exposure, the allocation problem takes a clipped water-filling form in factor-leverage space. Thus, the extension from expected loss to $\mathrm{CVaR}$ preserves the basic economic insight: the exchange optimally allocates deleveraging by equalizing a marginal tail-risk criterion across accounts until individual feasibility constraints become binding. A more detailed discussion can be found in \cref{sec:Cvar_one_factor}.
\end{remark}

\subsection{Numerical Example}\label{sec:numerical_example_crossmargin}
In this section, we illustrate ADL under a bivariate price model for BTC and ETH, and discuss the accuracy of the single-factor approximation discussed in \cref{sec:single_factor}. Throughout, we fix the account state at the ADL time~$\tau$ and vary only the required aggregate BTC reduction. Specifically, we set
\[
Q=(Q^{\mathrm{BTC}},Q^{\mathrm{ETH}})=(Q^{\mathrm{BTC}},0),
\]
so that ETH positions are not deleveraged. Recall, however, that the ETH positions affect future losses and therefore the optimal BTC allocation. Terminal close-out prices are generated from a correlated bivariate geometric Brownian motion over the horizon $\Delta=T-\tau=10/365$. Concretely,
\[
p_T^k
=
p_\tau^k
\exp\!\left(
-\frac12 (\sigma^k_{\mathrm{ann}})^2\Delta
+
\sigma^k_{\mathrm{ann}}\sqrt{\Delta}\,Z^k
\right),
\qquad
k\in\{\mathrm{BTC},\mathrm{ETH}\},
\]
where $(Z^{\mathrm{BTC}},Z^{\mathrm{ETH}})$ is standard bivariate normal with correlation $\rho_{\text{corr}}$. Using approximate spot prices of February~5, 2026 for $p_\tau$ and a high correlation consistent with stressed market conditions, we set
\[
p_\tau^{\mathrm{BTC}}=\$67{,}000,
\qquad
p_\tau^{\mathrm{ETH}}=\$1{,}900,
\qquad
\sigma^{\mathrm{BTC}}_{\mathrm{ann}}=60\%,
\qquad
\sigma^{\mathrm{ETH}}_{\mathrm{ann}}=75\%,
\qquad
\rho_{\text{corr}}=0.85.
\]
For each value of $Q^{\mathrm{BTC}}$, we solve the SAA approximation of the expected-loss problem by the ADMM scheme of \cref{sec:numerical_implementation}. The expectation is approximated with $S=2048$ GBM scenarios, and the ADMM iteration is run with penalty $\rho=30$ and an iteration budget of $500$. 

To compare the full model with the theory of \cref{sec:single_factor}, we also solve the same ADL problem under the following one-factor approximation. Let $\Sigma_{\Delta p}$ denote the covariance matrix of the price increment $p_T-p_\tau$ implied by the GBM calibration, and let $(\lambda_1,u_1)$ be its leading eigenpair. We then set
\[
v \defeq \sqrt{\lambda_1}\,u_1,
\qquad
p_T = p_\tau + \epsilon v,
\qquad
\epsilon\sim N(0,1).
\]
The calculation of $v$ is detailed in \cref{sec:appendix_numerical_example_crossmargin}, where we also illustrate the approximation with a scatter plot. The one-factor model retains the dominant covariance mode and suppresses the orthogonal direction. In particular, ``factor exposure'' will refer to the factor leverage $\ell_i^{(v)}$ computed with this leading covariance direction~$v$.

We consider $n=4$ cross-margin accounts. All four are short BTC, but their ETH positions differ substantially, generating different exposures to the dominant covariance mode. Table~\ref{tab:multiasset_state} reports the account state at time~$\tau$. Positions are signed: positive entries denote shorts and negative entries denote longs. Account~4 is approximately factor-neutral despite having the largest gross leverage, account~2 is partially hedged, and accounts~1 and~3 are the most exposed to the dominant market factor.

\begin{table}[htbp]
\centering
\caption{Account state at the ADL time~$\tau$, including gross leverage $\ell_i(0)$ and factor leverage $\ell_i^{(v)}(0)$.}
\label{tab:multiasset_state}
\begin{tabular}{ccccccc}
\toprule
Account
& $q_i^{\mathrm{BTC}}$
& $q_i^{\mathrm{ETH}}$
& $m_i$ in \$1000
& $E_i$ in \$1000
& $\ell_i(0)$
& $\ell_i^{(v)}(0)$ \\
\midrule
1 & 8.0  & 323.0   & 137.5 & 242.1 & 4.8 & 0.49 \\
2 & 10.0 & $-38.7$ & 85.3  & 143.0 & 5.2 & 0.41 \\
3 & 8.0  & 326.2   & 75.4  & 180.6 & 6.4 & 0.66 \\
4 & 7.0  & $-190.0$& 43.9  & 116.9 & 7.1 & 0.07 \\
\bottomrule
\end{tabular}
\end{table}

\begin{figure}[htbp]
\centering
\includegraphics[width=0.86\textwidth]{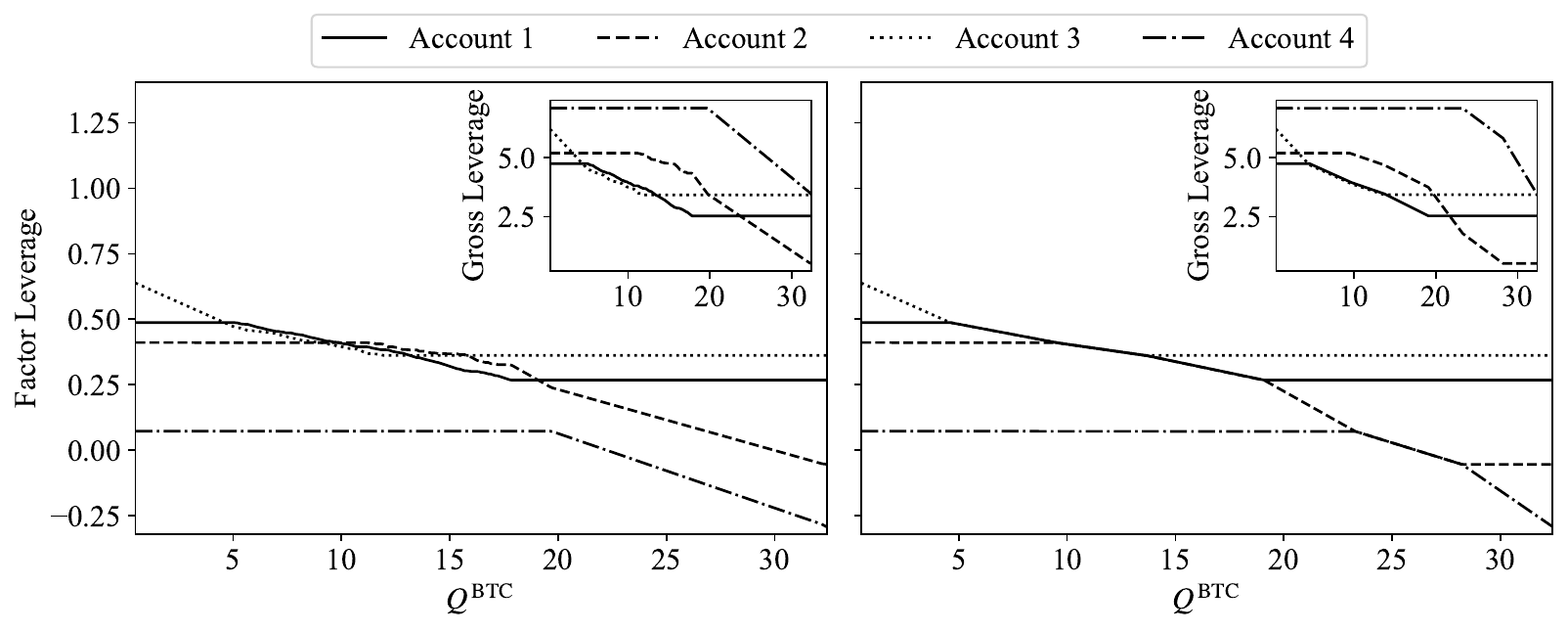}
\caption{Risk-optimal BTC deleveraging paths under the bivariate GBM model, computed from the SAA--ADMM approximation with $2048$ scenarios, $\rho=30$, and $500$ iterations (left panel), and under the one-factor approximation (right panel). The main panels plot post-ADL factor leverage $\ell_i^{(v)}(x_i^\star)$ against the BTC deleveraging budget $Q^{\mathrm{BTC}}$; the insets
report gross leverage $\ell_i(x_i^\star)$ for the same allocations.}
\label{fig:adl_cross_comparison}
\end{figure}

The left panel in \cref{fig:adl_cross_comparison} shows the ADL results obtained from the bivariate GBM model through the SAA--ADMM approximation,  whereas the right panel shows the results
under the one-factor approximation. Consistent with the theoretical results, the right panel
illustrates water-filling in factor exposure: the accounts with the largest factor exposure are
progressively leveled, except for the clipping that occurs when feasibility constraints become
binding. The left panel shows a qualitatively similar behavior, illustrating that the overall
shape of the allocation paths is governed primarily by the dominant covariance mode and that the
water-filling rule captures the principal characteristics.
Quantitatively, the two models are close, but not identical. In the one-factor approximation, the dominant covariance mode~$v$ completely governs the shape of the optimal allocation paths. The full GBM model retains two features that are absent from the rank-one approximation: lognormal nonlinearity and residual risk in the covariance direction orthogonal to~$v$. These effects are economically most visible at the points where the active set changes. Once the dominant factor exposures of the currently active accounts have been brought close together, the neglected second direction becomes relatively more important, and the SAA--ADMM allocation in the full bivariate model can deviate modestly from the exact water-filling pattern.

Both panels highlight the distinction between factor leverage and gross leverage. The insets show that gross leverage would give a misleading ranking of risk. In particular, account~4 has the largest gross leverage throughout much of the range, yet its factor exposure is initially close to zero and it receives little or no deleveraging until BTC reductions become large. Conversely, accounts~1 and~3 are reduced first even though their gross leverage is lower. What drives the optimal ADL decision is not gross notional exposure, but exposure to the dominant risk factor after cross-asset hedges are taken into account.


\section{The Exchange-Scale Problem}
\label{sec:exchange-scale}

A perpetual futures exchange operating a cross-margin book faces a qualitatively different
computational problem from the stylized instances studied in \Cref{sec:numerical_implementation}.
When an ADL event is triggered, the exchange must solve the
expected-loss program \eqref{eq:adl_multiasset_exp_loss} across a universe of tens of thousands of
accounts, each holding positions in dozens of actively traded assets, with an aggregate reduction
target that can reach hundreds of millions of dollars in notional within a matter of seconds. The
October~10, 2025 Hyperliquid event is a concrete instance of this problem. It is also a natural
case study: it is one of the largest publicly observable ADL episodes of 2025, it unfolds in
multiple distinct waves across a broad asset universe, and it generated substantial public
discussion about how losses were distributed. For our purposes, the key feature is that the event
yields an observed realized ADL allocation $x^{\mathrm{Hyperliquid}}$ that can be compared with
model-based allocations on the reconstructed event.

This section addresses two implementation challenges and one empirical question. On the
implementation side, exchange scale rules out direct expectation-based solution methods and
requires an implementation of the SAA--ADMM scheme of \Cref{sec:numerical_implementation} that
remains tractable when both the number of accounts and the number of active assets are large. In
addition, the active coins are highly heterogeneous in price level and support structure, so the
implementation must normalize variable scales and exploit per-account sparsity in order to remain
numerically stable and parallelizable. On the empirical side, the central question is comparative:
given a reconstructed ADL event, how does the realized Hyperliquid allocation compare with the
sequential one-factor Water-filling benchmark induced by \Cref{sec:single_factor} and with the
Risk-optimal allocation computed from the same SAA--ADMM scheme?

The comparison is informative for two reasons. First, the theory above characterizes the
Risk-optimal allocation and identifies a one-factor Water-filling benchmark, but it does not say
how far current exchange practice is from either object once portfolios are cross-margined and
assets are heterogeneous. Second, the comparison reveals not only the objective gap but also its
mechanism. The empirical comparison will show that the relevant differences are not exhausted by a
single expected-loss ranking. They also concern how deleveraging notional is concentrated across
users, how strongly reductions align with factor leverage, and to what extent the realized event
resembles partial portfolio reduction as opposed to near-full directional closeout.

\Cref{sec:empirical} introduces the October~10, 2025 Hyperliquid event, the data sources, the
reconstruction of pre-event account states, and the resulting wave-level optimization inputs.
\Cref{sec:admm-impl} then explains how the abstract ADMM scheme is instantiated at exchange scale.
Finally, \Cref{sec:oct10-results} focuses on the first wave $\tau_1$ and compares Hyperliquid,
Water-filling, and the Risk-optimal allocation in terms of sampled expected loss and user-level
targeting. The corresponding $\tau_3$ exercise and the robustness comparison with the directional
Water-filling benchmark are deferred to the appendix.
 
\subsection{Empirical Setup}\label{sec:empirical}

\paragraph{The ADL Event.}

On October~10, 2025, the perpetual futures exchange Hyperliquid
experienced a concentrated auto-deleveraging episode in which a large
number of accounts were partially or fully closed within a five-minute
window. Our empirical analysis is based on the 75-coin core universe
constructed in Appendix~\ref{app:state-reconstruction}. On this universe, the total realized ADL
notional on October~10 is $\$2.047\times 10^9$. Within the event
window $21{:}16$--$21{:}21$ UTC, the ADL flow unfolds in multiple
distinct bursts, identifiable by sharp increases in cumulative
notional volume. We mark three reference times
\[
\tau_1 = 21{:}16{:}05\ \mathrm{UTC},\qquad
\tau_2 = 21{:}16{:}56\ \mathrm{UTC},\qquad
\tau_3 = 21{:}17{:}06\ \mathrm{UTC},
\]
which correspond to the onset of the three dominant ADL waves visible
in Figure~\ref{fig:adl-notional-time}.

\begin{figure}[htbp]
\centering
\includegraphics[width=0.78\textwidth]{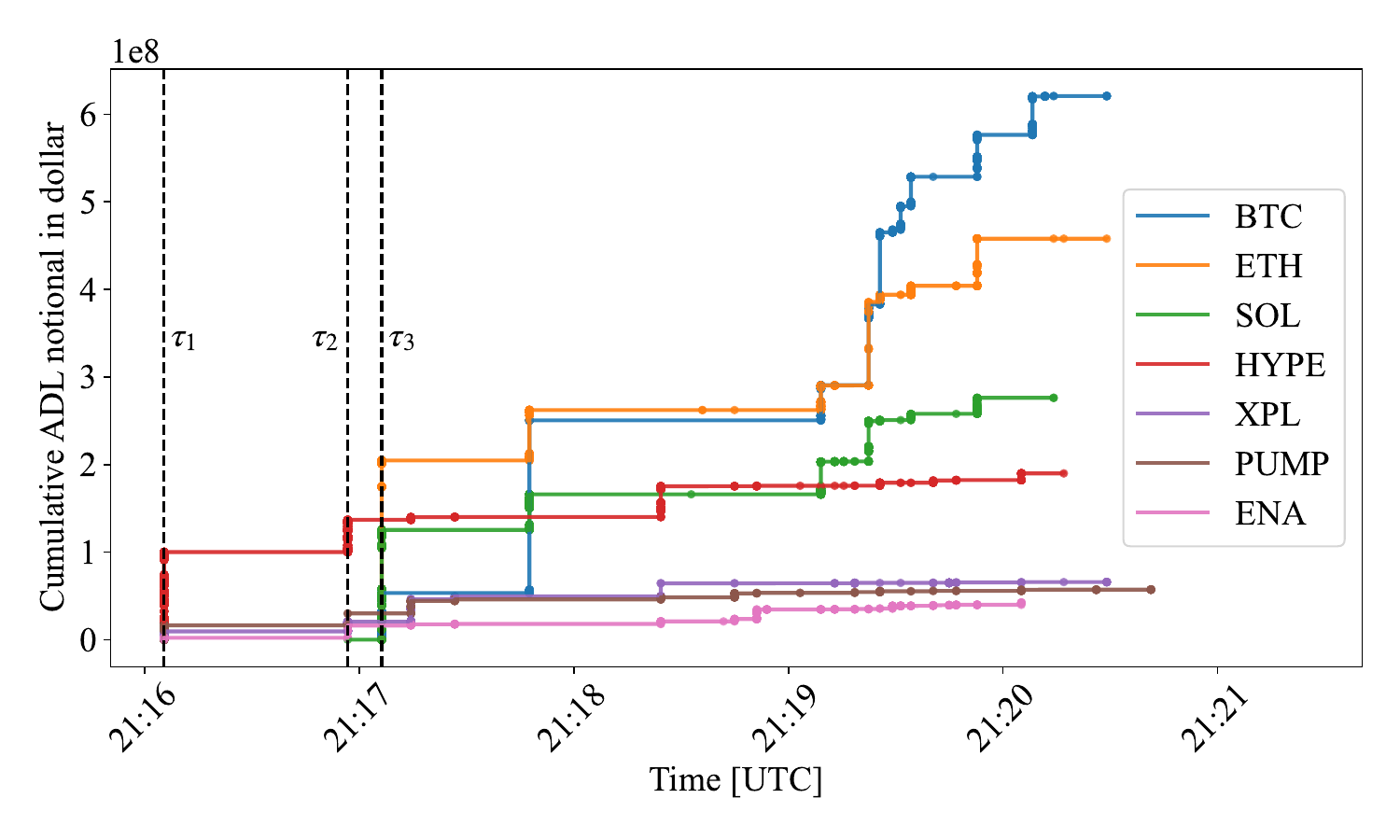}
\caption{Cumulative notional ADL volume by coin during the
  October~10, 2025 event, restricted to the five-minute window
  $21{:}16$--$21{:}21$ UTC\@. Vertical dashed lines mark $\tau_1$,
  $\tau_2$, and $\tau_3$. The seven largest coins by total ADL
  notional are shown; each step corresponds to an individual ADL fill.}
\label{fig:adl-notional-time}
\end{figure}

\begin{figure}[htbp]
\centering
\includegraphics[width=0.76\textwidth]{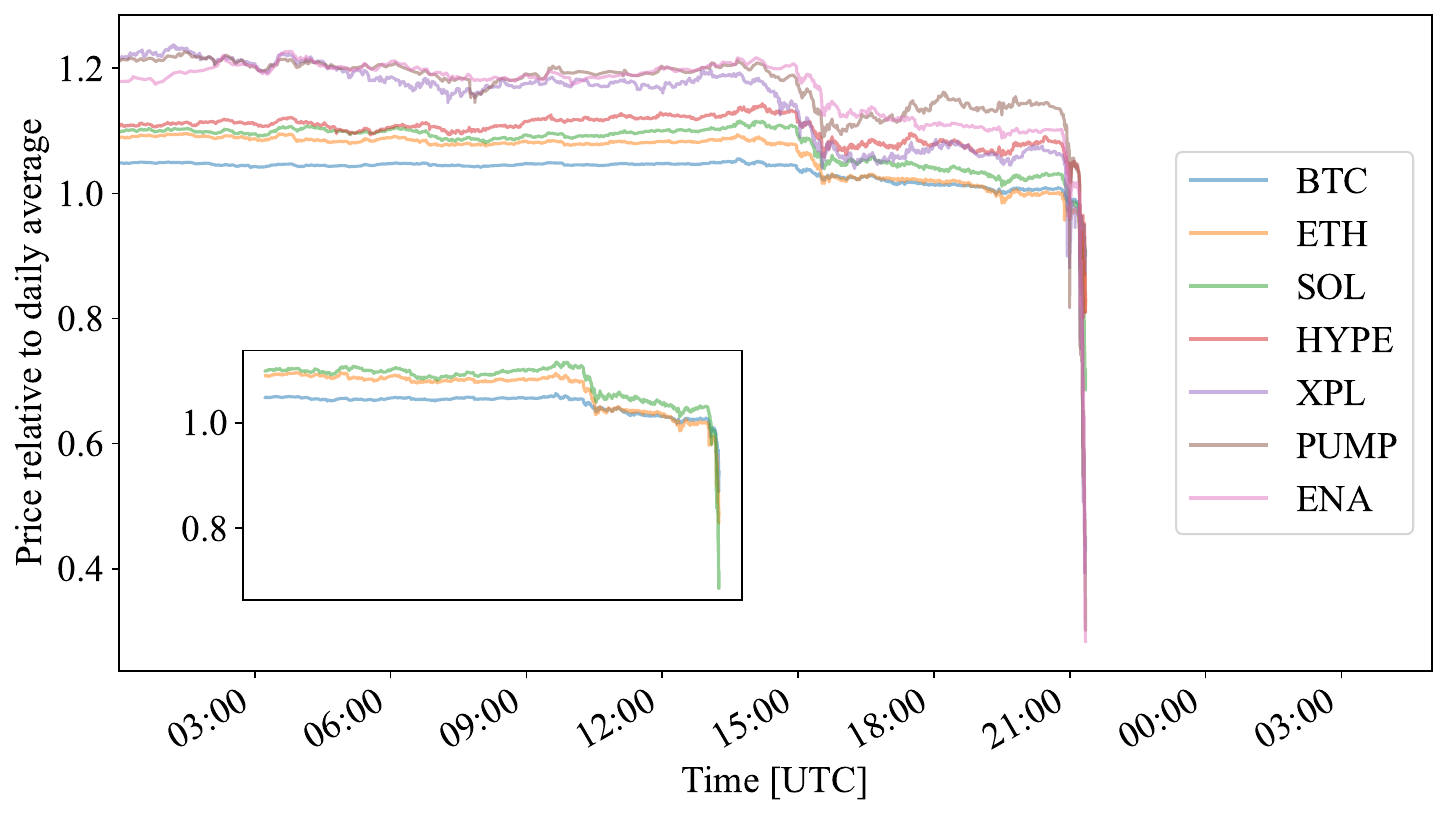}
\caption{Intraday price paths on October~10, 2025, normalized by each
  coin's average price up to the ADL event. The main panel shows the
  most actively traded affected coins, while the inset isolates the
  most liquid names. The cross-section is highly heterogeneous: several
  thinner coins lose more than $50\%$, whereas BTC, ETH, and SOL still
  fall by roughly $10$--$20\%$.}
\label{fig:pre-adl-price-stress}
\end{figure}

The three waves differ substantially in scale and composition. The
first wave $\tau_1$ has total realized ADL notional
$\$1.769\times 10^8$; the dominant coins are HYPE, PUMP, and
FARTCOIN. The second wave $\tau_2$ is smaller at $\$9.33\times 10^7$.
The third wave $\tau_3$ has a total
realized ADL notional $\$3.866\times 10^8$, concentrated mainly in
ETH, SOL, BTC, and XRP.

Figure~\ref{fig:pre-adl-price-stress} provides complementary market
context by plotting intraday price paths on October~10, 2025, normalized
by each coin's average price over the day up to the ADL event. The
figure shows that the exchange was not facing an isolated
liquidation in a single asset, but a broad and rapid repricing across
the active universe. Even the deepest markets, such as BTC, ETH, and
SOL, decline by roughly $10$--$20\%$, while several thinner names lose more
than half their value relative to their intraday average. In such
conditions, the allocation problem must be solved under tight time
constraints, and the exchange must distribute a large aggregate
reduction across accounts exposed to assets with heterogeneous
liquidity and common return directions.

\paragraph{Data sources.}
The empirical input combines two data sources. First, tick-level trade
records are obtained from the public Hyperliquid REST API\@. These
records contain the trade time, user address, coin, signed size, side,
direction label, execution price, and liquidation mark price where
applicable, allowing us to identify the realized ADL flow separately
from market-driven liquidations and voluntary trades. Second,
account-level state variables --- position sizes, entry prices, and
margin-used quantities per user and coin --- are obtained from Allium,
which indexes on-chain Hyperliquid account data. The Hyperliquid tape
therefore tells us what was actually delevered in the event, while the
Allium snapshots determine which pre-event account states can be
reconstructed reliably for the optimization input.

\paragraph{Data reconstruction and limitations.}
Because Allium only covers accounts with observable activity since
September~22, 2025, not every user appearing in the
ADL tape can be assigned a strict pre-event state. For each
reconstruction time $\tau_j$, we therefore recover the optimization
input by taking the most recent available account snapshot before
$\tau_j$ and rolling it forward with the Hyperliquid
trade tape. The optimization is carried out on the retained
reconstructed universe rather than on the full raw event stream. At
$\tau_1$, this is why the final ADL target volume used in the
optimization, $\$1.539\times 10^8$, is smaller than the raw event
volume $\$1.769\times 10^8$. Appendix~\ref{app:state-reconstruction}
contains the full reconstruction pipeline, the stage-by-stage counts,
and the precise treatment of unmatched users, missing snapshots, and
coin-level restrictions.

\paragraph{Price model inputs.}\label{sec:pm_inputs}
The multivariate GBM serves two purposes in the empirical exercise.
First, it generates the finite set of price-drop scenarios
$\{\Delta p^{(s)}\}_{s=1}^S$, with $\Delta p^{(s)} \defeq p_\tau - p_T^{(s)}$, used to evaluate the sampled exchange loss. Second, the same calibration supplies the
one-factor direction described previously in \cref{sec:numerical_example_crossmargin}.
We estimate the model from hourly log-returns
via EWMA-standardized covariance estimation; see
Appendix~\ref{sec:covariance} for the full calibration. The key object
is the conditional covariance of dollar price drops at the ADL
trigger time $\tau$,
\[
\Sigma_{\tau,\Delta}^{\Delta p}
\approx
\operatorname{diag}(p_\tau)\,
\bigl(\Delta\,\operatorname{diag}(\sigma_\tau)\,R_\tau\,
\operatorname{diag}(\sigma_\tau)\bigr)\,
\operatorname{diag}(p_\tau),
\]
where $\Delta$ is the liquidation horizon in years,
$\sigma_\tau\in\mathbb R_+^d$ is the vector of annualized conditional
volatilities at time $\tau$, and $R_\tau$ is the corresponding
correlation matrix. Appendix~\ref{sec:covariance} explains how these
objects are obtained. Let $(\lambda_1,u_1)$ denote the leading eigenpair of $\Sigma_{\tau,\Delta}^{\Delta p}$, and define $
v \defeq \sqrt{\lambda_1}\,u_1.$ This vector is the one-factor direction used by Water-filling in the
exchange-scale implementation below.

\subsection{Exchange-Scale Implementation}
\label{sec:admm-impl}

The Water-filling benchmark is implemented as sequential one-factor
Water-filling on the active asset set. The factor direction
$v \defeq \sqrt{\lambda_1}\,u_1$ is fixed from the covariance matrix in
\Cref{sec:pm_inputs}, while active coins are ordered by realized ADL
notional. At each step only the current coin is delevered, and account
priorities are recomputed from the current one-factor leverage before
moving to the next coin. Thus the benchmark preserves the one-factor
prioritization logic of \Cref{sec:single_factor} while enforcing the
multi-asset clearing constraint sequentially. To show that this
one-factor sequential Water-filling is superior to the isolated-margin
procedure that ranks accounts by leverage in the current coin only, we
report robustness checks against the directional benchmark in
Appendix~\ref{app:rr-robustness}.

The Risk-optimal allocation instantiates the ADMM scheme of
\Cref{sec:numerical_implementation} after three exact reformulations: restriction to the active
subset, reparameterization into dollar units, and decomposition into support-aware accountwise
proximal updates. Operationally, the accountwise step is then executed in parallel, with singleton
accounts handled in closed form and multi-asset support groups solved by cached warm-started
Gurobi models to high accuracy. The practical calibration details are collected in
Appendix~\ref{app:exchange-scale-admm}.

\subsection{Empirical Evaluation}
\label{sec:oct10-results}

\paragraph{Loss and performance metric.}
We evaluate the expected loss of each allocation using the finite-scenario sample-average approximation
\begin{equation}\label{eq:saa-obj}
L_S(x) \defeq \frac{1}{S}\sum_{s=1}^S \sum_{i=1}^n
\bigl(-E_i - (q_i - x_i)^\top \Delta p^{(s)}\bigr)_+,
\end{equation}
computed over $S$ independent GBM price-drop draws $\{\Delta p^{(s)}\}$, where
$\Delta p^{(s)} \defeq p_\tau - p_T^{(s)}$. Appendix~\ref{app:rr-robustness}
reports the corresponding Monte Carlo robustness check for this finite-scenario evaluation.
To avoid over-emphasizing a favorable single draw, the main text reports averages over $50$
independent $S=128$ scenario sets. We compare three allocations: the realized exchange allocation
$x^{\mathrm{Hyperliquid}}$, the Water-filling benchmark $x^{\mathrm{Water\text{-}filling}}$, and
the jointly calibrated Risk-optimal allocation $x^{\mathrm{Risk\text{-}optimal}}$. To measure
performance relative to the maximal amount of active-set risk that can be removed, we define the
relative risk-reduction statistic
\begin{equation}\label{eq:rr_def}
\mathrm{RR}(x) \defeq
\frac{L_S(0_{\supp(Q)}) - L_S(x)}
     {L_S(0_{\supp(Q)}) - L_S(x^{\mathrm{Risk\text{-}optimal}})},
\end{equation}
where $\supp(Q)\defeq\{k:Q^k\neq 0\}$ denotes the active asset set and
$0_{\supp(Q)}$ is the zero allocation restricted to these coordinates.
By construction, $\mathrm{RR}(x^{\mathrm{Risk\text{-}optimal}})=1$. The
realized Hyperliquid allocation and Water-filling take values strictly
below $1$ because they remove smaller fractions of the maximal reducible
active-set risk. Thus values closer to $1$ indicate better performance.

\paragraph{Expected loss.}

\begin{table}[H]
\centering
\small
\begin{tabular}{p{2.0cm} >{\raggedleft\arraybackslash}p{2.2cm} >{\raggedleft\arraybackslash}p{1.9cm} >{\raggedleft\arraybackslash}p{1.8cm} >{\raggedleft\arraybackslash}p{2cm}}
\toprule
 \textbf{Allocation} &
\textbf{Mean $L_{128}$} &
\textbf{s.d.\ of $L_{128}$} &
\textbf{Mean $\mathrm{RR}$} & \textbf{Runtime} \\
\midrule
  Hyperliquid &
   $5.383\times 10^8$ & $1.62\times 10^7$ & $0.325$ & -- \\
 Water-filling &
   $5.322\times 10^8$ & $1.54\times 10^7$ & $0.901$ & $3.70\text{ s}$ \\
  Risk-optimal& 
   $5.312\times 10^8$ & $1.56\times 10^7$ & $1.000$ & $1.15\times 10^5\text{ s}$ \\
\bottomrule
\end{tabular}
\caption{Comparison of out-of-sample sampled-loss performance and computational efficiency for the first ADL wave $\tau_1$. The first column lists the candidate allocations $x^{\mathrm{Hyperliquid}}$, $x^{\mathrm{Water\text{-}filling}}$, and $x^{\mathrm{Risk\text{-}optimal}}$. Columns two to four evaluate these fixed allocations over $50$ independent $S=128$ GBM/SAA scenario sets and report the mean and standard deviation of sampled expected loss together with the mean risk reduction defined in \cref{eq:rr_def}. The last column reports the wall-clock time required to derive each allocation from the target imbalance vector and reconstructed account states.}
\label{tab:hyperliquid-wave-results}
\end{table}

To construct the Risk-optimal allocation, we solve the SAA problem once on a
fixed reference scenario set of size $S=128$ for $\tau_1$ and use the
resulting optimizer as the candidate policy. Appendix~\ref{app:rr-convergence}
plots the associated fixed-sample convergence path. Along this run, the
relative risk reduction already reaches $0.952$ by iteration $500$ and $0.966$
by iteration $1{,}000$, so most economically meaningful improvement arrives
early; the long reported runtime mainly reflects the cost of driving the ADMM
residuals to convergence.

Table~\ref{tab:hyperliquid-wave-results} should therefore be read in two
parts. The first four columns report an out-of-sample Monte Carlo evaluation
over $50$ independent $S=128$ GBM/SAA scenario sets. Hyperliquid is observed
directly, and Water-filling is computed directly from the target ADL notional
and reconstructed account states, so neither allocation depends on
these evaluation samples. By contrast, the Risk-optimal allocation is
optimized on that fixed reference scenario set and then, like the other two
allocations, evaluated on each of the $50$ independent samples. The last
column instead reports the wall-clock time required to derive the allocations.

Across these $50$ out-of-sample evaluations, Hyperliquid attains
$L_{128}=5.383\times 10^8$ and $\mathrm{RR}=0.325$, Water-filling improves to
$L_{128}=5.322\times 10^8$ and $\mathrm{RR}=0.901$, and the Risk-optimal
allocation improves further to $L_{128}=5.312\times 10^8$ with
$\mathrm{RR}=1.000$. The corresponding standard deviations of the loss are all
of order $1.5\times 10^7$, indicating that the Monte Carlo dispersion is of
similar scale across the three candidate allocations. More importantly,
Appendix~\ref{app:rr-robustness} shows that the ordering
$L_S(x^{\mathrm{Risk\text{-}optimal}}) <
L_S(x^{\mathrm{Water\text{-}filling}}) <
L_S(x^{\mathrm{Hyperliquid}})$ is stable across the evaluation samples, which
is reassuring given the visible finite-sample variability. The
table is therefore best read as a loss--runtime tradeoff: Water-filling is the
fast constructive benchmark, while the Risk-optimal allocation closes the
remaining gap at substantially greater computational cost.

Appendix~\ref{app:rr-robustness} gives the fuller robustness discussion,
including the fixed-sample comparison on the reference scenario set, the corresponding $\tau_3$
exercise, and the comparison with the directional Water-filling benchmark,
which remains dominated by one-factor Water-filling. For $\tau_3$, the
reported Risk-optimal allocation is the terminal iterate at the iteration cap
$20{,}000$; the corresponding terminal residuals are reported in
Appendix~\ref{app:rr-convergence}.

\paragraph{Allocations.}
\label{sec:admm-eval-diagnostics}

The improvement delivered by the Risk-optimal allocation is associated
with structural differences in user-level targeting. The model-based
allocations do not merely touch a different number of users; they also
redistribute deleveraging notional across users in a systematically
different way from the realized Hyperliquid event. At $\tau_1$,
one-factor Water-filling is the most concentrated rule in the upper
tail, while the Risk-optimal allocation redistributes part of this mass
across a broader set of economically material users. The next
paragraphs make this distinction precise using the four panels of
Figure~\ref{fig:comparison-stats}, while Appendix~\ref{app:user-level-tables}
reports the corresponding thresholded user-count comparisons for
$\tau_1$ and $\tau_3$.

Before computing the diagnostics below, we apply a cleaning threshold to
the model-based allocations: entries below one cent are discarded for
this user-level analysis. Their total allocated volume is below
$\$50$ for both Water-filling and the Risk-optimal allocation, so they
have no economic relevance here. The realized Hyperliquid allocation
$x^{\mathrm{Hyperliquid}}$ is already aligned with this threshold; the corresponding
user-count comparisons are reported in Appendix~\ref{app:user-level-tables}.

Figure~\ref{fig:comparison-stats} collects four complementary user-level diagnostics for $\tau_1$. The first two panels are based on the per-user deleveraging notional $ \sum_{k=1}^d p_\tau^k\,\lvert x_i^k\rvert$. The upper-left and upper-right panels show distinct concentration patterns for each allocation. In the upper-left panel, the one-factor Water-filling
curve decays slowest and therefore carries the heaviest upper tail. At
$\tau_1$, its largest single-user deleveraging notional is
$\$6.2\times 10^7$, compared with $\$3.1\times 10^7$ for the
Risk-optimal allocation and $\$1.6\times 10^7$ for the realized
Hyperliquid event. This top user alone accounts for $40.2\%$ of total
Water-filling deleveraging, compared with $20.1\%$ for the
Risk-optimal allocation and $10.6\%$ for Hyperliquid. The upper-right
panel confirms the same pattern cumulatively: by the top ten users,
Water-filling already accounts for $76.9\%$ of total deleveraging,
compared with $57.3\%$ for the Risk-optimal allocation and $40.9\%$
for the realized Hyperliquid event. The Risk-optimal allocation remains
concentrated, but redistributes part of this mass across a broader set
of economically material accounts than Water-filling.

The lower-left panel plots the per-user deleveraging fraction
\[
\mathrm{frac}_i(x)\defeq
\frac{\sum_{k=1}^d p_\tau^k\,\min\{\lvert x_i^k\rvert,c_i^k(Q)\}}
     {\sum_{k=1}^d p_\tau^k\,c_i^k(Q)},
\qquad
c_i^k(Q)\defeq u_i^k-l_i^k,
\]
where $c_i^k(Q)$ is the directional capacity of account $i$ in asset
$k$, so that $\mathrm{frac}_i(x)\in[0,1]$ measures the fraction of the user's
deleverageable portfolio that is removed. The realized Hyperliquid event
places substantially more mass near $\mathrm{frac}_i(x)=1$, fully deleveraging a
large fraction of affected accounts. By contrast, both model-based
allocations usually remove only a moderate fraction of directional
capacity.

The lower-right panel reinforces this picture. Restricting attention to
accounts with pre-ADL equity exceeding one dollar, we order users by
pre-ADL factor leverage $\ell_i^{(v)}(0)$ as defined in
\cref{def:factor_leverage} and plot the mean deleveraging fraction by
decile. The Water-filling
allocation is the lowest curve of the three and increases more smoothly
with factor leverage. Risk-optimal deleverages larger fractions on
average and rises more steeply, while the realized Hyperliquid event is
substantially less ordered, assigning large deleveraging fractions
across a broad range of factor-leverage levels. 

Appendix~\ref{app:user-level-tables}
reports the corresponding $\tau_3$ diagnostics together with the
thresholded user-count comparison and the analogous concentration
patterns in the upper-left and upper-right panels. At $\tau_3$, the
lower-left panel remains qualitatively similar, and the Water-filling
and Risk-optimal decile profiles in the lower-right panel remain close
to their $\tau_1$ counterparts, but the realized Hyperliquid profile is
visibly more ordered. At the same time, the upper-left and upper-right
panels show that the relative concentration ranking of Hyperliquid and
Water-filling reverses in the upper tail, so the realized allocation is
less far from the model-based rules when the active set is restricted to
only four coins. Even in this case, however, the main distortion
remains the same: Hyperliquid still places substantially more mass near
full deleveraging of affected accounts.

\begin{figure}[H]
\centering
\includegraphics[width=0.95\textwidth]{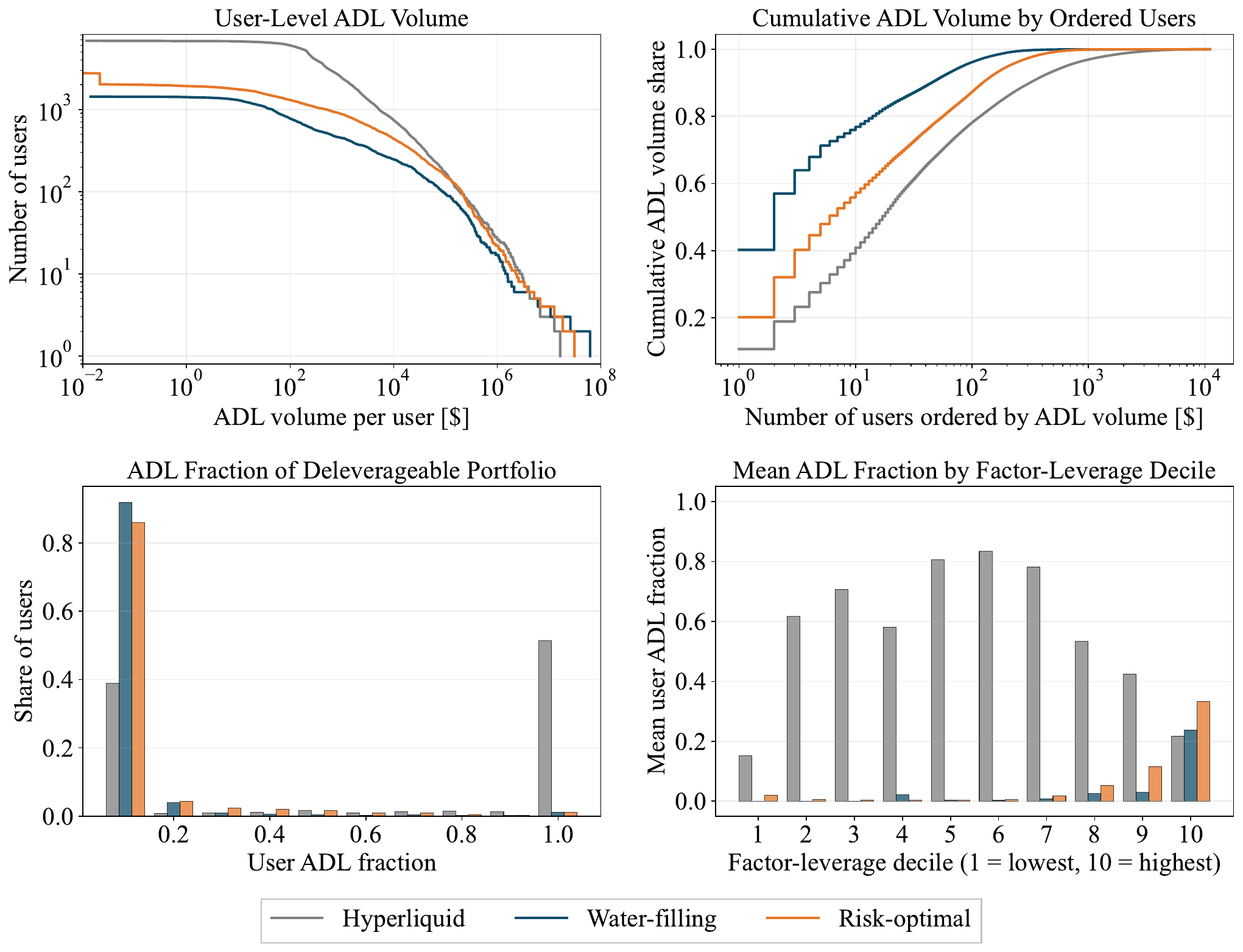}
\caption{User-level diagnostics for $\tau_1$: user count above
  deleveraging-notional threshold, cumulative top-share concentration curve, histogram of
  deleveraging fractions, and factor-leverage-decile targeting plot.}
\label{fig:comparison-stats}
\end{figure}

\paragraph{Code and data availability.}
Code and cleaned data supporting the findings are available in the accompanying GitHub repository at \url{https://github.com/nataschahey/adl-paper-repro}.


\clearpage

\bibliographystyle{abbrvnat}
\bibliography{biblio}

\newpage
\appendix

\section{Details and Proofs for Single-Asset Isolated Margining}\label{app:isolated.margining}

We begin this supplementary appendix with a lemma that will be used repeatedly to prove properties of the ADL objectives.

\begin{lemma}\label{lem:shortfall_prop}
For each account $i$,
\begin{enumerate}
    \item[(i)] the shortfall $\sigma_i(x_i,p)$ is convex and piecewise affine in each argument, and
    \item[(ii)] for every $x_i\in[0,q_i]$ the map $p\mapsto \sigma_i(x_i,p)$ is nondecreasing.
\end{enumerate}
Consequently, for each $p$ the loss $x\mapsto \Lscr(x,p)=\sum_{i=1}^n\sigma_i(x_i,p)$ is convex and piecewise affine,
and for each $x$ the map $p\mapsto \Lscr(x,p)$ is convex, piecewise affine, and nondecreasing.
\end{lemma}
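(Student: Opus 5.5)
The plan is to write the shortfall explicitly as the positive part of an affine function and then use elementary closure properties of convex and piecewise affine functions. From the definition of $e_i$ and \eqref{eq:def_Ei},
\[
\sigma_i(x_i,p)=\bigl(-e_i(x_i,p)\bigr)_+=\bigl((q_i-x_i)(p-p_\tau)-E_i\bigr)_+ .
\]
For fixed $p$, the inner expression $-e_i(x_i,p)$ is affine in $x_i$. For fixed $x_i$, it is affine in $p$. The map $u\mapsto(u)_+=\max\{u,0\}$ is convex, nondecreasing and piecewise affine with two pieces. Composing a convex nondecreasing function with an affine map gives a convex function, and composing piecewise affine functions gives a piecewise affine one. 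This settles (i) separately in each argument. Note that joint convexity in $(x_i,p)$ is not claimed and fails because of the bilinear term $x_ip$; the statement only asserts convexity in each argument, so I will be careful to phrase it that way.

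For (ii), I fix $x_i\in[0,q_i]$. Then the slope of $p\mapsto -e_i(x_i,p)$ is $q_i-x_i\ge 0$, so the inner map is nondecreasing in $p$. Composing it with the nondecreasing map $(\cdot)_+$ preserves monotonicity. The only point needing attention is exactly this sign condition, which is why the restriction $x_i\le q_i$ appears in the statement; outside the box the monotonicity would fail.

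The consequences follow by summation over accounts. $\Lscr(x,p)=\sum_i\sigma_i(x_i,p)$ is a sum of functions of the separate coordinates $x_i$, each convex and piecewise affine by (i). Hence $x\mapsto\Lscr(x,p)$ is convex and piecewise affine on $\R^n$. Likewise, for fixed $x$, the map $p\mapsto\Lscr(x,p)$ is a finite sum of convex piecewise affine functions of $p$, hence of the same type. It is nondecreasing by (ii), provided $x\in\prod_i[0,q_i]$, in particular for $x\in\mathcal X$, which is the implicit domain of the statement.

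There is no real obstacle. The step to state most carefully is (ii)'s dependence on $x_i\le q_i$ and its transfer to $\Lscr$ only for feasible $x$.
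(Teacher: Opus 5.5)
Your proof is correct and follows essentially the same route as the paper: write $-e_i(x_i,p)$ as affine in each argument separately, with $p$-slope $q_i-x_i\ge 0$, compose with the convex nondecreasing map $(\cdot)_+$, and sum over accounts. Your two side remarks are accurate refinements that the paper leaves implicit: joint convexity in $(x_i,p)$ fails because of the bilinear term, and monotonicity in $p$ requires $x_i\le q_i$, so it transfers to $\Lscr$ only for $x$ in the box.
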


\begin{proof}
Fix $i$. For $x_i\in[0,q_i]$ and $p\in\R$ we can write
\[
-e_i(x_i,p)=(q_i-x_i)p + x_ip_\tau - q_ip_i^{(e)}-m_i,
\]
which is affine in $x_i$ for fixed $p$, and affine in $p$ for fixed $x_i$, with slope $q_i-x_i\ge 0$ in $p$.
Since $u\mapsto (u)_+$ is convex and nondecreasing, the composition
$\sigma_i(x_i,p)=(-e_i(x_i,p))_+$ is convex and piecewise affine in $x_i$ and in $p$, and is nondecreasing in $p$.
The claims for $\Lscr$ follow by summation.
\end{proof}

\subsection{Proof of Theorem~\ref{thm:risk_neutral}}\label{app:rn.thm}

The proof of Theorem~\ref{thm:risk_neutral} will be tackled in parts. First, we will show that if $x$ solves \eqref{eq:opt_rn} then (c) holds (i.e., (a) $\Rightarrow$ (c)). Next, we will characterize the unique solution to (b). We will complete the equivalence of (a)--(c) by showing that any admissible $x\in \mathcal{X}$ satisfying (c) is given by the solution to (b) \emph{and} the solution to (b) solves (a) (i.e., (c) $\Rightarrow$ (b) $\Rightarrow$ (a)). Finally, we incorporate the pointwise statement (d) by proving that the water-filling
allocation is a simultaneous pointwise minimizer of realized loss (which gives
(b) $\Rightarrow$ (d)) and by observing that (d) $\Rightarrow$ (a) follows from evaluating
at $p=p_T$ and taking expectations.

In particular, the statement of the theorem follows from
Propositions~\ref{prop:equalization} and~\ref{prop:minimax_solution},
Lemma~\ref{lem:pointwise_water_filling}, and
Proposition~\ref{prop:triple_equiv} below.


\subsubsection{Leverage Equalization}
We begin with some preliminaries. Under Assumption~\ref{ass:feasible} and the standing condition $Q>0$, Slater's condition is trivially satisfied for the optimization problem \eqref{eq:opt_rn} since the point
\[
\bar x_i \;\defeq\; \frac{Q}{\sum_{j=1}^n q_j}\,q_i,\qquad i=1,\dots,n
\]
satisfies $0<\bar x_i<q_i$ for all $i$ and $\sum_i \bar x_i=Q$.
Hence strong duality holds and the KKT conditions are necessary and sufficient for optimality.

Introduce multipliers $\lambda\in\R$ for $\sum_i x_i=Q$,
$\nu_i\ge 0$ for $-x_i\le 0$, and $\mu_i\ge 0$ for $x_i-q_i\le 0$.
An optimal primal--dual quadruple $(x^\star,\lambda^\star,\nu^\star,\mu^\star)$ satisfies
\begin{align}
&0\in \partial_{x_i}V(x^\star)+\lambda^\star-\nu_i^\star+\mu_i^\star,
\quad i=1,\dots,n, \label{eq:kkt_stat}\\
&x^\star\in\mathcal X, \label{eq:kkt_primal}\\
&\nu_i^\star\ge 0,\ \mu_i^\star\ge 0,\quad i=1,\dots,n, \label{eq:kkt_dual}\\
&\nu_i^\star x_i^\star=0,\quad \mu_i^\star(x_i^\star-q_i)=0,\quad i=1,\dots,n. \label{eq:kkt_cs}
\end{align}

For each fixed $p$, we have $\frac{\partial}{\partial x_i}(-e_i(x_i,p))=p_\tau-p$.
Therefore, the subdifferential of the individual account shortfall satisfies
\[
\partial_{x_i}\sigma_i(x_i,p)=\partial_{x_i}\big((-e_i(x_i,p))_+\big)
=
(p_\tau-p)\cdot
\begin{cases}
\{1\}, & e_i(x_i,p)<0,\\
[0,1], & e_i(x_i,p)=0,\\
\{0\}, & e_i(x_i,p)>0.
\end{cases}
\]
Since $e_i(x_i,\cdot)$ is affine in $p_T$ with nonzero slope for $x_i<q_i$, Assumption~\ref{ass:p_T_regular} implies that attaining \emph{exactly} zero equity is a probability zero event, $\PR\big(e_i(x_i,p_T)=0\big)=0$. As a result, we can obtain the pointwise partial derivative for the objective
\begin{equation}\label{eq:grad_rn_noatom}
\frac{\partial V}{\partial x_i}(x)
=
\E\!\left[(p_\tau-p_T)\,\I{e_i(x_i,p_T)\le 0}\right], \quad x_i\in[0,q_i).
\end{equation}
This is justified by Assumption~\ref{ass:p_T_regular}, which implies that $\sigma_i(x_i,p_T)$ is differentiable almost surely for $x_i\in[0,q_i)$ and allows us to pass the $x_i$-derivative inside the expectation by dominated convergence.

Observe that the insolvency event $e_i(x_i,p_T)\le 0$ for each account admits a threshold representation in terms of $p_T$ and the account's \emph{bankruptcy (zero-equity) price} $p_i^{(z)}(x_i)$:
\begin{equation}\label{eq:ze_price}
e_i(x_i,p_T)\le 0
\quad\Longleftrightarrow\quad
p_T\ge p_i^{(z)}(x_i),
\qquad
p_i^{(z)}(x_i)\defeq p_\tau+\frac{E_i}{q_i-x_i},
\end{equation}
with the convention $p_i^{(z)}(q_i)=+\infty$. Using \eqref{eq:def_leverage}, this can be rewritten as
\begin{equation}\label{eq:ze_price_leverage}
p_i^{(z)}(x_i)=p_\tau\Big(1+\ell_i(x_i)^{-1}\Big),
\end{equation}
with the convention $\ell^{-1}=+\infty$ when $\ell=0$. Combining \eqref{eq:grad_rn_noatom} and \eqref{eq:ze_price} gives
\begin{equation}\label{eq:grad_rn_noatom_2}
\frac{\partial V}{\partial x_i}(x)
=
\E\!\left[(p_\tau-p_T)\,\I{p_T\ge p_i^{(z)}(x_i)}\right].
\end{equation}
On $\{p_T\ge p_i^{(z)}(x_i)\}$ we have $p_T\ge p_i^{(z)}(x_i)\ge p_\tau$, hence $(p_\tau-p_T)\le 0$ and
$\partial V/\partial x_i(x)\le 0$.

Using \eqref{eq:ze_price_leverage}, the event $\{p_T\ge p_i^{(z)}(x_i)\}$ is equivalent to
$\{p_T-p_\tau\ge p_\tau \ell_i(x_i)^{-1}\}$. Define next, for $\ell\in[0,\infty)$, the marginal expected shortfall exposure, $v(\ell)$, of an account with leverage $\ell$:
\begin{equation}
\label{eq:def_marginal_loss_leverage}
v(\ell)\defeq
\E\!\left[(p_T-p_\tau)\,\I{\,p_T-p_\tau \ge p_\tau \ell^{-1}}\right],
\qquad v(0)=0.
\end{equation}
Equivalently, $v(\ell)$ is the ``tail expectation'' of $(p_T-p_\tau)$ above a leverage-indexed threshold. For $x_i\in[0,q_i)$, we arrive at the derivative representation
\begin{equation}\label{eq:grad_equals_v}
\frac{\partial V}{\partial x_i}(x)
= -\,v(\ell_i(x_i)), \quad x_i\in[0,q_i),
\end{equation}
which admits a continuous extension to $x_i=q_i$ with value $0$.

Indeed, as $x_i\uparrow q_i$ we have $\ell_i(x_i)\downarrow 0$, and by definition $v(0)=0$. Moreover, for $\ell>0$ we have
\[
v(\ell)=\E\!\left[(p_T-p_\tau)\,\I{\,p_T-p_\tau \ge p_\tau \ell^{-1}}\right].
\]
As $\ell\downarrow 0$, the threshold $p_\tau\ell^{-1}\to+\infty$, so the indicator decreases pointwise to zero and is dominated by $1$, while
\[
0\le (p_T-p_\tau)\,\I{\,p_T-p_\tau \ge p_\tau \ell^{-1}} \le (p_T-p_\tau)_+,
\]
with $(p_T-p_\tau)_+$ integrable under our standing assumptions. By dominated convergence,
\[
\lim_{\ell\downarrow 0}v(\ell)=0=v(0).
\]
Therefore,
\begin{equation}\label{eqn:subgradient.estimate}
\lim_{x_i\uparrow q_i}\frac{\partial V}{\partial x_i}(x)
=
-\lim_{\ell\downarrow 0} v(\ell)
=
-v(0)=0.
\end{equation}
Thus the marginal effect of additional buyback vanishes as the account is fully closed. Since $V(\cdot)$ is convex in $x_i$, this implies that any element of the subgradient set $g_i\in\partial_{x_i}V(x)$ when $x_i=q_i$ satisfies $g_i\geq 0$.

Since $\ell_i(\cdot)$ is strictly decreasing in $x_i$ and $v(\cdot)$ is nondecreasing in $\ell$,
the map $x_i\mapsto \frac{\partial V}{\partial x_i}(x)$ is nondecreasing. Under our assumptions it is straightforward to check that $v$ is strictly increasing on
$[0,\max_i \ell_i(0)]$.

\begin{lemma}\label{lem:v_strict}
$v(\cdot)$ is strictly increasing on $[0,\max_i \ell_i(0)]$.
\end{lemma}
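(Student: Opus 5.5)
We need to show that for $0\le \ell_1<\ell_2\le \max_i\ell_i(0)$ we have $v(\ell_1)<v(\ell_2)$. The plan is to write the difference directly as the expectation of a nonnegative integrand supported on a price band of positive probability. Write $Y\defeq p_T-p_\tau$ and the threshold $a(\ell)\defeq p_\tau\ell^{-1}$, with $a(0)=+\infty$. Then $a$ is strictly decreasing on $[0,\infty)$, and $a(\ell)>0$ for every finite $\ell$.

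For $\ell_1<\ell_2$ we have $a(\ell_2)<a(\ell_1)$. Subtracting the two definitions in \eqref{eq:def_marginal_loss_leverage} gives
\[
v(\ell_2)-v(\ell_1)=\E\!\left[Y\,\I{a(\ell_2)\le Y< a(\ell_1)}\right].
\]
This is valid because both expectations are finite: the integrands are dominated by $(p_T-p_\tau)_+$, which is integrable by Assumption~\ref{ass:p_T_regular}. On the event in question, $Y\ge a(\ell_2)>0$ (note $\ell_2>0$), so the integrand is nonnegative and in fact at least $a(\ell_2)\,\I{a(\ell_2)\le Y<a(\ell_1)}$. Hence
\[
v(\ell_2)-v(\ell_1)\ge a(\ell_2)\,\PR\big(p_\tau+a(\ell_2)\le p_T<p_\tau+a(\ell_1)\big).
\]

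It remains to show that this probability is strictly positive. The interval $[p_\tau+a(\ell_2),\,p_\tau+a(\ell_1))$ has nonempty interior: when $\ell_1>0$ it is bounded, and when $\ell_1=0$ it is unbounded. It also lies in $[p_\tau,\infty)$. Since $f_T>0$ on $[p_\tau,\infty)$, integrating the density over a set of positive Lebesgue measure gives positive probability. Together with $a(\ell_2)>0$, this yields strict inequality.

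There is no real obstacle here. The only points requiring care are the case $\ell_1=0$, where $v(0)=0$ by convention, which is consistent with the formula since $a(0)=\infty$ makes the indicator vanish, and justifying the subtraction of expectations via integrability. Strictness in fact holds on all of $[0,\infty)$, and so in particular on $[0,\max_i\ell_i(0)]$.
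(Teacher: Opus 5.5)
Your proof is correct and follows essentially the same route as the paper: you write $v(\ell_2)-v(\ell_1)$ as the expectation of $(p_T-p_\tau)$ over the band between the two leverage-indexed bankruptcy thresholds, and you use $f_T>0$ on $[p_\tau,\infty)$ to obtain strict positivity. The only cosmetic difference is that you lower-bound the integrand by the positive constant $p_\tau\ell_2^{-1}$ times the band probability, whereas the paper writes the difference directly as $\int (p-p_\tau)f_T(p)\,dp$ over that band and notes the integrand is positive there.
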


\begin{proof}
A simple sufficient condition for this monotonicity is given by Assumption \ref{ass:p_T_regular}. Namely, that $p_T$ admits a density $f_T$
such that $f_T(p)>0$ for all $p\ge p_\tau$.
Indeed, let $a(\ell)\defeq p_\tau(1+\ell^{-1})$ (with $a(0)=+\infty$). For any $0\le \ell_1<\ell_2\le \max_i \ell_i(0)$,
\[
v(\ell_2)-v(\ell_1)
=
\E\!\left[(p_T-p_\tau)\,\I{a(\ell_2)\le p_T<a(\ell_1)}\right]
=
\int_{a(\ell_2)}^{a(\ell_1)} (p-p_\tau)\,f_T(p)\,dp
>0.
\]
The claim follows.
\end{proof}

The following proposition shows that (a) implies (c).

\begin{proposition}[Equalization of post-ADL leverage]\label{prop:equalization}
Under Assumptions~\ref{ass:feasible}--\ref{ass:p_T_regular}, if $x^\star$ solves \eqref{eq:opt_rn} then $x_i^\star<q_i$ for all $i$ and there exists a cutoff level $\bar\ell\in[0,\max_i\ell_i(0)]$ such that
\begin{equation*}
\begin{aligned}
&x_i^\star>0 &&\Longrightarrow\quad \ell_i(x_i^\star)=\bar\ell,\\
&x_i^\star=0 &&\Longrightarrow\quad \ell_i(0)\le \bar\ell.
\end{aligned}
\end{equation*}
In particular, all accounts whose positions are reduced share the same post-ADL leverage.
\end{proposition}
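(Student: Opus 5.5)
The plan is to work directly with the KKT system \eqref{eq:kkt_stat}--\eqref{eq:kkt_cs}, which Slater's condition makes necessary at any optimizer. We combine it with the derivative representation \eqref{eq:grad_equals_v}, $\partial V/\partial x_i = -v(\ell_i(x_i))$ for $x_i<q_i$, and the strict monotonicity of $v$ from Lemma~\ref{lem:v_strict}. The cutoff will be read off from the multiplier $\lambda^\star$ as $\bar\ell \defeq v^{-1}(\lambda^\star)$ on the appropriate range.

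\emph{Step 1: $x_i^\star<q_i$ for all $i$.} Suppose $x_j^\star=q_j$ for some $j$. Since $\sum_i x_i^\star = Q<\sum_i q_i$ by Assumption~\ref{ass:feasible}, some $k$ has $x_k^\star<q_k$, and $k\neq j$. Moving a small amount $\delta$ of buyback from $j$ to $k$ keeps feasibility. Near $x_j=q_j$ the derivative in $x_j$ tends to $0$ by \eqref{eqn:subgradient.estimate}, so the increase in $V$ from lowering $x_j$ is $o(\delta)$. Concretely, that increase is $\int v(\ell_j(s))\,ds$ over an interval where $\ell_j$ is near $0$.

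For the decrease from raising $x_k$, we use $\ell_k(x_k^\star)>0$, so the decrease is at least $v(\ell_k(x_k^\star+\delta))\,\delta$ with $v(\ell_k(\cdot))>0$ by strict monotonicity and $v(0)=0$. Hence $V$ strictly decreases, which contradicts optimality. This first-order comparison is the cleanest route, and I would use it rather than fighting with the subdifferential at the kink $x_j=q_j$.

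\emph{Step 2: equalization.} With all $x_i^\star<q_i$, complementary slackness gives $\mu_i^\star=0$, and $V$ is differentiable in each coordinate. Stationarity then reads $-v(\ell_i(x_i^\star))+\lambda^\star-\nu_i^\star=0$. If $x_i^\star>0$, then $\nu_i^\star=0$ and $v(\ell_i(x_i^\star))=\lambda^\star$. If $x_i^\star=0$, then $v(\ell_i(0))=\lambda^\star-\nu_i^\star\le\lambda^\star$.

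Since $Q>0$, at least one $x_i^\star>0$, so $\lambda^\star$ lies in the range of $v$ on $[0,\max_i\ell_i(0)]$. Define $\bar\ell\in[0,\max_i\ell_i(0)]$ as the unique point with $v(\bar\ell)=\lambda^\star$; uniqueness is by Lemma~\ref{lem:v_strict}. Strict monotonicity converts the equalities and inequalities on $v$ into $\ell_i(x_i^\star)=\bar\ell$ for reduced accounts and $\ell_i(0)\le\bar\ell$ for untouched ones.

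The main obstacle is Step 1, namely ruling out full closure rigorously at the kink, together with making sure the differentiability and dominated-convergence claims behind \eqref{eq:grad_equals_v} apply at $x_i^\star$. If the perturbation argument gets messy, a fallback is available. For $x_j^\star=q_j$, every subgradient $g_j\ge0$, so stationarity forces $\lambda^\star\le \nu_j^\star-\mu_j^\star-g_j$ with $\nu_j^\star=0$, giving $\lambda^\star\le 0$. Meanwhile an account $k$ with $0<x_k^\star<q_k$, or with $x_k^\star=0<q_k$, gives $\lambda^\star\ge v(\ell_k)>0$, which is a contradiction. The residual edge case is when all other accounts have $x_k=q_k$ or $x_k=0$; there the $x_k=0$ condition still gives $\lambda^\star\ge v(\ell_k(0))>0$, so the contradiction persists.
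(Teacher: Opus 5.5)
Your proof is correct and follows the paper's argument. Step~2 is identical: you read $\bar\ell=v^{-1}(\lambda^\star)$ off the stationarity conditions and convert them using strict monotonicity of $v$. Your fallback for Step~1 is also exactly the paper's proof. The fully closed account forces $\lambda^\star\le 0$, while any account $j$ with $x_j^\star<q_j$ forces $\lambda^\star>0$. Your ``residual edge case'' needs no separate treatment, since it is already covered by the $x_k^\star=0$ case.

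Your primary exchange argument for Step~1 is also valid and avoids the subdifferential at the kink. By the mean value theorem on each separable coordinate, lowering $x_j$ costs $o(\delta)$, while raising $x_k$ gains at least $c\,\delta$ for some $c>0$.
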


\begin{proof}
We verify the claims of Proposition~\ref{prop:equalization} in turn.

First, we show that $x_i^\star<q_i$ for all $i$. Suppose for contradiction that $x_k^\star=q_k$ for some $k$. Then $\nu_k^\star=0$ (since $x_k^\star>0$) and $\ell_k(x_k^\star)=0$, so using \eqref{eq:kkt_stat} we get \[ 0\in \partial_{x_k}V(x^\star)+\lambda^\star+\mu_k^\star\]
Hence there exists $g_k\in \partial_{x_k}V(x^\star)$ such that
\[
0=g_k+\lambda^\star+\mu_k^\star.
\]
As noted earlier (by \eqref{eqn:subgradient.estimate}) we have $g_k\ge 0$, so $\lambda^\star+\mu_k^\star=-g_k\le 0$, and therefore $\lambda^\star\le-\mu_k^\star\le 0$.
On the other hand, since $Q<\sum_i q_i$, there exists an index $j$ with $x_j^\star<q_j$, so $\mu_j^\star=0$. If $x_j^\star>0$, then $\nu_j^\star=0$ and \eqref{eq:kkt_stat} gives $\lambda^\star=v(\ell_j(x_j^\star))>0$, because $\ell_j(x_j^\star)>0$ and $v$ is strictly increasing with $v(0)=0$. If instead $x_j^\star=0$, then \eqref{eq:kkt_stat} gives $\lambda^\star=v(\ell_j(0))+\nu_j^\star\ge v(\ell_j(0))>0$. In either case $\lambda^\star>0$, contradicting $\lambda^\star\le 0$. Therefore $x_i^\star<q_i$ for all $i$.

Since $Q>0$ and $\sum_i x_i^\star=Q$, there exists at least one index $h$ such that $x_h^\star>0$. Combined with the previous step, this yields $0<x_h^\star<q_h$. By \eqref{eq:kkt_stat}--\eqref{eq:kkt_cs} and \eqref{eq:grad_equals_v},
\[
-v(\ell_h(x_h^\star))+\lambda^\star=0,
\]
so $\lambda^\star=v(\ell_h(x_h^\star))$. By Lemma~\ref{lem:v_strict}, $v$ is invertible on $[0,\max_i\ell_i(0)]$, and we may therefore define
\[
\bar\ell \defeq v^{-1}(\lambda^\star)\in[0,\max_i\ell_i(0)].
\]

Now let $i$ be such that $0<x_i^\star<q_i$. Then $\nu_i^\star=\mu_i^\star=0$, and \eqref{eq:kkt_stat} together with \eqref{eq:grad_equals_v} gives
\[
-v(\ell_i(x_i^\star))+\lambda^\star=0.
\]
Hence $v(\ell_i(x_i^\star))=\lambda^\star=v(\bar\ell)$, and strict monotonicity of $v$ implies
\[
\ell_i(x_i^\star)=\bar\ell.
\]

Finally, if $x_i^\star=0$, then $\mu_i^\star=0$ and \eqref{eq:kkt_stat} gives
\[
-v(\ell_i(0))+\lambda^\star-\nu_i^\star=0,
\]
hence $v(\ell_i(0))\le \lambda^\star$ because $\nu_i^\star\ge 0$. Since $\lambda^\star=v(\bar\ell)$ and $v$ is strictly increasing, we conclude that
\[
\ell_i(0)\le \bar\ell.
\]
\end{proof}

\subsubsection{The Minimax Problem}\label{sec:minimax}
We study the minimax leverage program \eqref{eq:minmax}.
Fix a candidate leverage cap $t\ge 0$ and ask what
minimum buyback from each account is required to ensure $\ell_i(x_i)\le t$.
In view of expression~\eqref{eq:def_leverage}
the constraint $\ell_i(x_i)\le t$ is equivalent to
\begin{equation}\label{eq:cap_ineq}
x_i \ge q_i-\frac{E_i}{p_\tau}t,
\end{equation}
together with $x_i\ge 0$. This motivates the definition of the vector function $y(t)\defeq (y_1(t),\dots,y_n(t))$ where
\[
y_i(t)\;\defeq\;\Big(q_i-\frac{E_i}{p_\tau}\,t\Big)_+,\qquad i=1,\dots,n,
\]
which is the minimal buyback required from account $i$ under the leverage cap $t$. In particular, setting $x_i=y_i(t)$ yields
\begin{equation*}
\ell_i(y_i(t))=\min\{\ell_i(0),\,t\},\qquad i=1,\dots,n.
\end{equation*}

\begin{lemma}\label{lem:dominance_leverage_cap}
For any $x\in\mathcal X$,
\[
\max_{1\le i\le n}\ell_i(x_i)\le t
\quad\Longrightarrow\quad
x_i\ge y_i(t)\ \ \forall i \ \ \text{and} \ \ Q\;\ge\; \sum_{i=1}^n y_i(t).
\]
\end{lemma}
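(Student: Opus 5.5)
The plan is a direct pointwise argument followed by summation; no duality or probabilistic input is needed. Fix $x\in\mathcal X$ with $\max_{1\le i\le n}\ell_i(x_i)\le t$. Then $\ell_i(x_i)\le t$ holds for every $i$ separately, and the first step is to convert this into a lower bound on $x_i$.

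Using the closed form \eqref{eq:def_leverage}, $\ell_i(x_i)=p_\tau(q_i-x_i)/E_i$. Assumption~\ref{ass:solvent} gives $E_i>0$, and $p_\tau>0$, so we may multiply through by $E_i/p_\tau$ without reversing the inequality. This yields $q_i-x_i\le \frac{E_i}{p_\tau}t$, which is exactly \eqref{eq:cap_ineq}: $x_i\ge q_i-\frac{E_i}{p_\tau}t$. Feasibility of $x$ also gives $x_i\ge 0$. Hence $x_i\ge \max\{0,\,q_i-\frac{E_i}{p_\tau}t\}=y_i(t)$ for every $i$, which is the first claim.

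For the second claim, sum the componentwise bounds over $i$ and use the clearing constraint $\sum_{i=1}^n x_i=Q$ from the definition of $\mathcal X$. This gives $Q=\sum_i x_i\ge\sum_i y_i(t)$.

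No step is a real obstacle. The only point needing care is the sign of $E_i/p_\tau$ when inverting the leverage inequality, and this is handled by Assumption~\ref{ass:solvent} together with $p_\tau>0$. The upper bound $x_i\le q_i$ plays no role here.
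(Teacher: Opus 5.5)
Your proof is correct and follows the paper's argument essentially line for line. You invert $\ell_i(x_i)\le t$ using $E_i>0$ and $p_\tau>0$ to get \eqref{eq:cap_ineq}, combine it with $x_i\ge 0$ to obtain $x_i\ge y_i(t)$, and then sum using the clearing constraint $\sum_i x_i=Q$.
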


\begin{proof}
Fix $t\ge 0$ and $x\in\mathcal X$ with $\max_i\ell_i(x_i)\le t$. Then $\ell_i(x_i)\le t$ for each $i$, and hence
\eqref{eq:cap_ineq} holds componentwise. Combining \eqref{eq:cap_ineq} with $x_i\ge 0$ yields
$x_i\ge \big(q_i-\frac{E_i}{p_\tau}t\big)_+=y_i(t)$ for all $i$. Summing over $i$ and using $\sum_i x_i=Q$ gives
$Q\ge \sum_i y_i(t)=G(t)$.
\end{proof}

Summing these per-account requirements yields the total buyback needed to enforce the cap $t$ across the system.
Accordingly, define
\[
G(t)\defeq \sum_{i=1}^n y_i(t), \quad t\ge 0,
\]
which we interpret as the aggregate deleveraging demand induced by the leverage level $t$. This function has several key properties and is critical to the solution of the minimax problem.

\begin{lemma}\label{lem:G_properties_unique_level}\mbox{}
\begin{enumerate}
\item[(i)] $G$ is continuous and nonincreasing on $[0,\infty)$.
\item[(ii)] $G(0)=\sum_{i=1}^n q_i$, and $G(t)=0$ for all $t\ge \max_{1\le i\le n}\ell_i(0)$.
\item[(iii)] $G$ is strictly decreasing on $\big[0,\max_i \ell_i(0)\big]$.
\item[(iv)] There exists a unique $t^\star\in\big(0,\max_i \ell_i(0)\big)$ such that $G(t^\star)=Q$.
\end{enumerate}
\end{lemma}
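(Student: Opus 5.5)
The plan is to read all four items off the explicit form of each summand $y_i(t)=\big(q_i-\tfrac{E_i}{p_\tau}t\big)_+$. Under Assumption~\ref{ass:solvent} each slope coefficient $E_i/p_\tau$ is strictly positive. Each $y_i$ is the positive part of an affine, strictly decreasing function of $t$. It is therefore continuous and nonincreasing on $[0,\infty)$. It vanishes exactly when $t\ge p_\tau q_i/E_i=\ell_i(0)$, and on $[0,\ell_i(0)]$ it is strictly decreasing with slope $-E_i/p_\tau$.

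For (i), $G$ is a finite sum of continuous nonincreasing functions, so it inherits both properties. For (ii), evaluate at $t=0$: since $q_i>0$, each $y_i(0)=q_i$, which gives $G(0)=\sum_i q_i$. For $t\ge\max_i\ell_i(0)$ every summand is zero, so $G(t)=0$.

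For (iii), take $0\le t_1<t_2\le\max_i\ell_i(0)$ and let $i^\ast$ be an index attaining the maximum initial leverage. Then $t_1<\ell_{i^\ast}(0)$, so $y_{i^\ast}$ decreases strictly on $[t_1,\min\{t_2,\ell_{i^\ast}(0)\}]$. Hence $y_{i^\ast}(t_1)>y_{i^\ast}(t_2)$, while every other summand is weakly decreasing. Summing gives $G(t_1)>G(t_2)$.

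For (iv), Assumption~\ref{ass:feasible} together with the standing condition $Q>0$ gives
\[
G\big(\max_i\ell_i(0)\big)=0<Q<\sum_i q_i=G(0).
\]
The intermediate value theorem, applied using the continuity from (i), then yields a root $t^\star$, and the strict inequalities at both endpoints place it in the open interval $(0,\max_i\ell_i(0))$. Uniqueness on $[0,\max_i\ell_i(0)]$ follows from the strict monotonicity in (iii). Beyond $\max_i\ell_i(0)$ the function $G$ is identically $0\neq Q$, so no root lies there either.

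The only point needing care is (iii). Strictness must come from a single summand that is still active on the whole interval, and the maximizing index $i^\ast$ provides exactly that. Everything else is routine.
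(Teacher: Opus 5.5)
Your proof is correct and follows the paper's argument essentially step by step: the same summand-wise reading of $y_i(t)=\bigl(q_i-\tfrac{E_i}{p_\tau}t\bigr)_+$ for (i)--(ii), the same maximal-leverage index whose summand stays affine and strictly decreasing on $[0,\max_i\ell_i(0)]$ to get strictness in (iii), and the intermediate value theorem plus strict monotonicity for (iv). Your extra remark that $G\equiv 0\neq Q$ beyond $\max_i\ell_i(0)$ is a harmless addition that extends uniqueness to all of $[0,\infty)$.
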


\begin{proof}
    We treat each statement in turn.

(i) is immediate.

(ii) At $t=0$, $y_i(0)=q_i$, so $G(0)=\sum_i q_i$. For $t\ge \max_i \ell_i(0)$ we have, for each $i$,
\[
t\ge \ell_i(0)=\frac{p_\tau q_i}{E_i}
\quad\Longleftrightarrow\quad
q_i-\frac{E_i}{p_\tau}t\le 0
\quad\Longleftrightarrow\quad
y_i(t)=0,
\]
hence $G(t)=0$.

(iii) Fix $0\le t_1<t_2\le \max_i\ell_i(0)$ and choose an index $k$ such that
$\ell_k(0)=\max_i\ell_i(0)$. Then, for any
$t\in[0,\ell_k(0)]$ we have $t\le \ell_k(0)=\frac{p_\tau q_k}{E_k}$. Equivalently, $q_k-\frac{E_k}{p_\tau}t\ge 0$ and thus
\[
y_k(t)=q_k-\frac{E_k}{p_\tau}t \qquad \text{for all }t\in[0,\ell_k(0)].
\]
In particular, $y_k(t_2)<y_k(t_1)$. On the other hand, for every $j$, $y_j(\cdot)$ is nonincreasing, so
$y_j(t_2)\le y_j(t_1)$. Summing across indices yields $G(t_2)<G(t_1)$. Since $t_1<t_2$ were arbitrary in $\big[0,\max_i \ell_i(0)\big]$ the strict decrease follows.

(iv) By (ii) and the standing assumption $0<Q<\sum_i q_i$, we have $G(0)>\!Q$ and
$G(\max_i\ell_i(0))=0<\!Q$. By continuity from (i), the intermediate value theorem yields
existence of $t^\star\in(0,\max_i\ell_i(0))$ such that $G(t^\star)=Q$. Uniqueness follows from
strict decrease in (iii).
\end{proof}

Combining the preceding definitions and properties, we obtain the minimax solution.

\begin{proposition}[Minimax solution]\label{prop:minimax_solution} Let $t^\star$ be the unique level from
Lemma~\ref{lem:G_properties_unique_level}(iv) such that $G(t^\star)=Q$, and define $x^\star\defeq
y(t^\star)$.
Then $x^\star\in\mathcal X$, $x^\star$ is the unique optimizer of the minimax leverage problem \eqref{eq:minmax}
and $\max_{1\le i\le n}\ell_i(x_i^\star)=t^\star$.
\end{proposition}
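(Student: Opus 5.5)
Feasibility of $x^\star=y(t^\star)$ is immediate. Each $y_i(t^\star)=(q_i-\tfrac{E_i}{p_\tau}t^\star)_+$ lies in $[0,q_i]$, since $t^\star>0$ and $E_i>0$ by Assumption~\ref{ass:solvent}. Moreover $\sum_i y_i(t^\star)=G(t^\star)=Q$ by the choice of $t^\star$ in Lemma~\ref{lem:G_properties_unique_level}(iv). Hence $x^\star\in\mathcal X$. The identity $\ell_i(y_i(t))=\min\{\ell_i(0),t\}$ gives $\max_i\ell_i(x_i^\star)=\min\{\max_i\ell_i(0),t^\star\}$. Since $t^\star<\max_i\ell_i(0)$, this maximum equals $t^\star$.

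For optimality, take any $x\in\mathcal X$ and set $t\defeq\max_i\ell_i(x_i)$. Lemma~\ref{lem:dominance_leverage_cap} gives $G(t)\le Q=G(t^\star)$. Suppose $t<t^\star$. Then $t\in[0,\max_i\ell_i(0)]$, and strict monotonicity of $G$ on that interval (Lemma~\ref{lem:G_properties_unique_level}(iii)) would give $G(t)>G(t^\star)$, a contradiction. Hence $t\ge t^\star$, so $x^\star$ attains the minimal value $t^\star$ of \eqref{eq:minmax}.

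For uniqueness, let $x\in\mathcal X$ be another minimizer, so that $\max_i\ell_i(x_i)=t^\star$. Lemma~\ref{lem:dominance_leverage_cap} applied with $t=t^\star$ gives $x_i\ge y_i(t^\star)=x_i^\star$ componentwise. Both $x$ and $x^\star$ sum to $Q$, so $\sum_i(x_i-x_i^\star)=0$ with every term nonnegative. Each term therefore vanishes, and $x=x^\star$.

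I expect no real obstacle here. The only delicate point is the case split in the optimality step. There I must make sure that a hypothetical $t<t^\star$ lies in the interval where $G$ is strictly decreasing, rather than in the region beyond $\max_i\ell_i(0)$ where $G$ is flat. This holds automatically because $t<t^\star<\max_i\ell_i(0)$.
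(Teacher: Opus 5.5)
Your proof is correct and follows essentially the same route as the paper. Feasibility comes from $G(t^\star)=Q$. The lower bound $\max_i\ell_i(x_i)\ge t^\star$ comes from Lemma~\ref{lem:dominance_leverage_cap} together with strict monotonicity of $G$, and uniqueness from componentwise domination $x_i\ge y_i(t^\star)$ with equal sums. The one cosmetic difference is that you read off $\max_i\ell_i(x_i^\star)=t^\star$ from the identity $\ell_i(y_i(t))=\min\{\ell_i(0),t\}$ and $t^\star<\max_i\ell_i(0)$, whereas the paper uses a case split on $x_i^\star>0$ versus $x_i^\star=0$ plus $Q>0$.
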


\begin{proof}
We break up the proof into a verification of the component claims.

\emph{Feasibility.} By definition, $y_i(t^\star)\in[0,q_i]$ for all $i$ (since $(\cdot)_+\ge 0$ and
$q_i-\frac{E_i}{p_\tau}t^\star\le q_i$), and by Lemma~\ref{lem:G_properties_unique_level}(iv),
$\sum_i y_i(t^\star)=G(t^\star)=Q$. Hence $x^\star\in\mathcal X$.

\emph{Optimality.} Let $x\in\mathcal X$ be arbitrary and set $t\defeq \max_i \ell_i(x_i)\ge 0$.
Since each $\ell_i(\cdot)$ is decreasing on $[0,q_i]$ and $x_i\ge 0$, we have $\ell_i(x_i)\le \ell_i(0)$ for all $i$,
and hence $t\le \max_i \ell_i(0)$.
Then $\max_i \ell_i(x_i)\le t$ holds trivially, so by Lemma~\ref{lem:dominance_leverage_cap}
\[
Q=\sum_{i=1}^n x_i \;\ge\; \sum_{i=1}^n y_i(t)=G(t).
\]
Because $G$ is strictly decreasing on $[0,\max_i\ell_i(0)]$ and $G(t^\star)=Q$
(Lemma~\ref{lem:G_properties_unique_level}), the inequality $Q\ge G(t)$ implies $t\ge t^\star$.
Thus every feasible $x$ satisfies $\max_i \ell_i(x_i)\ge t^\star$, while $x^\star$ achieves
$\max_i \ell_i(x_i^\star)=t^\star$. Therefore $x^\star$ is optimal and the optimal value is $t^\star$.

\emph{Uniqueness.} Now let $x\in\mathcal X$ be any optimal solution. Then $\max_i \ell_i(x_i)\le t^\star$, which implies
$x_i\ge y_i(t^\star)$ for all $i$. Summing and using $\sum_i x_i=\sum_i y_i(t^\star)=Q$ yields
\[
0=\sum_{i=1}^n\big(x_i-y_i(t^\star)\big)
\qquad\text{with}\qquad x_i-y_i(t^\star)\ge 0\ \ \forall i,
\]
which forces $x_i=y_i(t^\star)$ for all $i$. Hence the optimizer is unique and equals $x^\star=y(t^\star)$.

\emph{Maximum Leverage.} Fix $i$. If $x_i^\star>0$, then
$x_i^\star=q_i-\frac{E_i}{p_\tau}t^\star$, so
\[
\ell_i(x_i^\star)=\frac{p_\tau(q_i-x_i^\star)}{E_i}
=\frac{p_\tau\left(\frac{E_i}{p_\tau}t^\star\right)}{E_i}=t^\star.
\]
If $x_i^\star=0$, then $q_i-\frac{E_i}{p_\tau}t^\star\le 0$, i.e.\ $t^\star\ge \frac{p_\tau q_i}{E_i}=\ell_i(0)$,
so $\ell_i(x_i^\star)=\ell_i(0)\le t^\star$. Therefore $\max_i \ell_i(x_i^\star)\le t^\star$.
Since $Q>0$, at least one component of $x^\star$ is strictly positive, so for that index the leverage equals $t^\star$,
and hence $\max_i \ell_i(x_i^\star)=t^\star$.
\end{proof}

\subsubsection{Pointwise Optimality}

We next show that the water-filling allocation identified above is optimal not only after
averaging over the terminal price distribution, but also pointwise at every price level. It is a trivial consequence of pointwise optimality that (d) $\Rightarrow$ (a).

\begin{lemma}[Pointwise optimality of water-filling]\label{lem:pointwise_water_filling}
Let Assumptions~\ref{ass:feasible}--\ref{ass:solvent} hold, and let $x^\star=y(t^\star)$
be the water-filling allocation from Proposition~\ref{prop:minimax_solution}. Then
\[
\mathcal{L}(x^\star,p)\le \mathcal{L}(x,p),
\qquad x\in\mathcal X,\ p\in\mathbb R.
\]
\end{lemma}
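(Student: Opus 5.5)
Fix $p\in\R$ and write the loss account by account. By \eqref{eq:equity.simplified}-type algebra in the single-asset case, $-e_i(x_i,p)=(q_i-x_i)(p-p_\tau)-E_i$, so
\[
\sigma_i(x_i,p)=\big((q_i-x_i)(p-p_\tau)-E_i\big)_+ .
\]
The case $p\le p_\tau$ is trivial: then $(q_i-x_i)(p-p_\tau)\le 0<E_i$ by Assumption~\ref{ass:solvent}, so $\mathcal L(x,p)=0$ for every $x\in\mathcal X$. Hence fix $p>p_\tau$ and set $c\defeq p-p_\tau>0$. Writing $r_i\defeq q_i-x_i$ for the residual position, the loss becomes $\sum_i h_i(r_i)$ with $h_i(r)\defeq (c\,r-E_i)_+$. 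Each $h_i$ is convex and nondecreasing, and the constraint set is $\{r:\ 0\le r_i\le q_i,\ \sum_i r_i=R\}$ with $R\defeq\sum_i q_i-Q$ fixed.

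Under water-filling, $r_i^\star=\min\{q_i,\,E_i t^\star/p_\tau\}$, i.e.\ $r_i^\star$ is as large as allowed by the leverage cap $t^\star$. Accounts split into the \emph{capped} set $C=\{i: x_i^\star>0\}$, where $r_i^\star=E_i t^\star/p_\tau$, and the \emph{untouched} set $U=\{i:x_i^\star=0\}$, where $r_i^\star=q_i\le E_it^\star/p_\tau$.

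The key inequality to prove is $\sum_i h_i(r_i)\ge\sum_i h_i(r_i^\star)$ for any feasible $r$, and I would split by the sign of $ct^\star/p_\tau-1$.

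\textbf{Case 1: $c\,t^\star\le p_\tau$.} Then $c\,r_i^\star\le c E_i t^\star/p_\tau\le E_i$ for all $i$, so $\mathcal L(x^\star,p)=0\le\mathcal L(x,p)$.

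\textbf{Case 2: $c\,t^\star>p_\tau$.} Here each capped account is in default with $h_i(r_i^\star)=c\,r_i^\star-E_i$. Use convexity, via the tangent-line inequality $h_i(r)\ge h_i(r_i^\star)+g_i(r-r_i^\star)$, with subgradients $g_i\in\partial h_i(r_i^\star)$ chosen so that all $g_i$ equal a common value $g$ wherever possible. For $i\in C$ take $g_i=c$. For $i\in U$, if $c q_i>E_i$ then $g_i=c$ as well. If instead $cq_i\le E_i$, then $r_i\le q_i=r_i^\star$, so we may use $g_i=0$, and the tangent inequality with slope $c$ fails only in the favorable direction. Concretely, for such $i$ write $h_i(r_i)\ge 0=h_i(r_i^\star)\ge h_i(r_i^\star)+c(r_i-r_i^\star)$, since $r_i-r_i^\star\le0$. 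Thus in all cases
\[
h_i(r_i)\ge h_i(r_i^\star)+c\,(r_i-r_i^\star),
\]
and summing with $\sum_i(r_i-r_i^\star)=R-R=0$ gives the claim.

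The main obstacle is getting the subgradient bookkeeping right on $U$. There one needs $r_i\le r_i^\star$ (true since $r_i^\star=q_i$ is the upper bound), combined with the fact that capped accounts sit exactly at the common leverage $t^\star$, so that one uniform slope $c$ works. The argument uses only Assumptions~\ref{ass:feasible}--\ref{ass:solvent} and Proposition~\ref{prop:minimax_solution}, not the distribution of $p_T$, as the lemma requires.
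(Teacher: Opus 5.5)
Your proof is correct and is essentially the paper's argument. Both use a tangent-line (subgradient) inequality with a common slope at the water level $t^\star$, handle untouched accounts with a smaller subgradient together with $r_i\le r_i^\star$, and cancel the linear terms because $\sum_i r_i$ is fixed. The remaining differences are cosmetic: the paper works in leverage variables $u_i=p_\tau r_i/E_i$, where the loss becomes $\sum_i E_i\phi_p(u_i)$ with a single function $\phi_p$, and proves the claim for an arbitrary convex nondecreasing $\phi$, so your case split $ct^\star\le p_\tau$ versus $ct^\star>p_\tau$ is absorbed into the choice of $g^\star\in\partial\phi(t^\star)$.
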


\begin{proof}
For $x\in\mathcal X$, set $r_i:=q_i-x_i$ and introduce the leverage variable
$u_i:=\ell_i(x_i)=p_\tau r_i/E_i$. Since $x_i=q_i-E_i u_i/p_\tau$, the feasibility
constraints $x\in\mathcal X$ are equivalent to
\[
0\le u_i\le \bar u_i:=\frac{p_\tau q_i}{E_i},\qquad i=1,\ldots,n,
\]
together with
\[
\sum_{i=1}^n E_i u_i=M,\qquad
M:=p_\tau\left(\sum_{i=1}^n q_i-Q\right).
\]
Thus the feasible set in leverage variables is
\[
\mathcal U:=
\left\{
u\in\mathbb R_+^n:
0\le u_i\le \bar u_i,\ 
\sum_{i=1}^n E_i u_i=M
\right\}.
\]
By Proposition~\ref{prop:minimax_solution}, the water-filling allocation $x^\star=y(t^\star)$
corresponds to the leverage vector $u_i^\star=\min\{\bar u_i,t^\star\}$.

Fix $p\in\mathbb R$. If $p\le p_\tau$, then
$e_i(x_i,p)=E_i+(q_i-x_i)(p_\tau-p)\ge E_i>0$ for every $i$ and every feasible $x$.
Hence $\mathcal{L}(x,p)=0$ for all $x\in\mathcal X$, and the desired inequality is immediate.

Suppose now that $p>p_\tau$. Then
\[
\mathcal{L}(x,p)=\sum_{i=1}^n E_i\phi_p(u_i),
\qquad
\phi_p(u):=\left(\frac{p-p_\tau}{p_\tau}u-1\right)_+.
\]
The function $\phi_p$ is finite-valued, convex, and nondecreasing on $[0,\infty)$.

We prove the following more general claim: for every finite-valued convex nondecreasing
function $\phi:[0,\infty)\to\mathbb R$, the vector $u^\star$ minimizes
$\sum_i E_i\phi(u_i)$ over $\mathcal U$. Choose $g^\star\in\partial\phi(t^\star)$.
For each $i$ with $\bar u_i>t^\star$, set $h_i:=g^\star$. For each $i$ with
$\bar u_i\le t^\star$, choose $h_i\in\partial\phi(\bar u_i)$ such that
$h_i\le g^\star$; this is possible by monotonicity of the subdifferential of a convex
function, and if $\bar u_i=t^\star$ we simply take $h_i=g^\star$.

Let $u\in\mathcal U$. By the subgradient inequality,
\[
\sum_{i=1}^n E_i\{\phi(u_i)-\phi(u_i^\star)\}
\ge
\sum_{i=1}^n E_i h_i(u_i-u_i^\star).
\]
If $\bar u_i>t^\star$, then $h_i=g^\star$. If $\bar u_i\le t^\star$, then
$u_i^\star=\bar u_i$ and $u_i-u_i^\star\le 0$, while $h_i\le g^\star$; hence
$h_i(u_i-u_i^\star)\ge g^\star(u_i-u_i^\star)$. Therefore
\[
\sum_{i=1}^n E_i h_i(u_i-u_i^\star)
\ge
g^\star\sum_{i=1}^n E_i(u_i-u_i^\star)
=0,
\]
where the last equality uses $u,u^\star\in\mathcal U$. Thus $u^\star$ minimizes
$\sum_i E_i\phi(u_i)$ over $\mathcal U$. Applying this claim to $\phi=\phi_p$ proves
$\mathcal{L}(x^\star,p)\le\mathcal{L}(x,p)$ for all $x\in\mathcal X$ when $p>p_\tau$.
Together with the case $p\le p_\tau$, this proves the result.
\end{proof}

\subsubsection{The ADL--Minimax--Equalization Equivalence}

The preceding sections reported the unique solution to the minimax problem (b) and verified that (b) $\Rightarrow$ (d) $\Rightarrow$ (a). To complete the proof of Theorem~\ref{thm:risk_neutral} we present a final proposition that completes the equivalence with the relation (c) $\Rightarrow$ (b) $\Rightarrow$ (a).

\begin{proposition}[Equivalence]\label{prop:triple_equiv}
Under Assumptions~\ref{ass:feasible}--\ref{ass:p_T_regular},  for any $x\in\mathcal X$, (a), (b), (c) and (d) are equivalent.
\end{proposition}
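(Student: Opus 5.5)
The equivalence of (a), (b), (c) and (d) follows by assembling the implications already established and adding the one link still missing, (c) $\Rightarrow$ (b). Proposition~\ref{prop:equalization} gives (a) $\Rightarrow$ (c). Proposition~\ref{prop:minimax_solution} shows that (b) has the unique solution $x^\star=y(t^\star)$. Lemma~\ref{lem:pointwise_water_filling} shows that this $x^\star$ is a pointwise minimizer, so (b) $\Rightarrow$ (d). Evaluating (d) at $p=p_T$ and taking expectations gives (d) $\Rightarrow$ (a). The loop (c) $\Rightarrow$ (b) $\Rightarrow$ (d) $\Rightarrow$ (a) $\Rightarrow$ (c) then closes the equivalence. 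Uniqueness in (a) and (d) follows automatically, since each of these is equivalent to (b), which has a unique solution.

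For (c) $\Rightarrow$ (b), I take $x\in\mathcal X$ satisfying (c) with some threshold $t\in[0,\max_i\ell_i(0)]$ and show $x=y(t)$. Consider first an account with $x_i>0$. Then $\ell_i(x_i)=t$, so $x_i=q_i-\frac{E_i}{p_\tau}t$. Since $x_i>0$, this quantity is positive, and hence it equals $\bigl(q_i-\frac{E_i}{p_\tau}t\bigr)_+=y_i(t)$. Now consider an account with $x_i=0$. Then $\ell_i(0)\le t$, which is equivalent to $q_i-\frac{E_i}{p_\tau}t\le0$, so $y_i(t)=0=x_i$. Summing over accounts gives $G(t)=\sum_i x_i=Q$. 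Lemma~\ref{lem:G_properties_unique_level}(iii)--(iv) then forces $t=t^\star$, because $G$ is strictly decreasing on $[0,\max_i\ell_i(0)]$ and $t$ lies in that interval. Therefore $x=y(t^\star)=x^\star$, which solves (b) by Proposition~\ref{prop:minimax_solution}.

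It also helps to note that $x^\star$ itself satisfies (c) with $t=t^\star$; the ``Maximum Leverage'' step of Proposition~\ref{prop:minimax_solution} shows this. This confirms that (b) $\Rightarrow$ (c) holds directly, although the loop does not require it.

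The step needing most care is (d) $\Rightarrow$ (a), because of finiteness: we need $V(x)<\infty$. This holds because $\sigma_i(x_i,p_T)\le (q_i-x_i)(p_T-p_\tau)_+$, which is integrable by Assumption~\ref{ass:p_T_regular}. Monotonicity of expectation then yields $V(x)\le V(y)$ for all $y\in\mathcal X$.
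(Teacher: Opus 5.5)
Your proof is correct and follows the paper's argument. You use the same cycle (a) $\Rightarrow$ (c) $\Rightarrow$ (b) $\Rightarrow$ (d) $\Rightarrow$ (a), with the identical (c) $\Rightarrow$ (b) step: you show $x=y(t)$ and then $t=t^\star$ from the strict decrease of $G$. Your integrability bound $\sigma_i(x_i,p_T)\le (q_i-x_i)(p_T-p_\tau)_+$ simply makes explicit the finiteness that the paper treats as trivial in (d) $\Rightarrow$ (a). The paper also gives a redundant direct (b) $\Rightarrow$ (a) via existence of a minimizer, which your cycle does not need.
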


\begin{proof}
Since  we have already shown (b) $\Rightarrow$ (d) $\Rightarrow$ (a), it suffices to show the equivalence of (a)--(c).

By Lemma~\ref{lem:G_properties_unique_level}, there exists a unique $t^\star\in(0,\max_i\ell_i(0))$
such that $G(t^\star)=Q$. By Proposition~\ref{prop:minimax_solution}, the minimax leverage problem \eqref{eq:minmax}
has the unique optimizer $y(t^\star)$.

\emph{(a) $\Rightarrow$ (c).} This follows directly from Proposition \ref{prop:equalization}.

\emph{(c) $\Rightarrow$ (b).}  If $x_i>0$, then $\ell_i(x_i)=t$, and since
$\ell_i(x)=\frac{p_\tau(q_i-x)}{E_i}$ this implies
$x_i=q_i-\frac{E_i}{p_\tau}t>0$, hence
$x_i=\big(q_i-\frac{E_i}{p_\tau}t\big)_+$.
If instead $x_i=0$, then $\ell_i(0)\le t$ implies $\frac{p_\tau q_i}{E_i}\le t$, i.e.\
$q_i-\frac{E_i}{p_\tau}t\le 0$, hence again
$x_i=\big(q_i-\frac{E_i}{p_\tau}t\big)_+$.
Thus $x=y(t)$.

Using feasibility $\sum_i x_i=Q$, we have
\[
Q=\sum_{i=1}^n x_i
=\sum_{i=1}^n \Big(q_i-\frac{E_i}{p_\tau}t\Big)_+
=G(t).
\]
Since the equation $G(t)=Q$ has the unique solution $t^\star$, it follows that $t=t^\star$ and therefore
$x=y(t^\star)=x^\star$.

\emph{(b) $\Rightarrow$ (a).} A solution to (a) trivially exists by the Weierstrass Extreme Value Theorem since $\mathcal{X}$ is compact and the objective is continuous. Since any solution to (a) satisfies (c), and any feasible allocation satisfying (c) must coincide with the unique minimax optimizer from (b), every solution to (a) equals the solution to (b).

This proves the equivalence and uniqueness.
\end{proof}

\subsection{Proofs of \cref{prop:leverage_ordering,thm:sybil_waterfilling}}\label{app:pf.sybil.waterfilling}

\begin{proof}[Proof of \cref{prop:leverage_ordering}]
Since $p > 0$ and $E_i(p), E_j(p) > 0$, the inequality $\ell_i(p) \geq \ell_j(p)$ is equivalent to $q_i\, E_j(p) - q_j\, E_i(p) \geq 0$. Expanding,
\begin{align*}
q_i\, E_j(p) - q_j\, E_i(p)
&= q_i\bigl[q_j(p_j^{(e)} - p) + m_j\bigr] - q_j\bigl[q_i(p_i^{(e)} - p) + m_i\bigr] \\
&= q_i q_j\bigl(p_j^{(e)} - p_i^{(e)}\bigr) + q_i\, m_j - q_j\, m_i.
\end{align*}
All terms involving $p$ cancel, so the sign of this expression is independent of~$p$.
\end{proof}

\begin{proof}[Proof of \cref{thm:sybil_waterfilling}]
For any fixed $t\ge0$, define the attacker's per-account lower bound
\[
y_k(t)=\Big(q_k-\frac{E_k}{p_\tau}t\Big)_+,
\qquad
y_A(t)=\Big(q^A-\frac{E^A}{p_\tau}t\Big)_+.
\]
Since $(\cdot)_+$ is subadditive, we have
\begin{equation}\label{eq:sybil_pointwise}
\sum_{k=1}^K y_k(t)
=
\sum_{k=1}^K \Big(q_k-\frac{E_k}{p_\tau}t\Big)_+
\ \ge\
\Big(\sum_{k=1}^K q_k-\frac{\sum_{k=1}^K E_k}{p_\tau}t\Big)_+
=
\Big(q^A-\frac{E^A}{p_\tau}t\Big)_+
=
y_A(t).
\end{equation}
Let
\[
G_{\mathrm{oth}}(t)\defeq \sum_{j\in\mathcal N} y_j(t),\qquad
G_{\mathrm{att}}^{K}(t)\defeq\sum_{k=1}^K y_k(t),\qquad
G_{\mathrm{att}}^{1}(t)\defeq y_A(t),
\]
and define total demand functions
\[
G_{\mathrm{tot}}^{K}(t)\defeq G_{\mathrm{oth}}(t)+G_{\mathrm{att}}^{K}(t),
\qquad
G_{\mathrm{tot}}^{1}(t)\defeq G_{\mathrm{oth}}(t)+G_{\mathrm{att}}^{1}(t).
\]
By \eqref{eq:sybil_pointwise}, $G_{\mathrm{tot}}^{K}(t)\ge G_{\mathrm{tot}}^{1}(t)$ for all $t\ge0$.

Let $t_K^\star$ and $t_1^\star$ be the unique solutions to
$G_{\mathrm{tot}}^{K}(t)=Q$ and $G_{\mathrm{tot}}^{1}(t)=Q$, respectively (uniqueness follows from the strict
decrease of the corresponding total water-level function on the relevant interval, as in Lemma~\ref{lem:G_properties_unique_level}).
Evaluating at $t=t_1^\star$ gives
\[
G_{\mathrm{tot}}^{K}(t_1^\star)\ \ge\ G_{\mathrm{tot}}^{1}(t_1^\star)=Q=G_{\mathrm{tot}}^{K}(t_K^\star).
\]
Since $G_{\mathrm{tot}}^{K}$ is strictly decreasing, this implies $t_K^\star\ge t_1^\star$.

Because each $y_j(\cdot)$ is nonincreasing, $G_{\mathrm{oth}}(\cdot)$ is nonincreasing, and hence
$G_{\mathrm{oth}}(t_K^\star)\le G_{\mathrm{oth}}(t_1^\star)$. Using market clearing in each economy,
\[
\sum_{k=1}^K x_k^{\star,K}
=
G_{\mathrm{att}}^{K}(t_K^\star)
=
Q-G_{\mathrm{oth}}(t_K^\star)
\ \ge\
Q-G_{\mathrm{oth}}(t_1^\star)
=
G_{\mathrm{att}}^{1}(t_1^\star)
=
x_A^{\star,1},
\]
which proves the claim.
\end{proof}

\subsection{Mathematical Formalization of Section~\ref{sec:path_indep} and Proof of Theorem~\ref{thm:tc-waterfilling-isolated}}\label{sec:Formalization_TC_and_proof_of_axiamatioc_char}

Recall that at the ADL time $\tau$, account $i$ is described by a short position size $q_i\ge 0$
and an equity level $E_i>0$ evaluated at the execution price $p_\tau$; cf.~\eqref{eq:def_Ei}.
We collect these into the state vector
\[
s=(q_i,E_i)_{i=1}^n \in \mathcal S \defeq (\R_+\times\R_{++})^n.
\]
To avoid overloading the notation $\ell_i(\cdot)$, which we reserve for leverage as a function of the
\emph{allocation} $x_i$, we introduce $\lambda_i(s)$ for leverage as a function of the \emph{state}:
\[
\lambda_i(s)\defeq \frac{p_\tau q_i}{E_i},
\qquad
\mathcal M(s)\defeq \{i:\lambda_i(s)=\max_{1\le j\le n}\lambda_j(s)\}.
\]
Thus $\mathcal M(s)$ is the set of accounts attaining the maximum leverage under the configuration $s$. Given an allocation $x\in\mathcal X$, we can define the corresponding post-ADL state
\[
s(x)\defeq (q_i-x_i,\;E_i)_{i=1}^n \in \mathcal S,
\]
so that the leverage after allocating $x_i$ satisfies
\[
\ell_i(x_i)=\frac{p_\tau(q_i-x_i)}{E_i}=\lambda_i\big(s(x)\big).
\]
We emphasize that $E_i$ is measured at the execution price $p_\tau$, so buying back at $p_\tau$ only reshuffles equity
between entry price and cash margin and does not change $E_i$.

Using this notation, an ADL mechanism can be formally described by a family of maps
\[
\{F_Q:\mathcal S\to\mathcal S\}_{Q\ge 0},
\]
where $F_Q(s)$ is the post-ADL state after a total buyback quantity $Q$ has been allocated across accounts. The following formalizes the natural assumption that these mechanisms can only reduce existing positions by transacting at $p_\tau$ and must clear the quantity $Q$.

\begin{assumption}
\label{ass:market-clearing-F-isolated}
For every $s=(q_i,E_i)_{i=1}^n\in\mathcal S$ and every $Q\in[0,\sum_{i=1}^n q_i]$, write
\begin{equation}\label{eq:sf-state-update-isolated}
F_Q(s)
=
\bigl(q_i-x_i(s,Q),\; E_i\bigr)_{i=1}^n.
\end{equation}
Then the vector $x(s,Q)=(x_i(s,Q))_{i=1}^n$ satisfies
\begin{equation}\label{eq:market-clearing-isolated}
x_i(s,Q)\in[0,q_i] \ \ \text{for all }i \ \text{and}
\ \
\sum_{i=1}^n x_i(s,Q)=Q.
\end{equation}
\end{assumption}

The next two conditions are the mathematical versions of Definitions~\ref{def:path_indep_informal} and~\ref{def:leverage_priority_informal} in the main text.

\begin{assumption}[Path-independence]
\label{ass:time-consist-F-isolated}
For any state $s\in\mathcal S$ and $Q_1,Q_2\ge 0$ with $Q_1+Q_2\le \sum_i q_i$,
\[
F_{Q_1+Q_2}(s)=F_{Q_2}\circ F_{Q_1}(s).
\]
\end{assumption}

\begin{assumption}[Leverage priority]
\label{ass:leverage-priority-F-isolated}
Fix any state $s\in\mathcal S$. Then for sufficiently small $Q>0$, one has $x_i(s,Q)=0$ for all $i\notin\mathcal M(s)$.
\end{assumption}

\subsubsection{Proof of Theorem \ref{thm:tc-waterfilling-isolated}}\label{app:pf.thm.axioms}

Our goal is to show that an ADL mechanism satisfies Assumptions~\ref{ass:market-clearing-F-isolated}--\ref{ass:leverage-priority-F-isolated} if and only if, for every initial state $s\in\mathcal S$ and every $Q\in(0,\sum_i q_i)$, it coincides with the minimax leverage policy.

We first provide a supporting lemma on necessary properties of any ADL mechanism.

\begin{lemma}[Absolute continuity]\label{lem:tc-implies-reg-isolated}
Let Assumptions~\ref{ass:market-clearing-F-isolated} and \ref{ass:time-consist-F-isolated} hold and fix $s\in\mathcal S$. For each $i$ the trajectory $Q\mapsto x_i(s,Q)$ is nonnegative, nondecreasing,
and $1$-Lipschitz on $[0,\sum_i q_i]$. In particular, it is absolutely continuous in $Q$, and its a.e.\ derivative
$\frac{\partial}{\partial Q}x_i(s,Q)$ satisfies
\[
0\le \frac{\partial}{\partial Q}x_i(s,Q)\le 1\quad\text{for a.e.\ }Q,
\qquad
\sum_{i=1}^n \frac{\partial}{\partial Q}x_i(s,Q)=1\quad\text{for a.e.\ }Q.
\]
\end{lemma}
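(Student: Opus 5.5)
The plan is to derive everything from path-independence, i.e. the semigroup property $F_{Q_1+Q_2}=F_{Q_2}\circ F_{Q_1}$, together with the form \eqref{eq:sf-state-update-isolated} of the update. Only positions change and equities are preserved, so
\[
F_{Q_1+Q_2}(s)=\bigl(q_i-x_i(s,Q_1)-x_i(F_{Q_1}(s),Q_2),\,E_i\bigr)_{i=1}^n .
\]
Comparing with $F_{Q_1+Q_2}(s)=(q_i-x_i(s,Q_1+Q_2),E_i)_i$ yields the additive decomposition
\[
x_i(s,Q_1+Q_2)=x_i(s,Q_1)+x_i\bigl(F_{Q_1}(s),Q_2\bigr)\qquad\text{for all } i.
\]
One point needs care: $F_{Q_1}(s)$ must lie in the domain on which Assumption~\ref{ass:market-clearing-F-isolated} applies to the second step. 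After the first step the residual total position is $\sum_i q_i-Q_1\ge Q_2$, so $Q_2\in[0,\sum_i(q_i-x_i(s,Q_1))]$ and the assumption applies. If some residual position is zero, the state still lies in $\mathcal S=(\R_+\times\R_{++})^n$, since $q_i\ge 0$ is allowed there.

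Next, I apply Assumption~\ref{ass:market-clearing-F-isolated} to the second step, taking the state $F_{Q_1}(s)$ and quantity $Q_2$. This gives $x_i(F_{Q_1}(s),Q_2)\in[0,\,q_i-x_i(s,Q_1)]$ and $\sum_i x_i(F_{Q_1}(s),Q_2)=Q_2$. Each increment is nonnegative, so $Q\mapsto x_i(s,Q)$ is nondecreasing. Each increment is also bounded by the total $Q_2$, because all the summands are nonnegative. Hence for $Q\le Q'$,
\[
0\le x_i(s,Q')-x_i(s,Q)\le Q'-Q,
\]
which is the $1$-Lipschitz property. Nonnegativity itself is immediate from \eqref{eq:market-clearing-isolated}, and $x_i(s,0)=0$ follows from $\sum_i x_i(s,0)=0$ combined with nonnegativity.

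Finally, Lipschitz functions are absolutely continuous, so Rademacher's theorem (or Lebesgue differentiation in one dimension) gives a derivative almost everywhere. From monotonicity and the Lipschitz bound, the difference quotients lie in $[0,1]$, hence so does the derivative. Because $\sum_i x_i(s,Q)=Q$ holds identically in $Q$, differentiating at any point where all $n$ coordinates are differentiable gives $\sum_i\partial_Q x_i=1$. The set where this fails is a finite union of null sets, so the identity holds a.e.

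I do not expect a real obstacle here. The only delicate step is justifying that the second-step assumption applies on the post-ADL state and within the admissible range of quantities, which I handle as in the first paragraph.
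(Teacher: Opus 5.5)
Your proposal is correct and follows essentially the same route as the paper's proof: path-independence gives the additive identity $x_i(s,Q_2)=x_i(s,Q_1)+x_i(F_{Q_1}(s),Q_2-Q_1)$, and market clearing at the intermediate state bounds the increment between $0$ and $Q_2-Q_1$. The derivative facts then follow from standard properties of Lipschitz functions. Your explicit check that $F_{Q_1}(s)\in\mathcal S$ and that the second-step quantity lies in the admissible range is a small piece of care that the paper leaves implicit.
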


\begin{proof}
Let $s\in\mathcal S$ and $0\le Q_1\le Q_2\le \sum_i q_i$. Set $\Delta Q\defeq Q_2-Q_1$.
Path-independence gives
\[
F_{Q_2}(s)=F_{\Delta Q}(F_{Q_1}(s)).
\]
Writing both sides using \eqref{eq:sf-state-update-isolated} yields the identity
\begin{equation}\label{eq:cocycle-isolated}
x_i(s,Q_2)=x_i(s,Q_1)+x_i(F_{Q_1}(s),\Delta Q),\qquad i=1,\dots,n.
\end{equation}
By \eqref{eq:market-clearing-isolated} applied at the intermediate state $F_{Q_1}(s)$, we have
$x_i(F_{Q_1}(s),\Delta Q)\ge 0$. Moreover, since $\sum_i x_i(F_{Q_1}(s),\Delta Q)=\Delta Q$, we also have
$x_i(F_{Q_1}(s),\Delta Q)\le \Delta Q$. Substituting into \eqref{eq:cocycle-isolated} gives
\[
0\le x_i(s,Q_2)-x_i(s,Q_1)\le \Delta Q = Q_2-Q_1.
\]
Thus $Q\mapsto x_i(s,Q)$ is nondecreasing and $1$-Lipschitz. The derivative properties follow from
standard facts for absolutely continuous functions and by differentiating $\sum_i x_i(s,Q)=Q$ a.e.
\end{proof}

We can now show the main result.

\begin{proof}[Proof of Theorem~\ref{thm:tc-waterfilling-isolated}]
We prove the two implications separately.

\medskip
\noindent\emph{The water-filling rule satisfies the axioms.}
Fix a state $s=(q_i,E_i)_{i=1}^n\in\mathcal S$ and define
\[
G_s(t)
\defeq
\sum_{i=1}^n
\Bigl(q_i-\frac{E_i}{p_\tau}t\Bigr)_+,
\qquad t\ge 0.
\]
The function $G_s$ is continuous and strictly decreasing on
$[0,\lambda_{\max}(s)]$, where
\[
\lambda_{\max}(s)
\defeq
\max_{1\le i\le n}\frac{p_\tau q_i}{E_i},
\]
with
\[
G_s(0)=\sum_{i=1}^n q_i,
\qquad
G_s(\lambda_{\max}(s))=0.
\]
For $Q\in(0,\sum_i q_i)$, let $t_s(Q)$ be the unique threshold satisfying
\[
G_s(t_s(Q))=Q.
\]
At the boundary points, set
\[
t_s(0)\defeq \lambda_{\max}(s),
\qquad
t_s\Bigl(\sum_i q_i\Bigr)\defeq 0.
\]
Define
\[
x_i^{\mathrm{WF}}(s,Q)
\defeq
\Bigl(q_i-\frac{E_i}{p_\tau}t_s(Q)\Bigr)_+
\]
and
\[
F_Q^{\mathrm{WF}}(s)
\defeq
\bigl(q_i-x_i^{\mathrm{WF}}(s,Q),\,E_i\bigr)_{i=1}^n.
\]

Market clearing holds immediately, since
\[
0\le x_i^{\mathrm{WF}}(s,Q)\le q_i
\]
and
\[
\sum_{i=1}^n x_i^{\mathrm{WF}}(s,Q)
=
G_s(t_s(Q))
=
Q.
\]

We next verify leverage priority. Let
\[
\mathcal M(s)
\defeq
\{i:\lambda_i(s)=\lambda_{\max}(s)\}.
\]
If $\mathcal M(s)=\{1,\dots,n\}$, the claim is immediate. Otherwise define
\[
\lambda_{\mathrm{sub}}(s)
\defeq
\max_{i\notin\mathcal M(s)}\lambda_i(s)
<
\lambda_{\max}(s)
\]
and set
\[
\delta
\defeq
G_s(\lambda_{\mathrm{sub}}(s))>0.
\]
If $0<Q<\delta$, then
\[
G_s(t_s(Q))
=
Q
<
G_s(\lambda_{\mathrm{sub}}(s)).
\]
Since $G_s$ is strictly decreasing,
\[
t_s(Q)>\lambda_{\mathrm{sub}}(s).
\]
Hence, for every $i\notin\mathcal M(s)$,
\[
\lambda_i(s)\le \lambda_{\mathrm{sub}}(s)<t_s(Q),
\]
and therefore
\[
x_i^{\mathrm{WF}}(s,Q)=0.
\]
Thus, for sufficiently small quantities, ADL is imposed only on maximally
leveraged accounts.

Finally, we verify path independence. Fix $Q_1,Q_2\ge 0$ such that
\[
Q_1+Q_2\le \sum_{i=1}^n q_i,
\]
and write
\[
t_1\defeq t_s(Q_1),
\qquad
t_{12}\defeq t_s(Q_1+Q_2).
\]
Since $G_s$ is decreasing and $Q_1+Q_2\ge Q_1$, we have
\[
t_{12}\le t_1.
\]
After the first step, the residual position of account $i$ is
\[
q_i^{(1)}
=
q_i-x_i^{\mathrm{WF}}(s,Q_1)
=
\min\Bigl\{q_i,\frac{E_i}{p_\tau}t_1\Bigr\}.
\]
Since $t_{12}\le t_1$, for each $i$ we have
\[
\Bigl(q_i^{(1)}-\frac{E_i}{p_\tau}t_{12}\Bigr)_+
=
\Bigl(q_i-\frac{E_i}{p_\tau}t_{12}\Bigr)_+
- 
\Bigl(q_i-\frac{E_i}{p_\tau}t_1\Bigr)_+.
\]
The restarted demand function therefore satisfies
\begin{align*}
G_{F_{Q_1}^{\mathrm{WF}}(s)}(t_{12})
&=
\sum_{i=1}^n
\Bigl(q_i^{(1)}-\frac{E_i}{p_\tau}t_{12}\Bigr)_+ \\
&=
\sum_{i=1}^n
\left[
\Bigl(q_i-\frac{E_i}{p_\tau}t_{12}\Bigr)_+
-
\Bigl(q_i-\frac{E_i}{p_\tau}t_1\Bigr)_+
\right] \\
&=
G_s(t_{12})-G_s(t_1) \\
&=
(Q_1+Q_2)-Q_1 \\
&=
Q_2.
\end{align*}
By uniqueness of the restarted threshold, together with the boundary
conventions above, the second-step threshold is $t_{12}$. The residual position after
both steps is therefore
\begin{align*}
\min\Bigl\{q_i^{(1)},\frac{E_i}{p_\tau}t_{12}\Bigr\}
&=
\min\Bigl\{
\min\Bigl\{q_i,\frac{E_i}{p_\tau}t_1\Bigr\},
\frac{E_i}{p_\tau}t_{12}
\Bigr\} \\
&=
\min\Bigl\{q_i,\frac{E_i}{p_\tau}t_{12}\Bigr\},
\end{align*}
where the last equality follows from $t_{12}\le t_1$. This is exactly
the residual position obtained by applying the water-filling rule directly
with quantity $Q_1+Q_2$. Therefore,
\[
F_{Q_2}^{\mathrm{WF}}
\bigl(F_{Q_1}^{\mathrm{WF}}(s)\bigr)
=
F_{Q_1+Q_2}^{\mathrm{WF}}(s).
\]
Thus the water-filling mechanism satisfies
Assumptions~\ref{ass:market-clearing-F-isolated}--\ref{ass:leverage-priority-F-isolated}.

\medskip
\noindent\emph{Any mechanism satisfying the axioms is the water-filling rule.}
Conversely, suppose that an ADL mechanism satisfies
Assumptions~\ref{ass:market-clearing-F-isolated}--\ref{ass:leverage-priority-F-isolated}.

Fix an initial state $s=(q_i,E_i)_{i=1}^n\in\mathcal S$ and a target quantity
$Q_1\in(0,\sum_i q_i)$. Define the maximum leverage at quantity $Q\in[0,Q_1]$ by
\[
t(Q)\defeq \max_{1\le i\le n}\lambda_i(F_{Q}(s))
=\max_{1\le i\le n}\frac{p_\tau\,(q_i-x_i(s,Q))}{E_i}.
\]
By Lemma~\ref{lem:tc-implies-reg-isolated}, each $x_i(s, \cdot)$ is continuous and nondecreasing. Hence $Q\mapsto \lambda_i(F_{Q}(s))$ is continuous and nonincreasing, and therefore $t(\cdot)$ is continuous and
nonincreasing.

\medskip
\noindent\emph{Claim 1.} If $\lambda_i(s)<t(Q_1)$, then $x_i(s,Q_1)=0$.

\smallskip
\noindent
Assume for contradiction that $x_i(s,Q_1)>0$ and let
\[
Q_0\defeq\inf\{Q\in[0,Q_1]:x_i(s,Q)>0\}.
\]
Continuity gives $x_i(s,Q_0)=0$, hence $\lambda_i(F_{Q_0}(s))=\lambda_i(s)<t(Q_1)\le t(Q_0)$, so
$i\notin\mathcal M(F_{Q_0}(s))$. By leverage-priority at $F_{Q_0}(s)$ there exists $\varepsilon>0$ such that
\[
x_i(F_{Q_0}(s),h)=0\qquad \forall h\in[0,\varepsilon].
\]
By path-independence, for all $h\in[0,\varepsilon]$,
\[
x_i(s,Q_0+h)=x_i(s,Q_0)+x_i(F_{Q_0}(s),h)=0,
\]
contradicting the definition of $Q_0$. Hence $x_i(s,Q_1)=0$.

\medskip
\noindent\emph{Claim 2.} If $\lambda_i(s)\ge t(Q_1)$, then $\lambda_i(F_{Q_1}(s))=t(Q_1)$.

\smallskip
\noindent
We always have $\lambda_i(F_{Q_1}(s))\le t(Q_1)$. Suppose for contradiction that
$\lambda_i(F_{Q_1}(s))<t(Q_1)$. Set
\[
\varepsilon\defeq t(Q_1)-\lambda_i(F_{Q_1}(s))>0.
\]
Since $\lambda_i(s)\ge t(Q_1)>t(Q_1)-\varepsilon/2$ and $Q\mapsto\lambda_i(F_{Q}(s))$ is continuous and
nonincreasing, the intermediate value theorem yields some $Q_*<Q_1$ such that
\[
\lambda_i(F_{Q_*}(s))=t(Q_1)-\frac{\varepsilon}{2}.
\]
By path-independence, $F_{Q_1}(s)=F_{Q_1-Q_*}(F_{Q_*}(s))$. Define the restarted maximum leverage
\[
\tilde t(h)\defeq \max_{1\le j\le n}\lambda_j(F_h(F_{Q_*}(s))),\qquad h\ge 0,
\]
so that $\tilde t(Q_1-Q_*)=t(Q_1)$. Since $\lambda_i(F_{Q_*}(s))<\tilde t(Q_1-Q_*)$, Claim~1 applied to the
initial state $F_{Q_*}(s)$ and terminal quantity $Q_1-Q_*$ implies $x_i(F_{Q_*}(s),Q_1-Q_*)=0$. Since the proof of Claim~1 applies to any initial state, this invocation is valid. Using path-independence in the form
$x_i(s,Q_1)=x_i(s,Q_*)+x_i(F_{Q_*}(s),Q_1-Q_*)$ (see \eqref{eq:cocycle-isolated} in Lemma~\ref{lem:tc-implies-reg-isolated}), we obtain $x_i(s,Q_1)=x_i(s,Q_*)$, hence
$\lambda_i(F_{Q_1}(s))=\lambda_i(F_{Q_*}(s))=t(Q_1)-\varepsilon/2>\lambda_i(F_{Q_1}(s))$,
a contradiction. We conclude $\lambda_i(F_{Q_1}(s))=t(Q_1)$.

Combining Claims~1--2 yields, for every $i$,
\[
\lambda_i(F_{Q_1}(s))=\min\{\lambda_i(s),\,t(Q_1)\}.
\]
Since $\lambda_i(F_{Q_1}(s))=\frac{p_\tau(q_i-x_i(s,Q_1))}{E_i}$ and $\lambda_i(s)=\frac{p_\tau q_i}{E_i}$,
this is equivalent to
\[
x_i(s,Q_1)=\Big(q_i-\frac{E_i}{p_\tau}\,t(Q_1)\Big)_+.
\]
Summing over $i$ and using market clearing $\sum_i x_i(s,Q_1)=Q_1$ gives
\[
Q_1=\sum_{i=1}^n\Big(q_i-\frac{E_i}{p_\tau}\,t(Q_1)\Big)_+.
\]
By the strict decrease of $t\mapsto \sum_i\big(q_i-\frac{E_i}{p_\tau}t\big)_+$ on
$[0,\max_i\lambda_i(s)]$ (cf.~Lemma~\ref{lem:G_properties_unique_level}), $t(Q_1)$ is the unique solution to the equation
\[
Q_1=\sum_{i=1}^n\Big(q_i-\frac{E_i}{p_\tau}\,t\Big)_+.
\]
Hence the induced allocation
$x_i(s,Q_1)=\big(q_i-\frac{E_i}{p_\tau}t(Q_1)\big)_+$ coincides with the water-filling allocation of Theorem~\ref{thm:risk_neutral}.
\end{proof}

\subsection{Connection to the Claims Literature}\label{app:claims_correspondence}

There is an exact connection between the claims framework of \citet{Thomson2003,Thomson2015} and the water-filling rule in the single-asset isolated-margin model. If $x_i$ is the buyback imposed on account~$i$ and $r_i \defeq q_i - x_i$ is the residual position left open after ADL, then the ADL feasibility constraints become
\[
0 \leq r_i \leq q_i \qquad\text{and}\qquad \sum_i r_i = \sum_i q_i - Q.
\]
Thus the residual positions form a claims problem with claims $q_i$ and endowment $\sum_i q_i - Q$. The water-filling allocation of Theorem~\ref{thm:risk_neutral} can be written in terms of residual positions as
\[
r_i^\star = \min\!\Big\{q_i,\,\frac{E_i}{p_\tau}\,t^\star\Big\},
\]
which is precisely the \emph{weighted constrained equal awards rule} for the residual positions, with weights proportional to $E_i/p_\tau$. Equivalently, in the buyback variables $x_i = q_i - r_i$, the rule is the dual weighted constrained equal losses rule. In the risk-neutral case, our expected-loss objective selects this allocation uniquely; for CVaR objectives, the same water-filling allocation remains a canonical optimizer, although the optimizer need not be unique (cf.\ Theorem~\ref{thm:waterfilling_nonunique}). For spectral and more general monotone risk objectives, water-filling optimality is established in Proposition~\ref{prop:wf.optimal.monotone}.

The interpretation, however, is different from the classical claims model. In the claims literature, weights are typically primitive priority or entitlement parameters. In our setting, the weights $E_i/p_\tau$ arise endogenously from account equity and from the bankruptcy thresholds induced by future price moves. Thus our contribution in the isolated-margin case is not the abstract water-filling formula itself, but the derivation of the particular weighted claims rule from an exchange shortfall-risk minimization problem.

Several properties of the minimax leverage policy also have counterparts in the claims literature. The path-independence property of Section~\ref{sec:path_indep} is analogous in spirit to composition properties studied in the claims framework, and our Sybil resistance requirement is related to the axiom of no advantageous splitting. However, the ADL axioms operate on account states $(q_i,E_i)$ and are tied to leverage and future default risk, rather than to claims alone.

\subsection{Proof of Theorem~\ref{thm:waterfilling_nonunique}}
\label{app:cvar_degeneracy}

We break the verification of the theorem claims into two parts. The first part shows that the water-filling strategy $x^{\mathrm{WF}}$ is optimal for \eqref{eq:adl_primal_cvar}. The second shows that the optimizer need not be unique in general. The proof of the theorem follows directly from Lemma~\ref{lem:wf.optimal} and Proposition~\ref{prop:cvar_nonunique}.

\subsubsection{Optimality of $x^{\mathrm{WF}}$}

Before we begin, we provide a useful representation for the account shortfall that will be used repeatedly.

\begin{lemma}\label{lem:sigma_explicit}
Fix $i$ and $x_i\in[0,q_i]$. For any price $p\in\R$,
\[
\sigma_i(x_i,p)=
(q_i-x_i)\bigl(p-p_i^{(z)}(x_i)\bigr)_+.
\]
\end{lemma}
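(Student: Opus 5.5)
The plan is a direct algebraic rewriting of the definition of $\sigma_i$ using the equity representation at $p_\tau$, followed by a case split on whether the account is fully closed. First, from \eqref{eq:def_Ei} and the definition of $e_i$, I rewrite post-ADL equity in terms of the time-$\tau$ equity:
\[
e_i(x_i,p)=q_i(p_i^{(e)}-p_\tau)+m_i+(q_i-x_i)(p_\tau-p)=E_i-(q_i-x_i)(p-p_\tau).
\]
This is obtained by adding and subtracting $q_ip_\tau$ in the first term of $e_i$. Hence the shortfall becomes
\[
\sigma_i(x_i,p)=\big((q_i-x_i)(p-p_\tau)-E_i\big)_+.
\]

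The case $x_i<q_i$ comes next. Here $q_i-x_i>0$, so I factor it out of the positive part, using positive homogeneity $(cu)_+=c\,u_+$ for $c>0$. This gives
\[
\sigma_i(x_i,p)=(q_i-x_i)\Big(p-p_\tau-\tfrac{E_i}{q_i-x_i}\Big)_+=(q_i-x_i)\big(p-p_i^{(z)}(x_i)\big)_+,
\]
with $p_i^{(z)}$ exactly as defined in \eqref{eq:ze_price}.

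The remaining case is the boundary $x_i=q_i$. The left side is $(-E_i)_+=0$ by Assumption~\ref{ass:solvent}. On the right, the convention $p_i^{(z)}(q_i)=+\infty$ makes $(p-p_i^{(z)})_+=0$, and this is multiplied by $q_i-x_i=0$. I read the product as $0$, so the two sides agree.

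The only point requiring care is this convention at $x_i=q_i$, which should be stated explicitly. Everything else is routine algebra, so I expect no real obstacle.
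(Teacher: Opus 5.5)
Your proof is correct and follows the paper's argument. Both rewrite $-e_i(x_i,p)$ as an affine function of $p$ with slope $q_i-x_i$, factor out $q_i-x_i>0$ when $x_i<q_i$, and handle $x_i=q_i$ via $E_i>0$ together with the convention $p_i^{(z)}(q_i)=+\infty$. The only cosmetic difference is that the paper writes the zero-equity price as $\frac{q_ip_i^{(e)}+m_i-x_ip_\tau}{q_i-x_i}$, which coincides with your $p_\tau+\frac{E_i}{q_i-x_i}$ from \eqref{eq:ze_price}.
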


\begin{proof}
Fix $x_i\in[0,q_i]$ and $p\in\R$. Starting from
\[
e_i(x_i,p)=q_i\bigl(p_i^{(e)}-p\bigr)-x_i\bigl(p_\tau-p\bigr)+m_i,
\]
we expand and regroup terms:
\begin{align*}
e_i(x_i,p)
&=q_i p_i^{(e)}-q_i p -x_i p_\tau + x_i p + m_i\\
&=\bigl(q_i p_i^{(e)}+m_i-x_i p_\tau\bigr) - (q_i-x_i)p.
\end{align*}
Hence
\[
-e_i(x_i,p)=(q_i-x_i)p-\bigl(q_i p_i^{(e)}+m_i-x_i p_\tau\bigr).
\]
If $x_i<q_i$, define
\[
p_i^{(z)}(x_i)\defeq \frac{q_i p_i^{(e)}+m_i-x_i p_\tau}{q_i-x_i}.
\]
Then
\[
-e_i(x_i,p)=(q_i-x_i)\bigl(p-p_i^{(z)}(x_i)\bigr),
\]
and therefore
\[
\sigma_i(x_i,p)=\big(-e_i(x_i,p)\big)_+
=(q_i-x_i)\bigl(p-p_i^{(z)}(x_i)\bigr)_+.
\]
If $x_i=q_i$, then $e_i(q_i,p)=q_i(p_i^{(e)}-p_\tau)+m_i=E_i>0$ is constant in $p$, so
$\sigma_i(q_i,p)=0$. With the convention $p_i^{(z)}(q_i)=+\infty$, the formula still holds since
$(p-\infty)_+=0$.
\end{proof}

We can now derive necessary and sufficient conditions for optimality.

\begin{lemma}\label{lem:cvar_kkt_reduced}
Define, for $x_i\in[0,q_i]$,
\begin{equation*}
\bar p_i(x_i)\defeq \max\{p_\beta,\;p_i^{(z)}(x_i)\},
\qquad
h_\beta(x_i)\defeq \E\!\left[(p_T-p_\tau)\,\I{p_T\ge \bar p_i(x_i)}\right].
\end{equation*}
Under Assumptions~\ref{ass:feasible}--\ref{ass:p_T_regular}, a point $x^\star\in\mathcal X$ solves
\eqref{eq:adl_primal_cvar} if and only if there exists $\theta^\star\in\R$ such that for each $i$,
\[
\begin{aligned}
&0<x_i^\star<q_i \ \Longrightarrow\ h_\beta(x_i^\star)=\theta^\star,\\
&x_i^\star=0 \ \Longrightarrow\ h_\beta(0)\le \theta^\star,\\
&x_i^\star=q_i \ \Longrightarrow\ h_\beta(q_i)\ge \theta^\star.
\end{aligned}
\]
Moreover, feasibility and the assumption $0<Q<\sum_{i=1}^n q_i$ imply that $\theta^\star\ge 0$.
Indeed, there exists an index $j$ such that $x_j^\star<q_j$.
If $0<x_j^\star<q_j$, then
\[
\theta^\star=h_\beta(x_j^\star)\ge 0.
\]
If $x_j^\star=0$, then
\[
h_\beta(0)\le \theta^\star,
\]
and hence $\theta^\star\ge 0$ since $h_\beta(0)\ge 0$.
Consequently, if $x_i^\star=q_i$ for some $i$, then
\[
0=h_\beta(q_i)\ge \theta^\star\ge 0,
\]
so $\theta^\star=0$.
\end{lemma}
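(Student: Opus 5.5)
The plan is to reduce \eqref{eq:adl_primal_cvar} to a separable convex program via comonotone additivity, compute one-sided derivatives of each account term, and then read off the stated conditions as KKT conditions with a single multiplier for $\sum_i x_i=Q$.

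\textbf{Separable form.} By the comonotone additivity already noted, the objective is $\sum_i \mathrm{CVaR}_\beta(\sigma_i(x_i,p_T))$. Each summand is convex in $x_i$: $\sigma_i$ is convex in $x_i$ by Lemma~\ref{lem:shortfall_prop}, and $\mathrm{CVaR}_\beta$ is monotone and convex. So the problem is a separable convex program over $\mathcal X$. Linearity constraint qualification holds for the affine/box constraints, hence the KKT conditions with one multiplier $\theta$ for $\sum_i x_i=Q$ are necessary and sufficient.

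\textbf{Computing $\mathrm{CVaR}_\beta(\sigma_i)$.} By Lemma~\ref{lem:shortfall_prop}(ii), $\sigma_i(x_i,p_T)$ is a nondecreasing function of $p_T$. Since $p_T$ has a density, the $\beta$-tail of $\sigma_i$ can be taken to be the event $\{p_T\ge p_\beta\}$, giving
\[
\mathrm{CVaR}_\beta\big(\sigma_i(x_i,p_T)\big)=\frac{1}{1-\beta}\,\E\big[\sigma_i(x_i,p_T)\,\I{p_T\ge p_\beta}\big].
\]
This holds because $\mathrm{VaR}_u(\sigma_i)=\sigma_i(x_i,\mathrm{VaR}_u(p_T))$ for nondecreasing transforms, and $\PR(p_T\ge p_\beta)=1-\beta$ by continuity. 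Plugging in Lemma~\ref{lem:sigma_explicit}, $\sigma_i=(-e_i)_+$ with $\partial_{x_i}(-e_i)=p_\tau-p_T$. Differentiating under the expectation (dominated convergence, as for \eqref{eq:grad_rn_noatom}; the event $\{p_T=p_i^{(z)}\}$ is null) gives, for $x_i<q_i$,
\[
\partial_{x_i}\mathrm{CVaR}_\beta\big(\sigma_i(x_i,p_T)\big)=\frac{1}{1-\beta}\,\E\big[(p_\tau-p_T)\,\I{p_T\ge p_i^{(z)}(x_i),\,p_T\ge p_\beta}\big]=-\frac{h_\beta(x_i)}{1-\beta}.
\]
The right-hand side is continuous in $x_i$ since $p_i^{(z)}$ is continuous and $p_T$ is atomless. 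At $x_i=q_i$ it extends continuously to $0=h_\beta(q_i)$, exactly as in \eqref{eqn:subgradient.estimate}.

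\textbf{KKT conditions.} Stationarity for an interior coordinate reads $-h_\beta(x_i)/(1-\beta)+\lambda=0$. Writing $\theta^\star=(1-\beta)\lambda$ gives the first line of the lemma. At $x_i=0$ the box multiplier sign gives $h_\beta(0)\le\theta^\star$; at $x_i=q_i$ it gives $h_\beta(q_i)\ge\theta^\star$. Sufficiency follows from convexity and separability: a point satisfying these conditions is a KKT point, hence optimal. The trailing claims about $\theta^\star\ge0$ follow as in the statement. There is an index $j$ with $x_j^\star<q_j$, and $h_\beta\ge 0$ because the integrand is nonnegative on $\{p_T\ge\bar p_i\ge p_\tau\}$. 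This yields $\theta^\star\ge 0$, and if some $x_i^\star=q_i$ then $0=h_\beta(q_i)\ge\theta^\star$, so $\theta^\star=0$.

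\textbf{Main obstacle.} The key step is justifying that the CVaR tail is exactly $\{p_T\ge p_\beta\}$ even when $\sigma_i$ has an atom at $0$. An atom arises when $p_i^{(z)}>p_\beta$, so that $\sigma_i=0$ on part of the tail. I will handle this through the quantile representation $\int_\beta^1\mathrm{VaR}_u$ rather than the conditional-expectation form, which avoids the atom issue. The endpoint behaviour at $x_i=q_i$, where the objective is only one-sided differentiable, is handled by the continuous extension above.
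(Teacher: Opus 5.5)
Your proposal is correct, but it reaches the KKT conditions by a different route from the paper. The paper uses the Rockafellar--Uryasev lifting. It introduces auxiliary variables $\alpha_i$ and writes KKT conditions for the jointly convex lifted problem in $(x,\alpha)$. The $\alpha$-stationarity condition then identifies $\alpha_i^\star$ as a $\beta$-quantile of $\sigma_i(x_i^\star,p_T)$, namely $\sigma_i(x_i^\star,p_\beta)$. Only after that does it compute the partial subgradient in $x_i$ at fixed $\alpha^\star$, where the shifted threshold $p_i^{(z)}(x_i^\star)+\alpha_i^\star/(q_i-x_i^\star)$ collapses to $\max\{p_\beta,p_i^{(z)}(x_i^\star)\}$.

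You instead use the monotonicity of $\sigma_i$ in $p_T$ at the outset. Through the quantile-integral definition, you write each summand as the fixed-event expectation $\frac{1}{1-\beta}\E[\sigma_i(x_i,p_T)\,\I{p_T\ge p_\beta}]$. The paper uses this same device only later, in the GBM appendix. You then differentiate under the expectation, so $\bar p_i(x_i)$ appears simply as the intersection of the insolvency event with the tail event.

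Your route gives a cleaner argument. Convexity is immediate, and each summand is $C^1$ on $[0,q_i)$ with an explicit derivative that extends continuously by $0$ to $q_i$. The KKT analysis then becomes literally the risk-neutral one with $p_T$ restricted to $\{p_T\ge p_\beta\}$. You also avoid two points needing care in the paper's argument: passing from stationarity of the lifted function to partial subgradients at fixed $\alpha^\star$, and deciding which $\beta$-quantile to take for $\alpha_i^\star$ when $\sigma_i$ has an atom at zero.

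The lifting is the more generic template for CVaR programs. Here, however, it still relies on the same identification $\mathrm{VaR}_\beta(\sigma_i)=\sigma_i(x_i,p_\beta)$ that your approach uses directly, so it adds machinery without adding generality.
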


\begin{proof}
In what follows we will exploit the separable form of the objective that was derived from the comonotonicity of the individual account shortfalls. The Rockafellar--Uryasev representation \citep{rockafellar2000cvar} for CVaR gives that
\[
\sum_{i=1}^n\mathrm{CVaR}_\beta\big(\sigma_i(x_i, p_T)\big)
=
\sum_{i=1}^n f_i(x_i),
\]
where
\[
f_i(x_i)
\defeq
\inf_{\alpha_i\in\R}
\left\{
\alpha_i
+
\frac{1}{1-\beta}\,
\E\!\left[\big(\sigma_i(x_i,p_T)-\alpha_i\big)_+\right]
\right\}.
\]

Letting $\alpha=(\alpha_1,\dots,\alpha_n)$ we can write the CVaR optimization equivalently as:
\begin{equation*}
\begin{array}{ll}
\minimize_{(x,\alpha)\in\R^{2n}} & \tilde{V}(x,\alpha) \defeq \sum_{i=1}^n\left\{
\alpha_i
+
\frac{1}{1-\beta}\,
\E\!\left[\big(\sigma_i(x_i,p_T)-\alpha_i\big)_+\right]
\right\} \\
\subjectto & x\in\mathcal X.
\end{array}
\end{equation*}
For each $i$, the map $(x_i,\alpha_i)\mapsto \alpha_i + \frac{1}{1-\beta}\E[(\sigma_i(x_i,p_T)-\alpha_i)_+]$
is (jointly) convex in $(x_i,\alpha_i)$. So, this ``lifted'' representation is a convex problem.  Moreover, partial minimization over $\alpha_i$ preserves convexity, so $f_i$ (and the original CVaR optimization problem) is also convex. Finally, as in Appendix~\ref{app:rn.thm}, under Assumption~\ref{ass:feasible} Slater's condition is trivially satisfied and so the KKT conditions are both necessary and sufficient for optimality.

As before, introduce multipliers $\lambda\in\R$ for $\sum_i x_i=Q$,
$\nu_i\ge 0$ for $-x_i\le 0$, and $\mu_i\ge 0$ for $x_i-q_i\le 0$.
An optimal primal--dual quintuple $(x^\star,\alpha^\star,\lambda^\star,\nu^\star,\mu^\star)$ satisfies
\begin{align}
&0\in \partial_{x_i}\tilde V(x^\star,\alpha^\star)+\lambda^\star-\nu_i^\star+\mu_i^\star,
\quad i=1,\dots,n, \nonumber\\
&0\in \partial_{\alpha_i}\tilde V(x^\star,\alpha^\star),
\quad i=1,\dots,n, \label{eq:kkt_cvar_stat2}\\
&x^\star\in\mathcal X,\nonumber\\
&\nu_i^\star\ge 0,\ \mu_i^\star\ge 0,\quad i=1,\dots,n, \nonumber\\
&\nu_i^\star x_i^\star=0,\quad \mu_i^\star(x_i^\star-q_i)=0,\quad i=1,\dots,n. \nonumber
\end{align}

\paragraph{The $\alpha_i$-stationarity condition.}
Fix $i$ and abbreviate $Z_i \defeq \sigma_i(x_i,p_T)$.
For each realization of $p_T$, the map $\alpha_i\mapsto (Z_i-\alpha_i)_+$ is convex and its
subdifferential is
\[
\partial_{\alpha_i}(Z_i-\alpha_i)_+
=
\begin{cases}
\{-1\}, & Z_i>\alpha_i,\\[2pt]
[-1,0], & Z_i=\alpha_i,\\[2pt]
\{0\}, & Z_i<\alpha_i.
\end{cases}
\]
Since $\tilde V$ is a sum across $i$, one can compute $\partial_{\alpha_i}\tilde V$ by a direct calculation of the one-sided difference quotients as
\[
\partial_{\alpha_i}\tilde V(x,\alpha)
=
\bigg[\,1-\frac{1}{1-\beta}\PR(Z_i\ge \alpha_i),\ \
1-\frac{1}{1-\beta}\PR(Z_i> \alpha_i)\,\bigg].
\]
where $[a,b]$ denotes the closed interval.

Thus using the stationarity condition \eqref{eq:kkt_cvar_stat2} and selecting the extreme points of the subgradient interval yields the pair of inequalities
\[
1-\frac{1}{1-\beta}\,\PR(Z_i>\alpha_i)\ \ge\ 0
\ \ge\
1-\frac{1}{1-\beta}\,\PR(Z_i\ge \alpha_i).
\]
Rearranging gives the quantile condition
\begin{equation*}
\PR(Z_i\le \alpha_i)\ \ge\ \beta
\ \ge\
\PR(Z_i<\alpha_i).
\end{equation*}
That is, any optimal $\alpha_i^\star$ is a $\beta$-quantile of $\sigma_i(x_i^\star,p_T)$. But, for each fixed $x_i$, the map $p\mapsto \sigma_i(x_i,p)$ is continuous and nondecreasing, and (under Assumption~\ref{ass:p_T_regular})
$\,p_T$ is atomless. Then, under the standing assumptions, the $\beta$-quantile of
$\sigma_i(x_i,p_T)$ satisfies (see \cite[Theorem 3.1]{hanbali2022monotone})
\begin{equation}\label{eqn:quantile.sigma}
\mathrm{VaR}_\beta\big(\sigma_i(x_i,p_T)\big)
= \sigma_i(x_i,p_\beta).
\end{equation}
Hence, we may take $\alpha_i^\star = \sigma_i(x_i^\star,p_\beta)$.

\paragraph{The $x_i$-stationarity condition.}
Fix $i$ and notice that since $\sigma_i(x_i,p)\ge 0$ for all $(x_i,p)$, the minimization in $\alpha_i$ can be restricted to
$\alpha_i\ge 0$ without loss of generality (since $\alpha_i^\star = \sigma_i(x_i^\star,p_\beta)$).

For $\alpha_i\ge 0$ we have the identity
\begin{equation*}
(\sigma_i(x_i,p_T)-\alpha_i)_+
=
(-e_i(x_i,p_T)-\alpha_i)_+,
\end{equation*}
because $\sigma_i=0$ on $\{e_i>0\}$ and $\sigma_i=-e_i$ on $\{e_i\le 0\}$.

For each fixed $p$, the map $x_i\mapsto -e_i(x_i,p)-\alpha_i$ is affine with slope $p_\tau-p$ for $x_i<q_i$.
Hence, for each $(x_i,\alpha_i)$, a subgradient of $x_i\mapsto (-e_i(x_i,p)-\alpha_i)_+$ is given by
\[
(p_\tau-p)\cdot  \I{-e_i(x_i,p)>\alpha_i},
\]
with the usual interval of subgradients at the kink $\{-e_i(x_i,p)=\alpha_i\}$.
Under Assumption~\ref{ass:p_T_regular}, the kink event has probability zero for $x_i<q_i$,  hence for $x_i<q_i$
\begin{equation}\label{eq:subgrad.1}
\partial_{x_i}\tilde V(x^\star,\alpha^\star)
=
\left\{\frac{1}{1-\beta}\E\Big[(p_\tau-p_T)\,\I{-e_i(x_i^\star,p_T)>\alpha_i^\star}\Big]\right\}.
\end{equation}

Define the threshold price
\[
\bar p_i(x_i,\alpha_i)
\defeq
\frac{q_i p_i^{(e)} + m_i + \alpha_i - x_i p_\tau}{q_i-x_i},
\qquad x_i<q_i.
\]
A rearrangement yields
\[
-e_i(x_i,p)-\alpha_i
=
(q_i-x_i)\bigl(p-\bar p_i(x_i,\alpha_i)\bigr),
\]
and since $q_i-x_i>0$,
\[
\{-e_i(x_i,p)>\alpha_i\}
=
\{p>\bar p_i(x_i,\alpha_i)\}.
\]
Substituting into \eqref{eq:subgrad.1} gives
\begin{equation*}
\partial_{x_i}\tilde V(x^\star,\alpha^\star)
=
\left\{\frac{1}{1-\beta}\,
\E\!\left[(p_\tau-p_T)\,
 \I{p_T>\bar p_i(x_i^\star,\alpha_i^\star)}\right]
\right\}.
\end{equation*}

Finally, note that
\[
p_i^{(z)}(x_i)=\frac{q_i p_i^{(e)} + m_i - x_i p_\tau}{q_i-x_i}
\qquad\Rightarrow\qquad
\bar p_i(x_i,\alpha_i)=p_i^{(z)}(x_i)+\frac{\alpha_i}{q_i-x_i}.
\]
Moreover, since $p\mapsto \sigma_i(x_i,p)=(q_i-x_i)\,(p-p_i^{(z)}(x_i))_+$ is continuous and nondecreasing, the $\beta$-quantile satisfies (see \eqref{eqn:quantile.sigma} and Lemma~\ref{lem:sigma_explicit})
\[
\alpha_i^\star=\mathrm{VaR}_\beta\big(\sigma_i(x_i^\star,p_T)\big)
=\sigma_i(x_i^\star,p_\beta)
=(q_i-x_i^\star)\bigl(p_\beta-p_i^{(z)}(x_i^\star)\bigr)_+.
\]
Therefore
\[
\bar p_i(x_i^\star,\alpha_i^\star)
=
p_i^{(z)}(x_i^\star)+\bigl(p_\beta-p_i^{(z)}(x_i^\star)\bigr)_+
=
\max\{p_\beta,\;p_i^{(z)}(x_i^\star)\}.
\]
As a result, by comparing with the definition in the statement of the Lemma, we see that $\bar p_i(x_i^\star)=\bar p_i(x_i^\star,\alpha_i^\star)$ by the calculation above. We obtain, for $x_i^\star<q_i$,
\[
\partial_{x_i}\tilde V(x^\star,\alpha^\star)
=
\left\{-\frac{1}{1-\beta}\,h_\beta(x_i^\star)\right\}.
\]

\begin{remark}\label{rem:convex_onesided_subgrad}
In what follows we will make use of the following standard facts for convex functions. Let $g:[a,b]\to\mathbb R$ be convex.
\begin{enumerate}
\item For every $x\in(a,b)$ the one-sided derivatives $g'_-(x)$ and $g'_+(x)$ exist, are finite, and satisfy
$g'_-(x)\le g'_+(x)$. Moreover $g'_-$ is left-continuous and $g'_+$ is right-continuous on $(a,b)$.
\item At the endpoints,
\[
\partial g(a)=(-\infty,\,g'_+(a)],\qquad \partial g(b)=[\,g'_-(b),\,\infty),
\]
(with the convention that $( -\infty,-\infty]=\varnothing$ and $[+\infty,\infty)=\varnothing$) so that $\sup\partial g(a)=g'_+(a)$ and $\inf\partial g(b)=g'_-(b)$. If $g$ is differentiable on $(a,b)$, then
$g'_+(a)=\lim_{x\downarrow a}g'(x)$ and $g'_-(b)=\lim_{x\uparrow b}g'(x)$.
\end{enumerate}
\end{remark}

\paragraph{The endpoint $x_i=q_i$.}
Suppose that $x_i^\star=q_i$. Then
\[
\alpha_i^\star
=
\sigma_i(q_i,p_\beta)
=
0.
\]
Since
\[
p_i^{(z)}(x_i)
=
\frac{q_i p_i^{(e)}+m_i-x_i p_\tau}{q_i-x_i}
=
p_\tau+\frac{E_i}{q_i-x_i},
\]
we have
\[
p_i^{(z)}(x_i)\longrightarrow+\infty
\qquad\text{as }x_i\uparrow q_i.
\]
Hence, for $x_i<q_i$ sufficiently close to $q_i$,
\[
p_i^{(z)}(x_i)\ge p_\beta,
\]
and therefore
\[
\bar p_i(x_i,\alpha_i^\star)
=
p_i^{(z)}(x_i)+\frac{\alpha_i^\star}{q_i-x_i}
=
p_i^{(z)}(x_i).
\]
Using the derivative formula derived above and the atomlessness of $p_T$,
we obtain, for $x_i<q_i$ sufficiently close to $q_i$,
\[
\frac{\partial}{\partial x_i}
\tilde V(x_1^\star,\dots,x_{i-1}^\star,x_i,
x_{i+1}^\star,\dots,x_n^\star,\alpha^\star)
=
-\frac{1}{1-\beta}h_\beta(x_i).
\]
Moreover,
\[
h_\beta(x_i)
=
\E\!\left[
(p_T-p_\tau)
\I{p_T\ge p_i^{(z)}(x_i)}
\right]
\longrightarrow 0
\qquad\text{as }x_i\uparrow q_i,
\]
by dominated convergence, since
$p_i^{(z)}(x_i)\to+\infty$ and the integrand is eventually dominated by
$(p_T-p_\tau)_+$.
It follows that the left derivative of the coordinate objective
at $q_i$ equals $0$. By convexity,
\[
\partial_{x_i}\tilde V(x^\star,\alpha^\star)
\subset[0,\infty)
\]
and
\[
0\in\partial_{x_i}\tilde V(x^\star,\alpha^\star)
\]
at $x_i^\star=q_i$.

\paragraph{Conclusion.}
Define $\theta^\star\defeq (1-\beta)\lambda^\star$.

\emph{(Only if).}
Assume $x^\star$ is optimal. Consider any index $i$.

\underline{Interior: $0<x_i^\star<q_i$.}
Then $\nu_i^\star=\mu_i^\star=0$ and stationarity gives
\[
0=-\frac{1}{1-\beta}h_\beta(x_i^\star)+\lambda^\star
\quad\Longleftrightarrow\quad
h_\beta(x_i^\star)=\theta^\star.
\]

\underline{Lower bound: $x_i^\star=0$.}
Then $\mu_i^\star=0$ and stationarity yields
\[
0\in \partial_{x_i}\tilde V(x^\star,\alpha^\star)+\lambda^\star-\nu_i^\star.
\]
Since $\partial_{x_i}\tilde V(x^\star,\alpha^\star)\subset(-\infty,\, -\frac{1}{1-\beta}h_\beta(0)]$ at $x_i^\star=0$
(the maximal subgradient is the right derivative), existence of $\nu_i^\star\ge 0$ forces
\[
-\frac{1}{1-\beta}h_\beta(0)+\lambda^\star\ge 0
\quad\Longleftrightarrow\quad
h_\beta(0)\le \theta^\star.
\]

\underline{Upper bound: $x_i^\star=q_i$.}
Then $\nu_i^\star=0$ and stationarity gives
\[
0\in \partial_{x_i}\tilde V(x^\star,\alpha^\star)+\lambda^\star+\mu_i^\star,
\qquad \mu_i^\star\ge 0.
\]
As shown above,
\[
\partial_{x_i}\tilde V(x^\star,\alpha^\star)
\subset[0,\infty)
\]
at $x_i^\star=q_i$. Hence stationarity requires
\[
\lambda^\star\le 0,
\]
or equivalently
\[
\theta^\star\le 0=h_\beta(q_i).
\]
Thus
\[
h_\beta(q_i)\ge\theta^\star.
\]

\emph{(If).}
Conversely, suppose $x^\star\in\mathcal X$ and there exists $\theta^\star$ satisfying the three displayed
implications. Set $\lambda^\star\defeq \theta^\star/(1-\beta)$ and define $\alpha_i^\star\defeq \sigma_i(x_i^\star,p_\beta)$.

By the preceding argument, $\theta^\star\ge 0$.

We construct $\nu^\star,\mu^\star$ coordinatewise to satisfy KKT.

\underline{If $0<x_i^\star<q_i$:}
set $\nu_i^\star=\mu_i^\star=0$. Then stationarity holds because $-\frac{1}{1-\beta}h_\beta(x_i^\star)+\lambda^\star=0$ due to 
$h_\beta(x_i^\star)=\theta^\star$.

\underline{If $x_i^\star=0$:}
set $\mu_i^\star=0$ and choose the maximal subgradient
$g_i^\star\defeq -\frac{1}{1-\beta}h_\beta(0)\in \partial_{x_i}\tilde V(x^\star,\alpha^\star)$.
Define
\[
\nu_i^\star\defeq g_i^\star+\lambda^\star
=\frac{\theta^\star-h_\beta(0)}{1-\beta}\ge 0,
\]
where the inequality uses $h_\beta(0)\le \theta^\star$. Then stationarity holds:
$0=g_i^\star+\lambda^\star-\nu_i^\star$.

\underline{If $x_i^\star=q_i$:}
the condition $h_\beta(q_i)\ge \theta^\star$ gives
\[
0=h_\beta(q_i)\ge \theta^\star.
\]
Together with $\theta^\star\ge 0$, this implies $\theta^\star=0$, and
hence $\lambda^\star=0$. Choose $g_i^\star=0\in \partial_{x_i}\tilde V(x^\star,\alpha^\star)$ and set
$\nu_i^\star=0$, $\mu_i^\star=0$. Then stationarity holds.

Finally, primal feasibility holds by assumption ($x^\star\in\mathcal X$), dual feasibility holds by construction,
and complementary slackness holds by how we set $\nu_i^\star,\mu_i^\star$ in each case.
Therefore the KKT conditions are satisfied, and since the problem is convex with Slater's condition,
$x^\star$ is optimal.
\end{proof}

By interpreting these conditions in terms of the account leverage, we can get a result that is reminiscent of the original equalization characterization from Proposition \ref{prop:equalization}. Observe that
\[\{p_T>\bar p_i(x_i^\star)\} =\{p_T>\max\{p_\beta,\,p_i^{(z)}(x_i^\star)\}\}=\{p_T- p_\tau>\max\{p_\beta-p_\tau,\,p_\tau\,\ell_i(x_i^\star)^{-1}\}\}.\]
As in Section \ref{subsec:cvar}, write $\ell_\beta = \left(\frac{p_\beta-p_\tau}{p_\tau}\right)^{-1}$ with the convention that $\ell_\beta =+\infty$ if $p_\beta \leq p_\tau$. Then, we can define the function $v_\beta(\cdot)$
\[v_\beta(\ell)= \E\!\left[(p_T-p_{\tau})\,
\I{p_T- p_\tau>p_\tau \max\{\ell_\beta^{-1},\,\ell^{-1}\}}\right].\]
We also use the conventions
\[
(+\infty)^{-1}=0,
\qquad
0^{-1}=+\infty.
\]
Since $p_T$ is atomless under Assumption~\ref{ass:p_T_regular}, the strict and weak inequalities are interchangeable in these tail events.
Since the threshold
$\max(p_\beta,p_\tau(1+\ell^{-1}))$ is nonincreasing in $\ell$, the map $\ell\mapsto v_\beta(\ell)$
is nondecreasing.
Moreover, $v_\beta(\ell)\ge 0$ for all $\ell$, because the integrand is nonnegative on the indicator event.
We may immediately restate the preceding lemma as follows.

\begin{corollary}\label{cor:cvar_kkt_reduced_lvg}
Under Assumptions~\ref{ass:feasible}--\ref{ass:p_T_regular}, a point $x^\star\in\mathcal X$ solves
\eqref{eq:adl_primal_cvar} if and only if there exists $\theta^\star\in\R$ such that for each $i$,
\[
\begin{aligned}
&0<x_i^\star<q_i \ \Longrightarrow\ v_\beta(\ell_i(x_i^\star))=\theta^\star,\\
&x_i^\star=0 \ \Longrightarrow\ v_\beta(\ell_i(0))\le \theta^\star,\\
&x_i^\star=q_i \ \Longrightarrow\ v_\beta(\ell_i(q_i))\ge \theta^\star.
\end{aligned}
\]
Moreover, $\theta^\star\ge 0$. If $x_i^\star=q_i$ for some $i$, then
\[
0=v_\beta(\ell_i(q_i))\ge \theta^\star\ge 0,
\]
so $\theta^\star=0$.
\end{corollary}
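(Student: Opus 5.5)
The plan is to derive Corollary~\ref{cor:cvar_kkt_reduced_lvg} directly from Lemma~\ref{lem:cvar_kkt_reduced}. The only work is to show that $h_\beta(x_i)=v_\beta(\ell_i(x_i))$ for every $x_i\in[0,q_i]$. Once this identity holds, the three implications in the lemma become the three implications in the corollary, with the same $\theta^\star$. The claims $\theta^\star\ge 0$ and $\theta^\star=0$ whenever some $x_i^\star=q_i$ then carry over verbatim, because $v_\beta(\ell_i(q_i))=v_\beta(0)=h_\beta(q_i)=0$.

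To prove the identity, I fix $x_i<q_i$. By \eqref{eq:ze_price_leverage}, $p_i^{(z)}(x_i)=p_\tau(1+\ell_i(x_i)^{-1})$. By definition of $\ell_\beta$, we have $p_\beta=p_\tau(1+\ell_\beta^{-1})$ when $p_\beta>p_\tau$. Hence
\[
\bar p_i(x_i)-p_\tau=\max\{p_\beta-p_\tau,\;p_\tau\ell_i(x_i)^{-1}\}=p_\tau\max\{\ell_\beta^{-1},\ell_i(x_i)^{-1}\}.
\]
I then check the case $p_\beta\le p_\tau$ separately. There $\ell_\beta=+\infty$, so $\ell_\beta^{-1}=0$. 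Since $p_i^{(z)}(x_i)>p_\tau\ge p_\beta$, the maximum is attained by the bankruptcy price, and both sides reduce to $p_\tau\ell_i(x_i)^{-1}$. The identity therefore holds in this case too.

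Substituting this threshold into the definition of $h_\beta$ gives $v_\beta(\ell_i(x_i))$, except that $h_\beta$ uses a weak inequality in the indicator and $v_\beta$ uses a strict one. By Assumption~\ref{ass:p_T_regular}, $p_T$ is atomless, so the two indicators agree almost surely and the expectations coincide.

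The endpoint $x_i=q_i$ is the step needing the most care, so I would handle it explicitly. There $\ell_i(q_i)=0$, so $\ell_i(q_i)^{-1}=+\infty$ by the stated convention, which makes the event empty and gives $v_\beta(0)=0$. On the other side, $p_i^{(z)}(q_i)=+\infty$, so $\bar p_i(q_i)=+\infty$ and $h_\beta(q_i)=0$. The two quantities agree at the endpoint as well. This matches the value $h_\beta(q_i)=0$ used in Lemma~\ref{lem:cvar_kkt_reduced}.

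With the identity established on all of $[0,q_i]$, the corollary is a restatement of the lemma, and no further argument is needed.
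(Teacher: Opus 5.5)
Your proposal is correct and takes the same route as the paper. The paper also rewrites the tail event $\{p_T>\bar p_i(x_i)\}$ as $\{p_T-p_\tau>p_\tau\max\{\ell_\beta^{-1},\ell_i(x_i)^{-1}\}\}$, uses atomlessness of $p_T$ to exchange strict and weak inequalities, and then reads the corollary off Lemma~\ref{lem:cvar_kkt_reduced} as an immediate restatement; your explicit checks of the cases $p_\beta\le p_\tau$ and $x_i=q_i$ simply spell out what the paper leaves to its stated conventions.
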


We can now verify that the water-filling solution is optimal.

\begin{lemma}\label{lem:wf.optimal}
    The water-filling solution is optimal for \eqref{eq:adl_primal_cvar}.
\end{lemma}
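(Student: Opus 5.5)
The plan is to verify the reduced KKT conditions of Corollary~\ref{cor:cvar_kkt_reduced_lvg} directly for $x^{\mathrm{WF}}=y(t^\star)$, with the natural choice $\theta^\star\defeq v_\beta(t^\star)$. A shortcut is available: Lemma~\ref{lem:pointwise_water_filling} gives $\Lscr(x^{\mathrm{WF}},p)\le\Lscr(x,p)$ for every $p$ and every $x\in\mathcal X$, and $\mathrm{CVaR}_\beta$ is monotone. This already yields optimality, exactly as in Proposition~\ref{prop:wf.optimal.monotone}. I would record this one-line argument first, but also carry out the KKT verification, since the same $\theta^\star$ is what the later non-uniqueness construction (Proposition~\ref{prop:cvar_nonunique}) will manipulate.

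For the KKT route, the first step is to recall the structure of $x^{\mathrm{WF}}$ from Proposition~\ref{prop:minimax_solution} and Lemma~\ref{lem:G_properties_unique_level}. Here $t^\star\in(0,\max_i\ell_i(0))$, so $t^\star>0$. Every account with $x_i^{\mathrm{WF}}>0$ has $\ell_i(x_i^{\mathrm{WF}})=t^\star$, and every account with $x_i^{\mathrm{WF}}=0$ has $\ell_i(0)\le t^\star$. Moreover $x_i^{\mathrm{WF}}<q_i$ for every $i$, because $q_i-\tfrac{E_i}{p_\tau}t^\star<q_i$ when $t^\star>0$. The third case of the corollary (the upper bound $x_i=q_i$) therefore never occurs, which avoids the degenerate $\theta^\star=0$ situation.

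The second step checks the two remaining cases with $\theta^\star=v_\beta(t^\star)$. For interior accounts, $0<x_i^{\mathrm{WF}}<q_i$, and then $v_\beta(\ell_i(x_i^{\mathrm{WF}}))=v_\beta(t^\star)=\theta^\star$ trivially. For untouched accounts, $\ell_i(0)\le t^\star$, and the required inequality $v_\beta(\ell_i(0))\le\theta^\star$ follows because $v_\beta$ is nondecreasing. That monotonicity was already noted: the tail threshold $p_\tau\max\{\ell_\beta^{-1},\ell^{-1}\}$ is nonincreasing in $\ell$ and the integrand is nonnegative on the event. The corollary then gives optimality. Strict monotonicity is not needed here, and this is precisely why uniqueness can fail when $t^\star\ge\ell_\beta$, where $v_\beta$ is flat.

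The only potential obstacle is the edge case where every account is reduced, or where $t^\star$ coincides with a kink of $v_\beta$. Neither matters, because the corollary only needs equality of values, not derivatives. The subtle convention issues, such as $\ell_\beta=+\infty$ and strict versus weak inequalities, are already handled by atomlessness in the lemma's statement. So I expect the argument to be short, with the pointwise-optimality shortcut serving as an independent confirmation.
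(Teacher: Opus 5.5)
Your KKT verification matches the paper's proof: it sets $\theta^\star = v_\beta(t^\star)$, uses $t^\star>0$ to rule out $x_i^{\mathrm{WF}}=q_i$, and checks the interior and untouched cases of Corollary~\ref{cor:cvar_kkt_reduced_lvg} using only that $v_\beta$ is nondecreasing. Your shortcut through pointwise optimality (Lemma~\ref{lem:pointwise_water_filling}) plus monotonicity of $\mathrm{CVaR}_\beta$ is also valid, and it is exactly the argument behind Proposition~\ref{prop:wf.optimal.monotone}.
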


\begin{proof}
 Let $x^{\mathrm{WF}}=y(t^\star)$ be the water-filling allocation of Theorem~\ref{thm:risk_neutral}.
Then
\[
\ell_i(x_i^{\mathrm{WF}})=\min\{\ell_i(0),t^\star\}
\qquad\text{for all }i.
\]
Set $\theta^\star\defeq v_\beta(t^\star)$.

Since $0<Q<\sum_i q_i$, we have $t^\star>0$. Hence
\[
q_i-x_i^{\mathrm{WF}}
=
\min\left\{
q_i,\frac{E_i}{p_\tau}t^\star
\right\}
>0,
\]
so $x_i^{\mathrm{WF}}<q_i$ for every $i$.

If $x_i^{\mathrm{WF}}>0$, then
\[
\ell_i(x_i^{\mathrm{WF}})=t^\star,
\]
and therefore
\[
v_\beta(\ell_i(x_i^{\mathrm{WF}}))
=
v_\beta(t^\star)
=
\theta^\star.
\]
If $x_i^{\mathrm{WF}}=0$, then
\[
\ell_i(0)\le t^\star,
\]
and monotonicity of $v_\beta$ yields
\[
v_\beta(\ell_i(0))
\le
v_\beta(t^\star)
=
\theta^\star.
\]
Thus $x^{\mathrm{WF}}$ satisfies the
conditions of Corollary~\ref{cor:cvar_kkt_reduced_lvg}, and is therefore optimal.
\end{proof}

\subsubsection{Non-Uniqueness}

\begin{proposition}\label{prop:cvar_nonunique}
Let $x^\star$ be an optimizer of \eqref{eq:adl_primal_cvar}. Define the stressed set for a feasible allocation $x$ by
\[
\mathcal D(x)\defeq \{\,i:\ \bar p_i(x_i)=p_\beta\,\}=\{\,i:\ \ell_i(x_i)\geq \ell_\beta\,\}.
\]
Suppose there exist $i\neq j$ in $\mathcal D(x^\star)$ and $\varepsilon_0>0$ such that for all
$\varepsilon\in[-\varepsilon_0,\varepsilon_0]$ the perturbed point $x^\star(\varepsilon)$ defined by
\[
x^\star_i(\varepsilon)=x^\star_i+\varepsilon,\qquad
x^\star_j(\varepsilon)=x^\star_j-\varepsilon,\qquad
x^\star_k(\varepsilon)=x^\star_k \ (k\notin\{i,j\})
\]
remains feasible and satisfies $i,j\in\mathcal D(x^\star(\varepsilon))$.
Then $x^\star(\varepsilon)$ is optimal for all $\varepsilon\in[-\varepsilon_0,\varepsilon_0]$, and the set of
optimizers contains a continuum.
\end{proposition}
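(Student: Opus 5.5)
The plan is to verify the KKT characterization of Lemma~\ref{lem:cvar_kkt_reduced} directly at each perturbed point $x^\star(\varepsilon)$, reusing the multiplier $\theta^\star$ that certifies optimality of $x^\star$. Since $x^\star$ is optimal, Lemma~\ref{lem:cvar_kkt_reduced} supplies $\theta^\star\ge 0$ satisfying the three implications at $x^\star$. The key observation is that on the stressed set the marginal quantity $h_\beta$ does not depend on the allocation. Indeed, for $k\in\mathcal D(x)$ we have $\bar p_k(x_k)=p_\beta$, so
\[
h_\beta(x_k)=\E\!\left[(p_T-p_\tau)\,\I{p_T\ge p_\beta}\right]\eqqcolon h^\ast,
\]
a constant common to all stressed accounts.

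\emph{Step 1.} Determine $\theta^\star$ relative to $h^\ast$ using $i,j\in\mathcal D(x^\star)$. The hypothesis guarantees that both $x_i^\star+\varepsilon$ and $x_j^\star-\varepsilon$ stay in $[0,q_i]$ and $[0,q_j]$ for $|\varepsilon|\le\varepsilon_0$, so $0<x_i^\star<q_i$ and $0<x_j^\star<q_j$ (interiority, since perturbations in both directions are feasible). The interior case of Lemma~\ref{lem:cvar_kkt_reduced} then gives $\theta^\star=h_\beta(x_i^\star)=h^\ast$.

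\emph{Step 2.} Check the conditions at $x^\star(\varepsilon)$ with the same $\theta^\star$. For $k\notin\{i,j\}$, the coordinate is unchanged, so its implication holds verbatim. For $k\in\{i,j\}$, feasibility of the perturbations on both sides shows that $x_k^\star(\varepsilon)$ lies in the closed interval between two feasible values. I would simply handle whichever case occurs (interior or boundary). Because $k\in\mathcal D(x^\star(\varepsilon))$ by hypothesis, we get $h_\beta(x_k^\star(\varepsilon))=h^\ast=\theta^\star$. This equality satisfies the interior condition and also both boundary inequalities, so every case is covered. Feasibility $x^\star(\varepsilon)\in\mathcal X$ is assumed, and the sum is preserved since $+\varepsilon-\varepsilon=0$.

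\emph{Step 3.} Conclude. The "if" direction of Lemma~\ref{lem:cvar_kkt_reduced} yields optimality of each $x^\star(\varepsilon)$. The map $\varepsilon\mapsto x^\star(\varepsilon)$ is injective on $[-\varepsilon_0,\varepsilon_0]$, so the optimizer set contains a segment, i.e., a continuum. As an alternative sanity check, one could argue via the comonotone decomposition: for stressed accounts, $\mathrm{CVaR}_\beta(\sigma_k)$ is affine in $x_k$ with slope $-h^\ast/(1-\beta)$, so the objective is constant along the perturbation direction.

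The main obstacle is the bookkeeping in Step~1 and Step~2. I need to confirm that $h_\beta$ really reduces to the constant $h^\ast$ on $\mathcal D$, including the equality case $\ell_k=\ell_\beta$, where $p_k^{(z)}=p_\beta$; there the max is still $p_\beta$, and atomlessness makes strict versus weak inequalities irrelevant. I also need to confirm that the boundary cases cannot occur, or are harmless. The step most likely to need care is showing $\theta^\star=h^\ast$ rather than merely an inequality. I would secure this by using interiority of $x_i^\star$, which follows from the two-sided feasibility of the perturbation.
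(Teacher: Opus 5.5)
Your argument is correct, but it uses a different mechanism from the paper's proof.

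\textbf{The paper's route.} The paper argues at the level of objective values. Comonotone additivity gives the objective as $\sum_k f_k(x_k)$ with $f_k(x_k)=\mathrm{CVaR}_\beta(\sigma_k(x_k,p_T))$. The Rockafellar--Uryasev formula then shows that on the stressed regime
\[
f_k(x_k)=q_k(\mu_{\mathrm{tail}}-p_k^{(e)})+x_k(p_\tau-\mu_{\mathrm{tail}})-m_k,
\]
which is affine with a slope common to all accounts. So the objective is constant along $\varepsilon\mapsto x^\star(\varepsilon)$, and every point on the path attains the optimal value. This is essentially the ``sanity check'' you mention at the end.

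\textbf{Your route.} You reuse the first-order certificate of Lemma~\ref{lem:cvar_kkt_reduced}. Two-sided feasibility forces $0<x_i^\star<q_i$, which pins the multiplier to $\theta^\star=h^\ast=\E[(p_T-p_\tau)\I{p_T\ge p_\beta}]$. Since $h_\beta\equiv h^\ast$ on the stressed set, the same $\theta^\star$ certifies every $x^\star(\varepsilon)$. Your handling of possible boundary endpoints, such as $x_i^\star(-\varepsilon_0)=0$, through the equality $h_\beta=\theta^\star$ is right. The upper endpoint $x_k=q_k$ cannot occur in $\mathcal D$ anyway, since $\bar p_k(q_k)=+\infty\neq p_\beta$.

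\textbf{What each buys.}
\begin{itemize}
\item The paper's route is more elementary and self-contained. It needs neither direction of the KKT lemma, whose proof carries the lifted-problem and subgradient machinery. It also needs no interiority step, and it exhibits the flat direction of the objective explicitly.
\item Your route avoids recomputing the CVaR closed form. It also yields extra information: whenever a stressed account is interior, the shadow price is forced to be $\theta^\star=h^\ast=(1-\beta)(\mu_{\mathrm{tail}}-p_\tau)$. This matches the paper's common slope $p_\tau-\mu_{\mathrm{tail}}=-h^\ast/(1-\beta)$ and makes the link to the water-filling optimality argument via Corollary~\ref{cor:cvar_kkt_reduced_lvg} transparent.
\end{itemize}
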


\begin{proof}
    Fix $i\in\mathcal D(x^\star)$. Since $\bar p_i(x_i)=\max\{p_\beta,p_i^{(z)}(x_i)\}$,
$\bar p_i(x_i)=p_\beta$ means $p_i^{(z)}(x_i)\le p_\beta$. In this regime, the $\beta$-quantile of
$\sigma_i(x_i,p_T)$ is $\mathrm{VaR}_\beta(\sigma_i)=\sigma_i(x_i,p_\beta)$ and the excess loss above
this quantile is
\[
\big(\sigma_i(x_i,p_T)-\mathrm{VaR}_\beta(\sigma_i)\big)_+
=(q_i-x_i)\,(p_T-p_\beta)_+,
\]
so the Rockafellar--Uryasev formula gives
\[
f_i(x_i)=\mathrm{CVaR}_\beta(\sigma_i(x_i,p_T))
=\sigma_i(x_i,p_\beta)+\frac{1}{1-\beta}\E\big[(q_i-x_i)(p_T-p_\beta)_+\big].
\]
Let $\mu_{\mathrm{tail}}\defeq \E[p_T\mid p_T\ge p_\beta]$. Since $p_T$ is atomless,
$\E[(p_T-p_\beta)_+]=(1-\beta)\,(\mu_{\mathrm{tail}}-p_\beta)$, and
$\sigma_i(x_i,p_\beta)=(q_i-x_i)(p_\beta-p_i^{(z)}(x_i))$ on $p_i^{(z)}(x_i)\le p_\beta$.
Hence
\[
f_i(x_i)=(q_i-x_i)\big(\mu_{\mathrm{tail}}-p_i^{(z)}(x_i)\big).
\]
Using $(q_i-x_i)p_i^{(z)}(x_i)=q_i p_i^{(e)}+m_i-x_i p_\tau$ (by definition of $p_i^{(z)}$),
we obtain the affine representation
\[
f_i(x_i)= q_i(\mu_{\mathrm{tail}}-p_i^{(e)}) + x_i(p_\tau-\mu_{\mathrm{tail}})-m_i,
\qquad\text{whenever }\bar p_i(x_i)=p_\beta,
\]
so on the stressed regime every such $f_i$ has the same constant slope $p_\tau-\mu_{\mathrm{tail}}$.

Now fix $i\neq j$ as in the statement and consider $x^\star(\varepsilon)$.
Feasibility holds since $\sum_k x_k^\star(\varepsilon)=Q$ and the box constraints are preserved by assumption.
Moreover, for all $\varepsilon\in[-\varepsilon_0,\varepsilon_0]$ both $i$ and $j$ remain in the stressed regime,
and the objective changes by
\[
f_i(x_i^\star+\varepsilon)+f_j(x_j^\star-\varepsilon)-f_i(x_i^\star)-f_j(x_j^\star)
=(p_\tau-\mu_{\mathrm{tail}})\varepsilon+(p_\tau-\mu_{\mathrm{tail}})(-\varepsilon)=0,
\]
while all other coordinates are unchanged. Therefore
$\sum_k f_k(x_k^\star(\varepsilon))=\sum_k f_k(x_k^\star)$, so $x^\star(\varepsilon)$ is also optimal.
This yields a continuum of optima.
\end{proof}

The following example illustrates that the Proposition is not vacuously true (i.e., it provides a simple example where the stated conditions hold and lead to a continuum of CVaR-optima). 

\begin{example}\label{ex:cvar_continuum}
Fix $p_\tau=1$ and take the terminal price to be
\[
p_T \;=\; 1+Y,
\qquad Y\sim \mathrm{Exp}(1),
\]
so that $p_T$ admits the density $f_T(p)=e^{-(p-1)}\I{p\ge 1}$, which is strictly positive on $[p_\tau,\infty)$.

Let $\beta=1/2$, so that
\[
p_\beta=\mathrm{VaR}_\beta(p_T)=1+\log 2,
\qquad
\mu_{\mathrm{tail}}= \E[p_T\mid p_T\ge p_\beta]=p_\beta+1=2+\log 2,
\]
where the last identity follows from the memoryless property of the exponential distribution.

Consider $n=2$ short accounts with identical parameters
\[
q_1=q_2=4,\qquad p_1^{(e)}=p_2^{(e)}=1,\qquad m_1=m_2=1,
\]
so that $E_1=E_2=q_i(p_i^{(e)}-p_\tau)+m_i=1>0$.  Choose a buyback quantity $Q=1$.
Then the feasible set is
\[
\mathcal{X}=\{(x_1,x_2)\in\R^2:\ x_1+x_2=1,\ 0\le x_1,x_2\le 4\}.
\]

For any $x\in\mathcal{X}$ we have $x_i\le 1$, hence
\[
p_i^{(z)}(x_i)=p_\tau+\frac{E_i}{q_i-x_i}
=1+\frac{1}{4-x_i}\ \le\ 1+\frac{1}{3}=\frac{4}{3}
\ <\ 1+\log 2 = p_\beta.
\]
Therefore $\bar p_i(x_i)=\max\{p_\beta,p_i^{(z)}(x_i)\}=p_\beta$ for both $i=1,2$ and for every feasible $x$,
i.e.\ $\mathcal{D}(x)=\{1,2\}$ for all $x\in\mathcal{X}$. Arguing as in Proposition~\ref{prop:cvar_nonunique}, every feasible allocation $x\in\mathcal{X}$ is optimal for~\eqref{eq:adl_primal_cvar}, so the optimizer set contains a continuum. In particular,
\[
(1/2,\,1/2)\quad\text{and}\quad (1,\,0)\quad\text{and}\quad (0,\,1)
\]
are distinct CVaR-optimal solutions with $x^{\mathrm{WF}}=(Q/2,Q/2)=(1/2,1/2)$ being optimal but not unique.
\end{example}

\subsection{CVaR Closed-Form under Geometric Brownian Motion}\label{app:closeformGBM}

Assume that the terminal price follows a geometric Brownian motion,
\begin{equation}\label{eq:gbm_model}
p_T
=
p_\tau\exp\Bigl((\mu-\tfrac12\sigma^2)\Delta+\sigma\sqrt{\Delta}\,Z\Bigr),
\qquad Z\sim N(0,1),
\qquad \Delta\defeq T-\tau,
\end{equation}
with parameters $\mu\in\R$ and $\sigma>0$. Then $\log p_T$ is Gaussian with
\[
\E[p_T]=p_\tau e^{\mu\Delta},
\qquad
\mathrm{Var}(p_T)=p_\tau^2 e^{2\mu\Delta}\bigl(e^{\sigma^2\Delta}-1\bigr).
\]
Let $z_\beta\defeq\Phi^{-1}(\beta)$ so that (by monotonicity) the $\beta$-quantile of $p_T$ is
\begin{equation}\label{eq:p_beta}
p_\beta=
p_\tau\exp\Bigl((\mu-\tfrac12\sigma^2)\Delta+\sigma\sqrt{\Delta}\,z_\beta\Bigr).
\end{equation}
We will also use the tail conditional mean
\begin{equation*}
\mu_{\mathrm{tail}}
\defeq
\E\bigl[p_T\,\big|\,p_T\ge p_\beta\bigr]
=
\frac{\E\bigl[p_T\mathbf 1_{\{p_T\ge p_\beta\}}\bigr]}{1-\beta}.
\end{equation*}
A direct computation yields a closed form for $\mu_{\mathrm{tail}}$.

\begin{lemma}[Tail mean of GBM]\label{lem:gbm_tail_mean}
Under \eqref{eq:gbm_model}, one has
\begin{equation}\label{eq:gbm_tail_mean_formula}
\mu_{\mathrm{tail}}
=
\frac{p_\tau e^{\mu\Delta}\,\Phi(\sigma\sqrt{\Delta}-z_\beta)}{1-\beta}.
\end{equation}
\end{lemma}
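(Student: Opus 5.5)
The plan is to compute $\E[p_T\mathbf 1_{\{p_T\ge p_\beta\}}]$ directly in terms of the standard normal $Z$ and divide by $1-\beta$.

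First, translate the tail event into a condition on $Z$. Since the map $z\mapsto p_\tau\exp\bigl((\mu-\tfrac12\sigma^2)\Delta+\sigma\sqrt\Delta\,z\bigr)$ is strictly increasing, \eqref{eq:p_beta} gives
\[
\{p_T\ge p_\beta\}=\{Z\ge z_\beta\},
\qquad
\PR(Z\ge z_\beta)=1-\beta .
\]
This confirms that the denominator in the definition of $\mu_{\mathrm{tail}}$ is exactly $1-\beta$. Because $p_T$ is atomless, the conditional-mean and ratio expressions for $\mu_{\mathrm{tail}}$ agree.

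Next, compute the numerator by an exponential tilt (completing the square). Write
\[
\E\bigl[p_T\mathbf 1_{\{Z\ge z_\beta\}}\bigr]
=p_\tau e^{(\mu-\frac12\sigma^2)\Delta}\int_{z_\beta}^\infty e^{\sigma\sqrt\Delta z}\varphi(z)\,dz ,
\]
and use the identity
\[
e^{az}\varphi(z)=e^{a^2/2}\varphi(z-a),
\qquad a\defeq\sigma\sqrt\Delta .
\]
The integral then becomes
\[
e^{\sigma^2\Delta/2}\bigl(1-\Phi(z_\beta-\sigma\sqrt\Delta)\bigr)
=e^{\sigma^2\Delta/2}\,\Phi(\sigma\sqrt\Delta-z_\beta),
\]
where the last step uses the symmetry $1-\Phi(u)=\Phi(-u)$. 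The factors $e^{-\frac12\sigma^2\Delta}$ and $e^{\frac12\sigma^2\Delta}$ cancel, so the numerator equals $p_\tau e^{\mu\Delta}\Phi(\sigma\sqrt\Delta-z_\beta)$. Dividing by $1-\beta$ yields \eqref{eq:gbm_tail_mean_formula}.

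As a sanity check, letting $\beta\to0$ sends $z_\beta\to-\infty$, and the formula recovers $\E[p_T]=p_\tau e^{\mu\Delta}$.

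No step here is a real obstacle; the whole argument is a completion-of-squares computation. The only care needed is in the sign bookkeeping: the tilt shifts the threshold from $z_\beta$ to $z_\beta-\sigma\sqrt\Delta$, and the symmetry of $\Phi$ must be applied in the right direction to arrive at $\Phi(\sigma\sqrt\Delta-z_\beta)$.
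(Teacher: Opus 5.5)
Your proposal is correct and follows the paper's proof: both identify $\{p_T\ge p_\beta\}$ with $\{Z\ge z_\beta\}$, evaluate $\E[e^{sZ}\mathbf 1_{\{Z\ge z_\beta\}}]=e^{s^2/2}\Phi(s-z_\beta)$ with $s=\sigma\sqrt{\Delta}$, cancel the $e^{\mp\frac12\sigma^2\Delta}$ factors, and divide by $1-\beta$. The only difference is that you derive this Gaussian identity by completing the square, whereas the paper quotes it.
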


\begin{proof}
Write $p_T=p_\tau e^{(\mu-\frac12\sigma^2)\Delta+\sigma\sqrt{\Delta}Z}$ and set $s\defeq\sigma\sqrt{\Delta}$.
Then
\[
\E\bigl[p_T\mathbf 1_{\{p_T\ge p_\beta\}}\bigr]
=
p_\tau e^{(\mu-\frac12\sigma^2)\Delta}\,
\E\bigl[e^{sZ}\mathbf 1_{\{Z\ge z_\beta\}}\bigr].
\]
Using the identity $\E[e^{sZ}\mathbf 1_{\{Z\ge a\}}]=e^{\frac12 s^2}\Phi(s-a)$ for $Z\sim N(0,1)$,
we obtain
\[
\E\bigl[p_T\mathbf 1_{\{p_T\ge p_\beta\}}\bigr]
=
p_\tau e^{(\mu-\frac12\sigma^2)\Delta}\,e^{\frac12 s^2}\Phi(s-z_\beta)
=
p_\tau e^{\mu\Delta}\Phi(s-z_\beta).
\]
Dividing by $1-\beta$ yields \eqref{eq:gbm_tail_mean_formula}.
\end{proof}

Since $p\mapsto \sigma_i(x,p)$ is nondecreasing and continuous, the $\beta$-quantile of
$\sigma_i(x,p_T)$ is realized by the price quantile $p_\beta$, and thus
\begin{equation}\label{eq:alpha_star_gbm}
\mathrm{VaR}_\beta\bigl(\sigma_i(x,p_T)\bigr)
=
\sigma_i(x, p_\beta)
=
(q_i-x)\,\bigl(p_\beta-p_i^{(z)}(x)\bigr)_+.
\end{equation}
The CVaR admits a piecewise closed form depending on the relative position of $p_i^{(z)}(x)$ and $p_\beta$.

\begin{lemma}[Per-account CVaR under GBM]
Fix $i$ and $x\in[0,q_i)$. Let $p_\beta$ be given by \eqref{eq:p_beta}.
Then
\[
\mathrm{CVaR}_\beta\bigl(\sigma_i(x,p_T)\bigr)
=
\E\bigl[\sigma_i(x,p_T)\,\big|\,p_T\ge p_\beta\bigr]
\]
and is given explicitly as follows.

\smallskip
\noindent (i) If $p_i^{(z)}(x)\le p_\beta$, then
\begin{equation}\label{eq:cvar_stressed}
\mathrm{CVaR}_\beta\bigl(\sigma_i(x,p_T)\bigr)
=
(q_i-x)\bigl(\mu_{\mathrm{tail}}-p_i^{(z)}(x)\bigr),
\end{equation}
where $\mu_{\mathrm{tail}}$ is given by \eqref{eq:gbm_tail_mean_formula}. Equivalently, using
$(q_i-x)p_i^{(z)}(x)=q_i p_i^{(e)}+m_i-x p_\tau$,
\begin{equation}\label{eq:cvar_stressed_affine}
\mathrm{CVaR}_\beta\bigl(\sigma_i(x,p_T)\bigr)
=
q_i\bigl(\mu_{\mathrm{tail}}-p_i^{(e)}\bigr)
+x\bigl(p_\tau-\mu_{\mathrm{tail}}\bigr)
-m_i.
\end{equation}

\smallskip
\noindent (ii) If $p_i^{(z)}(x)\ge p_\beta$, then
\begin{equation}\label{eq:cvar_nonstressed}
\mathrm{CVaR}_\beta\bigl(\sigma_i(x,p_T)\bigr)
=
\frac{q_i-x}{1-\beta}\,\E\bigl[(p_T-p_i^{(z)}(x))_+\bigr].
\end{equation}
Moreover, $\E[(p_T-K)_+]$ for $K>0$ admits the closed form
\begin{equation}\label{eq:call_expectation_physical}
\E\bigl[(p_T-K)_+\bigr]
=
p_\tau e^{\mu\Delta}\Phi(d_1(K)) - K\,\Phi(d_2(K)),
\end{equation}
where
\begin{equation}\label{eq:d1d2_physical}
d_1(K)
\defeq
\frac{\log(p_\tau/K)+(\mu+\frac12\sigma^2)\Delta}{\sigma\sqrt{\Delta}},
\qquad
d_2(K)\defeq d_1(K)-\sigma\sqrt{\Delta}.
\end{equation}
In particular, taking $K=p_i^{(z)}(x)$ in \eqref{eq:call_expectation_physical}--\eqref{eq:d1d2_physical} yields an explicit expression
for \eqref{eq:cvar_nonstressed}.
\end{lemma}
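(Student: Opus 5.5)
The plan is to rely on three facts: comonotone additivity, already noted in Section~\ref{subsec:cvar}; the identity $\mathrm{VaR}_\beta(\sigma_i(x,p_T))=\sigma_i(x,p_\beta)$ from \eqref{eqn:quantile.sigma}; and the explicit shortfall formula of Lemma~\ref{lem:sigma_explicit}. The first claim is that CVaR equals the conditional tail mean. Since $p_T$ is atomless and $p\mapsto\sigma_i(x,p)$ is continuous and nondecreasing, the quantile function of $\sigma_i(x,p_T)$ at level $u$ is $\sigma_i(x,\mathrm{VaR}_u(p_T))$. Substituting $u=F_T(p)$ in the integral definition then gives
\[
\mathrm{CVaR}_\beta(\sigma_i)=\frac{1}{1-\beta}\E\bigl[\sigma_i(x,p_T)\I{p_T\ge p_\beta}\bigr]=\E[\sigma_i(x,p_T)\mid p_T\ge p_\beta].
\]
This step does not rely on the no-atom remark for $Z$ itself. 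That matters, because $\sigma_i$ may have an atom at $0$, and when $p_i^{(z)}(x)>p_\beta$ the set $\{\sigma_i\ge \mathrm{VaR}_\beta\}$ is strictly larger than the tail event. Handling this correctly through the quantile-function substitution is the only subtle point.

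For case (i), where $p_i^{(z)}(x)\le p_\beta$, the tail event has $p_T\ge p_\beta\ge p_i^{(z)}(x)$. On that event Lemma~\ref{lem:sigma_explicit} reduces to $\sigma_i=(q_i-x)(p_T-p_i^{(z)}(x))$ with no positive part, so
\[
\E[\sigma_i\mid p_T\ge p_\beta]=(q_i-x)\bigl(\mu_{\mathrm{tail}}-p_i^{(z)}(x)\bigr),
\]
where $\mu_{\mathrm{tail}}$ comes from Lemma~\ref{lem:gbm_tail_mean}. For the affine form, I will substitute $(q_i-x)p_i^{(z)}(x)=q_ip_i^{(e)}+m_i-xp_\tau$, which is the defining formula in the proof of Lemma~\ref{lem:sigma_explicit}, and collect terms.

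For case (ii), where $p_i^{(z)}(x)\ge p_\beta$, the shortfall $\sigma_i$ vanishes whenever $p_T<p_\beta$. Hence $\E[\sigma_i\I{p_T\ge p_\beta}]=\E[\sigma_i]=(q_i-x)\E[(p_T-p_i^{(z)}(x))_+]$, and dividing by $1-\beta$ gives \eqref{eq:cvar_nonstressed}.

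The call formula is the standard lognormal computation. I will write $\E[(p_T-K)_+]=\E[p_T\I{p_T\ge K}]-K\PR(p_T\ge K)$, note that $\{p_T\ge K\}=\{Z\ge -d_2(K)\}$, and apply the identity $\E[e^{sZ}\I{Z\ge a}]=e^{s^2/2}\Phi(s-a)$ with $s=\sigma\sqrt\Delta$, exactly as in Lemma~\ref{lem:gbm_tail_mean}. This produces $p_\tau e^{\mu\Delta}\Phi(d_2+s)=p_\tau e^{\mu\Delta}\Phi(d_1)$ and $K\Phi(d_2)$. Finally, $p_i^{(z)}(x)=p_\tau+E_i/(q_i-x)>0$ by Assumption~\ref{ass:solvent}, so $K=p_i^{(z)}(x)$ is admissible.
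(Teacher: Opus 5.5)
Your proposal is correct and follows the paper's proof essentially step for step: the identity $\mathrm{VaR}_u(\sigma_i(x,p_T))=\sigma_i(x,p_u)$ combined with the substitution $U=F(p_T)$ in the quantile-integral definition (which correctly handles the atom of $\sigma_i$ at $0$), then the same two-case split, and finally the standard lognormal call computation, which you write out where the paper only cites it. The only superfluous ingredient is comonotone additivity, which is not needed because the statement concerns a single account.
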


\begin{proof}
We first note from \cref{lem:sigma_explicit} that $p\mapsto\sigma_i(x,p)$ is nondecreasing and continuous. Since $p_T$ is atomless under
\eqref{eq:gbm_model}, the $\beta$-quantile of $\sigma_i(x,p_T)$ is $\sigma_i(x,p_\beta)$, giving \eqref{eq:alpha_star_gbm}.

Although $\sigma_i(x,p_T)$ can have an atom at $0$ (since $\sigma_i(x,p_T)=0$ on $\{p_T\le p_i^{(z)}(x)\}$),
we can still express $\mathrm{CVaR}_\beta$ as a conditional expectation on the \emph{price} tail $\{p_T\ge p_\beta\}$.
Let $F(p)\defeq \PR(p_T\le p)$ denote the cdf of $p_T$; under \eqref{eq:gbm_model}, $F$ is continuous.
For $u\in(0,1)$ define the (left) quantile
\[
p_u \defeq \mathrm{VaR}_u(p_T)=\inf\{p\in\R:\ F(p)\ge u\}.
\]
Set $U\defeq F(p_T)$. Since $F$ is continuous, $U$ is uniformly distributed on $(0,1)$ and $p_T=p_U$ with probability one.

Now write $K\defeq p_i^{(z)}(x)$ and recall $\sigma_i(x,p_T)=(q_i-x)\,(p_T-K)_+$.
For $y\ge 0$ we have
\[
\PR\bigl((p_T-K)_+\le y\bigr)=\PR(p_T\le K+y)=F(K+y),
\]
hence, for all $u\in(0,1)$,
\[
\mathrm{VaR}_u\bigl((p_T-K)_+\bigr)=(p_u-K)_+,
\qquad\text{and therefore}\qquad
\mathrm{VaR}_u\bigl(\sigma_i(x,p_T)\bigr)=(q_i-x)\,(p_u-K)_+=\sigma_i(x,p_u).
\]
Using the quantile-integral definition of $\mathrm{CVaR}_\beta$,
\[
\mathrm{CVaR}_\beta\bigl(\sigma_i(x,p_T)\bigr)
=\frac{1}{1-\beta}\int_\beta^1 \mathrm{VaR}_u\bigl(\sigma_i(x,p_T)\bigr)\,du
=\frac{1}{1-\beta}\int_\beta^1 \sigma_i(x,p_u)\,du.
\]
Since $U$ is uniform on $(0,1)$, for any integrable function $g$ one has
$\E[g(U)\mid U\ge \beta]=\frac{1}{1-\beta}\int_\beta^1 g(u)\,du$.
Applying this with $g(u)=\sigma_i(x,p_u)$ yields
\[
\frac{1}{1-\beta}\int_\beta^1 \sigma_i(x,p_u)\,du
=
\E\bigl[\sigma_i(x,p_U)\,\big|\,U\ge \beta\bigr]
=
\E\bigl[\sigma_i(x,p_T)\,\big|\,p_T\ge p_\beta\bigr],
\]
where the last equality uses $p_T=p_U$ with probability one and $\{U\ge \beta\}=\{p_T\ge p_\beta\}$ (by continuity and monotonicity of $F$).

If $p_i^{(z)}(x)\le p_\beta$, then $(p_T-p_i^{(z)}(x))_+=p_T-p_i^{(z)}(x)$ on $\{p_T\ge p_\beta\}$, so using \cref{lem:sigma_explicit} we obtain
\[
\mathrm{CVaR}_\beta(\sigma_i(x,p_T))
=
(q_i-x)\E\bigl[p_T-p_i^{(z)}(x)\,\big|\,p_T\ge p_\beta\bigr]
=
(q_i-x)\bigl(\mu_{\mathrm{tail}}-p_i^{(z)}(x)\bigr),
\]
which is \eqref{eq:cvar_stressed}. The affine form \eqref{eq:cvar_stressed_affine} follows by substituting
$(q_i-x)p_i^{(z)}(x)=q_i p_i^{(e)}+m_i-x p_\tau$.

If $p_i^{(z)}(x)\ge p_\beta$, then on the event $\{p_T<p_\beta\}$ we also have
$p_T<p_i^{(z)}(x)$, and therefore
\[
\sigma_i(x,p_T)=(q_i-x)\bigl(p_T-p_i^{(z)}(x)\bigr)_+=0.
\]
Hence, using the conditional-tail representation established above,
\[
\mathrm{CVaR}_\beta\bigl(\sigma_i(x,p_T)\bigr)
=
\E\bigl[\sigma_i(x,p_T)\mid p_T\ge p_\beta\bigr]
=
\frac{1}{1-\beta}\E\bigl[\sigma_i(x,p_T)\bigr].
\]
Substituting $\sigma_i(x,p_T)=(q_i-x)\bigl(p_T-p_i^{(z)}(x)\bigr)_+$ yields
\[
\mathrm{CVaR}_\beta\bigl(\sigma_i(x,p_T)\bigr)
=
\frac{q_i-x}{1-\beta}\E\bigl[(p_T-p_i^{(z)}(x))_+\bigr],
\]
which is \eqref{eq:cvar_nonstressed}.
Finally, \eqref{eq:call_expectation_physical}--\eqref{eq:d1d2_physical} are obtained by a standard lognormal
calculation (equivalently, the Black--Scholes call expectation with drift $\mu$ rather than the risk-free rate).
\end{proof}

\section{Details and Proofs for Multi-Asset Cross-Margining}\label{app:one_factor_proofs}

\subsection{Example: Distinct Optimizers for Different Risk Measures}\label{app:distinct.opt.cvar}
In this section, we construct an example where two risk measures $\mathrm{CVaR}_{\beta_j}$ with $\beta_1\neq\beta_2$ lead to different (and unique) optimal allocations.

Consider $d=2$ assets and $n=2$ accounts. Let the ADL time be $\tau$ with reference prices
\[
p_\tau=(1,1),
\qquad
p_i^{(e)}=p_\tau\ \ \text{for }i=1,2,
\]
and write the price changes as $r = p_T-p_\tau$. Then for any ADL allocation $x_i\in\R^2$,
\[
e_i(x_i,p_T)=m_i-(q_i-x_i)^\top r,
\qquad
\sigma_i(x_i,p_T)=\big((q_i-x_i)^\top r-m_i\big)_+,
\qquad
\Lscr(x,p_T)=\sum_{i=1}^2 \sigma_i(x_i,p_T).
\]
Take the signed position vectors (all shorts)
$q_1=(10,0)$, $q_2=(10,10)$,
and margins $m_1=18$, $m_2=40$.
Assume the exchange must reduce aggregate exposure by
$Q=(10,0)$.
By monotone deleveraging, $0\le x_i^k\le q_i^k$ for all $i,k$ and $x_1+x_2=Q$. Moreover, feasibility allows no changes in positions for the second asset. As a result, we can parameterize the feasible set by
\[
a = x_1^1\in[0,10],
\qquad
x_1=(a,0),\qquad x_2=(10-a,0).
\]
Then the post-ADL positions are
\[
q_1-x_1=(10-a,0),\qquad q_2-x_2=(a,10).
\]
Let $p_T$ take three values (expressed via $r=p_T-p_\tau$) with the following probabilities:
\[
r_0=(0,0)\ \text{ w.p. }0.90,\qquad
r_1=(3,0)\ \text{ w.p. }0.05,\qquad
r_2=(1,4)\ \text{ w.p. }0.05.
\]

\begin{example}
    In the stated model, the $\mathrm{CVaR}_\beta$-optimal ADL allocation depends on $\beta$. Specifically, for $\beta\in\{0.90,0.95\}$, the optimizers are unique and given by 
\[
a^\star_{0.90}=4 \neq 3=a^\star_{0.95}.
\]
Equivalently, the optimal allocations are, respectively,
\[
x_1(a^\star_{0.90})=(4,0),\quad x_2(a^\star_{0.90})=(6,0)
\qquad\text{and}\qquad
x_1(a^\star_{0.95})=(3,0),\quad x_2(a^\star_{0.95})=(7,0).
\]   
\end{example}

\begin{proof}
For $a\in[0,10]$, write $x(a)$ for the induced allocation and denote the losses in each scenario by $\Lscr_j(a)= \Lscr\bigl(x(a),\,p_\tau+r_j\bigr)$, $j=0,1,2$. 
In scenario $r_1=(3,0)$,
\[
\Lscr_1(a)
=
\big(3(10-a)-18\big)_+ + \big(3a-40\big)_+
=
(12-3a)_+,
\qquad a\in[0,10],
\]
since $3a\le 30<40$ on $[0,10]$.
In scenario $r_2=(1,4)$,
\[
\Lscr_2(a)
=
\big((10-a)-18\big)_+ + \big(a+10\cdot 4-40\big)_+
=
0 + (a)_+
=
a,
\qquad a\in[0,10],
\]
and in scenario $r_0=(0,0)$ we have $\Lscr_0(a)=0$ for all $a$.
Because $\mathbb{P}(r=r_0)=0.90$ and $\Lscr_0(a)=0$ for all $a$, while
$\mathbb{P}(r\in\{r_1,r_2\})=0.10$ and
\[
\Lscr_1(a)=(12-3a)_+\ge 0,
\qquad
\Lscr_2(a)=a\ge 0,
\qquad a\in[0,10],
\]
the loss is identically zero in scenario $0$ and can be positive only in scenarios $1$ and $2$
(which together have total probability $0.10$). 
In particular, $\mathrm{VaR}_{0.90}(\Lscr(a))=0$,
so the $10\%$ upper tail at level $\beta=0.90$ can be identified, for the purpose of computing $\mathrm{CVaR}_{0.90}$, with scenarios $1$ and $2$.\footnote{Note that there is an atom at $0$, so the $10\%$ tail is not always uniquely determined: it consists of all scenarios with loss strictly greater than $0$, together with an arbitrary selection of additional probability mass at loss $0$ to reach total mass $0.10$. In particular, we may choose a representative tail by taking scenarios $1$ and $2$ (and, when either loss is $0$ for a given $a$, interpreting the scenario as part of the mass at the VaR level).} 

Hence, for $\beta=0.90$,
\[
\mathrm{CVaR}_{0.90}(\Lscr(a))
=
\frac{0.05\,\Lscr_1(a)+0.05\,\Lscr_2(a)}{0.10}
=
\frac{\Lscr_1(a)+\Lscr_2(a)}{2}
=
\frac{(12-3a)_+ + a}{2}.
\]
If $a\le 4$, then $(12-3a)_+=12-3a$ and $\mathrm{CVaR}_{0.90}(\Lscr(a))=6-a$;
if $a\ge 4$, then $(12-3a)_+=0$ and $\mathrm{CVaR}_{0.90}(\Lscr(a))=a/2$.
Thus the unique minimizer is $a^\star_{0.90}=4$.

For $\beta=0.95$, the tail mass is $5\%$; by the quantile-integral definition of CVaR, since each nonzero scenario has probability $0.05$, we have
\[
\mathrm{CVaR}_{0.95}(\Lscr(a))=\max\{\Lscr_1(a),\Lscr_2(a)\}
=\max\{(12-3a)_+,\ a\}.
\]
For $a\le 4$, this equals $\max\{12-3a,\ a\}$, minimized when $12-3a=a$, i.e.\ at $a=3$,
yielding value $3$. For $a\ge 4$, it equals $a$, which is minimized at $a=4$ with value $4$.
Therefore the unique minimizer is
$a^\star_{0.95}=3$.
\end{proof}

\subsection{Proof of Proposition~\ref{prop:market_clearing_kkt}}\label{app:proof.mkt.clearing}

We consider \eqref{eq:adl_multiasset_exp_loss} and write
\[
f_i(x_i)\defeq \E\!\left[\sigma_i(x_i,p_T)\right],
\qquad
f(x)\defeq \E\!\left[\Lscr(x,p_T)\right]=\sum_{i=1}^n f_i(x_i).
\]
For each fixed $p_T\in\R^d$, equity $e_i(x_i,p_T)$ is affine in $x_i$, hence
$\sigma_i(x_i,p_T)=(-e_i(x_i,p_T))_+$ is convex and piecewise affine in $x_i$.
Therefore $f_i(x_i)=\E[\sigma_i(x_i,p_T)]$ is convex as an expectation of convex functions, and it is finite under
Assumption~\ref{ass:integrable_pT_cross}.

Recall the definition of $\mathcal Y_i$ from \eqref{eq:per.acct.feasible} and define $\mathcal Y \defeq \prod_{i=1}^n \mathcal Y_i$,
so that the feasible set is equivalently represented as $\mathcal X = \{x\in\mathcal Y:\ \sum_{i=1}^n x_i = Q\}$.
Each $\mathcal Y_i$ is nonempty, closed, and bounded, hence compact, and Assumption~\ref{ass:cross_margin_feasible} implies
$\mathcal X\neq\emptyset$. Since $f$ is convex and finite on $\mathcal Y$ and $\mathcal X$ is compact, an optimizer of
\eqref{eq:adl_multiasset_exp_loss} exists.

We now turn to the characterization of the solution. Rewrite the primal problem as
\begin{equation}\label{eq:expected_primal_for_proof}
\min_{x\in\mathcal Y}\ \sum_{i=1}^n f_i(x_i)
\qquad \text{s.t.}\qquad
\sum_{i=1}^n x_i = Q .
\end{equation}
Introduce a multiplier $\lambda\in\R^d$ for the coupling constraint and form the partial Lagrangian
\[
\widehat{\mathcal L}(x,\lambda)
\defeq
\sum_{i=1}^n f_i(x_i) + \lambda^\top\Big(\sum_{i=1}^n x_i - Q\Big)
=
-\lambda^\top Q + \sum_{i=1}^n \bigl(f_i(x_i)+\lambda^\top x_i\bigr).
\]
Define the dual function $g(\lambda)\defeq \min_{x\in\mathcal Y}\widehat{\mathcal L}(x,\lambda)$ and the per-account value functions
\[
\phi_i(\lambda)\defeq \min_{x_i\in\mathcal Y_i}\bigl\{f_i(x_i)+\lambda^\top x_i\bigr\},
\qquad i=1,\dots,n,
\]
so that
\[
g(\lambda)= -\lambda^\top Q + \sum_{i=1}^n \phi_i(\lambda),
\qquad
\max_{\lambda\in\R^d} g(\lambda)
=
\max_{\lambda\in\R^d}\min_{x\in\mathcal Y}\widehat{\mathcal L}(x,\lambda).
\]
Note that $\mathcal Y$ is compact so a minimizer always exists. Moreover, each $\phi_i(\cdot)$ is the pointwise minimum over $x_i\in\mathcal Y_i$ of affine functions of $\lambda$. As a result, the functions $\phi_i(\lambda)$ and $g(\lambda)$ are concave. 

If for some $k$ we have $Q^k=\sum_i l_i^k$ or $Q^k=\sum_i u_i^k$, then feasibility forces $x_i^k=l_i^k$ or $x_i^k=u_i^k$
for all $i$, so these coordinates are effectively equality constraints and may be treated as pinned.
On the remaining coordinates where $Q^k\in(\sum_i l_i^k,\sum_i u_i^k)$, one can construct a feasible $\bar x\in\mathcal X$
with $l_i^k<\bar x_i^k<u_i^k$ whenever $u_i^k>l_i^k$ as follows. Since $\sum_{i=1}^n (u_i^k-l_i^k)>0$ we can define
\[
\alpha^k \defeq
\frac{Q^k-\sum_{i=1}^n l_i^k}{\sum_{i=1}^n (u_i^k-l_i^k)} \in (0,1),
\qquad
\bar x_i^k \defeq l_i^k + \alpha^k (u_i^k-l_i^k).
\]
Hence $\mathcal X$ has nonempty relative interior in its affine hull, which yields a Slater-type constraint qualification
for the polyhedral constraints. Consequently strong duality holds for \eqref{eq:expected_primal_for_proof}, and there exists
a primal--dual optimal pair $(x^\star,\lambda^\star)$ forming a saddle point of $\widehat{\mathcal L}$.

Recall the best-response correspondence $X_i(\lambda)$ from \eqref{eq:best.resp.cross.margin}.
Since $\mathcal Y$ is a product set and $\widehat{\mathcal L}$ separates across $i$, we have
\[
\argmin_{x\in\mathcal Y}\ \widehat{\mathcal L}(x,\lambda)
=
\prod_{i=1}^n X_i(\lambda).
\]
Moreover, $\widehat{\mathcal L}$ is differentiable in $\lambda$ with
\[
\nabla_\lambda \widehat{\mathcal L}(x,\lambda)=\sum_{i=1}^n x_i - Q.
\]
Because $g(\lambda)=\min_{x\in\mathcal Y}\widehat{\mathcal L}(x,\lambda)$ is concave, its superdifferential satisfies an extension of Danskin's theorem (cf.~\cite[Thm.~9.27]{shapiro2021lectures}),
\begin{equation}\label{eq:superdiff_g_separable_proof}
\partial g(\lambda)
=
\mathrm{conv}\Big\{\sum_{i=1}^n x_i - Q:\ x_i\in X_i(\lambda)\ \forall i\Big\}
\end{equation}
where $\mathrm{conv}\{\cdot\}$ denotes the convex hull.
Since $\mathcal Y_i$ is compact and $x_i\mapsto f_i(x_i)+\lambda^\top x_i$ is convex and continuous,
$X_i(\lambda)$ is nonempty and convex for each $i$.
It is then straightforward to verify that the convex hull in
\eqref{eq:superdiff_g_separable_proof} is redundant. Thus
\[
\partial g(\lambda)
=
\Big\{\sum_{i=1}^n x_i - Q:\ x_i\in X_i(\lambda)\ \forall i\Big\}.
\]
Since $g(\cdot)$ is a finite concave function, $\lambda^\star$ is an unconstrained maximizer of $g(\cdot)$ if and only if it satisfies $0\in\partial g(\lambda^\star)$. This is, in turn, equivalent to the
existence of selections $x_i^\star\in X_i(\lambda^\star)$ such that $\sum_{i=1}^n x_i^\star=Q$.

Finally, by strong duality, $(x^\star,\lambda^\star)$ is primal--dual optimal if and only if it is a saddle point of
$\widehat{\mathcal L}$, which holds if and only if $x^\star\in\argmin_{x\in\mathcal Y}\ \widehat{\mathcal L}(x,\lambda^\star)$
and $\sum_i x_i^\star=Q$. Using the product characterization of the minimizers yields exactly the stated condition
\[
x_i^\star\in X_i(\lambda^\star)\ \ \text{for all }i,
\qquad\text{and}\qquad
\sum_{i=1}^n x_i^\star=Q.
\]
This proves Proposition~\ref{prop:market_clearing_kkt}. \qed

\subsection{Properties of $\psi(\cdot)$} \label{app:prop.psi}

\begin{proposition}\label{prop:psi_strictly_increasing}
Under Assumption~\ref{ass:single_factor}, $\psi(z)\defeq \E\big[(\epsilon z-1)_+\big]$ is convex and continuously differentiable on $\mathbb R$, with derivative
\begin{equation}\label{eq:psi_derivative_formula}
\psi'(z)=\E\big[\epsilon\,\mathbf 1_{\{\epsilon z>1\}}\big],\qquad z\in\mathbb R.
\end{equation}
Moreover, $\psi'$ is \emph{strictly increasing} on $\mathbb R$.
\end{proposition}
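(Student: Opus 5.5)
The plan is to handle convexity, differentiability, continuity of the derivative, and strict monotonicity in turn, using only that $\epsilon$ is integrable with a strictly positive density $f_\epsilon$ on $\R$ (Assumption~\ref{ass:single_factor}).

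\emph{Convexity and finiteness.} For each fixed realization, $z\mapsto(\epsilon z-1)_+$ is convex, being a convex nondecreasing function of an affine map. Its absolute value is bounded by $|\epsilon||z|+1$, so $\psi$ is finite and convex on $\R$.

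\emph{Differentiability.} I would compute the derivative by dominated convergence on difference quotients. For fixed $z$, the map $h\mapsto(\epsilon(z+h)-1)_+$ is differentiable at $h=0$ except on the event $\{\epsilon z=1\}$. For $z\neq0$ this event is $\{\epsilon=1/z\}$, which is null because $\epsilon$ has a density. For $z=0$ the event is empty, and the pathwise derivative there is $\epsilon\mathbf 1_{\{-1>0\}}=0$. Off this null set, the pathwise derivative is $\epsilon\mathbf 1_{\{\epsilon z>1\}}$. The difference quotients are bounded by $|\epsilon|$, since $u\mapsto u_+$ is $1$-Lipschitz, and $|\epsilon|$ is integrable. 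Dominated convergence then gives \eqref{eq:psi_derivative_formula}.

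\emph{Continuity of $\psi'$.} Take $z_n\to z$. The integrand $\epsilon\mathbf 1_{\{\epsilon z_n>1\}}$ converges to $\epsilon\mathbf 1_{\{\epsilon z>1\}}$ pointwise off the same null set $\{\epsilon z=1\}$, and it is dominated by $|\epsilon|$. Dominated convergence again gives continuity. Alternatively, a convex function that is differentiable everywhere is automatically $C^1$, which gives the same conclusion.

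\emph{Strict monotonicity.} This is the main step. Take $z_1<z_2$ and split into cases by sign.
\begin{itemize}
\item If $0<z_1<z_2$, then $\psi'(z_2)-\psi'(z_1)=\E[\epsilon\mathbf 1_{\{1/z_2<\epsilon\le1/z_1\}}]$. This is an integral of $\epsilon f_\epsilon$ over an interval of positive length inside $(0,\infty)$, so it is strictly positive.
\item If $z_1<z_2<0$, the event is $\{\epsilon<1/z\}$, with $1/z$ decreasing in $z$ on $(-\infty,0)$. So $1/z_2<1/z_1<0$, and $\psi'(z_2)-\psi'(z_1)=-\E[\epsilon\mathbf 1_{\{1/z_2\le\epsilon<1/z_1\}}]$. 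The integrand is $\epsilon<0$ on an interval of positive length, so the difference is again strictly positive.
\item If zero lies in $[z_1,z_2]$, use $\psi'(0)=0$. For $z>0$, $\psi'(z)=\E[\epsilon\mathbf 1_{\{\epsilon>1/z\}}]>0$ because $f_\epsilon>0$. For $z<0$, $\psi'(z)=\E[\epsilon\mathbf 1_{\{\epsilon<1/z\}}]<0$. So $\psi'(z_1)\le0\le\psi'(z_2)$, with at least one inequality strict because $z_1\neq z_2$.
\end{itemize}
The only real care in the whole argument is keeping track of the sign-dependent direction of the indicator event in these cases. The positivity of the density on all of $\R$ is exactly what makes each interval integral strictly nonzero.
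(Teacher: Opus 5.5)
Your proof is correct. The overall structure and the three-case strict-monotonicity argument match the paper's; the real difference is in how you obtain the derivative formula and the continuity of $\psi'$.

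The paper writes $\psi(z)=\int_{1/z}^\infty (xz-1)\varphi_\epsilon(x)\,dx$ for $z>0$, and the analogous integral over $(-\infty,1/z)$ for $z<0$. It differentiates by Leibniz's rule; the boundary term vanishes because the integrand is zero at $x=1/z$. It then treats $z=0$ separately: it shows that both one-sided limits of $\psi'$ at $0$ vanish, and invokes right/left continuity of the one-sided derivatives of a convex function to conclude $\psi'(0)=0$.

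You instead differentiate under the expectation by dominated convergence on difference quotients. You use the uniform bound $|\epsilon|$ and the observation that the kink event $\{\epsilon z=1\}$ is null for $z\neq 0$ and empty for $z=0$. This has three advantages:
\begin{itemize}
\item It handles every $z$, including $0$, in one step.
\item It shows that differentiability needs only integrability and atomlessness of $\epsilon$; positivity of the density enters only in the strictness part.
\item It gives continuity of $\psi'$ by a second dominated-convergence argument, rather than by piecing together the two half-lines.
\end{itemize}
Your route is also slightly more self-contained, since applying Leibniz's rule on an unbounded interval itself tacitly needs a dominated-convergence justification. The paper's route, in exchange, yields the explicit tail-integral formulas for $\psi'$ directly, whereas you recover them from the indicator form when doing the case analysis.
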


\begin{proof}
For each fixed $x\in\mathbb R$, the map $z\mapsto (xz-1)_+$ is convex. Therefore
\[
\psi(z)=\E\big[(\epsilon z-1)_+\big]
\]
is convex as an expectation of convex functions.

Since under Assumption \ref{ass:single_factor} $\epsilon$ has a strictly positive density $\varphi_\epsilon$ and $\E[|\epsilon|]<\infty$, we can write
\[
\psi(z)=\int_{\mathbb R} (\;xz-1\;)_{+}\,\varphi_\epsilon(x)\,dx.
\]
For $z>0$, the condition $xz-1>0$ is equivalent to $x>1/z$, hence
\begin{equation}\label{eq:psi_r_pos}
\psi(z)=\int_{1/z}^{\infty} (xz-1)\,\varphi_\epsilon(x)\,dx,\qquad z>0.
\end{equation}
For $z<0$, the condition $xz-1>0$ is equivalent to $x<1/z$, hence
\begin{equation}\label{eq:psi_r_neg}
\psi(z)=\int_{-\infty}^{1/z} (xz-1)\,\varphi_\epsilon(x)\,dx,\qquad z<0.
\end{equation}
Finally, $\psi(0)=\E[(-1)_+]=0$.

We differentiate \eqref{eq:psi_r_pos} for $z>0$ using Leibniz' rule. The boundary term vanishes and we obtain
\[
\psi'(z)=\int_{1/z}^\infty x\,\varphi_\epsilon(x)\,dx,\qquad z>0.
\]
Similarly, differentiating \eqref{eq:psi_r_neg} for $z<0$ yields
\[
\psi'(z)=\int_{-\infty}^{1/z} x\,\varphi_\epsilon(x)\,dx,\qquad z<0.
\]
At $z=0$, we compute the one-sided limits:
\[
\lim_{z\downarrow 0}\psi'(z)=\lim_{z\downarrow 0}\int_{1/z}^\infty x\varphi_\epsilon(x)\,dx=0,
\qquad
\lim_{z\uparrow 0}\psi'(z)=\lim_{z\uparrow 0}\int_{-\infty}^{1/z} x\varphi_\epsilon(x)\,dx=0,
\]
using $\E[|\epsilon|]<\infty$ and dominated convergence (as $1/z\to+\infty$ for $z\downarrow 0$ and
$1/z\to-\infty$ for $z\uparrow 0$).
By the right (resp.\ left) continuity of right (resp.\ left) derivatives of convex functions, we conclude that $\psi'$ exists at $0$ with $\psi'(0)=0$. Altogether, we see that $\psi'$ is continuous on $\mathbb R$.

To verify \eqref{eq:psi_derivative_formula}, note that for any $z> 0$,
\[
\int_{1/z}^{\infty} x\,\varphi_\epsilon(x)\,dx
=\int_{\mathbb R} x\,\mathbf 1_{\{xz>1\}}\varphi_\epsilon(x)\,dx,
\]
and similarly for $z<0$, so the integral expressions above are exactly
$\E[\epsilon\,\mathbf 1_{\{\epsilon z>1\}}]$. For $z=0$ the indicator is identically $0$,
so \eqref{eq:psi_derivative_formula} holds for all $z\in\mathbb R$.

We finally show strict monotonicity of $\psi'$.
Let $z<z'$.

\smallskip
\noindent\textbf{Case 1: $0<z<z'$.}
Then $1/z' < 1/z$ and
\[
\psi'(z')-\psi'(z)
=
\int_{1/z'}^\infty x\varphi_\epsilon(x)\,dx-\int_{1/z}^\infty x\varphi_\epsilon(x)\,dx
=
\int_{1/z'}^{1/z} x\varphi_\epsilon(x)\,dx.
\]
On $(1/z',1/z)$ we have $x>0$ and $\varphi_\epsilon(x)>0$, hence the integral is strictly positive.

\smallskip
\noindent\textbf{Case 2: $z<z'<0$.}
Then $1/z' < 1/z <0$ and
\[
\psi'(z')-\psi'(z)
=
\int_{-\infty}^{1/z'} x\varphi_\epsilon(x)\,dx-\int_{-\infty}^{1/z} x\varphi_\epsilon(x)\,dx
=
-\int_{1/z'}^{1/z} x\varphi_\epsilon(x)\,dx.
\]
On $(1/z',1/z)\subset(-\infty,0)$ we have $x<0$ and $\varphi_\epsilon(x)>0$, so
$-\int_{1/z'}^{1/z} x\varphi_\epsilon(x)\,dx>0$.

\smallskip
\noindent\textbf{Case 3: $z<0<z'$.}
Then $\psi'(z)=\int_{-\infty}^{1/z} x\varphi_\epsilon(x)\,dx<0$ (as the integrand is strictly negative),
while $\psi'(z')=\int_{1/z'}^\infty x\varphi_\epsilon(x)\,dx>0$, hence $\psi'(z')>\psi'(z)$.

\smallskip
If $z=0<z'$, then $\psi'(0)=0<\psi'(z')$. If $z<0= z'$, then $\psi'(z)<0=\psi'(0)$.

\smallskip
In all cases, $z<z'$ implies $\psi'(z')>\psi'(z)$, hence $\psi'$ is strictly increasing on $\mathbb R$, by continuity.
\end{proof}

\subsection{Proof of Theorem~\ref{thm:factor_filling_verification}}

By Proposition~\ref{prop:psi_strictly_increasing} $\psi$ is strictly convex. Fix $\eta\in\R$ and define $h_\eta(z)\defeq \psi(z)-\eta z$. Since $\psi$ is strictly convex, $h_\eta$ is strictly convex, so
the minimizer of $h_\eta$ over the interval $[\underline\ell_i,\overline\ell_i]$ is unique; this proves the first part of~(i). The standard first-order/KKT conditions for minimizing a differentiable convex function over an interval lead to the cases
\[
\begin{cases}
\psi'(\ell_i^\star(\eta))=\eta, & \underline\ell_i<\ell_i^\star(\eta)<\overline\ell_i,\\
\psi'(\underline\ell_i)\ge \eta, & \ell_i^\star(\eta)=\underline\ell_i,\\
\psi'(\overline\ell_i)\le \eta, & \ell_i^\star(\eta)=\overline\ell_i.
\end{cases}
\]
Since $\psi'$ is continuous and strictly increasing, this is equivalent to the clipped rule
\eqref{eq:water_fill_piecewise_main}, proving (ii). By using that $(\psi')^{-1}$ must also be strictly increasing on the range of $\psi'$, the same characterization implies that $\eta\mapsto\ell_i^\star(\eta)$ is
continuous and nondecreasing, proving the second part of~(i).

Next, define
\[
H(\eta)\defeq \sum_{i=1}^n E_i\,\ell_i^\star(\eta),
\qquad
R\defeq v^\top\Big(\sum_{i=1}^n q_i-Q\Big).
\]
By (i), $H$ is continuous and nondecreasing. Moreover, by \eqref{eq:water_fill_piecewise_main},
\[
\lim_{\eta\to-\infty}H(\eta)=\sum_{i=1}^n E_i\,\underline\ell_i,
\qquad
\lim_{\eta\to+\infty}H(\eta)=\sum_{i=1}^n E_i\,\overline\ell_i.
\]

In fact, these endpoint values are attained. If
\[
\eta \le \min_{1\le i\le n}\psi'(\underline\ell_i),
\]
then \eqref{eq:water_fill_piecewise_main} implies
\[
\ell_i^\star(\eta)=\underline\ell_i
\qquad\text{for all }i,
\]
so that
\[
H(\eta)=\sum_{i=1}^n E_i\,\underline\ell_i.
\]
Similarly, if
\[
\eta \ge \max_{1\le i\le n}\psi'(\overline\ell_i),
\]
then
\[
\ell_i^\star(\eta)=\overline\ell_i
\qquad\text{for all }i,
\]
and hence
\[
H(\eta)=\sum_{i=1}^n E_i\,\overline\ell_i.
\]

Since $\mathcal X\neq\emptyset$ by Assumption~\ref{ass:cross_margin_feasible}, pick any feasible $x\in\mathcal X$. Then for each
$i$, $\ell_i^{(v)}(x_i)\in[\underline\ell_i,\overline\ell_i]$, so
\[
\sum_{i=1}^n E_i\,\underline\ell_i \;\le\; \sum_{i=1}^n E_i\,\ell_i^{(v)}(x_i)\;\le\;\sum_{i=1}^n E_i\,\overline\ell_i.
\]
Using the identity $\sum_i E_i\ell_i^{(v)}(x_i)=\sum_i v^\top(q_i-x_i)=v^\top(\sum_i q_i-Q)=R$, we obtain
\[
\sum_{i=1}^n E_i\,\underline\ell_i \;\le\; R \;\le\; \sum_{i=1}^n E_i\,\overline\ell_i.
\]
By the intermediate value theorem, there exists $\eta^\star$ such that $H(\eta^\star)=R$, proving (iii).

Fix $\eta^\star$ satisfying \eqref{eq:eta_budget_main} and suppose there exists $x^\star\in\mathcal X$ with
$\ell_i^{(v)}(x_i^\star)=\ell_i^\star(\eta^\star)$ for all $i$.  Set the shadow-price vector $\lambda^\star\defeq \eta^\star v$ and recall \eqref{eqn:equity_factor_lvg}.
For each $i$ and any $x_i\in\mathcal Y_i$ we have
\[
\E[\sigma_i(x_i,p_T)]+(\lambda^\star)^\top x_i
=
E_i\psi(\ell_i^{(v)}(x_i))+\eta^\star v^\top x_i
=
\eta^\star v^\top q_i + E_i\big(\psi(\ell_i^{(v)}(x_i))-\eta^\star \ell_i^{(v)}(x_i)\big),
\]
so minimizing $f_i(x_i)+(\lambda^\star)^\top x_i$ over $x_i\in\mathcal Y_i$ is equivalent to minimizing
$\psi(z)-\eta^\star z$ over $z\in[\underline\ell_i,\overline\ell_i]$. Hence any $x_i\in\mathcal Y_i$ satisfying
$\ell_i^{(v)}(x_i)=\ell_i^\star(\eta^\star)$ is a best response to $\lambda^\star$. In particular,
$x_i^\star$ is a best response for each $i$, and the clearing constraint $\sum_i x_i^\star=Q$ holds because $x^\star\in\mathcal X$.
Therefore, by Proposition~\ref{prop:market_clearing_kkt},
$x^\star$ is optimal for \eqref{eq:adl_multiasset_exp_loss}.

\subsection{Example Where Water-Filling Fails}\label{app:water.filling.fail}

In general, the factor water-filling construction with clipping can fail to produce a solution because it enforces only the scalar constraint from \eqref{eq:sum_factor_fixed},
\[
\sum_i E_i\ell_i^{(v)}(x_i)=v^\top\left(\sum_{i=1}^nq_i -Q\right),
\]
and cannot generally satisfy the \emph{vector} clearing constraint $\sum_i x_i=Q$. We provide a simple illustrative counterexample here.

\begin{example} Suppose there are $d=2$ assets and $n=2$ accounts. Assume the single-factor model
\[
p_T=p_\tau+\epsilon v,\qquad v=(1,1),
\]
where $\epsilon$ satisfies Assumption~\ref{ass:single_factor}. We impose that the exchange has the ADL requirement
\[
Q=(Q^1,Q^2)=(0.2,0.8).
\]
Take equal initial equities $E_1=E_2=1$ and assume that account~1 holds only asset~1 and account~2 holds only asset~2:
\[
q_1=(1,0),\qquad q_2=(0,1).
\]
As a result, any feasible allocation $x=(x_1,x_2)$ satisfies
\[
0\le x_1^1\le 1,\qquad 0\le x_2^2\le 1,
\qquad x_1^2=x_2^1=0.
\]
Since we require $x_1+x_2=Q$, we conclude that the \emph{unique} feasible allocation is
\[
x_1^1=Q^1=0.2,\qquad x_2^2=Q^2=0.8.
\]

In terms of factor leverage, this unique feasible allocation leads to
\[
\ell_1^{(v)}(x_1)=1-x_1^1=0.8,\qquad \ell_2^{(v)}(x_2)=1-x_2^2=0.2,
\qquad \ell_1^{(v)}(x_1)+\ell_2^{(v)}(x_2)=1.
\]
If this problem were instead approached via 1D water filling on the factor leverage, each account solves an identical 1D problem (since their equities and factor leverage constraints are the same),
\[
\ell^\star_i(\eta)\in\argmin_{z\in[0,1]}\ \bigl\{\E\big[(\epsilon z - 1)_+\big]-\eta z\bigr\}, 
\quad i=1,2.
\]
Therefore, by symmetry, the solution produces equal factor leverages
\[
\ell_1^\star(\eta)=\ell_2^\star(\eta)=\ell^\star(\eta)
\]
for every $\eta$. At the same time, the necessary condition \eqref{eq:sum_factor_fixed} at an optimal $\eta^\star$ reads
\[
\ell_1^\star(\eta^\star)+\ell_2^\star(\eta^\star)=1.
\]
Taken together, this yields $\ell^\star(\eta^\star)=0.5$.
But $\ell_1^{(v)}(x_1)=0.5$ would imply $x_1^1=1-\ell_1^{(v)}(x_1)=0.5\neq Q^1=0.2$, and similarly $\ell_2^{(v)}(x_2)=0.5$ would imply
$x_2^2=0.5\neq Q^2=0.8$. Hence there is \emph{no} $\eta^\star$ for which the water-filling exposures can be realized
by an allocation satisfying $\sum_i x_i=Q$.
\end{example}

\subsection{Sufficient Conditions for Water-Filling on Factor Leverage}\label{app:suff.conditions.water.filling}

This section describes a regime in which the expected-loss ADL problem~\eqref{eq:adl_multiasset_exp_loss} admits a generalized water-filling structure,
but applied to \emph{factor leverage} rather than gross leverage. The key simplification is that, under a one-factor
price model, each account's expected shortfall depends on its post-ADL portfolio only through a scalar factor exposure.

Theorem~\ref{thm:factor_filling_verification} is stated as a verification result because, in general multi-asset ADL,
the vector constraint $\sum_i x_i=Q$ need not allow one to realize an arbitrary collection of scalar targets
$\{\ell_i^\star(\eta)\}_{i=1}^n$. A complementary observation is that, \emph{at an optimum}, the dual multiplier
$\lambda^\star$ can inherit a strong structure under a natural ``overlap'' condition across assets.

Recall the definitions
\[
f_i(x_i)\defeq \E\!\left[\sigma_i(x_i,p_T)\right],
\qquad
f(x)\defeq \E\!\left[\Lscr(x,p_T)\right]=\sum_{i=1}^n f_i(x_i).
\]
Under Assumptions~\ref{ass:cross_margin_feasible}--\ref{ass:integrable_pT_cross}, the expected-loss problem is convex and a relative-interior/polyhedral constraint qualification holds; see Appendix~\ref{app:proof.mkt.clearing}.
Consequently, the KKT conditions are necessary and sufficient for optimality.

Introduce multipliers $\lambda\in\R^d$ for the clearing constraint $\sum_i x_i=Q$, and for each $(i,k)$ introduce
$\mu_i^k\ge 0$ for $l_i^k-x_i^k\le 0$ and $\nu_i^k\ge 0$ for $x_i^k-u_i^k\le 0$.
The Lagrangian is
\[
\mathcal L(x,\lambda,\mu,\nu)
= f(x)
+\sum_{k=1}^d \lambda^k\Big(\sum_{i=1}^n x_i^k - Q^k\Big)
+\sum_{i=1}^n\sum_{k=1}^d \mu_i^k(l_i^k-x_i^k)
+\sum_{i=1}^n\sum_{k=1}^d \nu_i^k(x_i^k-u_i^k).
\]
A primal--dual quadruple $(x^\star,\lambda^\star,\mu^\star,\nu^\star)$ is optimal if and only if:
\begin{align}
&\sum_{i=1}^n x_i^\star = Q,
\qquad l_i^k \le x_i^{k,\star}\le u_i^k \quad \forall i,k, \label{eq:kkt_primal_feas}\\
&\mu_i^{k,\star}\ge 0,\qquad \nu_i^{k,\star}\ge 0 \quad \forall i,k, \label{eq:kkt_dual_feas}\\
&\mu_i^{k,\star}\,(l_i^k-x_i^{k,\star})=0,\qquad
\nu_i^{k,\star}\,(x_i^{k,\star}-u_i^k)=0 \quad \forall i,k, \label{eq:kkt_comp}\\
&0 \in \partial_{x_i} f_i(x_i^\star) + \lambda^\star - \mu_i^\star + \nu_i^\star
\quad \forall i. \label{eq:kkt_stationarity_general}
\end{align}

Assume now the single-factor model of Assumption~\ref{ass:single_factor}.
As in \eqref{eq:equity.simplified}, under $p_T=p_\tau+\epsilon v$ one has
\[
\sigma_i(x_i,p_T)=E_i(\epsilon\,\ell_i^{(v)}(x_i)-1)_+,
\]
so we may write
\begin{equation}\label{eq:f_i_factor_form_kkt}
f_i(x_i)=E_i\,\psi\!\big(\ell_i^{(v)}(x_i)\big).
\end{equation}
Moreover, $\psi$ is continuously differentiable with
\[
\psi'(z)=\E\!\left[\epsilon\,\mathbf 1_{\{\epsilon z>1\}}\right],\qquad z\in\R,
\]
and $\psi'$ is strictly increasing (see Proposition~\ref{prop:psi_strictly_increasing}). Using $\frac{\partial}{\partial x_i^k}\ell_i^{(v)}(x_i)=-v^k/E_i$ and \eqref{eq:f_i_factor_form_kkt}, we obtain
\begin{equation}\label{eq:f_subdiff_factor}
\frac{\partial}{\partial x_i^k} f(x)
=\frac{\partial}{\partial x_i^k} f_i(x_i)
= -\,v^k\,\psi'\!\big(\ell_i^{(v)}(x_i)\big),
\qquad \forall i,k.
\end{equation}
Substituting \eqref{eq:f_subdiff_factor} into \eqref{eq:kkt_stationarity_general} yields the single-factor stationarity condition
\begin{equation}\label{eq:kkt_stationarity_factor}
0 = -v^k \psi'\!\big(\ell^{(v)}_i(x_i^\star)\big) + \lambda^{k,\star} - \mu_i^{k,\star} + \nu_i^{k,\star},
\qquad \forall i,k.
\end{equation}
In particular, whenever $v^k\neq 0$, by complementary slackness
\begin{equation}\label{eq:ratio_identification}
l_i^k < x_i^{k,\star} < u_i^k
\quad\Longrightarrow\quad
\psi'\!\big(\ell^{(v)}_i(x_i^\star)\big) = \frac{\lambda^{k,\star}}{v^k}.
\end{equation}

Let
\[
K\defeq\{k\in\{1,\dots,d\}: Q^k\neq 0,\ v^k\neq 0\}
\]
denote the set of \emph{active} assets whose reductions affect factor exposure. Given an optimal solution~$x^\star$,
define the interior coordinates of account $i$ by
\[
F_i\defeq\{k\in K:\ l_i^k < x_i^{k,\star} < u_i^k\},\qquad i=1,\dots,n,
\]
and let $K_{\mathrm{int}}(x^\star)\defeq \cup_{i=1}^n F_i$ be the set of active coordinates that are strictly interior for at
least one account.
\begin{definition}[Interior-coverage graph]\label{def:interior_coverage_graph_active_main}
The \emph{interior-coverage graph} is the simple undirected graph
\[
G(x^\star)\defeq\bigl(K_{\mathrm{int}}(x^\star),E\bigr),
\]
where
\[
E \defeq \Big\{\{k,k'\}\subseteq K_{\mathrm{int}}(x^\star): k\neq k'\ \text{and}\ \exists\, i\ \text{s.t.}\ \{k,k'\}\subseteq F_i\Big\}.
\]
Equivalently, there is an edge between $k$ and $k'$ if and only if there exists an account whose reduction is strictly interior in
both coordinates.
\end{definition}

\begin{proposition}\label{prop:lambda_parallel_v_connected}
Let $(x^\star,\lambda^\star,\mu^\star,\nu^\star)$ satisfy the
KKT system \eqref{eq:kkt_primal_feas}--\eqref{eq:kkt_stationarity_general}. If $K_{\mathrm{int}}(x^\star)\neq\emptyset$
and the interior-coverage graph $G(x^\star)$ is connected\footnote{If $K_{\mathrm{int}}(x^\star)$ is a singleton it is trivially connected.}, then there exists $\eta^\star\in\R$
such that
\[
\lambda^{k,\star}=\eta^\star v^k,\qquad \forall k\in K_{\mathrm{int}}(x^\star).
\]
Moreover, for any $i$ and any $k\in F_i$, one has $\eta^\star=\psi'\!\big(\ell_i^{(v)}(x_i^\star)\big)$.
\end{proposition}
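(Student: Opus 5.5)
The argument works directly from the ratio identification \eqref{eq:ratio_identification}, which is a consequence of the single-factor stationarity condition \eqref{eq:kkt_stationarity_factor} together with complementary slackness \eqref{eq:kkt_comp}. Fix an account $i$ and an index $k\in F_i$. Then $l_i^k<x_i^{k,\star}<u_i^k$, so $\mu_i^{k,\star}=\nu_i^{k,\star}=0$. Since $k\in K$, we also have $v^k\neq 0$. Therefore
\[
\lambda^{k,\star}=v^k\,\psi'\bigl(\ell_i^{(v)}(x_i^\star)\bigr).
\]
This gives the basic local fact. For each account $i$ with $F_i\neq\emptyset$, set $\eta_i\defeq\psi'(\ell_i^{(v)}(x_i^\star))$. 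Then every $k\in F_i$ satisfies $\lambda^{k,\star}/v^k=\eta_i$. The key point is that $\eta_i$ depends only on the account, not on the coordinate.

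Next define, for each $k\in K_{\mathrm{int}}(x^\star)$, the ratio $r^k\defeq\lambda^{k,\star}/v^k$. This is well defined because $K_{\mathrm{int}}(x^\star)\subseteq K$. By the local fact, $r^k=\eta_i$ for any account $i$ with $k\in F_i$. In particular, $r^k$ does not depend on which such account is used, since all of them give the same value $\lambda^{k,\star}/v^k$.

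Now suppose $\{k,k'\}$ is an edge of $G(x^\star)$. By Definition~\ref{def:interior_coverage_graph_active_main}, there is an account $i$ with $\{k,k'\}\subseteq F_i$. Hence $r^k=\eta_i=r^{k'}$, so adjacent vertices carry equal ratios.

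Because $G(x^\star)$ is connected, any two vertices are joined by a finite path. Inducting along the path, all $r^k$ agree for $k\in K_{\mathrm{int}}(x^\star)$. Call this common value $\eta^\star$. Then $\lambda^{k,\star}=\eta^\star v^k$ for all $k\in K_{\mathrm{int}}(x^\star)$. If $K_{\mathrm{int}}(x^\star)$ is a singleton, the claim holds trivially, with $\eta^\star$ defined by that single ratio.

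For the ``moreover'' part, take any $i$ and any $k\in F_i$. The local fact gives $\psi'(\ell_i^{(v)}(x_i^\star))=r^k=\eta^\star$.

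There is no real obstacle here; the only care needed is bookkeeping. We must check that the stationarity condition can be used in the differentiable form \eqref{eq:f_subdiff_factor}, which holds because $\psi$ is $C^1$ by Proposition~\ref{prop:psi_strictly_increasing}. We must also check that division by $v^k$ is legitimate, which holds because $K_{\mathrm{int}}(x^\star)\subseteq K$ and $v^k\neq0$ on $K$.
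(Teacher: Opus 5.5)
Your proposal is correct and follows essentially the same route as the paper. Both arguments obtain $\lambda^{k,\star}/v^k=\psi'\bigl(\ell_i^{(v)}(x_i^\star)\bigr)$ for every $k\in F_i$ from interior stationarity and complementary slackness. Both then observe that an edge, which by definition comes from a shared account, forces equal ratios at its endpoints, and propagate this along paths using connectivity. The ``moreover'' clause is read off directly from the interior stationarity identity in both.
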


\begin{proof}
Fix any $i$ and any $k\in F_i$. Since $k\in K$, we have $v^k\neq 0$, and \eqref{eq:ratio_identification} gives
\[
\frac{\lambda^{k,\star}}{v^k}=\psi'\!\big(\ell_i^{(v)}(x_i^\star)\big).
\]
If $k,k'\in F_i$ for the same account $i$, then the same identity holds for $k'$ and hence
\begin{equation}\label{eq:ratio_edge_identity}
\frac{\lambda^{k,\star}}{v^k}=\frac{\lambda^{k',\star}}{v^{k'}}.
\end{equation}
Thus, along any edge $\{k,k'\}\in E$, the ratios $\lambda^{k,\star}/v^k$ and $\lambda^{k',\star}/v^{k'}$ agree.

If $|K_{\mathrm{int}}(x^\star)|=1$, the conclusion follows immediately by setting
$\eta^\star\defeq \lambda^{k,\star}/v^k$ on the unique vertex $k$.
Otherwise, connectivity implies that for any $k\in K_{\mathrm{int}}(x^\star)$ there exists a path from a fixed reference
vertex $k_0$ to $k$, and the edge-by-edge identity \eqref{eq:ratio_edge_identity} propagates along the path to yield
$\lambda^{k,\star}/v^k=\lambda^{k_0,\star}/v^{k_0}$ for all $k$.
Setting $\eta^\star\defeq \lambda^{k_0,\star}/v^{k_0}$ gives $\lambda^{k,\star}=\eta^\star v^k$ on
$K_{\mathrm{int}}(x^\star)$, and the identification
$\eta^\star=\psi'(\ell_i^{(v)}(x_i^\star))$ for $k\in F_i$ follows from the interior stationarity equality above.
\end{proof}

\begin{remark}\label{rem:lambda_parallel_v_practice}
Proposition~\ref{prop:lambda_parallel_v_connected} is an \emph{ex post} structural statement: if the optimal allocation
features enough strictly-interior overlap across active assets so $K=K_{\mathrm{int}}(x^\star)$ and $G(x^\star)$ is connected, then the optimal shadow prices for the active assets
must align with the factor loading. In large venues, such overlap is more plausible when many large accounts hold genuinely
cross-asset portfolios, creating interior reductions in multiple assets simultaneously.
\end{remark}

The alignment $\lambda^\star=\eta^\star v$ on interior active coordinates is the structural condition under which the
expected-loss ADL problem reduces to a one-dimensional (clipped) water-filling rule in factor-leverage space.

\begin{definition}[Connected Partial Deleveraging]\label{def:con.part.del}
We say an optimal allocation $x^\star$ satisfies the \emph{connected partial deleveraging condition} if $K_{\mathrm{int}}(x^\star)=K$ and the interior-coverage graph $G(x^\star)$ is connected.
\end{definition}

\begin{proposition}\label{prop:gen.water.filling.connected} Let
$(x^\star,\lambda^\star,\mu^\star,\nu^\star)$ satisfy the KKT conditions for \eqref{eq:adl_multiasset_exp_loss} and assume $x^\star$ satisfies the connected partial deleveraging condition of Definition~\ref{def:con.part.del}. Suppose further that if $Q^k\neq0$ then $v^k\neq0$.
Then there exists $\eta^\star\in\R$ such that for every account $i$,
\[
\ell_i^{(v)}(x_i^\star)\in \argmin_{z\in[\underline\ell_i,\overline\ell_i]}\ \{\psi(z)-\eta^\star z\},
\]
equivalently,
\[
\ell_i^{(v)}(x_i^\star)=\ell_i^\star(\eta^\star),
\]
where $\ell_i^\star(\eta)$ is the clipped factor water-filling target from
Theorem~\ref{thm:factor_filling_verification}.
\end{proposition}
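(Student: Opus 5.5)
The plan is to reduce Proposition~\ref{prop:gen.water.filling.connected} to the per-account best-response characterization that underlies Theorem~\ref{thm:factor_filling_verification}. The shadow-price vector will be identified with $\eta^\star v$ on every coordinate that actually varies over $\mathcal Y_i$. First I dispose of the degenerate case $Q=0$. Then every bound satisfies $l_i^k=u_i^k=0$, so each $\mathcal Y_i=\{0\}$ and each interval $[\underline\ell_i,\overline\ell_i]$ is a singleton. The claim therefore holds for any $\eta^\star$.

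Otherwise, the hypothesis that $Q^k\neq 0$ implies $v^k\neq 0$ shows that $K=\{k:Q^k\neq 0\}$ is nonempty. The connected partial deleveraging condition gives $K_{\mathrm{int}}(x^\star)=K$ with $G(x^\star)$ connected. Proposition~\ref{prop:lambda_parallel_v_connected} then yields $\eta^\star$ with $\lambda^{k,\star}=\eta^\star v^k$ for all $k\in K$. For $k\notin K$ we have $Q^k=0$, so \eqref{eq:directional.bds} gives $l_i^k=u_i^k=0$ and $x_i^k=0$ for every $x_i\in\mathcal Y_i$. Consequently $(\lambda^\star)^\top x_i=\eta^\star v^\top x_i$ for all $x_i\in\mathcal Y_i$, even though $\lambda^{k,\star}$ is unconstrained, and possibly not proportional to $v^k$, off $K$.

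Next I show that each $x_i^\star$ is a best response, i.e.\ $x_i^\star\in X_i(\lambda^\star)$. I will read the stationarity condition \eqref{eq:kkt_stationarity_general} together with \eqref{eq:kkt_dual_feas}--\eqref{eq:kkt_comp} as exactly the KKT system of the convex box-constrained problem $\min_{x_i\in\mathcal Y_i}\{f_i(x_i)+(\lambda^\star)^\top x_i\}$, with multipliers $\mu_i^\star,\nu_i^\star$. Box constraints are polyhedral, so these conditions are sufficient. Alternatively, I can invoke Proposition~\ref{prop:market_clearing_kkt}(ii) directly, since $\lambda^\star$ is an optimal multiplier.

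Finally, I repeat the computation from the proof of Theorem~\ref{thm:factor_filling_verification}. For $x_i\in\mathcal Y_i$,
\[
f_i(x_i)+(\lambda^\star)^\top x_i=\eta^\star v^\top q_i+E_i\bigl(\psi(\ell_i^{(v)}(x_i))-\eta^\star\ell_i^{(v)}(x_i)\bigr).
\]
The affine map $x_i\mapsto\ell_i^{(v)}(x_i)$ sends $\mathcal Y_i$ onto $[\underline\ell_i,\overline\ell_i]$ and $E_i>0$. Hence minimality of $x_i^\star$ implies that $\ell_i^{(v)}(x_i^\star)$ minimizes $\psi(z)-\eta^\star z$ over that interval. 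Uniqueness from Theorem~\ref{thm:factor_filling_verification}(i) gives $\ell_i^{(v)}(x_i^\star)=\ell_i^\star(\eta^\star)$.

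The main subtlety is the step where $\lambda^\star$ is replaced by $\eta^\star v$ inside the linear charge. Proposition~\ref{prop:lambda_parallel_v_connected} only controls coordinates in $K_{\mathrm{int}}(x^\star)$. It is therefore essential both that $K_{\mathrm{int}}(x^\star)=K$ and that all coordinates outside $K$ are pinned at zero by the directional bounds.
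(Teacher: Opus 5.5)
Your proposal is correct and follows essentially the same route as the paper. You obtain $\lambda^{k,\star}=\eta^\star v^k$ on $K$ from Proposition~\ref{prop:lambda_parallel_v_connected}, use the directional bounds to remove the coordinates with $Q^k=0$, and use that $x_i^\star$ is a best response to $\lambda^\star$ to reduce each account's problem to minimizing $\psi(z)-\eta^\star z$ over $[\underline\ell_i,\overline\ell_i]$, concluding by uniqueness. Your separate handling of $Q=0$, and your direct reading of the KKT system as the per-account KKT system instead of citing Proposition~\ref{prop:market_clearing_kkt}, are small but welcome tightenings of the paper's argument.
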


\begin{proof}
By Proposition~\ref{prop:lambda_parallel_v_connected} and the assumptions $K_{\mathrm{int}}(x^\star)=K$ and connectivity of $G(x^\star)$,
there exists $\eta^\star\in\R$ such that
\begin{equation}\label{eq:lambda_equals_eta_v_on_K_proof}
\lambda^{k,\star}=\eta^\star v^k,\qquad \forall k\in K.
\end{equation}

Fix an account $i$. By Proposition~\ref{prop:market_clearing_kkt},
$x_i^\star$ minimizes the per-account Lagrangian
\begin{equation}\label{eq:per_account_subproblem_connected_proof}
\min_{x_i\in\mathcal Y_i}\ \Big\{f_i(x_i)+(\lambda^\star)^\top x_i\Big\}.
\end{equation}
Under the single-factor model,
\[
f_i(x_i)=E_i\,\psi\!\big(\ell_i^{(v)}(x_i)\big),
\qquad
\ell_i^{(v)}(x_i)=\frac{v^\top(q_i-x_i)}{E_i}.
\]

We partition coordinates into two disjoint sets:
\[
\mathcal Z\defeq\{k:\ Q^k=0\}, \qquad
K=\{k:\ Q^k\neq 0\}.
\]
Note that under the standing assumptions $K$ takes this simplified form.
By the directional bounds \eqref{eq:directional.bds}, $k\in\mathcal Z$ implies $x_i^k=0$ for all $x_i\in\mathcal Y_i$,
so $\sum_{k\in\mathcal Z}\lambda^{k,\star}x_i^k\equiv 0$ and can be dropped from \eqref{eq:per_account_subproblem_connected_proof}.
Thus, up to additive constants independent of $x_i$, the per-account objective in \eqref{eq:per_account_subproblem_connected_proof}
reduces to
\begin{equation}\label{eq:reduced_per_account_obj_proof}
E_i\,\psi\!\big(\ell_i^{(v)}(x_i)\big)+\sum_{k\in K}\lambda^{k,\star}x_i^k.
\end{equation}
Using \eqref{eq:lambda_equals_eta_v_on_K_proof}, we have
\[
\sum_{k\in K}\lambda^{k,\star}x_i^k
=
\eta^\star\sum_{k\in K}v^k x_i^k.
\]
Moreover, since $x_i^k=0$ for all $k\in\mathcal Z$ we obtain
\[
\sum_{k\in K}v^k x_i^k = \sum_{k=1}^d v^k x_i^k = v^\top x_i.
\]
Finally, by definition of $\ell_i^{(v)}$,
\[
v^\top x_i=v^\top q_i-E_i\,\ell_i^{(v)}(x_i).
\]
Substituting these identities into \eqref{eq:reduced_per_account_obj_proof} yields, up to the additive constant $\eta^\star v^\top q_i$,
\[
E_i\Big(\psi\!\big(\ell_i^{(v)}(x_i)\big)-\eta^\star \ell_i^{(v)}(x_i)\Big).
\]
Therefore minimizing \eqref{eq:per_account_subproblem_connected_proof} over $x_i\in\mathcal Y_i$ is equivalent to minimizing
\[
\psi(z)-\eta^\star z\qquad \text{over}\qquad z=\ell_i^{(v)}(x_i)\in[\underline\ell_i,\overline\ell_i].
\]
Hence
\[
\ell_i^{(v)}(x_i^\star)\in \argmin_{z\in[\underline\ell_i,\overline\ell_i]}\ \{\psi(z)-\eta^\star z\}.
\]
By uniqueness of $\ell_i^\star(\eta^\star)$ (Theorem~\ref{thm:factor_filling_verification}), we conclude that
$\ell_i^{(v)}(x_i^\star)=\ell_i^\star(\eta^\star)$ for all $i$.
\end{proof}

The need for verification disappears when ADL is required in only one
asset, because the vector clearing constraint reduces to a single scalar equation.

\begin{lemma}\label{lem:one_asset_bijection}
Assume $Q^k=0$ for $k\neq k_0$ and $v^{k_0}\neq0$, so $x_i^k=0$
for all $k\neq k_0$. Then for each account $i$,
\[
\ell_i^{(v)}(x_i)
=\frac{1}{E_i}\Big(\sum_{k\neq k_0} v^k q_i^k\Big)+\frac{v^{k_0}}{E_i}\bigl(q_i^{k_0}-x_i^{k_0}\bigr),
\]
and the map $x_i^{k_0}\mapsto \ell_i^{(v)}(x_i)$ is an affine bijection from $[l_i^{k_0},u_i^{k_0}]$ onto
$[\underline\ell_i,\overline\ell_i]$. Its inverse is
\begin{equation}\label{eq:inverse_map_one_asset}
x_i^{k_0}
=
q_i^{k_0}-\frac{E_i\,\ell-\sum_{k\neq k_0} v^k q_i^k}{v^{k_0}},
\qquad \ell\in[\underline\ell_i,\overline\ell_i].
\end{equation}
\end{lemma}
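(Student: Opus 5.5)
This is a direct computation from the definitions, in three steps: expand $v^\top(q_i-x_i)$, show that the slope in $x_i^{k_0}$ is nonzero, and identify the image of the box with $[\underline\ell_i,\overline\ell_i]$.

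\emph{Step 1: the formula for $\ell_i^{(v)}$.} By the directional bounds \eqref{eq:directional.bds}, $Q^k=0$ forces $l_i^k=u_i^k=0$ for every $k\neq k_0$. Hence every $x_i\in\mathcal Y_i$ has $x_i^k=0$ for $k\neq k_0$, and $\mathcal Y_i$ is the segment $\{x_i^{k_0}\in[l_i^{k_0},u_i^{k_0}]\}$ embedded in $\R^d$. Splitting the inner product
\[
v^\top(q_i-x_i)=\sum_{k\neq k_0}v^kq_i^k+v^{k_0}\bigl(q_i^{k_0}-x_i^{k_0}\bigr)
\]
and dividing by $E_i>0$ (Assumption~\ref{ass:cross_margin_solvency}) gives the displayed formula in Definition~\ref{def:factor_leverage}'s notation.

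\emph{Step 2: bijectivity.} The map $x_i^{k_0}\mapsto\ell_i^{(v)}(x_i)$ is affine with slope $-v^{k_0}/E_i$. This slope is nonzero because $v^{k_0}\neq0$, so the map is strictly monotone and hence injective on $[l_i^{k_0},u_i^{k_0}]$. Its image is a closed interval whose endpoints are the images of $l_i^{k_0}$ and $u_i^{k_0}$.

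\emph{Step 3: identifying the image.} By definition, $[\underline\ell_i,\overline\ell_i]$ is the image of $\mathcal Y_i$ under $\ell_i^{(v)}$. By Step 1, $\mathcal Y_i$ is exactly this segment, so the image computed in Step 2 coincides with $[\underline\ell_i,\overline\ell_i]$, and the map is onto. Solving $\ell=\ell_i^{(v)}(x_i)$ for $x_i^{k_0}$ gives \eqref{eq:inverse_map_one_asset}.

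The only point needing care is the degenerate case $l_i^{k_0}=u_i^{k_0}$, for example an account on the wrong side of $Q^{k_0}$ or one with $q_i^{k_0}=0$. Then both intervals are single points and the statement holds trivially. I do not expect any genuine obstacle.
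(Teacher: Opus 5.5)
Your proposal is correct and follows essentially the same argument as the paper: expand $v^\top(q_i-x_i)$, use $v^{k_0}\neq 0$ to get injectivity of the affine map, identify its image with $[\underline\ell_i,\overline\ell_i]$ via the definition of that interval as the image of $\mathcal Y_i$, and solve for the inverse. The paper leaves two points implicit that you state outright, namely that the directional bounds force $x_i^k=0$ for $k\neq k_0$ and that the degenerate single-point case $l_i^{k_0}=u_i^{k_0}$ holds trivially.
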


\begin{proof}
With $x_i^k=0$ for $k\neq k_0$, the displayed affine form follows directly from the definition
$\ell_i^{(v)}(x_i)=v^\top(q_i-x_i)/E_i$. Since $v^{k_0}\neq 0$, the coefficient of $x_i^{k_0}$ is nonzero, so the map is
injective and maps the compact interval $[l_i^{k_0},u_i^{k_0}]$ onto a compact interval, which must equal
$[\underline\ell_i,\overline\ell_i]$ by definition. Solving the affine relation for $x_i^{k_0}$ yields \eqref{eq:inverse_map_one_asset}.
\end{proof}

The following special case corresponds to \cref{thm:factor_filling_one_asset} in the main text.

\begin{corollary}\label{cor:one_asset_factor_filling_optimal}
Assume $Q^k=0$ for $k\neq k_0$ and $v^{k_0}\neq0$, so $x_i^k=0$
for all $k\neq k_0$.
 Let $\eta^\star$ satisfy the budget equation \eqref{eq:eta_budget_main}, and define targets
$\ell_i^\star(\eta^\star)$ by \eqref{eq:factor_fill_target_main}. Define an allocation $x^\star$ by setting
$x_i^{\star, k}=0$ for $k\neq k_0$ and
\[
x_i^{\star, k_0}
\;\defeq\;
q_i^{k_0}-\frac{E_i\,\ell_i^\star(\eta^\star)-\sum_{k\neq k_0} v^k q_i^k}{v^{k_0}},
\qquad i=1,\dots,n.
\]
Then $x^\star\in\mathcal X$ and $x^\star$ is optimal for \eqref{eq:adl_multiasset_exp_loss}.
\end{corollary}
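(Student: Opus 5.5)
The argument reduces everything to Theorem~\ref{thm:factor_filling_verification}. Once we show that the proposed $x^\star$ lies in $\mathcal X$ and realizes the targets, that is $\ell_i^{(v)}(x_i^\star)=\ell_i^\star(\eta^\star)$ for every $i$, the verification theorem gives optimality at once. So the work is in showing feasibility, in three pieces: the per-account box constraints, the vanishing of coordinates $k\neq k_0$, and the scalar clearing condition in coordinate $k_0$.

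\emph{Targets and box constraints.} We use Lemma~\ref{lem:one_asset_bijection}. Since $\ell_i^\star(\eta^\star)\in[\underline\ell_i,\overline\ell_i]$ by construction, the inverse map \eqref{eq:inverse_map_one_asset} sends it to a value $x_i^{\star,k_0}\in[l_i^{k_0},u_i^{k_0}]$. Together with $x_i^{\star,k}=0$ for $k\neq k_0$, this gives $x_i^\star\in\mathcal Y_i$. Here we should check that the directional bounds force $l_i^k=u_i^k=0$ whenever $Q^k=0$, so that setting those coordinates to zero is indeed admissible. The affine identity in the same lemma then shows $\ell_i^{(v)}(x_i^\star)=\ell_i^\star(\eta^\star)$ directly.

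\emph{Clearing condition.} This is the only real obstacle. Summing the identity $E_i\ell_i^{(v)}(x_i^\star)=v^\top(q_i-x_i^\star)$ over $i$ and using the budget equation \eqref{eq:eta_budget_main}, we get
\[
v^\top\Big(\sum_i q_i-\sum_i x_i^\star\Big)=v^\top\Big(\sum_i q_i-Q\Big),
\]
that is, $v^\top\big(\sum_i x_i^\star-Q\big)=0$. Since both $\sum_i x_i^\star$ and $Q$ vanish outside coordinate $k_0$, this reduces to
\[
v^{k_0}\Big(\sum_i x_i^{\star,k_0}-Q^{k_0}\Big)=0.
\]
Because $v^{k_0}\neq0$, it follows that $\sum_i x_i^{\star,k_0}=Q^{k_0}$. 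The vector clearing constraint therefore collapses to the single scalar equation, which is exactly why realizability cannot fail here, in contrast with the example of Appendix~\ref{app:water.filling.fail}.

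\emph{Conclusion.} With $x^\star\in\mathcal X$ and the targets realized, Theorem~\ref{thm:factor_filling_verification} gives optimality. The existence of $\eta^\star$ is already part (iii) of that theorem. This establishes Corollary~\ref{cor:one_asset_factor_filling_optimal}, and with it Theorem~\ref{thm:factor_filling_one_asset}.
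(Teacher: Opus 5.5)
Your proposal is correct and follows essentially the same route as the paper: Lemma~\ref{lem:one_asset_bijection} gives the box constraints and realization of the targets, and the budget equation collapses to $v^{k_0}\bigl(\sum_i x_i^{\star,k_0}-Q^{k_0}\bigr)=0$, which yields clearing. Theorem~\ref{thm:factor_filling_verification} then gives optimality. Your explicit check that the directional bounds force $l_i^k=u_i^k=0$ when $Q^k=0$ is a small detail the paper leaves implicit.
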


\begin{proof}
By Lemma~\ref{lem:one_asset_bijection}, each target $\ell_i^\star(\eta^\star)\in[\underline\ell_i,\overline\ell_i]$ corresponds to a
unique $x_i^{\star, k_0}\in[l_i^{k_0},u_i^{k_0}]$, so the constructed $x^\star$ satisfies the per-account constraints.

It remains to verify the clearing constraint $\sum_i x_i^{\star, k_0}=Q^{k_0}$. Using the definition of $\ell_i^{(v)}$ and that
$x_i^{\star,k}=0$ for $k\neq k_0$, we compute
\[
\sum_{i=1}^n E_i\,\ell_i^{(v)}(x_i^\star)
=
\sum_{i=1}^n\sum_{k\neq k_0} v^k q_i^k
+
v^{k_0}\Big(\sum_{i=1}^n q_i^{k_0}-\sum_{i=1}^n x_i^{\star, k_0}\Big).
\]
On the other hand, since $Q^k=0$ for $k\neq k_0$,
\[
v^\top\Big(\sum_{i=1}^n q_i - Q\Big)
=
\sum_{i=1}^n\sum_{k\neq k_0} v^k q_i^k
+
v^{k_0}\Big(\sum_{i=1}^n q_i^{k_0}-Q^{k_0}\Big).
\]
Subtracting these equalities and using \eqref{eq:eta_budget_main} gives
$v^{k_0}\big(\sum_i x_i^{\star,k_0}-Q^{k_0}\big)=0$, hence $\sum_i x_i^{\star, k_0}=Q^{k_0}$ because $v^{k_0}\neq 0$.
Thus $x^\star\in\mathcal X$.

Finally, by construction $\ell_i^{(v)}(x_i^\star)=\ell_i^\star(\eta^\star)$ for all $i$, so Theorem~\ref{thm:factor_filling_verification}
implies that $x^\star$ is globally optimal.
\end{proof}

\begin{remark}
Corollary~\ref{cor:one_asset_factor_filling_optimal} formalizes the sense in which the single-asset case admits a true
water-filling solution in factor-leverage space: the scalar budget equation \eqref{eq:eta_budget_main} is then
equivalent to the (scalar) clearing constraint, so the water-filling targets are automatically implementable.
In contrast, when $|K|>1$, implementability can fail because a multi-dimensional clearing constraint cannot generally be
enforced by matching only one scalar target.
\end{remark}

\subsection{Single-Factor Analysis for More General Risk Measures}\label{sec:Cvar_one_factor}

This section briefly sketches how the single-factor analysis for the expected loss extends to more general risk measures. The crucial observation is that, in the single-factor case, comonotonicity leads to separability across accounts, even for risk measures such as $\mathrm{CVaR}_\beta$.

Assume the single-factor model $p_T=p_\tau+\epsilon v$ and recall that
\[
e_i(x_i,p_T)=E_i\bigl(1-\epsilon\,\ell_i^{(v)}(x_i)\bigr),
\qquad
\ell_i^{(v)}(x_i)\defeq \frac{v^\top(q_i-x_i)}{E_i}.
\]
We restrict attention to accounts with nonnegative factor exposure.
Namely, we impose $\ell_i^{(v)}(x_i)\ge 0$ as an additional feasibility constraint for allocations $x$ and restrict
to accounts with nonnegative factor exposure before the ADL event. We assume that, within this subset of accounts, the modified feasible set
\[
\mathcal X \,\cap\, \bigl\{x:\ \ell_i^{(v)}(x_i)\ge 0,\ \forall i\bigr\}
\]
is nonempty.
Then $\sigma_i(x_i,p_T)$ is a nondecreasing function
of the common scalar factor $\epsilon$ for each $i$, and hence the collection $\{\sigma_i(x_i,p_T)\}_{i=1}^n$ is \emph{comonotone}.

For any comonotone-additive risk functional $\rho$, including spectral risk measures such as
$\rho(\cdot)=\E[\cdot]$ or $\rho(\cdot)=\mathrm{CVaR}_\beta(\cdot)$, we then have 
\[
\rho\!\left(\sum_{i=1}^n \sigma_i(x_i,p_T)\right)=\sum_{i=1}^n \rho\!\left(\sigma_i(x_i,p_T)\right).
\]
Defining the one-dimensional factor-risk function
\[
\psi_{i,\rho}(\ell)\defeq \rho\!\left(E_i(\epsilon \ell-1)_+\right),\qquad \ell\ge 0,
\]
the exchange objective becomes separable across accounts, 
\[
\rho\!\left(\Lscr(x,p_T)\right)=\sum_{i=1}^n \psi_{i,\rho}\bigl(\ell_i^{(v)}(x_i)\bigr).
\]
This is analogous to~\eqref{eq:separable.across.accounts}, but now driven by comonotonicity rather than linearity of the expectation.

After this observation, the analysis becomes analogous to the expected loss in the single-factor model. As in~\eqref{eq:sum_factor_fixed}, the clearing constraint implies a fixed equity-weighted aggregate factor exposure,
\[
\sum_{i=1}^n E_i\,\ell_i^{(v)}(x_i)
=
\sum_{i=1}^n v^\top(q_i-x_i)
=
v^\top\Big(\sum_{i=1}^n q_i-Q\Big),\qquad \forall x\in\mathcal X.
\]
If the KKT multiplier satisfies $\lambda^\star=\eta^\star v$
(as implied, for instance, by an analogous interior-coverage\footnote{In the present setting, an ``interior'' condition must also account for the additional constraint $\ell_i^{(v)}(x_i)\ge 0$.} connectivity condition; cf.~Proposition \ref{prop:lambda_parallel_v_connected}), the corresponding per-account Lagrangian subproblem (cf.~\eqref{eq:account-primal} for the expected loss) depends
on $x_i$ only through $\ell_i^{(v)}(x_i)$ and reduces to the one-dimensional program
\[
\min_{\ell\in[\underline \ell_i,\overline \ell_i]\cap\R_+}\ \bigl\{\psi_{i,\rho}(\ell)-\eta^\star\,E_i\,\ell\bigr\},
\]
where $[\underline \ell_i,\overline \ell_i]=\{\ell_i^{(v)}(x_i):x_i\in\mathcal Y_i\}$ is the feasible factor-leverage interval.
If $\psi_{i,\rho}$ is differentiable and strictly convex on $\R_+$, the induced optimizer satisfies the clipped
water-filling rule
\[
\ell_i^\star(\eta^\star)=
\begin{cases}
\underline \ell_i^+, & \eta^\star\le \psi_{i,\rho}'(\underline \ell_i^+)/E_i,\\
(\psi_{i,\rho}')^{-1}(\eta^\star E_i), & \psi_{i,\rho}'(\underline \ell_i^+)/E_i<\eta^\star<\psi_{i,\rho}'(\overline \ell_i)/E_i,\\
\overline \ell_i, & \eta^\star\ge \psi_{i,\rho}'(\overline \ell_i)/E_i,
\end{cases}
\]
where $\underline \ell_i^+\defeq \max\{\underline \ell_i,0\}$. As was the case for the expected loss, even when the targets $\{\ell_i^\star(\eta)\}$ satisfy the scalar budget
\[
\sum_{i=1}^n E_i\,\ell_i^\star(\eta)=v^\top\Big(\sum_{i=1}^n q_i-Q\Big),
\]
a joint selection of allocations $x_i\in\mathcal Y_i$ achieving these targets and satisfying
$\sum_{i=1}^n x_i=Q$ need not exist in general because the clearing constraint $\sum_{i=1}^n x_i=Q$ is vector-valued. Nonetheless, as in Theorem~\ref{thm:factor_filling_one_asset}, when only a single asset is deleveraged, the same argument shows that the ``clipped water-filling'' solution in factor-leverage space is attainable by admissible reductions satisfying the clearing constraint and is therefore optimal.

\subsection{Numerical Example}\label{sec:appendix_numerical_example_crossmargin}

This appendix details the one-factor approximation of the bivariate GBM model in~\cref{sec:numerical_example_crossmargin}.  Note that we take $v$ from the price increment $\Delta p \defeq p_T-p_\tau$ of the GBM model, rather than from log returns, so that $v$ has the correct dollar units in the additive model $p_T=p_\tau+\epsilon v$. Under the model stated in~\cref{sec:numerical_example_crossmargin}, the covariance matrix of $\Delta p$ is

\begin{align*}
\Sigma_{\Delta p}
&=
\begin{pmatrix}
(p_\tau^{\mathrm{BTC}})^2\!\left(e^{(\sigma^{\mathrm{BTC}}_{\mathrm{ann}})^2\Delta}-1\right)
&
p_\tau^{\mathrm{BTC}}p_\tau^{\mathrm{ETH}}\!\left(e^{\rho\,\sigma^{\mathrm{BTC}}_{\mathrm{ann}}\sigma^{\mathrm{ETH}}_{\mathrm{ann}}\Delta}-1\right)
\\[0.4em]
p_\tau^{\mathrm{BTC}}p_\tau^{\mathrm{ETH}}\!\left(e^{\rho\,\sigma^{\mathrm{BTC}}_{\mathrm{ann}}\sigma^{\mathrm{ETH}}_{\mathrm{ann}}\Delta}-1\right)
&
(p_\tau^{\mathrm{ETH}})^2\!\left(e^{(\sigma^{\mathrm{ETH}}_{\mathrm{ann}})^2\Delta}-1\right)
\end{pmatrix}
\\
&\approx
\begin{pmatrix}
44{,}494{,}130.91 & 1{,}341{,}048.70\\
1{,}341{,}048.70 & 56{,}064.46
\end{pmatrix}.
\end{align*}
We then compute its principal eigenpair numerically via the symmetric eigendecomposition of $\Sigma_{\Delta p}$, normalizing the eigenvector to unit Euclidean norm and choosing the sign so that its BTC loading is positive. This gives
\[
\lambda_1 \approx 44{,}534{,}564.19,
\qquad
u_1 \approx
\begin{pmatrix}
0.99954578\\
0.03013680
\end{pmatrix},
\qquad
v \defeq \sqrt{\lambda_1}\,u_1
\approx
\begin{pmatrix}
6670.3910\\
201.1156
\end{pmatrix}.
\]

\begin{figure}[tbh]
\centering
\includegraphics[width=0.55\textwidth]{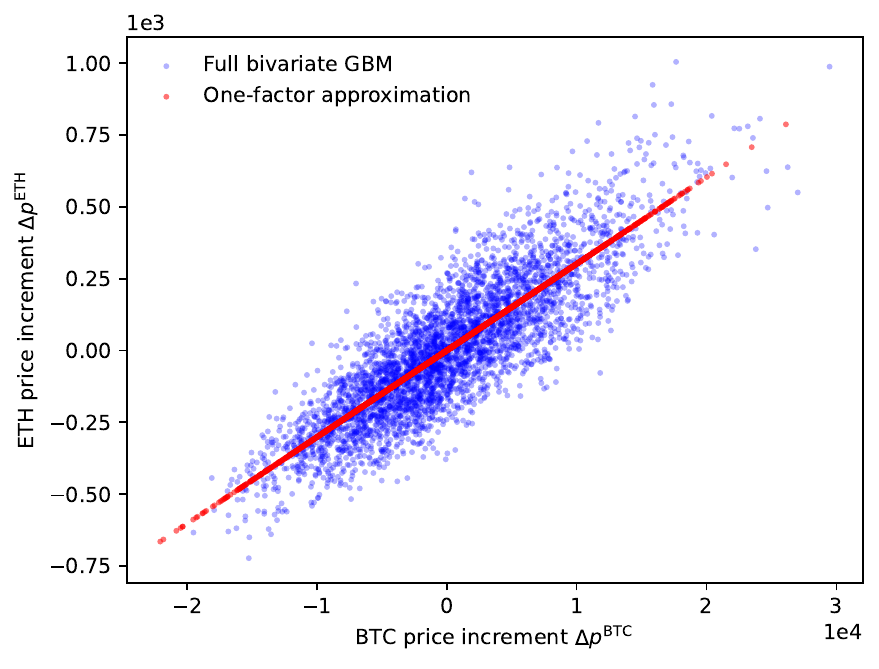}
\caption{Samples of price increments under the bivariate GBM model (blue) and the one-factor approximation collapsing the bivariate distribution onto the dominant covariance direction (red).}
\label{fig:crossmargin_increment_scatter}
\end{figure}


\section{Details for the Exchange-Scale Problem}
\subsection{ADMM: Algorithm and Convergence}\label{app:algorithm-admm}
Algorithm~\ref{alg:admm} summarizes the abstract two-block ADMM iteration.  Section~\ref{sec:admm-impl} explains how this abstract scheme is instantiated on the exchange-scale instance. The box is intentionally stated at the theory level: $\varepsilon_{\mathrm{stop}}$ denotes a generic residual tolerance, while $\rho$ and the iteration cap $K$ are algorithmic inputs whose concrete values depend on the application. In the exchange-scale implementation we use more detailed empirical calibration for the penalty parameter and stopping rule; see Appendix~\ref{app:exchange-scale-admm}.

\begin{algorithm}[H]
\caption{Abstract two-block ADMM with Sample-Average Approximation}
\label{alg:admm}
\begin{algorithmic}[1]
\Require
  aggregate target $Q\in\R^d$;
  per-account boxes $\{\Yscr_i\}_{i=1}^n$;
  SAA scenarios $\{\Delta p^{(s)}\}_{s=1}^S$;
  penalty $\rho>0$; tolerance $\varepsilon_{\mathrm{stop}}>0$;
  iteration cap $K$.
\Ensure accountwise iterate $x^K\in(\R^d)^n$,
  projected market-clearing iterate $z^K\in(\R^d)^n$, and
  primal/dual residuals $(r^K, s^K)$.
\medskip
\State Initialize $z^0$ (zero initialization or warm start), and set $u^0=0$.
\For{$k = 0, 1, \dots, K-1$}
  \For{$i=1,\dots,n$} \textbf{in parallel}
    \State $c_i^k \leftarrow z_i^k - u_i^k$;
    \State Solve per-account proximal subproblem~\eqref{eq:admm-x}.
  \EndFor
  \State $z^{k+1} \leftarrow \Pi_{\mathcal C}(x^{k+1}+u^k)$ \Comment{project onto the clearing set}
  \State $u^{k+1} \leftarrow u^k + x^{k+1} - z^{k+1}$
  \State $r^{k+1} \leftarrow x^{k+1}-z^{k+1}$ \Comment{primal residual}
  \State $s^{k+1} \leftarrow \rho(z^{k+1}-z^k)$ \Comment{dual residual}
  \If{$\norm{r^{k+1}}\le\varepsilon_{\mathrm{stop}}$
      \textbf{and} $\norm{s^{k+1}}\le\varepsilon_{\mathrm{stop}}$}
    \State \textbf{break}.
  \EndIf
\EndFor
\State \Return $x^\star \leftarrow x^{k+1}$, $z^\star \leftarrow z^{k+1}$
\Statex \hspace{\algorithmicindent} ($x^\star$ satisfies the per-account box constraints,
         $z^\star$ satisfies the market-clearing constraint $\sum_i z_i^\star = Q$ by construction, and primal
         feasibility is recovered up to the residual $\norm{x^\star-z^\star}$.)
\end{algorithmic}
\end{algorithm}

\begin{proposition}\label{prop:admm-convergence}
Fix $\rho>0$ and suppose the feasible set is nonempty. Then the
iterates generated by \eqref{eq:admm-x}--\eqref{eq:admm-u} satisfy
\[
\norm{x^k-z^k}\to 0,
\qquad
\norm{\rho(z^k-z^{k-1})}\to 0,
\]
and
\[
\sum_{i=1}^n f_i(x_i^k)+I_{\mathcal C}(z^k)\to L_S^\star,
\]
where $L_S^\star$ denotes the optimal value of the SAA problem. Moreover,
$\rho u^k$ converges to a dual optimal point.
\end{proposition}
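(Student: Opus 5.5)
The plan is to recognize the SAA problem in \eqref{eq:admm-split} as an instance of the standard two-block ADMM template of \citet[Section~3.2]{boyd2011distributed}: minimize $f(x)+g(z)$ subject to $Ax+Bz=c$. Here $f(x)\defeq\sum_i f_i(x_i)$, $g(z)\defeq I_{\mathcal C}(z)$, $A=I$, $B=-I$ and $c=0$. The convergence statement in that reference has two hypotheses. The first is that $f$ and $g$ are closed, proper and convex. The second is that the unaugmented Lagrangian $L_0(x,z,y)=f(x)+g(z)+y^\top(x-z)$ has a saddle point. Under these, it yields residual convergence $r^k=x^k-z^k\to0$, objective convergence $f(x^k)+g(z^k)\to L_S^\star$, and dual convergence $y^k\to y^\star$. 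Since the scaled dual variable is $u=y/\rho$, dual convergence gives the claim for $\rho u^k$. The claim about $\rho(z^k-z^{k-1})$ is the dual residual, which also appears in Boyd et al.'s argument: it follows from summability of $\rho\|z^{k+1}-z^k\|^2$ in their Lyapunov inequality.

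\textbf{Step 1: closed, proper and convex.} Each $f_i$ is a finite average of convex piecewise-linear functions $x_i\mapsto(-E_i-(q_i-x_i)^\top\Delta p^{(s)})_+$, plus the indicator of the nonempty compact box $\mathcal Y_i$. Such a function is closed, proper and convex. The indicator $I_{\mathcal C}$ of the nonempty affine set $\mathcal C$ is also closed, proper and convex.

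\textbf{Step 2: existence of a saddle point.} This is the only substantive step. The SAA problem minimizes a finite convex function over the nonempty compact polyhedron $\mathcal X$ (nonempty by assumption), so a primal optimum $(x^\star,z^\star=x^\star)$ exists. All the data are polyhedral: $f$ is piecewise linear on a polyhedral domain, and the constraints are linear. Strong duality with dual attainment therefore holds with no Slater condition, either by polyhedral LP duality or by the polyhedral constraint qualification already used in the proof of Proposition~\ref{prop:market_clearing_kkt}. Concretely, I would take the multiplier $\lambda^\star$ for $\sum_i x_i=Q$ given by that argument applied to the SAA objective, and set $y_i^\star\defeq-\lambda^\star$ for every $i$.

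I would then check the saddle-point conditions.
\begin{itemize}
\item The minimization of $L_0$ over $x$ decouples across accounts into $f_i(x_i)-\lambda^{\star\top}x_i$. Up to the sign convention of the Lagrangian, this is exactly the best-response condition $x_i^\star\in X_i(\lambda^\star)$ of Proposition~\ref{prop:market_clearing_kkt}(ii).
\item Minimizing $L_0$ over $z\in\mathcal C$ leaves $\lambda^{\star\top}\sum_i z_i=\lambda^{\star\top}Q$, which is constant on $\mathcal C$. So any $z\in\mathcal C$, in particular $z^\star=x^\star$, is a minimizer.
\end{itemize}
Together these give a saddle point of $L_0$. The only care needed is matching signs between the two Lagrangians, which is the main potential obstacle.

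\textbf{Step 3: conclusion.} With both hypotheses verified, invoke the theorem of \citet[Section~3.2]{boyd2011distributed} and translate its scaled-form conclusions into the three displayed limits. This proves Proposition~\ref{prop:admm-convergence}.
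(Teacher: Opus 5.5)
Your argument follows the paper's proof: you cast \eqref{eq:admm-split} as two-block ADMM with $F=\sum_i f_i$ and $G=I_{\mathcal C}$, check that both are closed, proper and convex, get a saddle point of the unaugmented Lagrangian from polyhedrality, and invoke \citet[Section~3.2]{boyd2011distributed}; your explicit saddle-point construction via Proposition~\ref{prop:market_clearing_kkt} (applied under the empirical distribution) fills in the paper's one-line justification. The one correction is the sign you flagged: with $L_0=F(x)+G(z)+y^\top(x-z)$ and the paper's $\widehat{\mathcal L}$ carrying $+\lambda^\top(\sum_i x_i-Q)$, you should take $y_i^\star=\lambda^\star$ (not $-\lambda^\star$), so that the $x$-block is $f_i(x_i)+\lambda^{\star\top}x_i$, which is exactly the problem defining $X_i(\lambda^\star)$, while the $z$-block $-\lambda^{\star\top}\sum_i z_i$ stays constant on $\mathcal C$.
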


\begin{proof}
Writing
\[
F(x)\defeq \sum_{i=1}^n f_i(x_i),
\qquad
G(z)\defeq I_{\mathcal C}(z),
\]
the problem takes the standard two-block ADMM form
\[
\min_{x,z}\;F(x)+G(z)
\qquad\text{subject to}\qquad x-z=0.
\]
Each $f_i$ is closed, proper, and convex, since it is the sum of a
finite convex piecewise-linear loss term and the indicator of the
corresponding box. Hence $F$ is closed, proper, and convex. The set
$\mathcal C$ is a nonempty closed affine set, so $G=I_{\mathcal C}$ is
closed, proper, and convex as well.

The feasible set is contained in the product of the per-account boxes
and is therefore compact. Since the objective is lower
semicontinuous, the primal optimum is attained. Because the problem is
a finite-dimensional convex program with polyhedral objective and affine
constraints, a saddle point of the unaugmented Lagrangian exists.
Therefore the assumptions of the standard two-block ADMM convergence
theorem in \citet[Section~3.2]{boyd2011distributed} are satisfied, and
the stated residual, objective, and dual convergence follow.
\end{proof}

\subsection{Data reconstruction}
\label{app:state-reconstruction}
For each wave time $\tau_j$, the empirical input is constructed in four
steps.
The cleaned reconstructed wave-level states used in the numerical
analysis are available in the accompanying GitHub repository.

First, we define the raw ADL wave directly from the Hyperliquid event
tape by restricting to trades whose direction is
Auto-Deleveraging in the corresponding event window. Throughout the
empirical analysis, we work on a 75-coin core universe: all ADL coins
are retained, and non-ADL coins are included only if their aggregate
gross notional exceeds USD $100{,}000$, so that the covariance
calibration and scenario generation remain supported on names with
sufficient trading activity. This determines
the raw realized ADL users, the raw set of affected coins, and the raw
wave notional.

Second, we reconstruct the pre-event account state strictly before
$\tau_j$ from Allium account snapshots. For each retained user--coin
pair, we take the most recent snapshot strictly before $\tau_j$ as the
baseline and roll it forward using the full trade tape.
Realized PnL accrued during this roll-forward is added back to the
account margin variable. User--coin pairs without a strict pre-event
baseline are excluded from the reconstructed state.

Third, we restrict the raw ADL flow in the wave window to the retained
reconstructed user--coin pairs and aggregate the retained fills. This
yields the realized Hyperliquid allocation
$x^{\mathrm{Hyperliquid},(j)}$ and the corresponding target vector
\[
Q^{(j)} \defeq \sum_i x_i^{\mathrm{Hyperliquid},(j)}.
\]
This is the step at which the optimization target becomes smaller than
the raw event total.

Fourth, we restrict the optimization problem to the active asset set
$\supp(Q^{(j)})$ and apply the directional-box reduction. Accounts whose
feasible interval has zero width on every active asset are removed
before solving. The resulting active-account instance is the final
optimization problem used in the empirical comparison. Concretely, the
solver is run on a $11{,}026\times 37$ instance at $\tau_1$ and on a
$7{,}185\times 4$ instance at $\tau_3$, rather than on the full
reconstructed ambient state.

Table~\ref{tab:data-reconstruction} summarizes these reductions for the
first wave $\tau_1$ and for the third wave $\tau_3$.

\begin{table}[t]
\centering
\caption{Data reconstruction summary for the first wave $\tau_1$ and
the third wave $\tau_3$. The reconstructed pre-event state is built on
the full 75-asset modeling universe, while the wave-specific retained
target $Q^{(j)}$ is supported only on the coins that remain active after
matching the raw ADL flow to reconstructable user--coin pairs.}
\label{tab:data-reconstruction}
\begin{tabular}{lcc}
\toprule
Quantity & $\tau_1$ & $\tau_3$ \\
\midrule
Raw ADL users in wave & $8{,}048$ & $2{,}384$ \\
Raw ADL coins in wave & $37$ & $4$ \\
Raw ADL notional & $\$1.7686\times 10^8$ & $\$3.8661\times 10^8$ \\
ADL users with reconstructable pre-event state & $6{,}956$ & $1{,}961$ \\
Accounts in reconstructed pre-event state & $60{,}324$ & $59{,}978$ \\
Users with positive retained ADL in $x^{\mathrm{Hyperliquid},(j)}$
& $6{,}820$ & $1{,}814$ \\
Retained target coins $\lvert \supp(Q^{(j)}) \rvert$ & $37$ & $4$ \\
Retained target notional $\sum_k p_{\tau,k}\lvert Q_k^{(j)}\rvert$
& $\$1.5390\times 10^8$ & $\$3.4606\times 10^8$ \\
Active optimization accounts & $11{,}026$ & $7{,}185$ \\
\bottomrule
\end{tabular}
\end{table}

Two distinctions are worth emphasizing. First, the row
``ADL users with reconstructable pre-event state'' is a user-level
coverage statistic, whereas the row
``Users with positive retained ADL in
$x^{\mathrm{Hyperliquid},(j)}$'' reflects the stricter pair-level
filtering applied when constructing the retained realized allocation.
Second, the row ``Active optimization accounts'' refers to the final
directional-box reduction used before solving. It contains all accounts
with positive retained ADL in $x^{\mathrm{Hyperliquid},(j)}$, together
with additional reconstructable accounts that were not delevered in the
realized event but still hold positions on $\supp(Q^{(j)})$ that can be
delevered under the directional box constraints. This is why the active
optimization set is larger than the support of
$x^{\mathrm{Hyperliquid},(j)}$, while still being smaller than the full
reconstructed pre-event state.


\subsection{Covariance Estimation}
\label{sec:covariance}

ADL events occur in precisely the market conditions where volatility clustering is most severe: sharp directional moves concentrated in short windows. Crypto assets are known to exhibit pronounced time-varying volatility, and conditioning on the current volatility level at the trigger time $\tau$ is therefore essential for a meaningful risk assessment. An unconditional covariance estimate would understate risk in stressed regimes and overstate it in calm ones. Figure~\ref{fig:volatility} illustrates this clustering for the assets in the ADL universe: rolling volatility estimates vary substantially over time, with spikes that coincide with periods of market stress. This motivates the Exponentially Weighted Moving Average (EWMA)-standardized decomposition below.

\begin{figure}[t]
\centering
\includegraphics[width=0.95\textwidth]{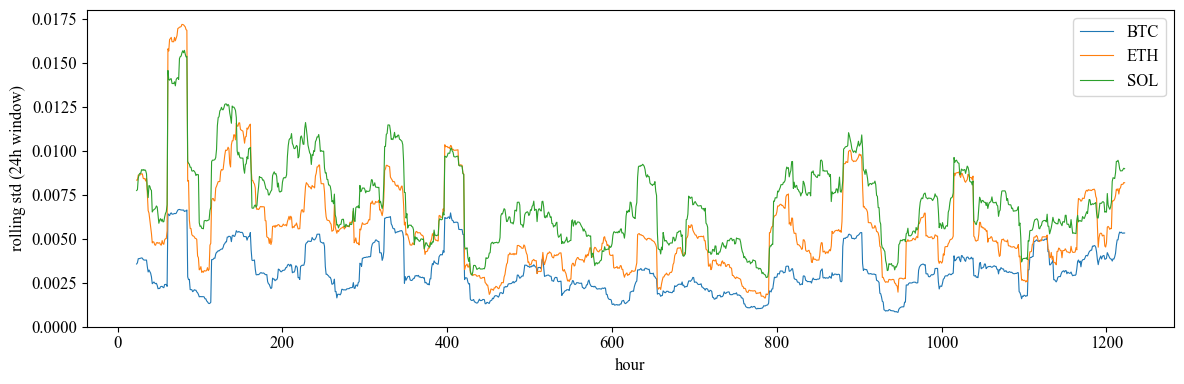}
\caption{Rolling-window volatility estimates for selected assets.
  Pronounced clustering motivates conditioning on the volatility
  level at the ADL trigger time $\tau$ rather than using an
  unconditional estimate.}
\label{fig:volatility}
\end{figure}

\paragraph{EWMA-standardized covariance model.}
We decompose hourly log-returns as $r_t = D_t \varepsilon_t$, where $D_t$ is a diagonal matrix of time-varying asset volatilities and $\varepsilon_t$ is the standardized residual vector, whose covariance $\Sigma_\varepsilon = \mathrm{Cov}(\varepsilon_t)$ is approximately stationary over time. The conditional
1-hour log-return covariance at the ADL trigger time $\tau$ is then 
\begin{equation}\label{eq:ewma-cov-main}
\Sigma_{\tau,1\mathrm{h}}^{\log} = D_\tau \Sigma_\varepsilon D_\tau,
\end{equation}
where $D_\tau$ is the EWMA volatility diagonal at time $\tau$ \citep{engle1982autoregressive}. This separates the two empirical ingredients relevant for ADL risk: the slowly varying cross-sectional dependence structure $\Sigma_\varepsilon$, and the rapidly changing volatility level $D_\tau$ that must be captured at the trigger time. The matrix $\Sigma_\varepsilon$ is approximated by a diagonal-plus-rank-one structure, as described in the empirical support paragraph below and used in the price model inputs of \Cref{sec:pm_inputs}.

Since hourly return autocorrelation is empirically weak, we scale the
hourly covariance linearly across horizons. Expressed in annualized GBM
units, this gives
\[
\Sigma_{\tau,\Delta}^{\log}
=
\Delta\,(24\times 365)\,\Sigma_{\tau,1\mathrm{h}}^{\log},
\]
which is consistent with the empirical absence of meaningful serial
correlation in hourly returns once the covariance matrix has been
annualized. Throughout the exchange-scale experiments, we set
$
\Delta = 10/365,
$
corresponding to a ten-day liquidation horizon under annualized inputs.

\paragraph{Price simulation models.}
To evaluate the SAA objective~\eqref{eq:saa-fi}, we sample $S$
terminal price vectors $p_T^{(s)}$ from a multivariate geometric
Brownian motion calibrated to the horizon-$\Delta$ log-return
covariance matrix $\Sigma_{\tau,\Delta}^{\log}$. Writing
\[
X \defeq \log(p_T/p_\tau),
\]
we model the horizon-$\Delta$ log-return vector as
\[
X \sim \mathcal N\!\left(
 -\tfrac12 \operatorname{diag}(\Sigma_{\tau,\Delta}^{\log}),
 \ \Sigma_{\tau,\Delta}^{\log}
\right),
\]
where $\operatorname{diag}(\Sigma_{\tau,\Delta}^{\log})\in\R^d$
denotes the vector of diagonal entries of
$\Sigma_{\tau,\Delta}^{\log}$. Equivalently, if $L_\tau$ is any
matrix square root satisfying
\[
L_\tau L_\tau^\top = \Sigma_{\tau,\Delta}^{\log},
\]
and $Z\sim\mathcal N(0,I_d)$, then
\begin{equation}\label{eq:gbm-main}
X = -\tfrac12 \operatorname{diag}(\Sigma_{\tau,\Delta}^{\log}) + L_\tau Z,
\qquad
p_T = p_\tau \odot \exp(X),
\end{equation}
where $\odot$ denotes componentwise multiplication. This
representation makes the cross-asset covariance structure explicit.

For implementation, the same covariance matrix may equivalently be
written in volatility-correlation form as
\[
\Sigma_{\tau,\Delta}^{\log}
=
\Delta\,\operatorname{diag}(\sigma_\tau)\,R_\tau\,\operatorname{diag}(\sigma_\tau),
\]
where $\sigma_\tau\in\R_+^d$ is the vector of annualized conditional
volatilities at time $\tau$ and $R_\tau$ is the corresponding
correlation matrix.

The water-filling algorithm additionally requires a factor direction
$v\in\R^d$ defining the dominant covariance mode. This is extracted
from the same GBM calibration. Defining the horizon-$\Delta$ dollar price drop by $\Delta p \defeq p_\tau - p_T,$
we have the first-order approximation
\[
\Delta p \approx -\operatorname{diag}(p_\tau)\,X.
\]
This yields the dollar price-drop covariance
\begin{equation}\label{eq:sigma-dp-main}
\Sigma_{\tau,\Delta}^{\Delta p}
\approx
\operatorname{diag}(p_\tau)\,\Sigma_{\tau,\Delta}^{\log}\,\operatorname{diag}(p_\tau),
\end{equation}
and the leading eigenpair $(\lambda_1, u_1)$ of
$\Sigma_{\tau,\Delta}^{\Delta p}$ gives
$v \defeq \sqrt{\lambda_1}\,u_1$.
This vector enters the water-filling solver as the one-factor
direction defining factor leverage
(cf.\ \Cref{def:factor_leverage}).

\paragraph{Empirical support for the one-factor approximation.}
Figure~\ref{fig:cov-stat} shows the stationary covariance matrix
$\Sigma_\varepsilon$ and its diagonal-plus-rank-one approximation
\[
\Sigma_\varepsilon^{(1)}
=
\lambda_1^{\varepsilon} u_1^{\varepsilon} (u_1^{\varepsilon})^\top
+
\operatorname{diag}\!\bigl(\Sigma_\varepsilon - \lambda_1^{\varepsilon} u_1^{\varepsilon} (u_1^{\varepsilon})^\top\bigr),
\]
where $(\lambda_1^{\varepsilon}, u_1^{\varepsilon})$ is the leading eigenpair of
$\Sigma_\varepsilon$. The leading principal component explains
approximately $56\%$ of the total variance of $\Sigma_\varepsilon$.
Despite this moderate explained-variance share, the rank-one
approximation reproduces the main cross-asset dependence patterns
accurately: the relative Frobenius error is approximately $5\%$ in
covariance space and $4\%$ in correlation space. The residual
variation is predominantly idiosyncratic, concentrated on the
diagonal rather than in additional common factors. This supports the
use of the one-factor model as a reduced-form benchmark and justifies
the rank-one plus idiosyncratic structure used for $\Sigma_\varepsilon$.

\begin{figure}[t]
\centering
\includegraphics[width=\textwidth]{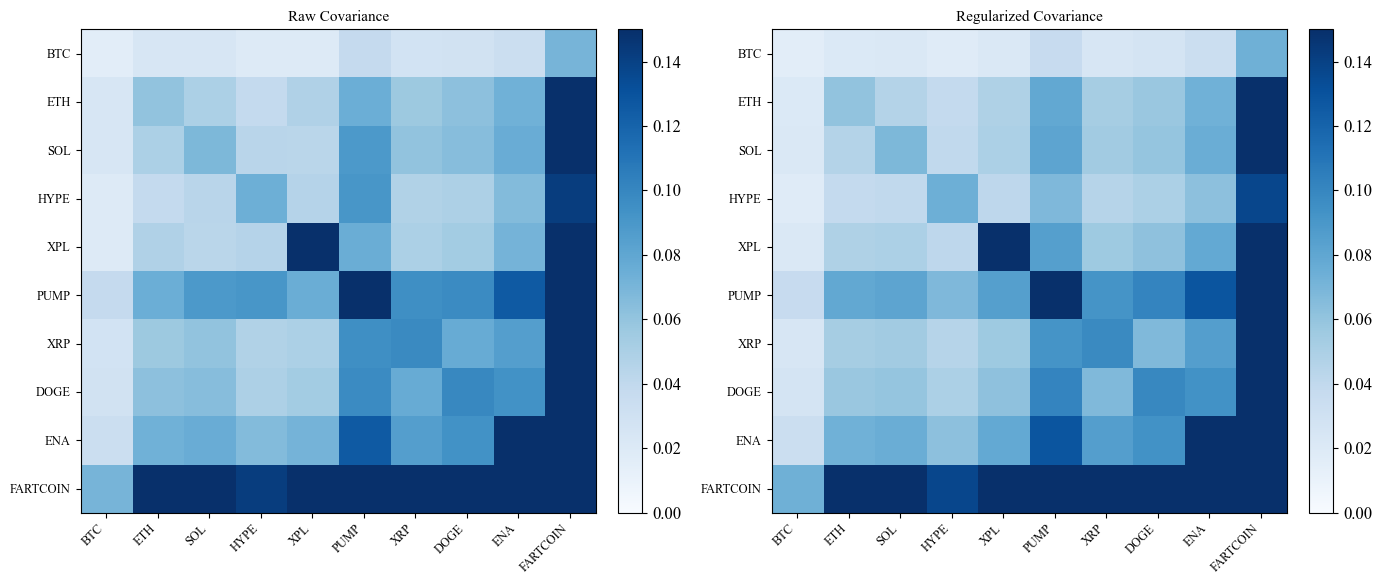}
\caption{Left: stationary covariance matrix $\Sigma_\varepsilon$ of
  EWMA-standardized residuals. Right: diagonal-plus-rank-one
  approximation $\Sigma_\varepsilon^{(1)}$, preserving the original
  diagonal and retaining only the leading principal component in the
  off-diagonal structure. The relative Frobenius error is
  approximately $5\%$, with residual variation predominantly
  idiosyncratic.}
\label{fig:cov-stat}
\end{figure}

\subsection{Exchange-Scale ADMM Implementation Details}
\label{app:exchange-scale-admm}

\paragraph{Active-set restriction.}
The ADMM iteration is never run on the full reconstructed ambient array.
For a given wave, we first restrict the optimization to the active
asset set $\supp(Q)$ and then remove every account whose directional
box has zero width on that set. This reduction is exact at the level of
the decision variables: accounts that cannot contribute in any active
asset are fixed at zero in every feasible allocation. The
scenario-dependent shortfall intercept in \eqref{eq:dollar-reparam},
however, still incorporates the PnL of the retained non-active
positions through the term $-q_i^\top\Delta p^{(s)}$, so the
optimization is carried out on the active subset while preserving the
contribution of the remaining retained positions to the shortfall
threshold.

\paragraph{Dollar reparameterization.}
Different assets have price scales that differ by several orders of magnitude, which causes
numerical ill-conditioning in the raw position-unit formulation. We therefore reparameterize the
decision variable componentwise as
\[
y_i^k \defeq p_\tau^k\,x_i^k,
\]
so that $y_i^k$ is the dollar notional removed from account $i$ in asset $k$. Defining
\[
E_i \defeq m_i + q_i^\top(p_i^{(e)}-p_\tau),
\qquad
a^{(s)} \defeq \Delta p^{(s)} \oslash p_\tau,
\]
where $\oslash$ denotes componentwise division, the SAA program is written equivalently in dollar
units as
\begin{equation}\label{eq:dollar-reparam}
\begin{array}{ll}
\minimize_{y} &
\displaystyle
\frac{1}{S}\sum_{s=1}^S \sum_{i=1}^n
\Bigl(-E_i - q_i^\top\Delta p^{(s)} + (a^{(s)})^\top y_i\Bigr)_+ \\[8pt]
\subjectto &
\displaystyle \sum_{i=1}^n y_i = \operatorname{diag}(p_\tau)\,Q, \\[6pt]
& y_i^k\in
[\,p_\tau^k\,l_i^k,\; p_\tau^k\,u_i^k\,],
\quad i=1,\dots,n,\ k\in\supp(Q).
\end{array}
\end{equation}
This transformation is an exact equivalence: problem~\eqref{eq:dollar-reparam} has the same
optimal value as~\eqref{eq:admm-split}, and its optimal $y^\star$ recovers $x^\star$ by
componentwise division. In dollar units, the decision variables, clearing constraints, and loss
coefficients all lie on a common numerical scale, which substantially improves numerical
conditioning.

\paragraph{Support-aware per-account decomposition.}
Because the $x$-update~\eqref{eq:admm-x} separates exactly across accounts, the dollar-variable
$y$-update does as well, and each per-account subproblem is solved on its own active support ---
the set of assets in which account $i$ has nonzero feasible movement width. The majority of
accounts are movable in exactly one asset and reduce to a scalar proximal problem. The remaining
multi-asset accounts are grouped by their support pattern; within each group the subproblems share
the same constraint structure and differ only in the proximal center and the scenario right-hand
sides. This turns the accountwise ADMM step from one large heterogeneous optimization problem into
a collection of many small low-dimensional subproblems.

\paragraph{Parallel implementation of the accountwise proximal step.}
In each ADMM iteration, the computational core is the per-account proximal $x$-update~\eqref{eq:admm-x},
or equivalently the dollar-variable $y$-update described above. Because this step still separates
fully across accounts after the active-set reduction, the implementation distinguishes only between
two solver types: accounts movable in a single asset are updated by the fast scalar routine,
whereas accounts movable in multiple assets are updated by small exact per-account quadratic
programs. For the multi-asset accounts, constructing a fresh optimization model for every solve
would create substantial overhead, so accounts are grouped by their active support pattern and one
cached quadratic-program model is reused for each pattern. For a given pattern, the quadratic part
of the proximal objective is fixed once, so across ADMM iterations and across accounts sharing that
pattern only the account-specific linear objective coefficients, the scenario right-hand side, and
the box bounds are updated before re-solving. These multi-asset solves are then dispatched
concurrently across worker tasks and warm-started from the previous iterate, while the only global synchronization point in each ADMM iteration is the projection step that restores market clearing.
The long exchange-scale runs themselves are initialized from the saved
directional sequential Water-filling allocation, whereas the one-factor
Water-filling allocation is retained as the main constructive
benchmark in the loss comparisons.

\paragraph{Penalty parameter calibration.}
The augmented-Lagrangian penalty $\rho$ controls the relative weight of the market-clearing
residual against the per-account objectives. Because the SAA objective is measured in dollars and
the clearing constraint involves dollar notionals of order $\sim\!10^8$, the numerically
effective range of $\rho$ lies well below standard defaults. We determine $\rho$ by a broad
logarithmic search; the exchange-scale runs reported in \Cref{sec:oct10-results} use
$\rho=5\times 10^{-6}$ together with absolute and relative stopping
tolerances $\varepsilon_{\mathrm{stop}}=10^{-6}$,
and we evaluate the final ADMM iterate produced at termination.

\subsection{Risk-Reduction Robustness and Convergence}
\label{app:rr-robustness}
The following tables and figures collect the risk-reduction diagnostics
deferred from the main text. They serve three purposes. First, they
report the third wave $\tau_3$ using the same
definitions as the main $\tau_1$ comparison. Second, they compare
one-factor Water-filling with the directional coin-by-coin
benchmark inspired by the isolated-margin ranking. Third, they show both
Monte Carlo robustness and the evolution of the relative
risk-reduction statistic along the fixed-sample ADMM trajectories.

\begin{table}[H]
\centering
\small
\begin{tabular}{p{1.2cm} p{4cm} >{\raggedleft\arraybackslash}p{2.5cm} >{\raggedleft\arraybackslash}p{2.0cm}}
\toprule
\textbf{Wave} & \textbf{Allocation} & \textbf{Sampled Loss $L_S$} & \textbf{Relative RR} \\
\midrule
& No allocation & $5.218\times 10^8$ & $0.000$ \\
& Hyperliquid & $5.175\times 10^8$ & $0.354$ \\
& Isolated Water-Filling & $5.130\times 10^8$ & $0.732$ \\
$\tau_1$ & One-factor Water-filling & $5.114\times 10^8$ & $0.871$ \\
& Risk-optimal & $5.098\times 10^8$ & $1.000$ \\
\midrule
& No allocation & $5.291\times 10^8$ & $0.000$ \\
& Hyperliquid & $5.178\times 10^8$ & $0.568$ \\
& Isolated Water-Filling & $5.115\times 10^8$ & $0.886$ \\
$\tau_3$ & One-factor Water-filling & $5.106\times 10^8$ & $0.931$ \\
& Risk-optimal & $5.092\times 10^8$ & $1.000$ \\
\bottomrule
\end{tabular}
\caption{Sampled-loss comparison on the fixed reference scenario set for $\tau_1$
  and for the third wave $\tau_3$. The first row in
  each panel reports the zero-allocation baseline $L_S(\mathbf 0)$. The
  directional coin-by-coin benchmark is reported only as a robustness
  check. In both waves, one-factor Water-filling dominates the
  directional heuristic under the updated relative-risk-reduction
  metric.}
\label{tab:hyperliquid-wave-robustness}
\end{table}

Table~\ref{tab:hyperliquid-wave-robustness} records the fixed-sample
comparison on the reference scenario set used to compute the risk-optimal allocation.
It confirms that the one-factor sequential benchmark is the stronger
constructive comparison. At $\tau_1$, replacing the isolated margin/directional ranking
by one-factor Water-filling raises the relative risk reduction
from $0.732$ to $0.871$. At $\tau_3$, the same
refinement raises the relative risk reduction from $0.886$ to $0.931$.
The directional heuristic is therefore retained only as a robustness
benchmark.

To gauge the Monte Carlo sensitivity of the reported $\tau_1$ losses, we keep the candidate allocations fixed and re-evaluate them over $50$ independent $S=128$ scenario sets under the same calibrated model, using evaluation seeds $400,\dots,419,421,\dots,450$ so that the optimization seed $420$ is excluded. The corresponding means and standard deviations of the out-of-sample losses are reported in \cref{sec:oct10-results}; relative to those averages, the fixed reference scenario set in Table~\ref{tab:hyperliquid-wave-robustness} yields losses that are uniformly lower by about $2.1\times 10^7$ for all candidate allocations, while leaving the ordering in risk reduction of the allocations unchanged.

\begin{figure}[H]
\centering
\includegraphics[width=0.95\textwidth]{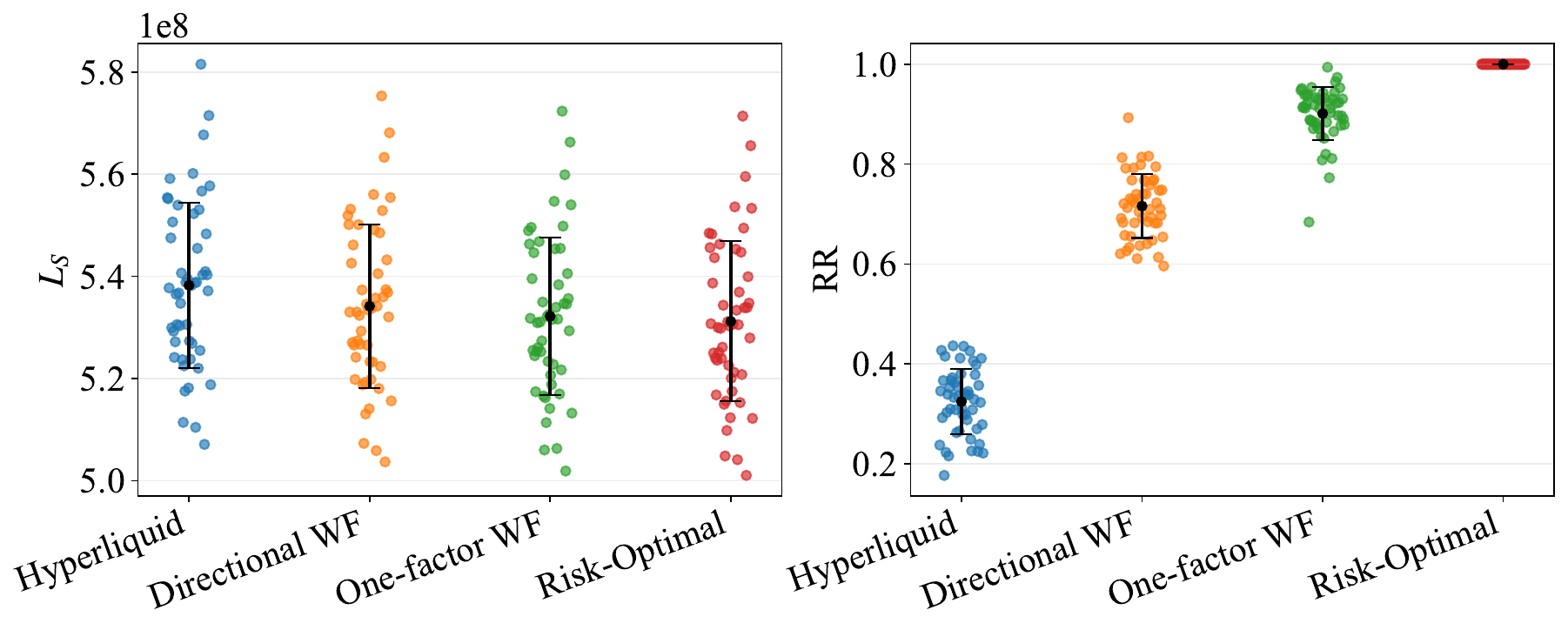}
\caption{Monte Carlo seed-robustness check for the fixed $\tau_1$
  allocations at $S=128$. Each point is a full-state re-evaluation of
  the sampled loss or the relative risk-reduction statistic on an
  independent scenario set under the same calibrated model; the black
  markers report the seed mean with one standard-deviation error bars.
  The spread is economically visible at this small scenario count, but
  the ordering
  $\mathrm{Risk\text{-}optimal} < \mathrm{One\text{-}factor\ WF} <
  \mathrm{Directional\ WF} < \mathrm{Hyperliquid}$ is stable across
  seeds.}
\label{fig:tau1-seed-robustness}
\end{figure}

\paragraph{Convergence.}
\label{app:rr-convergence}

Figure~\ref{fig:rr-convergence} plots the relative risk-reduction
statistic along the ADMM iterate sequence. The initial point in each
curve is the Water-filling warm start. At each iteration,
we evaluate the induced sampled loss of the projected feasible
allocation on the same fixed reference scenario set as in
Table~\ref{tab:hyperliquid-wave-robustness}, and then map it into the
statistic $\mathrm{RR}$ defined in \cref{eq:rr_def}. The horizontal
axis is the ADMM iteration count. The resulting trajectories show that
the relative risk-reduction statistic approaches its terminal level
quickly: at
iteration $500$ it already equals $0.952$ at $\tau_1$ and $0.887$ at
$\tau_3$, and at iteration $1{,}000$ it reaches $0.966$ and $0.917$,
respectively. The $\tau_1$ run reaches the stopping rule before the
iteration cap. The reported $\tau_3$ trajectory ends at the cap
$K=20{,}000$ rather than at the formal tolerance, with terminal
residuals $\|r^K\|_2 \approx 5.19$ and $\|s^K\|_2 \approx 8.0\times
10^{-3}$. Hence most of the reduction in sampled loss is attained
well before termination; the remaining iterations primarily improve the
ADMM residuals rather than materially changing the induced allocation.

\begin{figure}[H]
\centering
\includegraphics[width=0.95\textwidth]{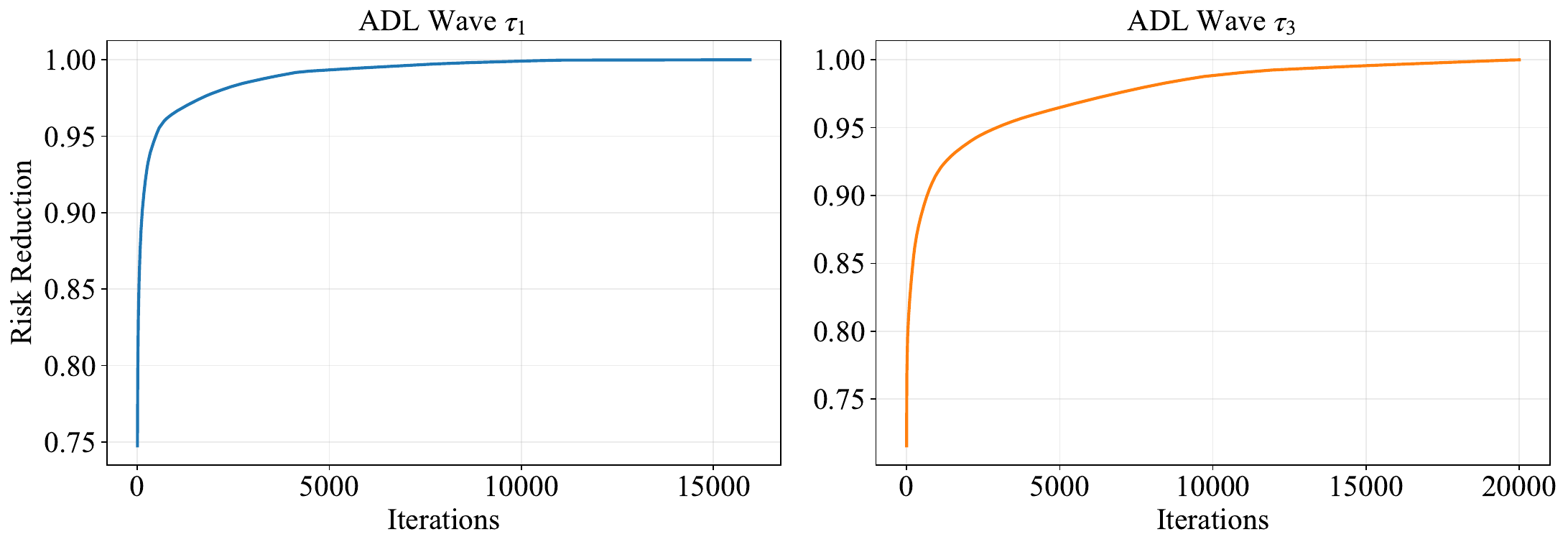}
\caption{Relative-risk-reduction trajectories along the Risk-optimal
  runs on the fixed reference scenario sets for $\tau_1$ and $\tau_3$. The curves show that
  most of the improvement arrives in the first few hundred iterations,
  while the remaining runtime is largely spent reducing ADMM residuals.
  Each curve starts from the isolated water-filling warm start and
  evaluates the projected feasible allocation on the fixed reference scenario
  set for that wave.}
\label{fig:rr-convergence}
\end{figure}

\subsection{User-Level Characteristics}
\label{app:user-level-tables}
The following tables and figures provide the additional user-level
diagnostics deferred from the analysis in \cref{sec:oct10-results}. They report thresholded
user counts for $\tau_1$ and $\tau_3$ and then summarize the analogous
$\tau_3$ four-panel diagnostics using the same definitions as the main
$\tau_1$ comparison.

\begin{table}[H]
\centering
\small
\begin{tabular}{p{1.8cm} >{\raggedleft\arraybackslash}p{2.2cm} >{\raggedleft\arraybackslash}p{3.0cm} >{\raggedleft\arraybackslash}p{2.6cm}}
\toprule
 \textbf{Threshold} &
\textbf{Hyperliquid} &
\textbf{One-factor WF} & \textbf{Risk-optimal} \\
\midrule
 $>0$      & $6{,}820$ & $9{,}480$ & $2{,}794$ \\
 $>\$1$    & $6{,}780$ & $1{,}409$ & $1{,}920$ \\
 $>\$10$   & $6{,}706$ & $1{,}292$ & $1{,}749$ \\
 $>\$100$  & $5{,}988$ & $773$ & $1{,}303$ \\
 $>\$1$k   & $2{,}416$ & $448$ & $874$ \\
 $>\$10$k  & $739$     & $246$ & $436$ \\
 $>\$100$k & $170$     & $92$  & $154$ \\
\bottomrule
\end{tabular}
\caption{Thresholded user-count comparison at $\tau_1$. At low
  thresholds, one-factor Water-filling and the Risk-optimal allocation
  touch many more users than the realized Hyperliquid event, but much
  of this mass disappears once even a one-dollar cutoff is imposed.
  Above $>\$100$k, one-factor Water-filling concentrates more notional
  on fewer users than either Hyperliquid or the Risk-optimal
  allocation.}
\label{tab:hyperliquid-wave-threshold-long}
\end{table}

\begin{table}[H]
\centering
\small
\begin{tabular}{p{1.8cm} >{\raggedleft\arraybackslash}p{2.2cm} >{\raggedleft\arraybackslash}p{3.0cm} >{\raggedleft\arraybackslash}p{2.6cm}}
\toprule
 \textbf{Threshold} &
\textbf{Hyperliquid} &
\textbf{One-factor WF} & \textbf{Risk-optimal} \\
\midrule
 $>0$      & $1{,}814$ & $3{,}913$ & $2{,}805$ \\
 $>\$1$    & $1{,}811$ & $847$ & $1{,}044$ \\
 $>\$10$   & $1{,}765$ & $755$ & $917$ \\
 $>\$100$  & $1{,}344$ & $579$ & $695$ \\
 $>\$1$k   & $650$     & $376$ & $437$ \\
 $>\$10$k  & $237$     & $203$ & $246$ \\
 $>\$100$k & $88$      & $91$  & $98$ \\
\bottomrule
\end{tabular}
\caption{Thresholded user-count comparison at $\tau_3$. The same dust
  pattern is visible at low thresholds: Hyperliquid changes almost not
  at all between $>0$ and $>\$1$, whereas one-factor Water-filling and
  the Risk-optimal allocation lose most of their touched users once a
  one-dollar cutoff is imposed. In the far upper tail, the ordering now
  reverses relative to $\tau_1$: above $>\$100$k, Hyperliquid touches
  slightly fewer users than one-factor Water-filling.}
\label{tab:hyperliquid-wave-threshold-long-t3}
\end{table}

Tables~\ref{tab:hyperliquid-wave-threshold-long}
and~\ref{tab:hyperliquid-wave-threshold-long-t3} make the dust-allocation
pattern explicit. At both $\tau_1$ and $\tau_3$, the realized
Hyperliquid allocation changes very little between the $>0$ and
$>\$1$ thresholds, whereas one-factor Water-filling and the
Risk-optimal allocation touch many more users in total but much more
lightly. At $\tau_1$, one-factor Water-filling also carries more dollar
notional on fewer large users, as seen most clearly in the $>\$100$k
row. At $\tau_3$, this upper-tail ordering no longer holds: the roles
of Hyperliquid and one-factor Water-filling are essentially reversed.

Figure~\ref{fig:comparison-stats-t3} reports the analogous four-panel
diagnostics for $\tau_3$. The upper-left and upper-right panels recover
the same concentration facts in graphical form. At $\tau_3$,
Hyperliquid touches only $1{,}814$ users overall,
compared with $3{,}913$ under one-factor Water-filling and $2{,}805$
under the Risk-optimal allocation, but its largest-user deleveraging
notional rises to $\$1.7\times 10^8$, compared with
$\$1.0\times 10^8$ for the Risk-optimal allocation and
$\$6.9\times 10^7$ for one-factor Water-filling. Its top ten users
already account for $87.5\%$ of total deleveraging, compared with
$72.3\%$ for the Risk-optimal allocation and $76.0\%$ for one-factor
Water-filling. Thus the upper-left and upper-right panels show that, at
$\tau_3$, the realized Hyperliquid allocation is the most concentrated
of the three in the upper tail.

The lower-left panel looks qualitatively similar to $\tau_1$:
Hyperliquid places more mass near full closeout, whereas one-factor
Water-filling and the Risk-optimal allocation usually remove more
moderate fractions of the deleverageable portfolio. The lower-right
panel is more ordered for Hyperliquid than at $\tau_1$, while
one-factor Water-filling and the Risk-optimal allocation retain
approximately the same monotone factor-leverage targeting pattern.

\begin{figure}[H]
\centering
\includegraphics[width=0.95\textwidth]{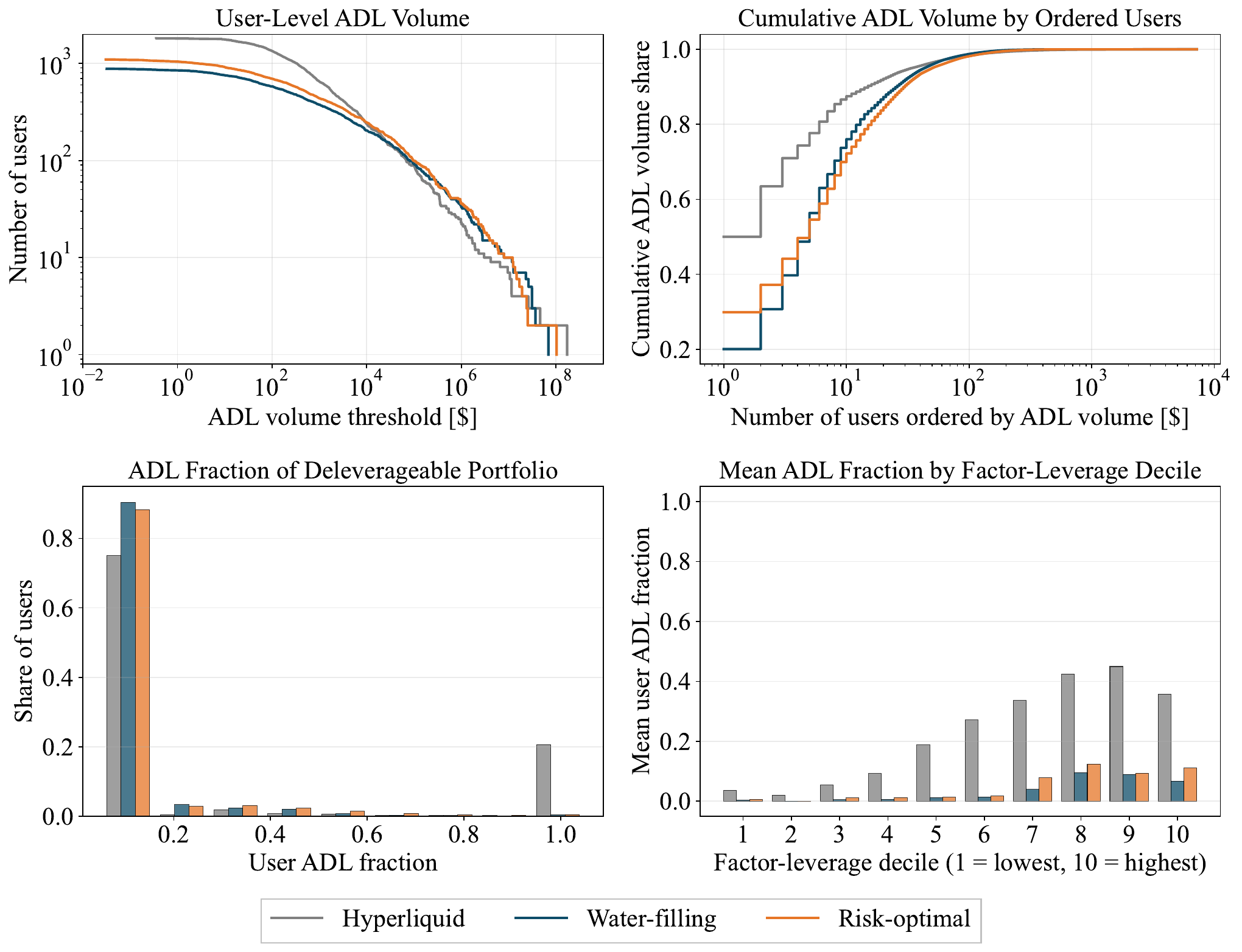}
\caption{User-level diagnostics for $\tau_3$: user count above
  deleveraging-notional threshold, cumulative top-share concentration
  curve, histogram of deleveraging fractions, and factor-leverage-decile
  targeting plot.}
\label{fig:comparison-stats-t3}
\end{figure}


\end{document}